\numberwithin{equation}{section}
\begin{document}

%
%

\title{Shape and Spectrum\\ On the Heat and Volume of Self-Similar Fractals}
\author{William Edward Hoffer III}
\degreemonth{September}
\degreeyear{2025}
\degree{Doctor of Philosophy}
\chair{Dr. Michel L. Lapidus}
\othermembers{
    Dr. David Weisbart\\
    Dr. Bun Wong}
\numberofmembers{2} 
\field{Mathematics}
\campus{Riverside}

\maketitle
\copyrightpage{}
\approvalpage{}

\degreesemester{Spring}

\begin{frontmatter}

\begin{acknowledgements}
I am deeply grateful for the support, encouragement, and inspiration that has been provided to me along the road toward this doctoral dissertation. Firstly, I would like to thank the teachers and professors who have supported me along my academic journey and guided me toward success. I would like to thank my dissertation advisor, Dr. Michel L. Lapidus, for all of his support and invaluable insight, as well as for introducing me to a beautiful field of mathematics. Thank you to Dr. David Weisbart and to Dr. Yat Tin Chow for your support, conversations, and mentorship. I would also like to thank Dr. Ovidiu Costin for his guidance in my learning of mathematical physics and asymptotic analysis and preparing me for the rigors of graduate education. 

Next, I would like to thank the teachers who inspired my love of mathematics and physics. Thank you Mr. D'Alessandro for the support through many hours of physics and robotics projects. Thank you Mr. Beluan and Mrs. Wade for fostering my love of mathematics and guiding me toward the joys of proofs and problem-solving. Thank you Mr. Tarnowski for dropping objects out of windows, lighting tubes on fire, and smashing open electronics all in the name of science. The inspiration and guidance I received from all of you at different stages of my journey has led to my success today. Thank you for everything. 

In addition to the teachers and professors who have taught me, I am deeply indebted to the wonderful friends, family, and peers who have been a part of my journey. Thank you Alex for your love and support I felt in every cup of tea and coffee you made me and every time you listened intently to my woes and triumphs. Thank you to my friends and roommates, Raymond Matson, Kevin Su, and Elliott Vest for all the laughs, games, and mutual mathematics support through all the trials of graduate school. Thank you Michael McNulty for your mentorship as I followed in many of your footsteps, and thank you Adam Richardson and Matthew Overduin for all the wonderful discussions of our lovely field of fractal geometry. A huge thank you to my parents, whose love and support was essential for every step of the way, and further to my siblings Hannah and Michael, as well as to our lovely dogs, Hogan, Valen, and Tucker, who never failed to put a smile on my face. Thank you as well to all my friends back in Ohio. 

I am blessed to have so many more people to thank for supporting me. Thank you to Dr. Michael Maroun for your insightful talks and questions on my presentations. Thank you Margarita for supporting and cheering all of us graduate students on every step of the way. Thank you to all of the staff in the mathematics department for all of your support and troubleshooting. Thank you to graduate division, to the American Mathematics Society, to the UCR Mathematics Department, and to John C. Fay for the fellowships and travel support during my time as a graduate student. 

Thank you all so much. This wonderful achievement could not have been done without your inspiration, without your mathematical guidance, and without your kindness.
\end{acknowledgements}

\begin{dedication}
\null\vfil
{\large
\begin{center}
    \begin{tabular}{@{}l@{}l@{}l@{}l@{}l@{}}
        &\large To           \\
        && \large the        \\
        &&& \large infinite  \\
        && \large beauties   \\[1em]
        & \large in          \\[1em]
        \large my            \\[1em]
        & \large life:       \\[1em]
        && \large friends,   \\
        &&& \large family,   \\
        &&&& \large \&       \\
        &&& \large favorite    \\
        && \large furballs.   
    \end{tabular}
\end{center}}
\vfil\null
\end{dedication}

%
%
\begin{abstract}

In this work, we examine the relationship between geometry and spectrum of regions with fractal boundary. The relationship is well-understood for fractal harps in one dimension, but largely open for fractal drums in larger dimensions. To that end, we study fractals arising as attractors of self-similar iterated function systems with some separation conditions. On the geometric side, we analyze the tube zeta functions and their poles, called complex dimensions, which govern the asymptotics of the volume of tubular neighborhoods of such fractals. On the spectral side, we study a Dirichlet problem for the heat equation, closely related to spectrum of the Laplacian. We show that the asymptotics of the total heat content are controlled by the same set of possible complex dimensions. Our method is to establish scaling functional equations and to solve by means of truncated Mellin transforms, wherefrom the scaling ratios of the underlying dynamics can be seen to govern both the geometry and spectra of these self-similar fractal drums. 

\end{abstract}

\tableofcontents
\listoffigures

\end{frontmatter}

\chapter*{Introduction}
\addcontentsline{toc}{chapter}{Introduction}
%
%

\section*{Hook}

Questions are inspiring. Can you hear the shape of a drum? How do snowflakes melt? How can a finite country have an infinitely long coastline? What is dimension? The interplay of these questions leads to our main object of study, a rough type of shape called a fractal, and asking questions about its shape and its spectrum, which encompasses quantities like sound and heat. 

To understand this work, we must understand three key ideas: self-similarity, spectrum, and dimension. It turns out that the dynamics which govern the self-similarity of a shape can be seen to control both the shape and spectrum of a fractal shape. In particular, the self-similar nature of a fractal controls quantities called complex dimensions. Every complex number has a real and an imaginary part, and for a complex dimension the real part is an amplitude and the imaginary part is a period. Together, that makes a \textit{geometric oscillation}. In fact, the existence of geometric oscillations (characterized by having a non-zero imaginary part of a complex dimension, which then corresponds to the oscillation) in a shape is a suitable way to answer the question, what even is a fractal? 

What we will show in this work is that self-similar fractals have complex dimensions controlled simply by the scaling ratios that define its self-similarity in its constructions. These same complex dimensions govern both the geometry, in the form of volume of tubular neighborhoods, and the spectrum, in the form of the heat content of that fractal set. 

\section*{Line}

In order to capture these results, we must first introduce the players on our stage. Chief among them are the self-similar fractals; these are our main objects of study. Chapter~\ref{chap:fractals} introduces these shapes by means of their dynamics. They arise from a special type of an object called an iterated function system, namely those which contain only similitudes. When we study these shapes, they will dance in pairs: there is both the fractal itself and a nearby set to which we compare the fractal and its qualities. Together they form an instrument, a fractal harp or drum, whose shapes and sounds we aim to know. 

Next, we take a bird's eye view of our work. Both volume and heat, our later objects of study, can be viewed through one unified lens. Chapter~\ref{chap:SFEs} is all about standing atop a lighthouse and looking through binoculars that can pick out scenes from across the entire landscape that self-similarity has to offer. In this setting, the binoculars are a type of integral transform, called a Mellin transform, that is key to parsing the geometric oscillations in self-similar shapes. Compared to its more relatively well-known cousin, the Fourier transform, the Mellin transform detects oscillation in multiplicative scale as opposed to those with respect to linear shifts. 

The climb to reach the top of this lighthouse is comparatively the most challenging. There is a good amount of abstract ideas and technical features that are foundation for the stairs the lead to the stage. But the view is worth the climb, as the scaling functional equations that we introduce and study can solve problems in heat and volume alike. It is here that we prove the main results which serve to establish our thesis.

Atop this framework, we may now bring our attention back down on where we started. Here, in Chapter~\ref{chap:geometry}, we take another look at the self-similar fractals with which we began. Now, we consider a question of volume. Specifically, what happens when we look at tubular neighborhoods of these fractals. With the tools of Chapter~\ref{chap:SFEs}, we have the ability to establish explicit formulae for these quantities and, moreover, to deduce the possible complex dimensions of these fractals. Further, we see that these complex dimensions play a pivotal role in the very nature of these expansions. 

The complex dimensions control not only the geometry, but also quantities of spectra. As we shift our gaze laterally, we next arrive on the nature of heat content. In Chapter~\ref{chap:heat}, we now consider what happens when heat is modelled to flow into regions with fractal boundary. The total amount of heat, called heat content, is the principle object of study. Through our unified framework, we establish explicit formulae for this heat content. Once more, the possible complex dimensions govern the nature of these expansions, appearing as the indices of the powers of the time parameter. 

When it is at last time to metaphorically descend back down, we will do so with a new perspective. Before starting our climb, the relationship between geometry and spectrum in one dimension fit into a beautiful frame. While the view from higher dimensions does not yet fit into its own portrait, we can now see definitively a strong connection which persists. Moreover, we will see that the dynamics and similarity is the unifying thread between our two related pictures, a thread which can be followed to get yet another glimpse of this relationship at large. 

\section*{Sinker}

At the end of this work, the most important takeaway is not a definitive answer to all the questions that herein we can pose. Such is neither the nature of life nor mathematics. Instead, we leave a recipe. The ingredients are dynamics and scaling ratios, and combined together into different scaling functional equations we can produce formulae for volume and heat. By understanding these ingredients and how they mix and bake together, one can further study other quantities like sound and the spectrum directly. We learn how to develop a picture of the complex dimensions of different fractals, now self-similar and perhaps later more strange and beautiful, and to turn this knowledge into better descriptions of these shapes and their properties.

\chapter{Elements of Fractal Geometry}
\label{chap:fractals}
%
%

In this chapter, we introduce the main players on our stage. Firstly, we begin with the notion of \textit{self-similar sets}. It turns out that to define such sets, we must understand how they are obtained dynamically. Namely, self-similar sets are a special type of \textit{invariant set} for a collection of contracting maps called an \textit{iterated function system}, occurring when all the elements of this system are \textit{similitudes}. This formalism for defining such fractals is due to Hutchinson \cite{Hut81}. For more examples and information on the theory of general fractal geometry, see \cite{LapRad24_IFG,Bar88,Fal90}.

Next, we discuss two types of fractal structures. The first are \textit{fractal harps} (or strings) and the second is the main object of study in this work, namely \textit{relative fractal drums}. These fractal drums consist of a set $X\subset\RRN$, typically the fractal of interest, and another set $\Omega\subset\RRN$ relative to which we consider the fractal itself. For example, we often have that $X=\partial\Omega$ is the boundary of some region $\Omega$, and thus we consider the nature of a fractal boundary with respect to the interior of that region. Fractal harps were introduced and studied in \cite{LapvFr13_FGCD}, and relative fractal drums were introduced and studied in \cite{LRZ17_FZF}.

Next, we introduce three types of fractal dimension we focus on in this work: \textit{Minkowski dimension}, \textit{similarity dimension}, and \textit{complex dimensions}. These dimensions will play a critical role in what follows and in this chapter we give a brief introduction before these concepts are revisited in later context. Similarity dimension, and a notion of \textit{lower similarity dimension} defined in a similar way to similarity dimension, will play a key role in Chapter~\ref{chap:SFEs}. Complex dimensions will be thoroughly discussed in Chapter~\ref{chap:geometry} where we also give explicit results regarding the determination of possible complex dimensions of self-similar sets from knowledge of their associated \textit{self-similar system} alone. Complex dimensions were introduced by Lapidus and van Frankenhuijsen \cite{LapvFr13_FGCD} in one dimension for fractal harps and then extended to (relative) fractal drums in a higher dimensional setting by Lapidus, Radunovi\'c, and \v Zubrini\'c in \cite{LRZ17_FZF}. 

Lastly, we define our main type of explicit example, \textit{generalized von Koch snowflakes}. These fractals are inspired by the eponymous von Koch curve introduced and studied by von Koch \cite{Koch1904,Koch1906}. When we prove general results in Chapter~\ref{chap:geometry} regarding \textit{tube functions} and in Chapter~\ref{chap:heat} regarding \textit{heat content}, we will then apply both of these results to these fractal snowflakes.

\section{Fractals from Dynamics}
\label{sec:IFS}
%
%

Complexity arises from the iteration of simple rules. In this section, we explore this tenant and see how dynamics yield fractals. The formalism that we introduce here, that of an \textit{iterated function system (IFS)}, was introduced by Hutchinson \cite{Hut81}. Put simply, an IFS is a finite set of contraction mappings on a metric space. Hutchinson proved that an IFS yields an operator that, viewed as a contraction mapping on a complete metric space (such as the space of nonempty, compact subsets of Euclidean space equipped with the Hausdorff metric), has a unique fixed point. This fixed point, or the attractor of the IFS, is a convenient way to define many fractals \cite{Fal90,Bar88,LapRad24_IFG}. 

\subsection{Iterated Function Systems}

Dynamics are generated by two things: progression of time and the change at each time step. We will focus our attention on metric spaces $(X,d)$, consisting of a set $X$ and metric $d:X\times X\to [0,\infty)$, with dynamics generated by endomorphisms $f:X\to X$. The time steps will be discrete, generated by repeated function composition. The change is given by a finite list of functions to be iterated over and over again. 

An iterated function system, or IFS for short, will be this collection of mappings to generate the dynamics. We consider metric spaces because we will impose one requirement on our mappings: they must shrink distances between points. 
\begin{definition}[Iterated Function System]
    \label{def:IFS}
    \index{Iterated function system (IFS)}
    An \textbf{iterated function system (IFS)}  $\Phi$ on a complete metric space $(X,d)$ is a finite set of \textit{contraction mappings}, $\Phi := \set{\ph_k:X\to X}_{k=1}^m.$
\end{definition}
A mapping $\ph$ is called a \index{Contraction}\textbf{contraction} if there exists a constant $r\in [0,1)$ such that for every $x,y\in X$,
\[ d(\ph(x),\ph(y)) \leq  r\,d(x,y). \]
If $r$ is the smallest such constant, it is called the \index{Lipschitz constant!of a contraction}\textbf{Lipschitz constant} of $\ph$. Typically, we shall assume that each such Lipschitz constant is positive, or equivalently that each mapping is nontrivial. Additionally, we will generally only look at iterated function systems with cardinality at least two, i.e. $|\Phi|\geq 2$. Later, when we specialize to an IFS on Euclidean space, we will be concerned with something called the \textit{invariant set} of the IFS. In the case of an IFS with a single mapping, the invariant set will merely be a singleton: the origin. 

A special case of iterated function systems are self-similar (iterated function) systems. These systems are comparatively simpler because their mappings are similitudes that uniformly scale the elements that they act on. 
\begin{definition}[Self-Similar System]
    \label{def:SSS}
    \index{Iterated function system (IFS)!Self-similar system}
    A \textbf{self-similar system} $\Phi$ on a complete metric space $(X,d)$ is an iterated function system such that each mapping in $\Phi$ is a nontrivial (contractive) \textbf{similitude}.
\end{definition}
Equivalently, this means that for each map $\ph\in\Phi$, there exists a constant $\lambda_\ph\in (0,1)$, called the \index{Similitude!Scaling ratio}\textbf{scaling ratio} of $\ph$, such that for any $x,y\in X$, 
\[ d(\ph(x),\ph(y)) = \lambda_\ph\, d(x,y). \]
In general, a \index{Similitude}\textbf{similitude} may have any scaling ratio in $[0,\infty)$. For convenience, we rule out the existence of trivial mappings by imposing that $\lambda_\ph>0$. Contractivity occurs if and only if $\lambda_\ph<1$. Lastly, note that the scaling ratio is equivalent to the Lipschitz constant of a similitude.

\subsection{Invariant Sets, Fixed Points, and Attractors of an IFS}

To understand the dynamics of an IFS, one useful tool is to look for \textit{invariant sets}. When we look at dynamics on sets, these invariant sets, obtained as a \textit{fixed point} of a suitably defined operator, will give rise to many fractals of interest. This approach also gives a way to approximate these fractals, as they are \textit{attractors}: iterates of other elements will converge to the fixed point in question. Each of these labels, defining the same object, reflects the different properties of that object and will later be used interchangeably.

We will now specialize to the following complete metric space. Let $\Cpt(\RR^\dimension)$ be the space of non-empty, compact subsets of $\RR^\dimension$. Let $d_H$ denote the \index{Hausdorff distance}\textbf{Hausdorff distance} between sets, defined for any $X,Y\in\Cpt(\RR^\dimension)$ by
\[ d_H(X,Y) := \max\set{\sup_{x\in X}d(x,Y),\sup_{y\in Y}d(y,X)}, \]
where $d(x,U):=\inf_{u\in U}d(x,u)$ is the standard distance of a point $x$ to a set $U$ and where $d(x,y)=|x-y|$ is the usual Euclidean metric on $\RR^\dimension$. Together, $(\Cpt(X),d_H)$ is a complete metric space. 

In this framework, given an IFS $\Phi$ on $\RR^\dimension$, Hutchinson defined the corresponding operator on $\Cpt(X)$ given by
\[ H_\Phi(X) := \bigcup_{\ph\in\Phi} \ph[X], \]
where here $\ph[X]$ denote the image of $\ph$ on $X$. He proved that there exists a unique \index{Iterated function system (IFS)!Fixed point}\textbf{fixed point} $X_\Phi$ in $\Cpt(\RR^\dimension)$, a point such that $H_\Phi(X_\Phi)=X_\Phi$, using the Banach fixed point theorem and the fact that the operator $H_\Phi$ is contractive \cite{Hut81}. 

Such a fixed point $X_\Phi$ has other aliases. Firstly, this fixed point is an \index{Iterated function system (IFS)!Attractor}\textbf{attractor} in the sense that the iterated image of any other non-empty, compact set $Y$ will become arbitrarily close with respect to the Hausdorff distance to $X_\Phi$. By nature of being a fixed point, we have that 
\begin{equation}
    \label{eqn:defInvariantSet}
    X_\Phi = \bigcup_{\ph\in\Phi} \ph[X_\Phi],
\end{equation}
and thus $X_\Phi$ may be called the \index{Iterated function system (IFS)!Invariant set}\textbf{invariant set} of $\Phi$.

\subsection{Self-Similarity}

\index{Similitude}\textbf{Similitudes} in Euclidean spaces are compositions of translations, rotations, reflections, and homotheties, or equivalently, any similitude is the composition of isometries (distance preserving transformations, like translations, rotations, and reflections) and uniformly scaling mappings. Either of these definitions is equivalent to the definition of a similitude $\ph$ on a metric space $(X,d)$, namely that there exists a number $\lambda_\ph\in[0,\infty)$ (called the \index{Similitude!Scaling ratio}\textbf{scaling ratio} of $\ph$) such that for any $x,y\in X$, 
\[ d(\ph(x),\ph(y)) = \lambda_\ph\, d(x,y). \]


Self-similar fractals shall be those that arise from self-similar iterated function systems, reflecting the fact that such a fractal is realized as a union of similar copies of itself.
\begin{definition}[Self-Similar Set]
    \label{def:selfSimilarSet}
    \index{Self-similar set}
    A set $X\subset\RR^\dimension$ is called \textbf{self-similar} if there exists a self-similar system $\Phi$ such that $X$ is the invariant set of $\Phi$ in the sense of Equation~\ref{eqn:defInvariantSet}.  
\end{definition}
We note that if a set is self-similar, the self-similar system $\Phi$ for which $X$ is its invariant set is very much not unique. Even when modified in statement to the uniqueness with respect to a minimally generated or presented IFS, this is a challenging problem. The earliest study appears to date back to the work of Feng and Wang \cite{FW09}, and the problem has been studied with assumptions (be it homogeneity of the system, the dimension, and/or separation conditions) in works including but not limited to that of Yao and Li \cite{YL15,YL16,Yao15,Yao17} and that of Deng and Lau \cite{DL17,DL13}; see also the references therein for more history and relevance of the problem.

\subsection{Lattice/Nonlattice Dichotomy}

In the study of self-similar fractals with multiple scaling ratios, there is a \textit{lattice/nonlattice dichotomy} in behavior depending on whether or not the ratios are \textit{arithmetically related} or not. This dichotomy has also been called the arithmetic/non-arithmetic dichotomy, and has been discussed in the work of Lalley in \cite{Lal88,Lal89} and his paper in \cite{BKS91}, in the work of Lapidus and collaborators (such as in \cite{LapvFr13_FGCD,LRZ17_FZF,Lap93_Dundee,LP10,LPW11}), and for the generalized von Koch snowflakes by van den Berg and collaborators in their on heat content \cite{vdB00_generalGKF,vdB00_squareGKF,vdBGil98,vdBHol99}. See \cite{DGM+17} and the references therein for more information about this dichotomy in fractal geometry.

\begin{definition}[Lattice/Nonlattice Dichotomy]
    \label{def:latticeDichotomy}
    \index{Lattice/nonlattice dichotomy}\index{Scaling ratios!Lattice case}\index{Scaling ratios!Nonlattice case}\index{Scaling ratios!Arithmetically related}\index{Scaling ratios!Non-arithmetically related}
    A set of scaling ratios $\set{\lambda_k}_{k=1}^K$ is said to be \textbf{arithmetically related} if the group 
    \[ G = \prod_{k=1}^K \lambda_k^\ZZ \]
    is a discrete subgroup of the positive real line with respect to multiplication. In this case, there exists a generator $\lambda_0$ such that every $\lambda_k=\lambda_0^{m_k}$ for some positive integer $m_k$.
    This is called the \textbf{lattice case}. 
    \medskip
    
    If $G$ is not a discrete subgroup of the positive real line (in which case it is a dense subgroup), then the scaling ratios are said to be \textbf{non-arithmetically related}. This is called the \textbf{nonlattice case}.
\end{definition}

An important feature of this dichotomy is that the \textit{complex dimensions} of self-similar fractals behave very differently in the lattice and nonlattice case. As we will see in Chapter~\ref{chap:geometry} (as well as Chapter~\ref{chap:heat}), the locations of complex dimensions has important implication for the asymptotics of the quantities being studied. For more information about the structure results in the different cases, see Theorem~2.16 (for the one dimensional setting) and Theorem~3.6 (for a generalized result which may be applied to certain self-similar sets in higher dimensions) of \cite{LapvFr13_FGCD}.

\section{Fractal Harps and Drums}
\label{sec:harpsAndDrums}
%
%

Next, we take a metaphorical trip to a figurative orchestra made of fractal harps and drums. To understand fractals, we shall listen to their oscillations using the space around them. This idea leads to the study of tubular neighborhoods, osculating sets, and integration over the space of scales. 

\subsection{Fractal Harps}

On the real line, the anatomy of a bounded open set is simple: each such set is a countable union of open intervals, with the lengths of intervals decaying to zero. If we think of each of the lengths as a single string to be plucked, then a \textit{fractal harp} will be the sequence of all of these lengths, ordered from longest to shortest. Unlike musicians, though, we may repeat strings of the same length in the harp if a length has multiplicity greater than one. A \textit{generalized fractal harp} can have multiplicities which need not be positive integers and may even have an uncountable multitude of strings to pluck, be we shall still require that the strings be of finite length. 

A fractal harp is the essential information about the space around a fractal, where the fractal in question is the boundary of the open set which defines the harp. For example, if $\CantorSet$ is the middle thirds Cantor set in $[0,1]$, then its corresponding fractal harp (Figure~\ref{fig:CantorHarp}) will be all of the (lengths of the) intervals removed from $[0,1]$ in the construction of the set. It turns out that it is only the lengths of the intervals that we need, as the geometry and spectrum of the fractal are independent of the particular way in which the intervals are embedded in the real line \cite{LapvFr13_FGCD}. 
\begin{figure}[t]
    \centering 
    \includegraphics[width=10cm]{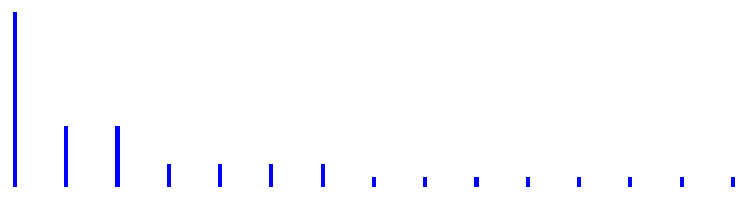}
    \caption[Fractal harp of the Cantor set]{The fractal harp of the middle-thirds Cantor set with each length represented as a vertical line segment arranged in decreasing order. Only the lengths of size one third, one ninth, one twenty-seventh, and one eighty-first are represented.}
    \label{fig:CantorHarp}
\end{figure}

For brevity, we shall jump discontinuously to the definition of a generalized fractal harp, of which ordinary fractal harps are a special case. Of note, \textit{fractal harp} and \textit{fractal string} may be used interchangeably; in this work we shall use the former terminology in analogy with fractal drums. As a preliminary, we define a \index{Measure!Local complex measure}\textbf{local complex measure} $\eta$ to be a measure such that its restriction $\eta_K$ to any compact subset $K$ of its domain is a complex measure. Essentially, this removes the boundedness constraint from the definition of a complex measure. 
\begin{definition}[Fractal Harp]
    \label{def:fractalHarp}
    \index{Fractal!Generalized fractal harp}
    A \textbf{generalized fractal harp} (equivalently, a generalized fractal string) is a local complex measure $\eta$ on $[0,\infty)$ with no mass in a neighborhood of zero. That is, there exists $\delta>0$ so that  
    $\eta([0,\delta))=0$.
\end{definition}
For a given $\ell^{-1}\in(0,\infty)$, $\eta(\set{\ell})$ is the multiplicity of the reciprocal length $\ell^{-1}$, which may be complex. It is a convention that we support the measure on reciprocal lengths rather than on the lengths themselves. The effect is that the reciprocal lengths tend to infinity rather than toward zero, and the lack of mass near zero constraint is equivalently the imposition that every length is finite. 

\subsection{Relative Fractal Drums}

Fractal drums are the instruments of dimensions greater than one. The simplest example is of a bounded open set $\Omega$ which is the drum for its boundary $\partial\Omega$, as was the case in one real dimension. However, in higher dimensions it is often more useful to consider a more general relative fractal drum: the set of interest $X$ and an open set of finite measure $\Omega$ which is close to $X$. By close, we mean in the sense that there exists a positive number $\delta$ so that $X$ is contained within a \index{Tubular neighborhood}\textbf{tubular neighborhood} $\Omega_\delta$, defined by 
\[ \Omega_\delta := \set{x\in\RR^\dimension: \exists \,y\in\Omega,\, d(x,y)<\delta}, \]
where here $d(x,y)=|x-y|$ denotes the standard Euclidean distance. (The Hausdorff distance between sets, discussed in Section~\ref{sec:IFS}, is related to the smallest value of $\delta$ for which this is possible. However, here we do not require containment in the reverse direction.) This containment is automatically satisfied when $X=\partial\Omega$. 
\begin{definition}[Relative Fractal Drum]
    \label{def:RFD}
    \index{Fractal!Relative fractal drum (RFD)}
    \index{Relative fractal drum (RFD)}
    A pair $(X,\Omega)$ is called a \textbf{relative fractal drum (RFD)} if $X\subset\RR^\dimension$ and if $\Omega$ is an open, finite Lebesgue measure set in $\RR^\dimension$ such that there exists $\delta>0$ so that $ X\subseteq \Omega_\delta$. 
\end{definition}

For example, the (boundary of the) von Koch snowflake $K_{3,\frac13}$ (which we will define formally in Section~\ref{sec:fractalExamples}) may be considered relative to the interior region of $\RR^2$ carved out by the union of three von Koch curves; in this case the RFD is exactly the pair $(\partial K_{3,\frac13},K_{3,\frac13})$. For an example which is not the interior of a set and the boundary of that set, one could consider the Sierpi\'nski triangle relative to a hexagonal polygon $\Omega$, as depicted in Figure~\ref{fig:SierpinskiRFD}. 
\begin{figure}[t]
    \centering
    \includegraphics[width=6cm]{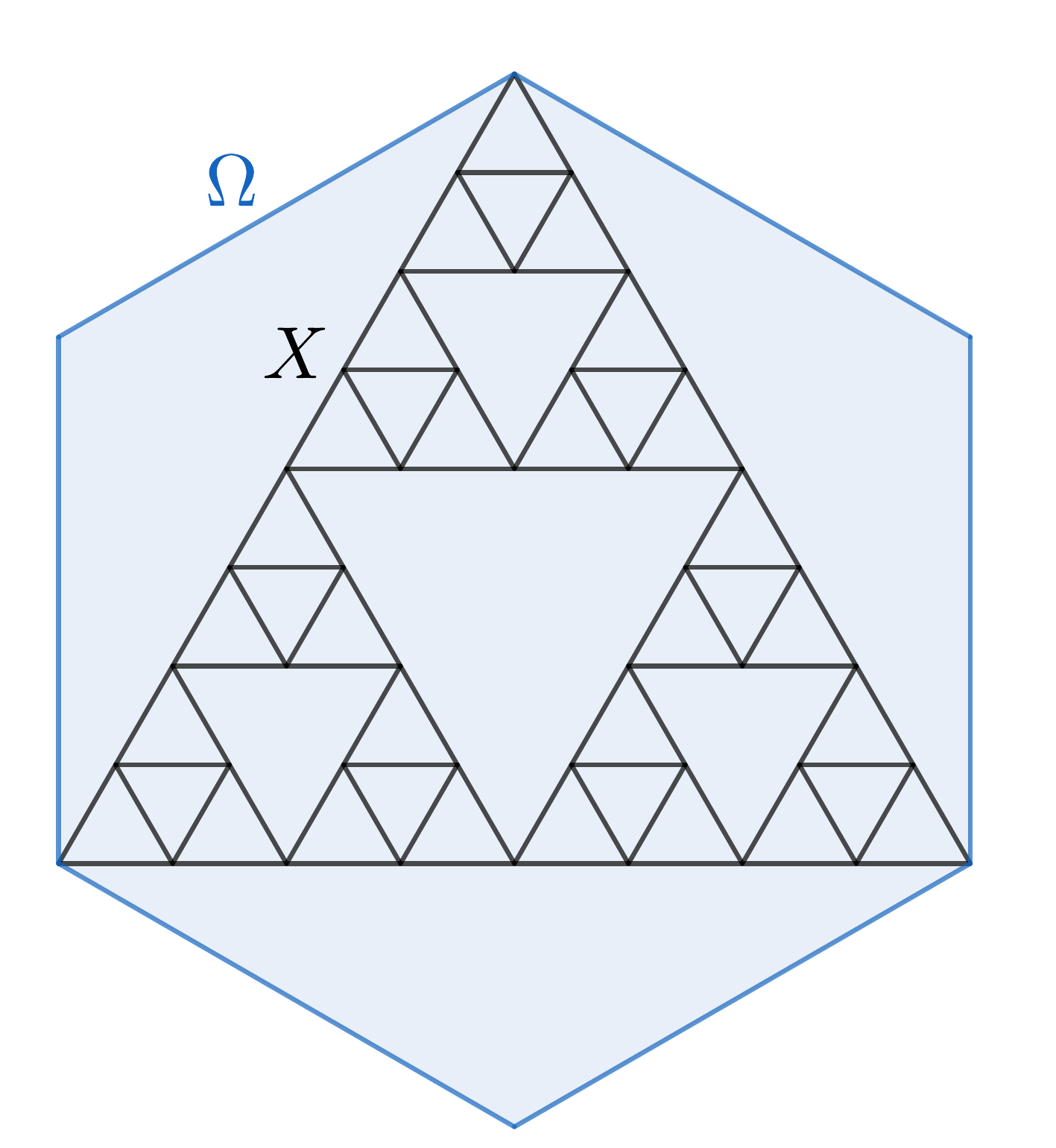}
    \caption[A relative fractal drum for the Sierpi\'nski triangle]{A relative fractal drum $(X,\Omega)$ for the Sierpi\'nski triangle $X$. Here, $X$ is depicted according to its second prefractal approximation with respect to an ordinary equilateral triangle and $\Omega$ is the interior of a regular hexagon containing $X$.}
    \label{fig:SierpinskiRFD}
\end{figure}

The terminology of a fractal drum makes the most literal sense in dimension two, when $X$ is the boundary of a membrane $\Omega$, thought of as a collection of connected drums of varying shapes. A perhaps more general term suitable for this construction in dimensions three and larger might be a \textit{fractal membrane} or, borrowing terminology common in string theory, as a \textit{fractal} $\dimension$\textit{-brane.} While we shall use the terminology of fractal drum in this work since our main class of explicit examples are indeed two dimensional shapes, this terminology is arguably more suitable in the general setting.

\section{Fractal Dimensions}
\label{sec:fractalDims}
%
%

Perhaps the most intriguing aspect of a fractal, and indeed among the best ways to define such shapes, is that their dimension does not behave like an ordinary nonnegative integer. This is unlike classical shapes such as points, lines, squares, triangles, cubes and the like. In fact, we must call into question, what even is \textit{dimension}? As it turns out, there are many different types of functions which input a set in $\RR^\dimension$ and output a number in a way consistent with the classical notions that a point has zero dimension, a line one, squares two, and so forth for cubes and onwards. 

There are a great many different types of fractal dimension that one can define (see for instance \cite{Fal90,LapRad24_IFG}). In this work, we will focus on the three most relevant in the present context: Minkowski (or box-counting) dimension, similarity dimension, and complex dimension. Notably, we will not at present consider the Hausdorff dimension, a popular notion of fractal dimension. In the context of comparing geometry and spectrum, it turns out that the Minkowski dimension is the more relevant; see, for example, the role of the Minkowski dimension (rather than the Hausdorff dimension) in a partial resolution of the modified Weyl-Berry conjecture \cite{Lap91}.

However, for the Hausdorff dimension enthusiasts, we note that it is still relevant in this work in the guise of equalling both the Minkowski and similarity dimensions. For the self-similar sets that we will consider, we impose a condition known as the open set condition (see Definition~\ref{def:OSC}). It is a theorem of Moran \cite{Mor46} that self-similar sets which satisfy the open set condition have equivalent Hausdorff, Minkowski, and similarity dimension.

\subsection{Minkowski Dimension}

Minkowski dimension is perhaps the most ``geometric" dimension among our three types, defined by means of counting the number of elements of a uniform partition of the ambient space the set intersects relative to the partition's mesh size. It plays an important role in the theory of fractal harps \cite{LapvFr13_FGCD} and (relative) fractal drums \cite{LRZ17_FZF}, and in fact it will be an example of a (real-valued) complex dimension. 

What we dub ``the" \textit{Minkowski dimension} is actually a slight abuse of words. Strictly speaking, there are two notions of Minkowski dimension, an \textit{upper} and a \textit{lower} Minkowski dimension, and \textit{the} Minkowski dimension is only defined when these two dimensions exist and equal the same value. 
In order to define upper and lower Minkowski dimensions, we will first need to define \textit{upper and lower Minkowski contents}. A geometric content is a function which assigns a numerical value to a set, much like a measure, but with less properties than a measure would. (The Minkowski contents, notably, fail to be finitely additive \cite{LapRad24_IFG}.) In the case of Minkowski content, it measures the amount by which the volume of a tubular neighborhood of a set (which is, as discussed in Chapter~\ref{chap:geometry}, is called the \textit{tube function} of that set) scales relative to the length parameter defining that neighborhood. 

We recall from Section~\ref{sec:harpsAndDrums} that the \index{Tubular neighborhood}\textbf{tubular neighborhood} of a set $X$ is, for any $\e>0$, defined by 
\[ X_\e = \set{y\in\RR^\dimension\suchthat \exists\,x\in X, d(x,y)<\e}, \]
where again $d$ represents the standard Euclidean distance between points. In what follows, let us denote by $|X|_\dimension$ the $\dimension$-dimensional Lebesgue measure of a set $X$. 
\begin{definition}[(Upper/Lower) Minkowski $t$-Dimensional Content]
    \label{def:MinkowskiContent}
    \index{Minkowski content}\index{Minkowski content!Upper Minkowski content}\index{Minkowski content!Lower Minkowski content}
    Let $X\subseteq \RR^\dimension$. The \textbf{upper and lower} $\mathit t$\textbf{-dimensional Minkowski contents} in $\RR^\dimension$, respectively $\M_{\dimension,t}^*(X)$ and $\M_*^{\dimension,t}(X)$, are defined by 
    \begin{align*}
        \M_{\dimension,t}^*(X) := \limsup_{\e\to0^+} \e^{-(\dimension-t)}|X_\e|_\dimension; &&
        \M_*^{\dimension,t}(X) := \liminf_{\e\to0^+} \e^{-(\dimension-t)}|X_\e|_\dimension.
    \end{align*}
    If the limit exists (i.e. if $\M_{\dimension,t}^*(X)=\M_*^{\dimension,t}(X)$), then its shared value is called \textbf{the} $\mathit t$\textbf{-dimensional Minkowski content} of $X$.
\end{definition}

For most values of $t\in\RR$, the upper and lower Minkowski contents are either $+\infty$ or zero. However, there is a critical value for each at which this value transitions from infinite to zero. This special value, at which the contents may be any value in $[0,\infty]$, is precisely the upper or lower Minkowski dimension, respectively. 
\begin{definition}[(Upper/Lower) Minkowski Dimension]
    \label{def:MinkowskiDim}
    \index{Minkowski dimension}\index{Minkowski dimension!Upper Minkowski dimension}\index{Minkowski dimension!Lower Minkowski dimension}
    Let $X\subseteq \RR^\dimension$. The \textbf{upper and lower} \textbf{Minkowski dimensions} of $X$, respectively $\overline{\dim}_\Mink(X)$ and $\underline{\dim}_\Mink(X)$, are defined by 
    \begin{align*}
        \overline{\dim}_\Mink(X)  := \inf\set{t\geq 0\suchthat \M^*_{\dimension,t}<\infty}; &&
        \underline{\dim}_\Mink(X) := \inf\set{t\geq 0\suchthat \M_*^{\dimension,t}<\infty}.
    \end{align*}
    If $\overline{\dim}_\Mink(X)=\underline{\dim}_\Mink(X)$, then its shared value $\dim_\Mink(X)$ is called \textbf{the Minkowski dimension} of $X$.     
\end{definition}

\subsection{Similarity Dimension}

Similarity dimension will arise from Moran's equation (see Equation~\ref{eqn:Moran} below), which is established directly for a fractal from a self-similar system having that set as its attractor/fixed point/invariant set. This equation depends only on the scaling ratios of the mappings in the system (and their respective multiplicities). Per Moran's theorem \cite{Mor46}, for self-similar fractals which satisfy the open set condition (see Definition~\ref{def:OSC} below), this equation yields an algebraic means to calculate the Minkowski dimension of a set. 

Suppose that $X\subset\RR^\dimension$ is self-similar, i.e. there exists a self-similar system $\Phi$ whose invariant set is $X$. For each $\ph\in\Phi$, let $\lambda_\ph$ denote its scaling ratio. Moran's equation is given by 
\begin{equation}
    \label{eqn:Moran}
    1 = \sum_{\ph\in\Phi} \lambda_\ph^s. 
\end{equation}
A real solution $s=D=\simdim(\Phi)$ to Equation~\ref{eqn:Moran} will be called the \index{Similarity dimension}\textbf{similarity dimension} of $\Phi$ (see Definition~\ref{def:upperSimDim}). 

We would like to say that such a solution $D$ is then the similarity dimension of the set $X$ itself. However, given a set $X$, there can be more than one IFS having $X$ as its attractor. In fact, we can always construct a new self-similar system still having $X$ as its attractor but with a different similarity dimension. Thus, we will need to impose some condition on $X$ in order for the definition to be well-defined.

First, though, we demonstrate the problem of non-uniqueness, as it gives some insight into how to avoid this problem. Given an IFS $\Phi$ (not necessarily self-similar, but without any trivial mappings) with invariant set $X$, choose a map $\ph\in\Phi$ and define the new map $\ph'=\ph\circ\ph$ which is distinct from $\ph$ (for instance, it has Lipschitz constant $r^2$, where $r$ is the Lipschitz constant of $\ph$). However, $\ph'[X]\subset \ph[X]$ since $\ph[X]\subseteq X$. If we define the new IFS $\Phi'=\Phi\cup\set{\ph'}$, we have that 
\[
    X = \bigcup_{\ph\in\Phi}\ph[X] = \bigcup_{\ph\in\Phi'}\ph[X],
\] 
which means that $X$ is also the invariant set of $\Phi'$. 

There is no reason to suspect that this artificial alteration of $\Phi$ should accurately reflect the geometry of $X$. In fact, if we now suppose that $\Phi$ is a self-similar system with similarity dimension $D=\simdim(\Phi)$ as defined by Moran's equation, we have that 
\[
    1 = \sum_{\psi\in\Phi} \lambda_\ph^D < (\lambda_{\ph'}^2)^D+\sum_{\psi\in\Phi} \lambda_\psi^D = \sum_{\psi\in\Phi'} \lambda_\psi^D,
\]
noting that the scaling ratio of $\ph'$ is $\lambda_\ph^2$. It follows that $\simdim(\Phi')\neq\simdim(\Phi)$: the similarity dimension cannot be the same. 

The open set condition (Definition~\ref{def:OSC}) is a natural separation condition that helps to rectify this problem. Introduced by Moran, this condition was used to prove his theorem which implies that this similarity dimension is exactly the Minkowski (and Hausdorff) dimension of the set $X$ \cite{Mor46}. So, for any other self-similar system having $X$ as its attractor, the similarity dimension must be the same; it is a geometric property of $X$ itself, not just $\Phi$. 

\begin{definition}[Open Set Condition]
    \label{def:OSC}
    \index{Open set condition}
    An iterated function system $\Phi=\set{\ph_k}_{k=1}^K$ on $\RR^\dimension$ satisfies the \textbf{open set condition (OSC)} if there exists a nonempty open set $U\subset\RR^\dimension$ such that:
    \begin{enumerate}
        \item $U\supset \bigcup\limits_{k=1}^K \ph_k[U]$;
        \item For each $k,j\in\set{1,...,K}$ with $k\neq j$, $\ph_k[U]\cap \ph_j[U]=\emptyset$. 
    \end{enumerate}
\end{definition}

In general, we can impose that if a bounded subset $X\subset\RR^\dimension$, $\dimension>0$, is the attractor of an IFS, then $|\Phi|\geq 2$. This is because the only nonempty, bounded set $X$ such that $X=\ph[X]$ for some contraction mapping $\ph$ is the singleton containing the origin, $\set{\mathbf{0}}$. For such a set $X$, a simple inductive argument shows that for any $n>0$ and $x_1\in X$, there is another point $x_n\in X$ so that $x=\ph^{\circ n}(x_n)$, or equivalently that $x_1\in \ph^{\circ n}[X]$. If $X$ is bounded, then there exists $C>0$ so that $X\subseteq B_C(\mathbf{0})$, a ball of radius $C$ about the origin $\mathbf{0}$. Letting $r\in[0,1)$ denote the Lipschitz constant of $\ph$, we have that $\ph[X]\subseteq B_{rC}(\mathbf{0})$ and (by another inductive argument) that for any $n>0$, $\ph^{\circ n}[X]\subseteq B_{r^nC}(\mathbf{0})$. It follows that for any $\e>0$, there exists $n>0$ so that $|x|\leq r^nC<\e$, which implies that $x=\mathbf{0}$. That $\mathbf{0}\in X$ follows from the fact that $X$ is nonempty and that $\mathbf{0}=\ph(\mathbf{0})$ for any contraction on Euclidean space. 

\begin{theorem}[Equivalence of Similarity and Minkowski Dimension]
    \label{thm:similarityDimension}
    Let $X\subset\RR^\dimension$ be a nonempty, bounded self-similar set which is not the singleton containing the origin. Suppose that there exists a self-similar system $\Phi$ having $X$ as its attractor satisfying the open set condition. Then there is a unique real number $\simdim(X)$, called the \index{Similarity dimension}\textbf{similarity dimension} of $X$, satisfying Equation~\ref{eqn:Moran} with $s=\simdim(X)$ and not depending on $\Phi$. In fact, $\simdim(X)=\dim_\Mink(X)$ and we have that $\simdim(X)\in(0,\dimension]$.
\end{theorem}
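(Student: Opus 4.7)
The plan is to argue in three stages: first, extract a unique real root $D$ of Moran's equation directly from the self-similar system $\Phi$; second, apply Moran's theorem \cite{Mor46} to identify $D=\dim_\Mink(X)$; third, use this identification to deduce both independence from $\Phi$ and the bound $D\in(0,N]$.

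For the first stage, define $f\colon\RR\to(0,\infty)$ by $f(s)=\sum_{\ph\in\Phi}\lambda_\ph^s$. Since every scaling ratio satisfies $\lambda_\ph\in(0,1)$, $f$ is continuous and strictly decreasing with $f(s)\to+\infty$ as $s\to-\infty$ and $f(s)\to 0$ as $s\to+\infty$. The hypothesis $X\neq\{\mathbf{0}\}$, combined with the argument given just before the theorem statement, forces $|\Phi|\geq 2$, so $f(0)=|\Phi|\geq 2>1$, and the intermediate value theorem yields a unique real root $D>0$ of $f(D)=1$. Once the second stage identifies $D$ with $\dim_\Mink(X)$, the upper bound $D\leq N$ is automatic since $X$ is a bounded subset of $\RR^N$, and independence from $\Phi$ is also immediate because $\dim_\Mink(X)$ is a geometric invariant of $X$ alone.

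The heart of the proof is the second stage. The strategy is a two-sided covering argument. Enumerate finite words $w=(k_1,\ldots,k_n)$ in the letters of $\Phi$, write $\ph_w=\ph_{k_1}\circ\cdots\circ\ph_{k_n}$ with $\lambda_w=\prod_i\lambda_{k_i}$, and iterate the invariance relation~\ref{eqn:defInvariantSet} to obtain $X=\bigcup_w\ph_w[X]$ over any fixed-length word set. For $\e>0$, let $\mathcal{W}_\e$ denote the stopping family of shortest extensions with $\lambda_w\leq\e$; iterating Moran's equation yields $\sum_{w\in\mathcal{W}_\e}\lambda_w^D=1$, and since each $\lambda_w\geq\lambda_{\min}\e$ with $\lambda_{\min}=\min_k\lambda_k>0$, one finds $|\mathcal{W}_\e|\asymp\e^{-D}$. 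Covering each $\ph_w[X]$ by a ball of radius $O(\e)$ gives $|X_\e|_N\lesssim\e^{N-D}$, hence $\overline{\dim}_\Mink(X)\leq D$. The reverse inequality uses the open set condition crucially: fixing $U$ as in Definition~\ref{def:OSC} and a reference ball $B\subset U$, the images $\ph_w[B]$ for $w\in\mathcal{W}_\e$ are pairwise disjoint balls of radius $\asymp\e$ sitting inside $X_{C\e}$ for a constant $C$ depending only on $\mathrm{diam}(U)$, which gives $|X_\e|_N\gtrsim\e^{N-D}$ and hence $\underline{\dim}_\Mink(X)\geq D$.

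The main obstacle is the lower bound in stage two: pairwise disjointness of the images $\ph_w[U]$ across the variable-length stopping family $\mathcal{W}_\e$ must be obtained by an inductive use of OSC (disjointness at a single generation propagates under the contractions, but one must verify no overlap between a word and an ancestor's sibling), and the passage from disjoint copies of $U$ to genuine balls of comparable radius sitting near $X$ requires a geometric lemma controlling $\mathrm{diam}(\ph_w[U])$ in terms of $\lambda_w$. Since this is the classical theorem of Moran and is not the novel content of the present work, we invoke it directly from \cite{Mor46} rather than reproving it in full.
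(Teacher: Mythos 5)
Your proof is correct, but it reaches the bound $\simdim(X)\leq N$ by a different route than the paper, and the difference is worth noting. You obtain $D>0$ and uniqueness exactly as the paper does (via $f(0)=|\Phi|\geq 2>1$, monotonicity, and the intermediate value theorem), but for the upper bound you defer to Moran's theorem: once $D=\dim_\Mink(X)$ is established, $D\leq N$ is automatic since $X\subset\RR^N$ is bounded. The paper instead proves $D\leq N$ directly and elementarily from the open set condition, \emph{without} invoking the identification with Minkowski dimension: if $U$ is a feasible open set with $U\supset\bigcup_{\ph\in\Phi}\ph[U]$ pairwise disjointly, then by monotonicity and the scaling of Lebesgue measure $m(U)\geq\sum_{\ph\in\Phi}\lambda_\ph^N m(U)>0$, so dividing by $m(U)$ gives $p(N)=\sum_\ph\lambda_\ph^N\leq 1$, and the intermediate value theorem lands the root in $(0,N]$. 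This is a cleaner logical structure -- the inequalities on the root are established before, and independently of, the deep content of Moran's theorem -- and the paper notes that the technique (bounding Dirichlet polynomials by evaluating at the ambient dimension, via a measure-theoretic argument on the feasible open set) recurs later, e.g. in Proposition~\ref{prop:holomorphicitySZF} and the proofs in Section~\ref{sec:SFEsolutions}. Your sketch of Moran's theorem itself (stopping families, the counting estimate $|\mathcal{W}_\e|\asymp\e^{-D}$, disjoint balls via OSC for the lower bound, and the acknowledged subtlety that disjointness across a variable-depth stopping family requires the incomparability of words) is accurate and correctly flagged as material to be cited rather than reproved; the paper makes the same choice.
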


The independence of the similarity dimension of a given set $X$ on any particular self-similar system $\Phi$ with $X$ as its attractor is due to Moran and his proof of the equivalence of different fractal dimensions under the OSC \cite{Mor46}. The existence of a unique real solution to a particular incarnation of Moran's equation (Equation~\ref{eqn:Moran}), as well as the inequalities that it satisfies, is actually a rather elementary exercise in calculus which is short enough to include here. The ideas of this portion of the proof will be useful later in Chapter~\ref{chap:SFEs}.

Given a self-similar system $\Phi$, define the polynomial
\[ p(t) := \sum_{\ph\in\Phi} \lambda_\ph^t. \]
We have that $p(0)=|\Phi|$ and, by the discussion preceding the theorem, that $p(0)\geq 2$ so long as $X$ is bounded and not simply the singleton containing the origin. Further, we claim that $p(\dimension)\leq 1$. To see this, let $U$ be a nonempty open set such that $U\supset \cup_{\ph\in\Phi}\ph[U]$. By properties of the Lebesgue measure, it follows that
\[ m(U)\geq \sum_{\ph\in\Phi} \lambda_\ph^\dimension m(U) > 0. \]
Positivity follows from the openness of $U$ (and the positivity of each scaling ratio $\lambda_\ph$). Dividing the above inequality by the measure of $U$ yields the desired inequality. 

Since $p(0)\geq 2 \geq 1 \geq p(\dimension)$, by continuity and the intermediate value theorem, there exists a $t_0\in (0,\dimension]$ so that $p(t_0)=1$, proving the existence of a solution. To see uniqueness, note that since each $\lambda_\ph\in(0,1)$, we have that each $\log(\lambda_\ph)<0$. With this and the positivity of exponential maps for real inputs, it follows that 
\[ p'(t) = \sum_{\ph\in\Phi} \lambda_\ph^t\log(\lambda_\ph) < 0. \]
So, $p$ is strictly decreasing, whence the value $t_0$ such that $p(t_0)=1$ is unique (with respect to a fixed IFS $\Phi$).

\subsection{Complex Dimensions}

Lastly, and most importantly in this work, is the notion of a \textit{complex dimension}, of which a set can have multiple according to its component pieces at different scales. We will give a short overview here, but we will more formally revisit this topic in Chapter~\ref{chap:geometry} after having established the general framework we need in order to compute explicitly the possible complex dimensions of self-similar sets. 

These complex dimensions will again feature in a foundational way in Chapter~\ref{chap:heat}, as these are fundamental to describing spectral, as well as geometric, quantities of a self-similar fractal. Indeed, we expect this to be true at large as well for many more, if not all, types of fractals. Complex dimensions, notably, also give a very simple but elegant candidate for \text{the} definition of a fractal (as proposed in \cite{LapvFr13_FGCD}): a fractal is a set which has at least one non-real complex dimension. 

Firstly, let $\eta$ denote a generalized fractal harp. The \textbf{geometric zeta function} $\zeta_\eta$ is given by the Mellin transform of the measure:
\[ \zeta_\eta(s) := \int_0^\infty t^{-s}\,d\eta(t). \]
This defines $\zeta_\eta$ for $s\in\CC$ with sufficiently large real part, and it may be analytically continued in the complex plane. Let $W\subset\CC$ be such a domain on which $\zeta_\eta$ is meromorphic. Then we may define the \index{Complex dimensions!of generalized fractal harps}\textbf{complex dimension} of $\eta$ in $W$ to be the set $\Dd_\eta(W)$ of poles of $\zeta_\eta$ contained in $W$. 

For example, Figure~\ref{fig:CantorComplexDims} depicts the complex dimensions of the Cantor harp (as in Figure~\ref{fig:CantorHarp}), as well as one such choice of window. Note that in this particular example, the zeta function of the Cantor harp has a meromorphic continuation to all of $\CC$, so the choice of window is arbitrary.
\begin{figure}
    \centering
    \includegraphics[width=6cm]{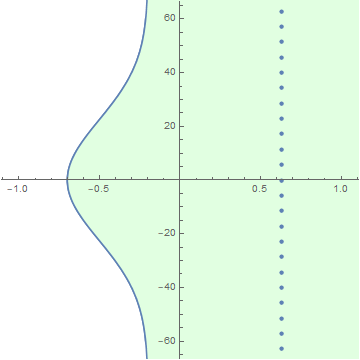}
    \caption[Complex dimensions of the Cantor harp]{The set of complex dimensions $\Dd_\eta(W)$ of the Cantor harp $\eta=\sum_{n=1}^\infty 2^{n-1}\delta_{3^{-n}}$ in the window $W$, depicted as a shaded region. Explicitly, the complex dimensions are the points of the form $\omega=\log_3(2)+ 2\pi i k/\log 3$ for $k\in\ZZ$.}
    \label{fig:CantorComplexDims}
\end{figure}

For a relative fractal drum $(X,\Omega)$, we define the set of complex dimensions of $X$ relative to $\Omega$. As before, we must define a suitable zeta function, analytically continue it, and then its poles will be the complex dimensions of the RFD. As before, let $X_\e$ denote a tubular neighborhood of $X$. Now, though, we will consider a relative tubular neighborhood, the set $X_\e\cap\Omega$, and take the Lebesgue measure of this set. The \index{Tube zeta function!Relative tube zeta function}\textbf{relative tube zeta function} $\tubezeta_{X,\Omega}$ of the RFD $(X,\Omega)$ is then given by 
\[ \tubezeta_{X,\Omega}(s;\delta) = \int_0^\delta t^{s-1}|X_t\cap\Omega|_\dimension\,dt, \]
where $\delta>0$. This zeta function may be analytically continued in the complex plane. 

Let $W\subset\CC$ be a domain on which $\tubezeta_{X,\Omega}$ is meromorphic; then the \index{Complex dimensions!of a relative fractal drum}\textbf{complex dimensions} of $X$ relative to $\Omega$ in $W$ is the set $\Dd_{X,\Omega}(W)$ of poles of $\tubezeta_{X,\Omega}$ in $W$. We note that the set of poles is independent of $\delta>0$. Also, we note that there are different constructions of related fractal zeta functions, most notably the \textit{distance zeta function}, which have equivalent sets of poles \cite{LRZ17_FZF}. For more information on complex dimensions in this work, see Chapter~\ref{chap:geometry} in which we explicitly compute the possible complex dimensions of self-similar relative fractal drums with appropriate separation conditions (such as the open set condition). For more information about complex dimensions in general, see \cite{LapvFr13_FGCD,LRZ17_FZF}.

\section{Examples: Fractal Snowflakes}
\label{sec:fractalExamples}
%
%

We close this chapter with explicit examples of self-similar fractals. We will focus mostly on a class of fractal snowflakes, called generalized von Koch fractals, as we will provide explicit forms of our results on these shapes. However, there are a great many varied examples from which one might choose to apply these results to, including but not limited to Sierpi\'nski carpets and gaskets, Menger sponges, Cantor sets and dust, etc. 

\subsection{Generalized von Koch Curves}


Firstly, we introduce the von Koch curve (depicted as the leftmost curve in Figure~\ref{fig:vKCurves}). This curve was originally introduced and studied by von Koch as a geometrically constructed example of a planar curve having nowhere defined tangents \cite{Koch1904,Koch1906}. The curve on the right of Figure~\ref{fig:vKCurves} is a modification of this construction: instead of equilateral triangles, squares are used in a way which will be made precise presently. This curve, since it follows a modification of von Koch's original construction, is called a generalized von Koch curve (GKC). 

\begin{figure}[t]
    \centering
    \subfloat{\includegraphics[width=4cm]{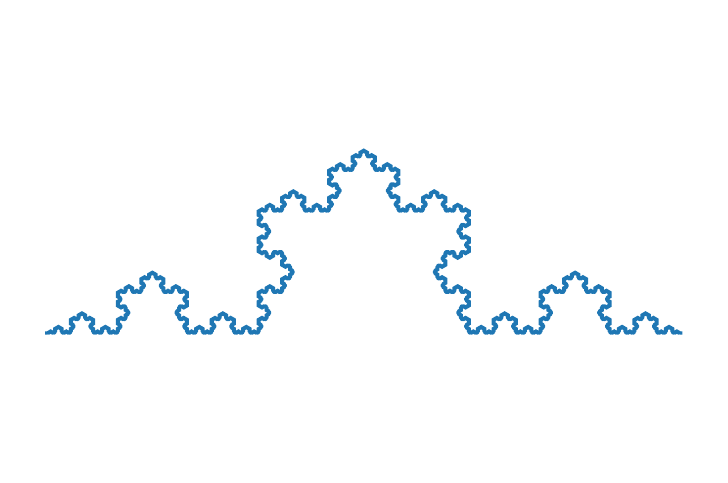}}
    \qquad
    \subfloat{\includegraphics[width=4cm]{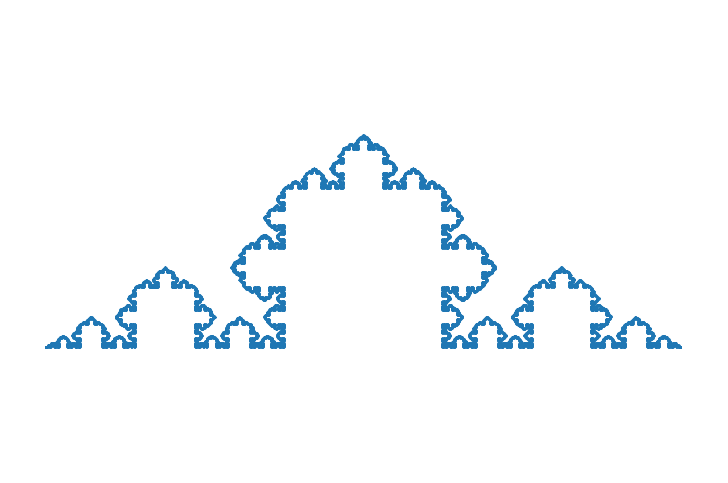}}
    \caption[The von Koch curve and a generalization thereof]{Two fractal curves: a planar curve having nowhere-defined tangent introduced by von Koch (left) \cite{Koch1904} and a generalized von Koch curve (right).}
    \label{fig:vKCurves}
\end{figure}

We will define these curves by constructing an explicit self-similar iterated function system $\Phi$ whose attractor is the curve in question. There will be two parameters in the construction: $n\geq 3$, which corresponds to the type of regular $n$-gon used in constructing the curve, and $r\in(0,1)$, a scaling ratio corresponding to the length of the gap in the middle of the curve. There is an additional conjugate scaling ratio, which we call $\ell$, given by $\ell=(1-r)/2$. In the von Koch curve, $r=\ell$, but for all of the other GKCs that we will consider they will be distinct. This is because we must impose constraints on the admissible values of $r$ in order to avoid overlapping in the curves (see Figure~\ref{fig:intersectingGKFs} for a depiction of such overlap in the context of generalized von Koch snowflakes, obtained as unions of GKCs).


In order to construct the self-similar system $\Phi=\Phi_{n,r}$, we will need some preliminaries. Firstly, we shall define some important angles related to a regular $n$-gon. Define $\theta_n:=\frac{2\pi}n$ to be the exterior angle (equivalent to the central angle) and define $\alpha_n:=\pi-\frac{2\pi}n$ to be the interior angle of a regular $n$-gon. See Figure~\ref{fig:ngonAngles} for a depiction of these angles on a hexagon.

\begin{figure}[t]
    \centering
    \includegraphics[width=0.4\linewidth]{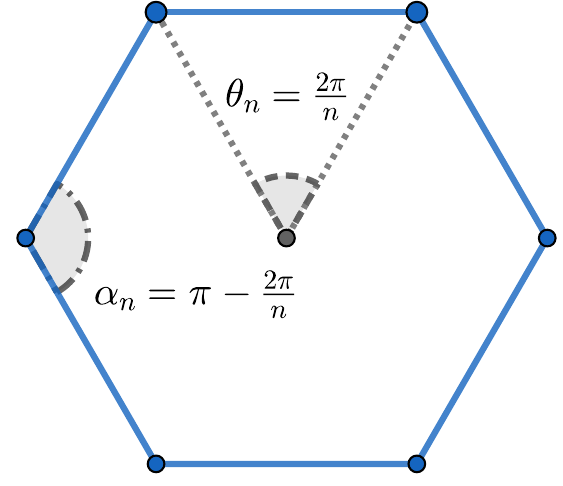}
    \caption[Angles of a regular $n$-gon]{A depiction of the central angle $\theta_n=2\pi/n$ and the interior angle $\alpha_n=\pi-2\pi/n$ of a regular $n$-gon, illustrated on a hexagon where $n=6$.}
    \label{fig:ngonAngles}
\end{figure}

Next, we define some building block transformations of $\RR^2$. This includes translations $T_{(a,b)}$, rotations $R_\theta$ by an angle $\theta$, and uniform scaling maps $S_\lambda$. 
\begin{equation}
    \label{eqn:transforms}
    \begin{aligned}
        T_{(a,b)}(x,y)  &:= (x+a,y+b),                                              &(a,b)  &\in\RR^2;  \\
        R_\theta(x,y)     &:= (x\cos\theta-y\sin\theta, x\sin\theta+y\cos\theta),     &\theta &\in\RR;    \\
        S_\lambda(x,y)  &:= (\lambda x,\lambda y),                                  &\lambda&\in\RR^+. 
    \end{aligned}
\end{equation}
With all of this geometric information, we may now explicitly write a self-similar system whose attractor will be a generalized von Koch curve. In general, a generalized von Koch curve with parameters $(n,r)$ is any set which is isomorphic to the particular attractor that our explicit self-similar system constructs. 

Let $n\geq 3$ be an integer, let $r\in(0,1)$, and define $\ell=\frac{1-r}2$. Let $T_{(a,b)},S_\lambda,$ and $R_\theta$ be transformations of $\RR^2$ as defined in Equation~\ref{eqn:transforms} and let $\theta_n$ and $\alpha_n$, respectively, be the central and interior angles of a regular $n$-gon (such as those depicted in Figure~\ref{fig:ngonAngles} for $n=6$). Define the self-similar system $\Phi_{n,r}$ by 
\begin{align}
    \label{eqn:defGKCsystem}
    \begin{split}
        \Phi_{n,r}  &:= \set{ \ph_L, \ph_R, \psi_k:\RR^2\to\RR^2, k=1,...,n-1 }, \\
        \ph_L       &:= S_\ell, \\
        \ph_R       &:= T_{(\ell+r,0)}\circ S_\ell, \\
        \psi_1      &:= T_{(\ell,0)}\circ R_{\alpha_n} \circ S_r, \\
        \psi_k      &:= T_{\psi_{k-1}(1,0)}\circ R_{\alpha_n-(k-1)\theta_n}\circ S_r,\quad k>1. 
    \end{split}
\end{align} 
The maps $\ph_L$ and $\ph_R$ correspond to the left and right pieces of a GKC and the mappings $\psi_k$, for $k=1,...,n-1$, correspond to the $n-1$ edges of a regular $n$-gon which are adjoined about the middle gap. 
\begin{definition}[${(n,r)}$-von Koch Curve]
    \label{def:vkCurve}
    \index{Generalized von Koch fractal!Generalized von Koch curve}
    Let $n\geq3$ and $r\in(0,1)$ and define $\Phi_{n,r}$ as in Equation~\ref{eqn:defGKCsystem}. The invariant set $\Cnr$ of $\Phi_{n,r}$, or any subset of $\RR^2$ which is isometric to $\Cnr$, is called an ${(n,r)}$\textbf{-von Koch curve}.    
\end{definition}

An algorithmic approach to constructing the curve is given by iterating the system $\Phi_{n,r}$ on the unit interval $[0,1]\times\set{0}$. The first step removes the middle $r\nth$ piece of the interval on the $x$-axis and adjoins the $n-1$ other sides of a regular $n$-gon with length $r$. Each successive step repeats this process on every line segment, again removing the middle $r\nth$ portion of the line and adjoining the edges of a polygon whose side length is $r$ times that of the line segment's length. The regular polygon is always added with the same orientation with respect to the line segment. 

\subsection{Generalized von Koch Snowflakes}

\begin{figure}[t]
    \centering
    \subfloat{\includegraphics[trim=60 0 60 0,clip,width=3cm]{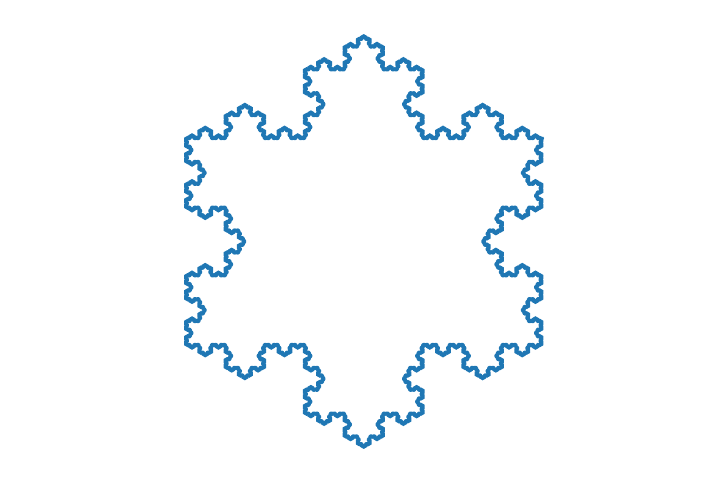}}
    \qquad 
    \subfloat{\includegraphics[trim=60 0 60 0,clip,width=3cm]{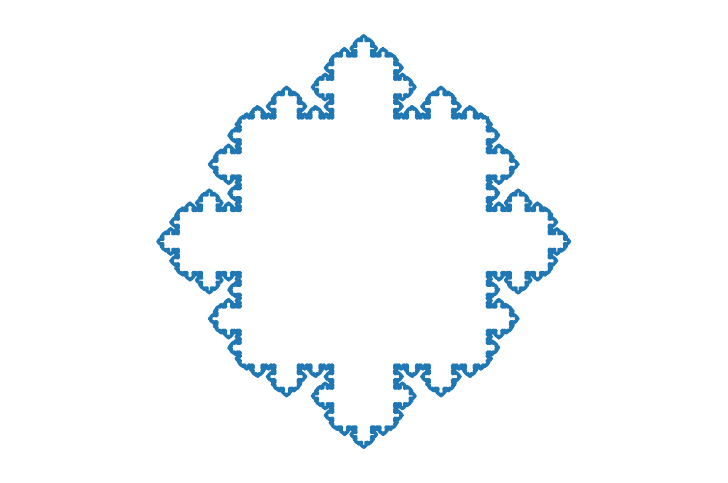}}
    \qquad 
    \subfloat{\includegraphics[trim=60 0 60 0,clip,width=3cm]{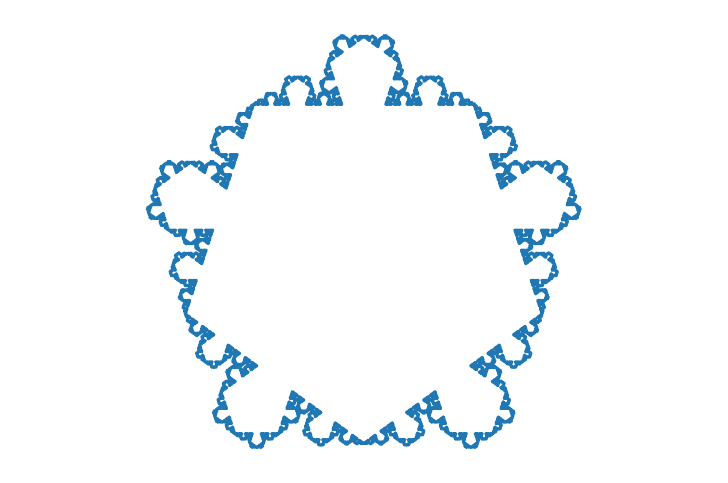}}
    \caption[The von Koch snowflake and two generalizations thereof]{The von Koch snowflake (left) and two of its generalizations, the ``squareflake'' (middle) and the ``pentaflake'' (right).}
    \label{fig:threeGKFs}
\end{figure}

Given $n$ total $(n,r)$-von Koch curves, a generalized von Koch snowflake is simply the union of these curves placed about the edges of a regular $n$-gon (of side length one), as depicted in Figure~\ref{fig:threeGKFs}.
\begin{definition}[$(n,r)$-von Koch Snowflake]
    \label{def:vkSnow}
    \index{Generalized von Koch fractal!Generalized von Koch snowflake}
    Let $n\geq 3$ be an integer, let $r\in(0,1)$, and let $p_k$, $k=1,...,n$ be the vertices of a regular $n$-gon with unit side length in $\RR^2$. For each $k<n$, let $C_{n,r}^k$ be the $(n,r)$-von Koch curve having endpoints $p_k$ and $p_{k+1}$, or in the case of $k=n$ with endpoints $p_{n}$ and $p_1$, and oriented so that the curve protrudes outwards with respect to the polygon formed by the points $p_k$. 

    An $(n,r)$\textbf{-von Koch snowflake} is any subset of $\RR^2$ which is isometric to the union of these $(n,r)$-von Koch curves $C_{n,r}^k$, $k=1,...,n$, as placed about the edges of a regular $n$-gon.
\end{definition}
Note that the set $(3,\frac13)$-von Koch snowflake is a standard von Koch snowflake and depicted leftmost in Figure~\ref{fig:threeGKFs}. Additionally, we have depicted a ``squareflake'' and a ``pentaflake,'' which are generalized snowflake fractals with fourfold and fivefold symmetry, respectively. Figure~\ref{fig:pentaflakeZoom} depicts a prefractal approximation of the pentaflake $K_{5,\frac15}$ with two extra levels of zoom onto one of the fringes, which can be seen to be pentagons.

\begin{figure}
    \centering
    \subfloat{\includegraphics[trim= 60 0 60 0, clip, width=3cm]{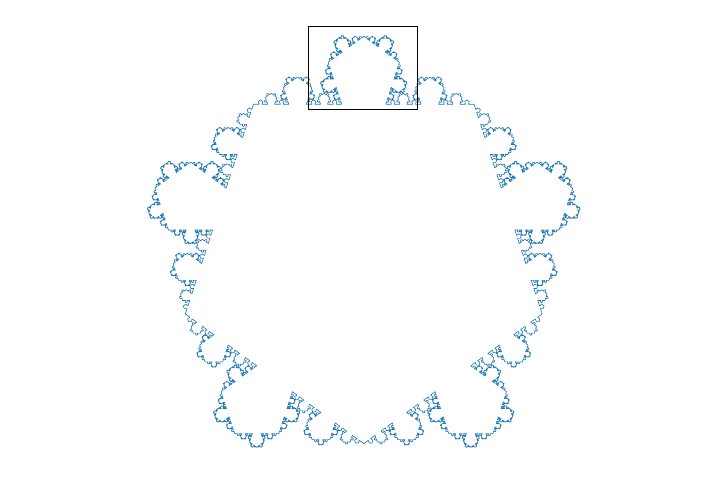}}
    \qquad
    \subfloat{\includegraphics[width=3cm]{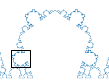}}
    \qquad
    \subfloat{\includegraphics[width=3cm]{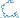}}
    \caption[Zooming in on the pentaflake fractal]{A depiction of $K_{5,\frac15}$ (at the fourth stage of the prefractal approximation) together with two zoomed-in images of the pentagonal frills.}
    \label{fig:pentaflakeZoom}
\end{figure}

Additionally, it is of note that we are identifying these von Koch snowflakes as unions of curves. One might instead define a snowflake to be the region(s) enclosed by the union of $(n,r)$-von Koch curves, in which case what we call the snowflakes here would be the boundary of this set. In this setting, it would be suitable to call $\Knr$ an $(n,r)$-von Koch \textit{snowflake boundary} or a \textit{snowflake curve} to be unambiguous. In Chapter~\ref{chap:heat}, this distinction will be necessary as we consider a relative fractal drum formed by a snowflake boundary relative to its corresponding interior. 

An important consideration is whether or not an $(n,r)$-von Koch snowflake (boundary) is a topologically simple curve or not. When $r$ is large enough, the construction of an $(n,r)$-von Koch curve can self-intersect such as for two of the fractals depicted in Figure~\ref{fig:intersectingGKFs}. The rightmost figure is obviously self-intersecting, but this is much harder to see for the middle figure. Thus, we need a precise criterion for when such curves self-intersect or not. Paquette and Keleti proved an upper bound for $r$, depending on $n$, which is a sufficient condition for the resulting $(n,r)$ curves and snowflakes to be free of self-intersection, or self-avoidant \cite{KP10}.

\begin{figure}[t]
    \centering
    \subfloat{\includegraphics[trim=80 0 80 0,clip,width=3cm]{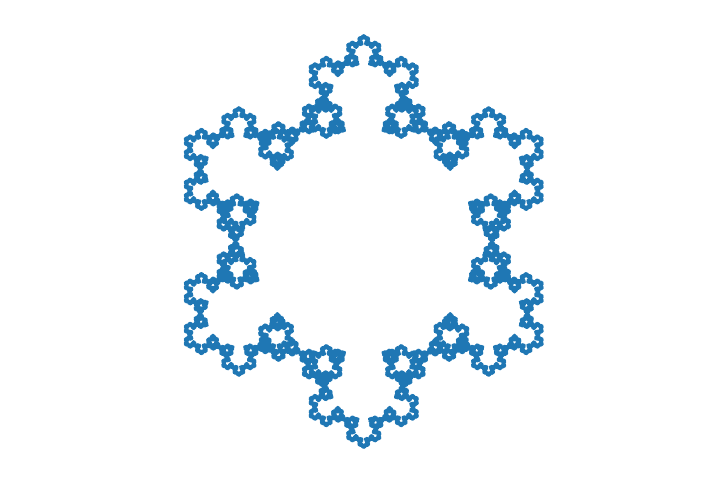}}
    \qquad
    \subfloat{\includegraphics[trim=80 0 80 0,clip,width=3cm]{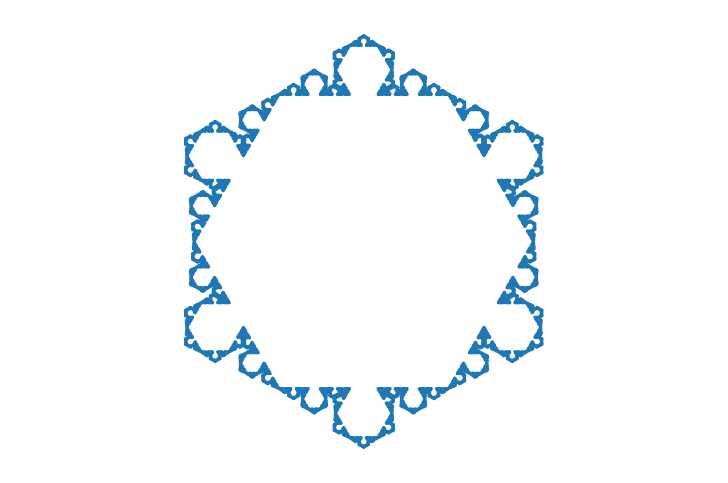}}
    \qquad
    \subfloat{\includegraphics[trim=80 0 80 0,clip,width=3cm]{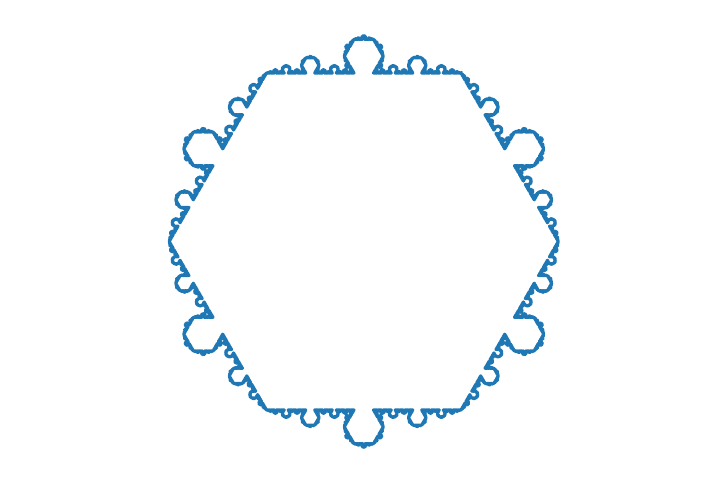}}
    \caption[Emergence of self-intersection in hexaflake fractals]{Three $(6,r)$-von Koch snowflakes, with values of $r$ equal to $0.3$, $\frac16$, and $0.1$ respectively from left to right. The fractal curve is topologically simple if $r<1-\frac{\sqrt{3}}2$, such as with the rightmost figure \cite{KP10}. For the other two fractals, depicted in the middle and on the left, the curves are self-intersecting.}
    \label{fig:intersectingGKFs}
\end{figure}

\begin{proposition}[Self-Avoidance of GKFs \cite{KP10}]
    \label{prop:selfAvoid}
    An $(n,r)$-von Koch curve has no self-intersections if the scaling ratio $r>0$ satisfies
    \begin{align*} 
        r &< \frac{\sin^2(\pi/n)}{\cos^2(\pi/n)+1}, &&\text{ if }n\text{ is even, and }\\
        r &< 1-\cos(\pi/n),                         &&\text{ if }n\text{ is odd.}
    \end{align*}
    The boundary of a corresponding $(n,r)$-von Koch snowflake is topologically simple under these conditions.
\end{proposition}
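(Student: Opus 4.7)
The plan is to establish topological simplicity of $\Cnr$ via self-similarity and induction on iteration depth. The key observation is that if the first iteration of $\Phinr$ yields a simple polygonal curve whose sub-copies remain pairwise disjoint except at shared gluing vertices, then by the self-similar contracting nature of the IFS, every prefractal is a simple polygonal curve, and the limit $\Cnr$ --- as a Hausdorff limit of simple curves with fixed endpoints --- is itself topologically simple.

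First I would verify that the first-level polygonal curve, which consists of a horizontal left segment of length $\ell = (1-r)/2$, the $n-1$ non-bottom edges of a regular $n$-gon of side $r$ sitting above the middle segment $[\ell,\ell+r]\times\{0\}$, and a symmetric right segment, is trivially self-avoiding: the polygon lies entirely in the closed upper half-plane and meets the base segment only at the gluing vertices $(\ell, 0)$ and $(\ell+r, 0)$. The binding constraint emerges only at the second iteration.

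Second, I would identify the critical touching pair at the second iteration. By an elementary convexity argument, the closest competing pair is the small $n$-gon bump grown above the left horizontal segment versus the small $n$-gon bump grown outward from the slanted polygon-edge adjacent to $(\ell, 0)$; by reflection, the right side gives the same constraint. Setting the touching condition as an equality --- demanding that the closest extremal feature of one bump coincide with the closest extremal feature of the other --- yields an algebraic inequality in $r$. Here the parity of $n$ enters decisively: for $n$ even, the $n$-gon has a vertex directly opposite its bottom edge, the touching is corner-to-corner, and trigonometric simplification using $\ell = (1-r)/2$ and the interior angle $\alpha_n = \pi - 2\pi/n$ produces $r = \sin^2(\pi/n)/(\cos^2(\pi/n)+1)$ (I have verified the $n=4$ case directly, where the condition reduces to $\ell(1-r)/2 = r^2$, giving $r = 1/3$); for $n$ odd, the top of the $n$-gon is a vertex opposite the midpoint of the bottom edge, the nearest-feature pair is of a different combinatorial type, and the analogous computation yields $r = 1 - \cos(\pi/n)$.

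Third, I would close the induction with a hull argument. Define $H$ to be the convex hull of the limiting curve above a unit base segment; under the assumed bound on $r$, this is a bounded subset of the closed upper half-plane with the two base endpoints on its boundary. The critical inequality derived above is equivalent to the $n+1$ rescaled, translated, and rotated copies $\varphi(H)$, $\varphi \in \Phinr$, being pairwise interior-disjoint and meeting only at the gluing vertices. Iterating this disjointness, the $k$-th prefractal is a simple polygonal curve for every $k$; passing to the Hausdorff limit yields topological simplicity of $\Cnr$. For the snowflake $\Knr$, distinct $(n,r)$-von Koch curves share endpoints only at the vertices of the underlying $n$-gon, so the snowflake boundary is topologically simple whenever each constituent curve is.

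The main obstacle is the parity-dependent trigonometric derivation in the second step. The two cases require genuinely different geometric setups --- for $n$ even the slanted edge adjacent to $(\ell,0)$ makes angle $\pi/2 + \pi/n$ with the positive $x$-axis and the closest bump corner must be located via rotation by $\alpha_n$; for $n$ odd the closest contact is an apex-to-edge interaction rather than corner-to-corner --- and these force separate Euclidean-distance calculations. Carrying out these distance computations consistently and reducing them via $\ell = (1-r)/2$ to the stated closed forms is exactly the content of the detailed analysis of Paquette and Keleti \cite{KP10}.
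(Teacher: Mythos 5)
The paper does not prove this proposition: it cites it directly from Keleti and Paquette \cite{KP10}, and the surrounding text (both before the statement and after it) refers the reader to their paper for the argument. So there is no in-paper proof to compare against, and your sketch has to stand on its own. Your overall strategy --- identify an invariant convex set $H$ containing the attractor, show that the copies $\ph[H]$, $\ph\in\Phinr$, are pairwise interior-disjoint precisely under the stated bound on $r$, and bootstrap simplicity from the resulting nested cell structure --- does track the general shape of the argument in \cite{KP10}, and your $n=4$ verification that the corner-collision equality reduces to $\ell=r$, i.e.\ $r=1/3$, is correct.

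There is, however, a genuine gap in the step you use to close the argument. You assert that ``passing to the Hausdorff limit yields topological simplicity of $\Cnr$,'' but Hausdorff convergence of simple curves does not preserve simplicity: a sequence of embedded polygonal arcs with fixed endpoints can Hausdorff-converge to a figure-eight by pinching two interior points together. What the hull disjointness actually buys you is something sharper and of a different character: it forces the coding (address) map $\pi:\set{1,\ldots,n+1}^{\NN}\to\Cnr$ to identify at most pairs of addresses corresponding to adjacent gluing vertices, so that $\Cnr$ is a quotient of $[0,1]$ only by the trivial identifications and is therefore a topological arc. That combinatorial injectivity argument, not Hausdorff convergence of the prefractals, is what licenses the conclusion, and without it your induction proves that every \emph{prefractal} is simple but not that the limit is. A second, smaller issue: your parity description of the $n$-gon is backwards in the even case. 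A regular $n$-gon standing on one side has another \emph{side} (not a vertex) opposite its base when $n$ is even, and a \emph{vertex} opposite the base midpoint when $n$ is odd; so the ``corner-to-corner'' versus ``apex-to-edge'' labelling you attach to even and odd would need to be re-examined when carrying out the critical-pair distance computations you rightly defer to \cite{KP10}. Finally, the claim that the closest competing pair is the one you describe, and that the second-iteration constraint is binding for all deeper levels, is exactly the technical content of that reference rather than an ``elementary convexity argument''; deferring it is fine in a sketch but should be flagged as the real work.
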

In particular, this proposition implies that for each $n\geq 3$, there is an interval of admissible scaling ratios $0<r<r_0$. The condition in Proposition~\ref{prop:selfAvoid} is sufficient but not necessary, as the fractals can interleave for certain values of $n$ and $r$. We refer the interested reader to Keleti and Paquette's paper \cite{KP10} for more information. Of note, $n=6$ is the first value for which $1/n$ exceeds the value of $r_0$ provided in the proposition.

\chapter{Scaling Functional Equations}
\label{chap:SFEs}
%
%

In this chapter, we consider a type of functional equation related to multiplicative scaling, which we call scaling functional equations. Strictly speaking, at the cost of verbosity they might be called approximate scaling functional equations in the sense that they possess an error term. The strength and novelty of our results lies in this ability to handle situations which are not cleanly separated and/or not without error, as these exact situations typically occur when the fractals of interest are realized as unions of disjoint pieces with no overlap. Furthermore, we take a multiplicative lens in our approach, as opposed to the additive lens typically taken in application of the renewal theorem of Feller, which serves to directly reveal the role of complex dimensions in solving these functional equations. 

The renewal theorem is due to Feller, who was studying queuing problems in probability theory \cite{Fel50}. This may be regarded as the earliest incarnation of solving additive functional equations, where a function is expressed in terms of shifts of its input. These methods have been applied to great success in the realm of fractal geometry, including but not limited to the work of Strichartz on self-similar measures \cite{Str1,Str2,Str3}, the work of Lapidus of the vibrations of fractal drums \cite{Lap93_Dundee}, and the work of Kigami and Lapidus on the Weyl problem \cite{KL93}. Additionally, in the case of von Koch fractals, the renewal theorem has ben used to study the heat content of the von Koch snowflake \cite{FLV95a} as well as of generalized von Koch fractals \cite{vdBGil98,vdBHol99,vdB00_generalGKF,vdB00_squareGKF}. Furthermore, the notion of a (scaling) functional equation was employed by Deniz, Ko\c cak, \"Ozdemir, and \"Ureyen in \cite{DKOU15} to provide a new proof of tube formulae for self-similar sprays (originally established in the work of Lapidus and Pearse \cite{LP10,LPW11}), a class of disjoint self-similar fractals. 

\section{Mellin Transforms}
\label{sec:Mellin}
%
%

We first define Mellin transforms and their truncated counterparts, which are our main tool in solving scaling functional equations. The Mellin transform is an integral transform associated with the Haar measure of the positive real line with respect to multiplication in the same sense that the Fourier transform is associated with the Haar measure of the real line with respect to translation (which is simply the Lebesgue measure). The space of $(\RR^+,\cdot)$, the positive real line viewed as a group with respect to multiplication, can be thought of as the space of multiplicative scales. As fractals generally have scale invariance or approximate scale invariance, the Mellin transform associated with this space is well-suited to study such shapes. 

\subsection{Mellin Transform}

Let $C^0(\RR^+)$ denote the space of continuous functions $f:\RR^+\to\RR$. The standard \index{Mellin transform}\textbf{Mellin transform} of $f$ is the integral 
\begin{equation}
    \label{eqn:defMellinTransform}
    \Mm[f](s) := \int_0^\infty x^{s-1}f(x)\,dx,
\end{equation}
defined for all $s\in\CC$ for which the integral converges. Strictly speaking, we identify the Mellin transform with its analytic continuation (which we shall restrict to be a subset of the complex plane). Here, the factor of $x^{-1}$ is due to the fact that the Haar measure associated to $(\RR^+,\cdot)$ is, in terms of the Lebesgue measure, $dx/x$. Under a scaling transformation $S_\lambda(x)=\lambda x$, note that $d(\lambda x)/\lambda x=dx/x$ is invariant.

In general, the class of functions possessing well-defined Mellin transforms is far more broad than just continuous functions. In general, integrability of the function $f$ and the existence of some polynomial growth conditions are sufficient for the transform to exist on a vertical strip in $\CC$, as we will briefly discuss later in this chapter. The functions that we shall consider in our applications (in Chapters~\ref{chap:geometry} and \ref{chap:heat}), however, will be continuous so we may impose this constraint for simplicity's sake.

\subsection{Truncated Mellin Transforms}

A truncated, or equivalently a restricted, Mellin transform is simply an integral of the same integrand as Equation~\ref{eqn:defMellinTransform}, but over an interval (i.e. connected subset) of $(0,\infty)$. 
\begin{definition}[Truncated Mellin Transform]
    \label{def:truncatedMellin}
    \index{Mellin transform!Truncated Mellin transform}
    Let $f\in C^0(\RR^+,\RR)$ and fix $\alpha,\beta \geq 0$, with $\alpha<\beta$. The truncated Mellin transform of $f$, denoted by $\Mm_\alpha^\beta[f]$, is given by 
    \[ \Mm_\alpha^\beta[f](s) := \int_\alpha^\beta t^{s-1}f(t)\,dt \]
    for all $s\in\CC$ for which the integral is convergent and analytically continued when possible.\footnote{In this work, we shall restrict ourselves to the case when these analytic continuations are a subset of the complex plane, rather than a more general \textit{Riemann surface}.}
\end{definition}
If $\alpha=0$, we write $\Mm^\beta$ for the truncated transform. If $\alpha=0$ and we take $\beta=\infty$, then $\Mm=\Mm_0^\infty$ is simply the standard Mellin transform. Note that we may equivalently define $\Mm_\alpha^\beta[f]$ to be the Mellin transform of $f$ times the characteristic function of the interval $(\alpha,\beta)$, viz. 
\[ \Mm_\alpha^\beta[f] = \Mm[f\cdot\1_{[\alpha,\beta]}]. \] 
This allows us to compare the convergence of the two directly, and it shows that the truncated transform inherits the properties of its standard counterpart, for example its linearity. 

\subsection{Convergence of Mellin Transforms}

This connection also allows us to deduce convergence properties of the truncated transform from standard results on the Mellin transform. To wit, a truncated Mellin transform converges automatically if the standard Mellin transform converges. It is known (see, for example, Chapter 6 in \cite{Gra10}) that $\Mm[f](s)$ is holomorphic in the vertical strip $\sigma_-<\Re(s)<\sigma_+$, where 
\begin{align*}
    \sigma_- :=& \inf\set{\sigma\in\RR : f(x) = O(x^{-\sigma}) \text{ as }x\to0^+},     \\
    \sigma_+ :=& \sup\set{\sigma\in\RR : f(x) = O(x^{-\sigma}) \text{ as }x\to\infty}.
\end{align*}
When $\alpha<\beta<\infty$, $f\cdot\1_{[\alpha,\beta]}\equiv 0$ as $x\to\infty$, whence $\sigma_+=\infty$. Similarly, if $0<\alpha<\beta$ then $f\cdot\1_{[\alpha,\beta]}\equiv 0$ as $x\to0^+$, whence $\sigma_-=-\infty$. So, if $0<\alpha<\beta<\infty$, the restricted transform is automatically entire, i.e. holomorphic in all of $\CC$. If $0=\alpha<\beta<\infty$ and $f=O(x^{-\sigma_0})$ as $x\to0^+$, then $\Mm^\beta[f](s)$ is holomorphic in the half-plane in $\CC$ defined by $\Re(s)>\sigma_0$. 

We collect these observations into the following lemma which will be of use later. In what follows, let 
\[ \HH_{\sigma} := \set{s\in\CC\suchthat \Re(s)>\sigma} \]
denote an \index{Half-plane}\textbf{open right half-plane} in $\CC$.
\begin{lemma}[Holomorphicity of Truncated Mellin Transforms]
    \label{lem:MellinHolo}
    \index{Mellin transform!Convergence}
    Let $f$ be integrable on $(\alpha,\beta)\subset\RR^+$ and let $\Mellin_\alpha^\beta[f]$ be its (truncated) Mellin transform.
    \begin{itemize}
        \item Let $\beta<\infty$. Suppose that $f$ is bounded away from $0$ (i.e. on any interval of the form $[\e,\beta]$ for $\e>0$) and that $f(x)=O(x^{-\sigma_0})$ as $x\to0^+$. Then $\Mellin^\beta[f]$ is convergent and holomorphic in the open right half-plane $\HH_{\sigma_0}$.
        \item Let $0<\alpha<\beta<\infty$. If $f$ is bounded on $[\alpha,\beta]$, then $\Mellin_\alpha^\beta[f]$ is entire. 
    \end{itemize}
\end{lemma}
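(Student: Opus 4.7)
The plan is to establish both claims by the standard holomorphy-under-the-integral argument: verify absolute convergence in the stated region, and then upgrade to holomorphy either via Morera's theorem (using Fubini to exchange the order of integration against closed contours) or via uniform convergence on compact subsets combined with the fact that the partial-interval transforms are manifestly holomorphic. I will present the two parts in reverse order, since the case $0 < \alpha < \beta < \infty$ is the simpler one and its method carries over directly.

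For the second bullet, assume $0 < \alpha < \beta < \infty$ and $|f| \leq M$ on $[\alpha, \beta]$. For any $s \in \CC$, the integrand satisfies $|t^{s-1} f(t)| \leq M \, t^{\Re(s) - 1}$, which is bounded on $[\alpha, \beta]$ (depending continuously on $\Re(s)$). Hence the integral converges absolutely, and moreover, on any compact subset $K \subset \CC$, the bound $|t^{s-1}| \leq \max(\alpha^{\sigma_K^- - 1}, \beta^{\sigma_K^+ - 1})$ (where $\sigma_K^\pm$ are the infimum/supremum of $\Re$ on $K$) shows the convergence is uniform in $s \in K$. Since for each fixed $t \in (\alpha, \beta)$ the map $s \mapsto t^{s-1} f(t) = e^{(s-1)\log t} f(t)$ is entire, the uniform convergence on compact sets implies that $\Mellin_\alpha^\beta[f]$ is entire.

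For the first bullet, split the integral as
\[
    \Mellin^\beta[f](s) = \int_0^{\alpha_0} t^{s-1} f(t) \, dt + \int_{\alpha_0}^\beta t^{s-1} f(t) \, dt,
\]
where $0 < \alpha_0 < \beta$ is chosen small enough that the $O(x^{-\sigma_0})$ bound applies on $(0, \alpha_0]$, say $|f(t)| \leq C t^{-\sigma_0}$ there. By the second bullet, the tail piece $\Mellin_{\alpha_0}^\beta[f]$ is entire (using the assumption that $f$ is bounded on $[\alpha_0, \beta]$, which follows from it being bounded away from $0$ and, tacitly, being continuous/integrable there). For the near-zero piece, the integrand is dominated by $C t^{\Re(s) - 1 - \sigma_0}$, which is integrable on $(0, \alpha_0]$ precisely when $\Re(s) > \sigma_0$. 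The same domination gives uniform absolute convergence on compact subsets of $\HH_{\sigma_0}$ (by taking $\Re(s) \geq \sigma_0 + \eta$ for some $\eta > 0$ on the compact set). Since $s \mapsto t^{s-1} f(t)$ is entire in $s$, uniform convergence on compact sets again yields holomorphy of $\int_0^{\alpha_0} t^{s-1} f(t) \, dt$ in $\HH_{\sigma_0}$, completing the argument.

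There is no real obstacle here; the only mild subtlety is making sure the hypothesis ``bounded away from $0$'' is read correctly as ``bounded on subintervals of the form $[\e, \beta]$,'' so that integrability of $f$ on $(0, \beta)$ (needed for the very definition of the integral) is ensured only via the $O(x^{-\sigma_0})$ tail condition. If one instead preferred the Morera route, one would apply Fubini's theorem to the iterated integral of $t^{s-1} f(t)$ against a closed triangular contour in $\HH_{\sigma_0}$; the inner contour integral vanishes by Cauchy's theorem (since $s \mapsto t^{s-1}$ is entire), yielding a zero outer integral and thus holomorphy. Either route is routine.
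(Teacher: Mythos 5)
Your proof is correct, and the underlying ideas---domination of the integrand and holomorphy under the integral sign via uniform bounds on compacts---are the same as the paper's. The organizational route is reversed, however. You prove the compact-interval case (second bullet) first, and then deduce the half-open case (first bullet) by splitting the integral at some $\alpha_0 \in (0,\beta)$, handling $(0,\alpha_0]$ by domination and $[\alpha_0,\beta]$ by appeal to the second bullet. The paper instead proves the first bullet directly: it observes that the $O(x^{-\sigma_0})$ estimate near zero together with boundedness on every $[\e,\beta]$ yields a single uniform bound $|f(t)| \leq C\,t^{-\sigma_0}$ on all of $(0,\beta]$ (no splitting required), and then reduces the second bullet to the first by replacing $f$ with $f\cdot\1_{(\alpha,\beta)}$ and noting the modified function is $O(t^n)$ as $t\to 0^+$ for every $n>0$, hence the transform is holomorphic in $\HH_{-n}$ for all $n$, i.e.\ entire. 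Both routes are short and neither is more general than the other; the paper's reduction via indicator functions has the mild advantage of being re-used elsewhere (cf.\ the discussion around $\Mellin_\alpha^\beta[f] = \Mellin[f\cdot\1_{[\alpha,\beta]}]$), so it is slightly more consistent with the framework. One small stylistic point: when you assert ``uniform convergence on compact sets implies holomorphy,'' it is worth being explicit about what is converging (partial integrals, Riemann sums, or invoking Morera plus Fubini directly), since the integral itself is not a sequence; the paper sidesteps this by citing dominated convergence for the derivative or Morera/Fubini explicitly.
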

\begin{proof}
    First, consider the case where $\beta<\infty$ and $\alpha=0$. Since $f(t)=O(t^{-\sigma_0})$ as $t\to0^+$ and is bounded on any interval of the form $[\e,\delta]$, there exists $C>0$ so that $|f(t)|\leq C\,t^{-\sigma_0}$ for all $t\in (0,\delta]$. Thus, we may estimate the Mellin transform to find that 
    \begin{align}
        \label{eqn:MellinEstimate}
        \begin{split}
            |\Mellin_0^\beta[f](s)| 
            &\leq \int_0^\delta t^{\Re(s)-1}|f(t)|\,dt \\
            &\leq C\int_0^\delta t^{\Re(s)-1-\sigma_0}\,dt \\
            &= \frac{C}{\Re(s)-\sigma_0}\,\delta^{\Re(s)-\sigma_0},
        \end{split}
    \end{align}
    provided $\Re(s)>\sigma_0$. It follows that $\Mellin_0^\beta[f](s)$ is convergent in $\HH_{\sigma_0}$. Further, this bound also implies that it is holomorphic in $\HH_{\sigma_0}$. This may be seen, for instance, by the application of Lebesgue's dominated convergence theorem to compute its derivative or alternatively by application of Morera's theorem and the Fubini-Tonelli theorem. 

    Now let $0<\alpha<\beta<\infty$. Then the function $f(t)\1_{(\alpha,\beta)}(t)$ is integrable and in fact $\Mellin_\alpha^\beta[f](s)=\Mellin^\beta[f(t)\1_{(\alpha,\beta)}(t)](s)$ for any $s$ for which it is convergent, since $f(t)=f(t)\1_{(\alpha,\beta)(t)}(t)$ for all $t\in(\alpha,\beta)$. Note that $f(t)\1_{(\alpha,\beta)}(t)$ is bounded on $[0,\beta]$ and that for any $n>0$, we have that $f(t)=O(t^n)$ as $t\to0^+$ since $\1_{(\alpha,\beta)}(t)\equiv 0$ whenever $t<\alpha$. By the first part of the proof, we have that $\Mellin^\delta[f(t)\1_{(\alpha,\beta)}(t)]$ is a convergent integral and holomorphic in any half-plane of the form $\HH_{-n}$. Consequently, $\Mellin_\alpha^\beta[f]$ converges for all $s\in\CC$ and is an entire function. 
\end{proof}

Note that the estimates in Equation~\ref{eqn:MellinEstimate} are \textit{independent of the imaginary part} of the parameter. Thus, if the real part is bounded (as is the case in a vertical strip), we may obtain uniform estimates for the Mellin transform. If $\alpha=0$, note that we must choose a vertical strip which lies in the half-plane $\HH_{\sigma_0}$, where $f(t)=O(t^{-\sigma_0})$, but the restriction is removed when $\alpha>0$ since we may choose $\sigma_0$ to be arbitrarily small. 

\begin{corollary}[Uniform Boundedness on Vertical Strips]
    \label{cor:MellinBounds}
    Let $f$ be integrable on $(\alpha,\beta)\subset\RR^+$ and let $\Mellin_\alpha^\beta[f]$ be its (truncated) Mellin transform, with $0\leq\alpha<\beta<\infty$. Suppose that either of the two hypotheses of Lemma~\ref{lem:MellinHolo} hold. In the first case, let $a>\sigma_0$ and in the second let $a\in\RR$ be arbitrary. 
    \medskip

    Then $\Mellin_\alpha^\beta[f]$ is bounded in any vertical strip of the form $\HH_a^b=\set{s\in\CC\suchthat a<\Re(s)<b}$ where either $a>\sigma_0$ in the first case (when $\alpha=0$ and $f(t)=O(t^{-\sigma_0})$ as $t\to0^+$) or where $a$ is arbitrary when $\alpha>0$. We may also choose any vertical line of the form $\Re(s)\equiv a$ or the closed strip with $a\leq \Re(s)\leq b$. 
\end{corollary}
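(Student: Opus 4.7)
The plan is to reduce the desired bound on $|\Mm_\alpha^\beta[f](s)|$ to a real-variable estimate by means of the elementary identity $|t^{s-1}| = t^{\Re(s)-1}$, whose right-hand side is independent of $\Im(s)$. Concretely, I would first write
\[
    |\Mm_\alpha^\beta[f](s)| \leq \int_\alpha^\beta t^{\Re(s)-1}|f(t)|\,dt =: I(\Re(s)),
\]
and observe that the right-hand side depends only on $\sigma := \Re(s)$. It then suffices to verify that $I(\sigma)$ is uniformly bounded as $\sigma$ ranges over the closed interval $[a,b]$ (or the singleton $\{a\}$ in the vertical-line case), which should follow by a continuity-and-compactness argument.

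For the second case of the lemma, where $0 < \alpha < \beta$ and $|f| \leq M$ on $[\alpha,\beta]$, the bound $I(\sigma) \leq M \int_\alpha^\beta t^{\sigma-1}\,dt$ is evidently a continuous function of $\sigma \in [a,b]$ and hence uniformly bounded by the extreme value theorem. For the first case, where $\alpha = 0$, $f(t) = O(t^{-\sigma_0})$ as $t \to 0^+$, and $a > \sigma_0$, I would split the integral at some $\delta \in (0,\beta]$ and argue on the two pieces separately. The tail integral on $[\delta,\beta]$ is handled exactly as in the second case, using that $f$ is bounded on $[\delta,\beta]$ by hypothesis. The head integral on $(0,\delta]$ is controlled using the growth condition $|f(t)| \leq C t^{-\sigma_0}$, yielding (by the very computation in Equation~\ref{eqn:MellinEstimate}) the bound $\tfrac{C}{\sigma-\sigma_0}\delta^{\sigma-\sigma_0}$. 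Because $a > \sigma_0$, this is a continuous function of $\sigma$ on $[a,b]$, hence uniformly bounded.

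The extensions to closed strips $a \leq \Re(s) \leq b$ and to vertical lines $\Re(s) \equiv a$ require no additional work, since in each case $\Re(s)$ varies over a compact subset of $\RR$ while the bound involves only $\Re(s)$. I do not anticipate any real obstacle here: the uniformity in $\Im(s)$ is manifest in the pointwise estimate from the outset, so the only ingredient needed beyond the proof of Lemma~\ref{lem:MellinHolo} itself is the extreme value theorem applied to the already-identified continuous function $\sigma \mapsto I(\sigma)$.
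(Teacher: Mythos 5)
Your argument is correct and follows the same route as the paper: reduce to the real-variable bound $|\Mellin_\alpha^\beta[f](s)|\leq\int_\alpha^\beta t^{\Re(s)-1}|f(t)|\,dt$, invoke the estimate of Equation~\ref{eqn:MellinEstimate}, and observe that the resulting bound is a continuous function of $\Re(s)$ alone, hence bounded on compact sets. The explicit splitting of the integral at $\delta$ is a harmless elaboration; the paper absorbs this by noting that $f$ is bounded away from $0$ and thus satisfies $|f(t)|\le Ct^{-\sigma_0}$ on all of $(0,\beta]$ for a single constant $C$, but the estimate and the compactness argument are the same.
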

\begin{proof}
    It is easiest to prove for closed vertical strips and deduce the other results for open strips and vertical lines as corollaries. Suppose that $\sigma_0<a\leq \Re(s)\leq b$. Then $|\Re(s)-\sigma_0|\leq a-\sigma_0>0$ is bounded from below, whence its reciprocal is bounded. The function $\delta^{t}$ is continuous on $[a,b]$, and thus bounded. It follows from Equation~\ref{eqn:MellinEstimate} that $\Mellin_\alpha^\beta[f]$ is bounded on this closed vertical strip. If $\alpha>0$, for any $a\in\RR$ choose $\sigma_0<a$ and apply the previous argument.  

    For vertical lines, take $\sigma_0<a=b$. For an open strip, note that is contained in its closure where the function is bounded by the result for closed strips, and by assumption $\sigma_0<a$. 
\end{proof}

Notably, the restricted transform may converge even if the full Mellin transform integral diverges. Indeed, this improvement to the convergence is the main reason to use this modification of the Mellin transform, as later we will see that truncation complicated the scaling property of the transform (viz. Lemma~\ref{lem:MellinScaling}) slightly. Perhaps the most important class of functions for which this occurs is that of polynomials. Without loss of generality, let $f(t)=t^k$ be a monomial. If $\Re(s)>-k$, then $\Mm^\beta[f](s)$ converges; however, $\Mm[f](s)$ is divergent for all $s\in\CC$. Thus, for a general polynomial $p(t)=\sum_{k=0}^n a_kt^k$, we have that 
\[ \Mm^\beta[p](s) = \sum_{k=0}^n \frac{a_k}{s+k}\beta^{s+k},   \]
which is valid when $\Re(s)>\max\set{-k: a_k\neq 0}$. Negative powers become admissible if the lower bound $\alpha$ of the truncation is positive. 

\subsection{Scaling Properties of Mellin Transforms}

Owing to the role of the scale invariant Haar measure $dx/x$ on $(\RR^+,\cdot)$ defining the Mellin transform, it possesses a very simple formula for scaling. Let $\lambda>0$ and define the scaling function $S_\lambda(x):=\lambda x$. Then for the standard Mellin transform,
\begin{equation}
    \label{eqn:MellinScaling}
    \Mellin[f\circ S_\lambda](s) = \lambda^{-s} \Mellin[f](s). 
\end{equation}
This property can be seen as a consequence of the change of variables formula, the scale invariance of the Haar measure $dx/x$, and the scale invariance of the domain of integration, $\RR^+$. 

For the truncated Mellin transforms, a similar property to Equation~\ref{eqn:MellinScaling} holds. However, the domain of integration, $(\alpha,\beta)$, is \textit{not} scale invariant. Thus, the cutoffs of the truncated transform change. In what follows, we assume that the function is defined (and continuous, for simplicity) on all of $\RR^+$ so that the change of domain does not present any new issues.
\begin{lemma}[Scaling Property of Truncated Mellin Transforms]
    \label{lem:MellinScaling}
    \index{Mellin transform!Scaling properties}
    Let $f\in C^0(\RR^+)$ , let $\alpha,\beta\in[0,\infty)$ with $\alpha<\beta$, let $\lambda>0$, and define $S_\lambda(x):=\lambda x$. 

    Then, provided that the transforms are convergent, we have that 
    \begin{equation}
        \label{eqn:truncMellinScaling}
        \Mellin_\alpha^\beta[f\circ S_\lambda](s) = \lambda^{-s}\Mellin_{\lambda\alpha}^{\lambda\beta}[f](s).
    \end{equation}
    If $\alpha=0$, then Equation~\ref{eqn:truncMellinScaling} becomes
    \begin{equation}
        \label{eqn:MellinScalingFunctional}
        \Mellin^\beta[f\circ S_\lambda](s)=\lambda^{-s}\Mellin^\beta[f](s) 
            + \lambda^{-s}\Mellin_{\beta}^{\lambda\beta}[f](s).
    \end{equation}
\end{lemma}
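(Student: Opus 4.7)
The proof of the scaling property is essentially a direct change of variables computation, so I will not expect any substantial obstacle, but there is one piece of bookkeeping to be careful about in the case $\alpha=0$.

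For the first identity, I would start from the definition and compute
\[ \Mellin_\alpha^\beta[f\circ S_\lambda](s) = \int_\alpha^\beta t^{s-1} f(\lambda t)\,dt \]
and perform the substitution $u = \lambda t$, so that $du = \lambda\,dt$ and $t = u/\lambda$. The limits of integration transform from $(\alpha,\beta)$ to $(\lambda\alpha,\lambda\beta)$, and the integrand becomes $(u/\lambda)^{s-1} f(u)\cdot \lambda^{-1}\,du = \lambda^{-s} u^{s-1} f(u)\,du$. Pulling the constant $\lambda^{-s}$ outside the integral immediately gives Equation~\ref{eqn:truncMellinScaling}. Convergence of the transformed integral is equivalent to convergence of the original, since the substitution is a bijective diffeomorphism between $(\alpha,\beta)$ and $(\lambda\alpha,\lambda\beta)$.

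For the second identity, with $\alpha = 0$, I would first specialize the first identity to get
\[ \Mellin^\beta[f\circ S_\lambda](s) = \lambda^{-s} \Mellin_0^{\lambda\beta}[f](s). \]
Then I would split the domain of integration $(0,\lambda\beta)$ using $\beta$ as an intermediate endpoint, writing
\[ \int_0^{\lambda\beta} t^{s-1} f(t)\,dt = \int_0^\beta t^{s-1} f(t)\,dt + \int_\beta^{\lambda\beta} t^{s-1} f(t)\,dt, \]
which is valid as a formal additivity of integrals regardless of whether $\lambda > 1$ or $\lambda < 1$, provided one interprets $\int_\beta^{\lambda\beta}$ with the usual sign convention (so the second term is $-\Mellin_{\lambda\beta}^\beta[f](s)$ when $\lambda<1$). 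Combining with the first displayed equation yields Equation~\ref{eqn:MellinScalingFunctional}.

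The only mild subtlety is this sign convention when $\lambda<1$, which is precisely the case of interest for contractive self-similar systems later in the chapter. I would flag this point with a brief remark so that the reader does not stumble on it when $\Mellin_\beta^{\lambda\beta}[f]$ appears with swapped limits; apart from this, the proof is a one-line change of variables followed by additivity of the integral.
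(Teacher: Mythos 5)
Your proof is correct. The paper actually states this lemma without proof (treating it as immediate), so there is no in-paper argument to compare against; but the change of variables $u=\lambda t$ for Equation~\ref{eqn:truncMellinScaling} followed by additivity of the integral over $(0,\beta)\cup(\beta,\lambda\beta)$ for Equation~\ref{eqn:MellinScalingFunctional} is the natural and intended route. Your flag on the sign convention when $\lambda<1$ is a genuine subtlety worth noting, since Definition~\ref{def:truncatedMellin} formally requires the lower truncation point to be smaller than the upper one; in that case $\Mellin_\beta^{\lambda\beta}[f]$ should be read as $-\Mellin_{\lambda\beta}^\beta[f]$. That said, in the paper's downstream use (the proof of Theorem~\ref{thm:zetaFormula}), the lemma is applied to $M_{\lambda_k}[f]=f\circ S_{1/\lambda_k}$ with $\lambda_k\in(0,1)$, so the effective scaling factor $1/\lambda_k$ exceeds one and the limits come out in the correct order automatically.
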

The purpose of Equation~\ref{eqn:MellinScalingFunctional} will become clear in our later usage of this property. Essentially, it establishes a functional relation akin to the standard property of Mellin transforms (Equation~\ref{eqn:MellinScaling}) up to the addition of an entire function, $\lambda^{-s}\Mellin_{\beta}^{\lambda\beta}[f](s)$. Crucially, this means that $\Mellin^\beta[f\circ S_\lambda]$ and $\lambda^{-s}\Mellin^{\beta}[f]$ have the exact same set of singularities.

\section{Scaling Operators and Scaling Zeta Functions}
\label{sec:scalingOps}
%
%

Next, we define two types of objects: scaling operators and scaling zeta functions. A \textit{scaling operator} will be used to describe a \textit{scaling functional equation} later in the chapter. Given some function, say continuous on the positive real line, these scaling functional equations will take the form $f=L[f]+R$, for some remainder term $R$. Given a particular self-similar system $\Phi$, we will define a specific scaling operator $L_\Phi$ constructed from the scaling ratios of the maps in $\Phi$ which will appear in constructing such functional equations. 

Secondly, we define a \textit{scaling zeta function} $\zeta_L$ associated to a given scaling operator $L$. These scaling zeta functions have explicit descriptions and behave exactly like the zeta functions of self-similar fractal harps (or strings) as in \cite{LapvFr13_FGCD}. Thus, we may analyze their holomorphicity, understand the locations of poles, and estimate their growth rates simply from knowing the scaling ratios used to construct the scaling operator or the self-similar system that defined said scaling operator. In the latter case, we will study the scaling zeta function $\zeta_\Phi$ associated to a self-similar system $\Phi$. 

In this setting, it will becomes clear that the properties of solutions to scaling functional equations are governed by the scaling ratios of the operator alone, whether it be a tube function (as in Chapter~\ref{chap:geometry}) or a heat content (as in Chapter~\ref{chap:heat}). In the geometric setting, the complex dimensions of a set will be contained within the poles of the associated scaling zeta function. In both settings, the poles of the scaling zeta function will index the sums of the expansions for these functions satisfying scaling functional equations, appearing as exponents of the relevant parameter. So, these scaling operators and scaling zeta functions will be key to understanding the thesis of this work. 

\subsection{Scaling Operators}
First, we define scaling operators which act on $C^0(\RR^+)$. A \textbf{pure scaling operator} $M_\lambda$, where $\lambda\in\RR^+$, shall act by precomposition with a scaling function, namely
\begin{equation}
    \label{eqn:defPureScalingOp}
    M_\lambda[f](x) := f(x/\lambda).
\end{equation}
This convention of inverting the scaling factor shall be convenient for our applications. A general scaling operator shall be a linear combination of such pure scaling operators. 
\begin{definition}[Scaling Operator]
    \label{def:scalingOp}
    \index{Scaling operator}
    For any $f\in C^0(\RR^+)$ and $\lambda>0$, define $M_\lambda$ by $M_\lambda[f](x) := f(x/\lambda)$. A \textbf{scaling operator} $L$ is any finite linear combination of such operators $M_{\lambda_k}$ with real coefficients $a_k$, where $k=1,...,K$. Explicitly, $L=\sum_{k=1}^K a_k M_{\lambda_k}$.
\end{definition}
For any $f\in C^0(\RR^+)$, we have that
\[ L[f](x) = \sum_{k=1}^K a_k f(x/\lambda_k). \]
Later, we will add the constraint that the multiplicities $a_k$ be positive and integral.

\subsection{Scaling Zeta Functions}

Given such a scaling operator $L$, we define a \textit{scaling zeta function} associated to $L$. This function shall play a key role in describing solutions to scaling functional equations, owing to its relation to the Mellin transform of such functions and the role of its singularities. 
\begin{definition}[Scaling Zeta Function]
    \label{def:scalingZetaFunction}
    \index{Scaling zeta function!of a scaling operator}
    Let $L=\sum_{k=1}^K a_k M_{\lambda_k}$ be a scaling operator. The \textbf{scaling zeta function} $\zeta_L$ associated to $L$ is the analytic continuation of the function defined by 
    \[ \zeta_L(s) := \cfrac1{1-\sum_{k=1}^K a_k \lambda_k^s} \]
    for all $s\in\CC\setminus\Dd_L$, where $\Dd_L$ the (discrete) set of singularities of $\zeta_L$.
\end{definition}

Given a self-similar system $\Phi$, we define a scaling operator $L_\Phi$ associated to $\Phi$, and thus also a scaling zeta function associated to $\Phi$. First, the \index{Scaling operator! of a self-similar system}\textbf{scaling operator} $L_\Phi$ \textbf{of the self-similar system} $\Phi$ is defined by 
\begin{equation}
    \label{eqn:defSystemScalingOp}
    L_\Phi := \sum_{\ph\in\Phi} M_{\lambda_\ph},
\end{equation}
where $\lambda_\ph$ is the scaling operator of $\ph\in\Phi$ and where $M_{\lambda_\ph}$ defined as before. 

\begin{definition}[Scaling Zeta Function of a Self-Similar System]
    \label{def:SZFofaSystem}
    \index{Scaling zeta function!of a self-similar system}
    Let $\Phi$ be a self-similar system and for each $\ph\in\Phi$, let $\lambda_\ph$ denote the scaling ratio of $\ph$. The \textbf{scaling zeta function} $\zeta_{\Phi}$ is simply the scaling zeta function of the scaling operator $L_\Phi$ associated to $\Phi$ as in Equation~\ref{eqn:defSystemScalingOp}. It is given by the analytic continuation (to all of $\CC$ except a discrete set) of the function defined by
    \begin{equation}
        \zeta_\Phi(s) = \frac1{1-\sum_{\ph\in\Phi}\lambda_\ph^s}.
    \end{equation}
\end{definition}

These scaling zeta functions of a self-similar system $\Phi$, as we will show in Chapter~\ref{chap:geometry}, determine the possible complex dimensions of the invariant set $X$ of $\Phi$. Further, they have the advantage of being comparatively simple, with explicit formulae that can be easily plotted. In Figure~\ref{fig:twoScalingZetaPlots}, we plot the scaling zeta functions associated to the self-similar systems $\Phi_{3,\frac13}$ and $\Phi_{4,\frac14}$ (see Definition~\ref{def:vkCurve}), whose invariant sets are $(n,r)$-von Koch curves used to define the von Koch snowflake $K_{3,\frac13}$ and the squareflake $K_{4,\frac14}$, respectively, which are depicted in Figure~\ref{fig:threeGKFs}. In these plots, the poles are represented by the peaks where the magnitude of the function becomes arbitrarily large. In addition to these poles controlling the possible complex dimensions as we will see in Chapter~\ref{chap:geometry}, we will also see that these poles govern the nature of explicit expansions for the heat content on these fractals in Chapter~\ref{chap:heat}. 

\begin{figure}[t]
    \centering
    \subfloat{\includegraphics[width=4cm]{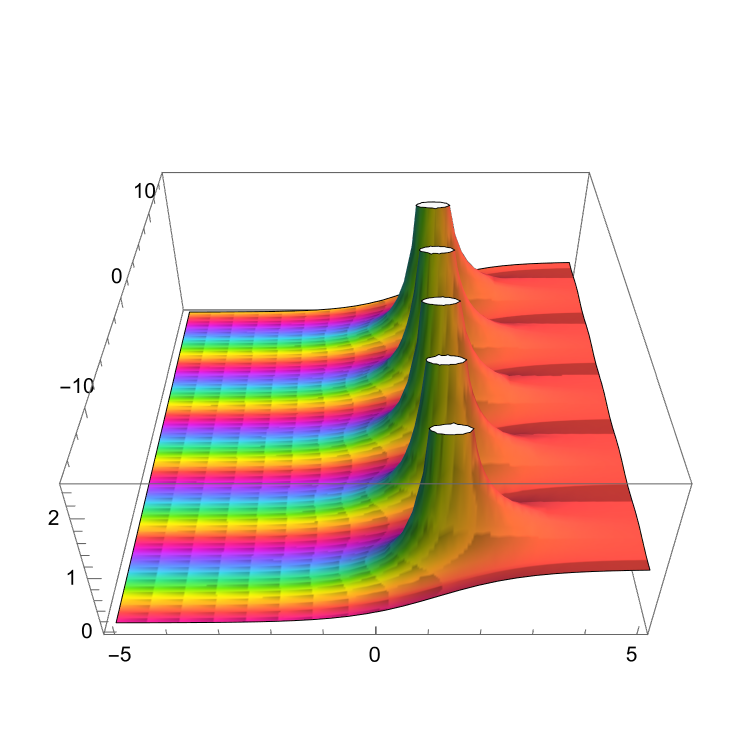}}
    \qquad
    \subfloat{\includegraphics[width=4cm]{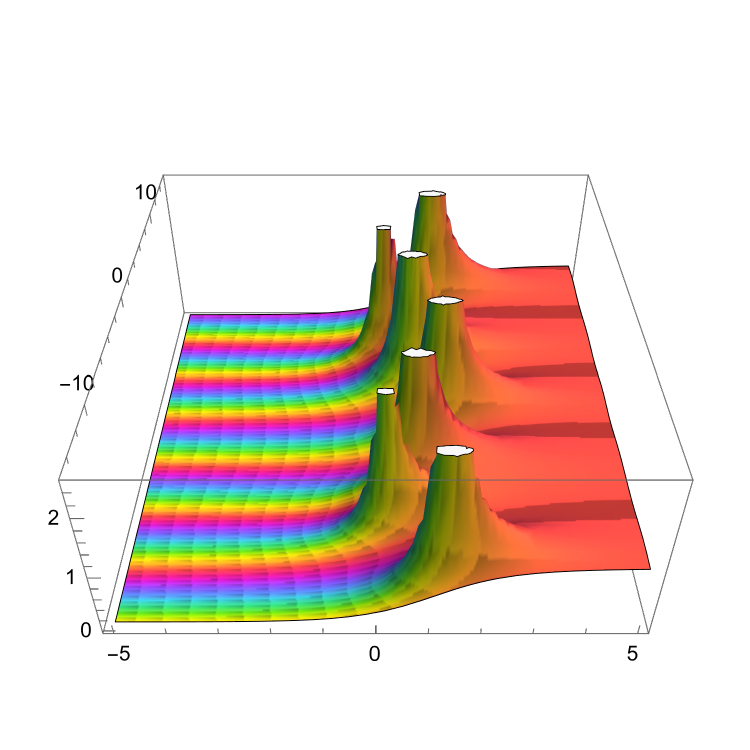}}
    \caption[Plots of two scaling zeta functions]{Plots of associated scaling zeta functions of the scaling operator $L_{3,\frac13}:=4 M_{\frac13}$ (left) and the operator $L_{4,\frac14}:= 2M_{\frac{3}8}+3M_{\frac14}$ (right.) We will show in Section~\ref{sec:appToGKFs} that the possible complex dimensions of the fractals $K_{3,\frac13}$ and $K_{4,\frac14}$, respectively, occur at the poles of these functions.}
    \label{fig:twoScalingZetaPlots}
\end{figure}

\subsection{Similarity Dimensions and Singularities}

In this section, we specialize to scaling operators which arise from a self-similar system $\Phi$. For such scaling operators $L_\Phi$, we have that the scaling ratios lie in $(0,1)$ and occur with positive, integral multiplicities. With these assumptions, we can develop certain properties of the scaling zeta function $\zeta_\Phi$, most notably regarding the locations of its poles. We define the \textit{similarity dimension} of a self-similar system, defined by means of the Moran equation (Equation~\ref{eqn:Moran2} below), which is exactly the similarity dimension of its invariant set when the open set condition is imposed. This dimension may also be called an \textit{upper} similarity dimension since it provides the abscissa of convergence of $\zeta_\Phi$, with all its other poles having smaller real parts.  

Additionally, we define a \textit{lower similarity dimension} counterpart. It is defined in a similar fashion, relying only on the scaling ratios of $\Phi$ and their multiplicities, but corresponds to a lower bound for the poles of $\zeta_\Phi$. The poles of $\zeta_\Phi$, in fact, behave just like the \textit{complex dimensions} of self-similar fractal harps (or strings). In fact, we will use known results applicable to the zeta functions of self-similar fractal harps to characterize the properties of $D_\Phi$, the set of poles of $\zeta_\Phi$. See Chapters~2 and 3, and in particular Theorem~3.6, of \cite{LapvFr13_FGCD}.

\begin{definition}[(Upper) Similarity Dimension]
    \label{def:upperSimDim}
    \index{Similarity dimension!Upper Similarity Dimension} \index{Similarity dimension}
    Let $\Phi$ be a self-similar system and let $\set{\lambda_\ph}_{\ph\in\Phi}$ denote the set of scaling ratios of the similitudes $\ph\in\Phi$. Then there is a unique real solution $D$ to Moran's equation,
    \begin{equation}
        \label{eqn:Moran2}
        1 = \sum_{\ph\in\Phi} \lambda_\ph^D. 
    \end{equation}
    This number $D=\simdim(\Phi)$ is called the \textbf{(upper) similarity dimension} of $\Phi$. 
\end{definition}
We will also denote $\simdim(\Phi)=\overline{\dim}_S(\Phi)$ when we wish to explicitly differentiation the upper similarity dimension from its lower counterpart.

Given a scaling operator $L$ with scaling ratios $\lambda_k\in(0,1)$ and positive integral multiplicities $a_k$ for $k=1,...,K$, one may construct a self-similar system $\Phi$ for which $L=L_\Phi$. In this way, we may extend this definition of similarity dimension of a self-similar system to certain scaling operators. 

\begin{proposition}[Existence and Uniqueness of Similarity Dimension]
    \label{prop:simDimExistence}
    For $k=1,...,K$, let $\lambda_k\in(0,1)$ and let $a_k\in\NN$ be positive integers. Then there is a unique real solution to the Moran equation given by
    \begin{equation}
        1 = \sum_{k=1}^K a_k \lambda_k^s. 
    \end{equation}
    If there are at least two distinct scaling ratios, or if at least one scaling ratio has multiplicity two or greater, then $D$ is positive.
\end{proposition}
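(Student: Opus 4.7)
The plan is to mirror the argument already sketched in the excerpt after Theorem~\ref{thm:similarityDimension}, where the same kind of real-variable exercise was performed for self-similar systems. Concretely, I would define the auxiliary real-valued function
\[ p(s) := \sum_{k=1}^K a_k \lambda_k^s, \]
observe that $p$ is real-analytic on $\RR$ as a finite sum of exponentials, and reduce the problem to showing that the equation $p(s) = 1$ has exactly one real solution.

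For existence, I would apply the intermediate value theorem. At $s=0$, one has $p(0) = \sum_{k=1}^K a_k \in \NN$, so $p(0) \geq 1$. If $p(0) = 1$ (which forces $K=1$ and $a_1 = 1$), then $s=0$ itself is a solution. Otherwise, $p(0) \geq 2 > 1$. On the other end, each $\lambda_k \in (0,1)$ implies $\lambda_k^s \to 0$ as $s\to\infty$, so $p(s) \to 0$ and in particular $p(s) < 1$ for $s$ large enough. By continuity and the IVT, there exists a real $D$ with $p(D) = 1$.

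Uniqueness follows from strict monotonicity. Since $\lambda_k \in (0,1)$, we have $\log \lambda_k < 0$ for each $k$, and since $a_k > 0$ and $\lambda_k^s > 0$ for real $s$, the termwise derivative satisfies
\[ p'(s) = \sum_{k=1}^K a_k \lambda_k^s \log \lambda_k < 0 \]
for every real $s$. Thus $p$ is strictly decreasing on $\RR$, and the value $D$ with $p(D)=1$ is unique. This argument is essentially the same calculus exercise already displayed in the excerpt, and I anticipate no technical obstacle here.

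For the positivity claim, I would observe that the additional hypotheses — either $K \geq 2$ (at least two distinct scaling ratios) or some $a_k \geq 2$ — each imply $p(0) = \sum_k a_k \geq 2 > 1$. Combined with strict monotonicity of $p$ and $p(D) = 1 < p(0)$, this forces $D > 0$. The most delicate point, if any, is merely keeping track of this case distinction: in the degenerate situation $K=1$ and $a_1 = 1$ one obtains $p(s) = \lambda_1^s$ and the unique solution is $D = 0$, which is precisely the case excluded by the stated hypotheses.
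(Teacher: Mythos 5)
Your proof is correct and follows essentially the same route as the paper: define $p(s)=\sum_k a_k\lambda_k^s$, use the intermediate value theorem for existence and strict monotonicity ($p'<0$) for uniqueness, and observe $p(0)\geq 2$ under the extra hypotheses to get $D>0$. In fact your version is slightly more careful than the paper's (which just points back to the proof of Theorem~\ref{thm:similarityDimension}), since you supply the tail estimate $p(s)\to 0$ as $s\to\infty$ — appropriate here because the bound $p(\dimension)\leq 1$ used in that earlier proof relies on the open set condition, which is not available in this purely algebraic setting — and you explicitly isolate the degenerate case $K=1$, $a_1=1$ where $D=0$.
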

\begin{proof}
    The existence and uniqueness of a real solution, $D$, is same as in the proof of Theorem~\ref{thm:similarityDimension}, using the same calculations on the polynomial $p(t)=\sum_{k=1}^K a_k \lambda_k^t$. Here, we have that $p(0)\geq2$ when either $m\geq2$ and $a_k\geq 1$ or since there is at least one $a_k\geq2$ by assumption. 
\end{proof}

\begin{proposition}[Holomorphicity of Scaling Zeta Functions]
    \label{prop:holomorphicitySZF}
    Let $\Phi$ be a self-similar system, let $\zeta_\Phi$ be its associated scaling zeta function, and let $D=\simdim(\Phi)$ be its similarity dimension. 

    Then $\zeta_\Phi$ is holomorphic in the open right half-plane $\HH_D$. It possesses a meromorphic continuation to all of $\CC$, with poles at points the set $\Dd_L=\Dd_L(\CC)$ defined by means of the complexified Moran equation, viz.
    \[ \Dd_\Phi(W) := \Big\{\omega\in W\subset\CC: 1=\sum_{\ph\in\Phi} \lambda_\ph^\omega \Big\}. \]
\end{proposition}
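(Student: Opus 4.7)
The plan is to rewrite $\zeta_\Phi$ as the reciprocal of an entire function, then use the strict monotonicity of the Moran polynomial on the real line (already exploited in the proof of Theorem~\ref{thm:similarityDimension}) to locate the zeros of that denominator.

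First I would set $g(s) := 1 - \sum_{\ph\in\Phi} \lambda_\ph^s$, so that by Definition~\ref{def:SZFofaSystem} we have $\zeta_\Phi = 1/g$ wherever $g\neq 0$. Since each map $s\mapsto \lambda_\ph^s = e^{s\log \lambda_\ph}$ is entire and the sum defining $g$ is finite, $g$ is itself entire. Consequently $1/g$ is a meromorphic function on all of $\CC$, and its set of poles is exactly the zero set of $g$, which by inspection coincides with the set $\Dd_\Phi(\CC)$ in the statement. This immediately gives the meromorphic continuation claim as well as the identification of its pole set, with no further work beyond invoking standard facts about reciprocals of entire functions.

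The remaining step, and the only one that uses the hypothesis $D = \simdim(\Phi)$, is to confirm that $g$ has no zeros in the open right half-plane $\HH_D$. For any $s \in \HH_D$ I would estimate
\[
\Big|\sum_{\ph\in\Phi} \lambda_\ph^s\Big| \;\leq\; \sum_{\ph\in\Phi} \bigl|\lambda_\ph^s\bigr| \;=\; \sum_{\ph\in\Phi} \lambda_\ph^{\Re(s)},
\]
using that $|\lambda_\ph^s| = \lambda_\ph^{\Re(s)}$ for $\lambda_\ph > 0$. Now the real-variable polynomial $p(t) := \sum_{\ph\in\Phi} \lambda_\ph^t$ was shown in the proof of Theorem~\ref{thm:similarityDimension} (and again invoked in Proposition~\ref{prop:simDimExistence}) to be strictly decreasing on $\RR$, because every $\log \lambda_\ph < 0$. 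Since $p(D) = 1$ by the definition of the similarity dimension and $\Re(s) > D$, we conclude $p(\Re(s)) < 1$, hence $|\sum_\ph \lambda_\ph^s| < 1$ and $g(s) \neq 0$. Thus $\zeta_\Phi$ is holomorphic on $\HH_D$.

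The main obstacle is essentially cosmetic: the only substantive inequality is the termwise modulus bound combined with strict monotonicity of $p$, both of which follow from facts already established earlier in the chapter. There is no need for a separate argument about accumulation of poles, as $\Dd_\Phi(\CC)$ is automatically discrete (the zero set of a nonzero entire function), and the line $\Re(s) = D$ could, in general, contain poles (as occurs in the lattice case), but this does not affect holomorphicity on the \emph{open} half-plane $\HH_D$.
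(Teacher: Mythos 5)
Your argument is correct and takes essentially the same approach as the paper: both establish holomorphicity on $\HH_D$ by the termwise modulus bound $\left|\sum_{\ph}\lambda_\ph^s\right|\leq p(\Re(s))<p(D)=1$ (using the strict monotonicity of $p$ from the proof of Theorem~\ref{thm:similarityDimension}), so the denominator cannot vanish there. Your treatment of the meromorphic continuation is, if anything, slightly cleaner than the paper's (which appeals somewhat vaguely to ``global analytic continuation''): observing that the denominator $g(s)=1-\sum_\ph\lambda_\ph^s$ is a nonzero entire function, whence $1/g$ is meromorphic on all of $\CC$ with pole set precisely the zero set of $g$, is the most direct route and also immediately yields discreteness of $\Dd_\Phi(\CC)$ for free.
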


\begin{proof}
    To see holomorphicity, let $p(t)$ be defined as in the proof of Theorem~\ref{thm:similarityDimension} in Chapter~\ref{chap:fractals}. Then for all $s\in \CC$ such that $\sigma=\Re(s)>D$, we have that  
    \[ 
        \left|\sum_{k=1}^K a_k\lambda_k^s \right| \leq \sum_{k=1}^K a_k \lambda_k^\sigma = p(\sigma) < p(D) = 1.
    \]
    Letting $z=p(s)$, it follows that $z$ lies strictly within the unit disk, whence $\frac1{1-z}$ is well-defined and holomorphic in $z$. That $\zeta_L$ has a meromorphic extension to $\CC$ follows by the global analytic continuation of this function.
\end{proof}

Lastly, we conclude with a lower bound for the real parts of the poles of $\zeta_\Phi$. We construct such a lower bound by means of the root of an explicit Dirichlet polynomial, and for this reason we call it a \textit{lower} similarity dimension. We call $\simdim(\Phi)=\overline{\dim}_S(\Phi)$ the \textit{upper} similarity dimension of a self-similar system when we need to explicitly distinguish between it and the lower dimension to be defined presently.
\begin{definition}[Lower Similarity Dimension of a Self-Similar System]
    \label{def:lowerSimDim}
    \index{Similarity dimension! Lower similarity dimension}
    Let $\Phi$ be a self-similar system and let $\set{r_k}_{k=1}^M$ denote the \textit{unique} scaling ratios of the mappings in $\Phi$ with corresponding multiplicities $m_k$. Suppose that the scaling ratios are ordered by size, i.e. $r_1\geq r_2\geq ...\geq r_M$. The unique real solution $D_\ell$ to the equation 
    \begin{equation}
        \label{eqn:defLowerDim}
        \frac1{m_M}(r_M^{-1})^{D_\ell}+\sum_{k=1}^{M-1}\frac{m_k}{m_M}\left(\frac{r_k}{r_M}\right)^{D_\ell} = 1,
    \end{equation}
    is called the \textbf{lower similarity dimension} of $\Phi$, denoted by $\lowersimdim(\Phi)$.
\end{definition}
The following argument (from the proof of Theorem~3.6 in \cite{LapvFr13_FGCD}) justifies that this is well-defined. Let $\set{r_k}_{k=1}^M$ denote the set of unique scaling ratios of $\Phi$, arranged in decreasing order $r_1\geq ...\geq r_M$, and let $m_k$ denote the multiplicity of $r_k$. Define the function 
\begin{equation}
    \label{eqn:defLowerPoly}
    p(t) = \frac1{m_M}(r_M^{-1})^t+\sum_{k=1}^{M-1}\frac{m_k}{m_M}\left(\frac{r_k}{r_M}\right)^t.
\end{equation}
Since $p(t)$ is a linear combination, with positive coefficients, of exponential functions with bases larger than one, we can readily deduce its properties. Namely, its range is $(0,\infty)$ and it is strictly increasing. Thus, there is a unique value $D_\ell$ for which $p(D_\ell)=1$, i.e. a solution to Equation~\ref{eqn:defLowerDim}. 

The importance of this lower dimension is that it specifies a lower bound for a vertical strip in $\CC$ for which the zeta function $\zeta_\Phi$ has poles. In fact, by Theorem~3.6 of \cite{LapvFr13_FGCD}, $D_\ell\leq \inf\set{\Re(\omega)\suchthat \omega\in\Dd_\Phi}$, with equality in the nonlattice case. It will play a role in establishing some growth estimates known as languidity for the scaling zeta function $\zeta_\Phi$ of the self-similar system $\Phi$.

\section{Languid Growth}
\label{sec:languidity}
%
%

The purpose of this section is to state and prove certain growth estimates which will be necessary for solving scaling functional equations. The notion of languid growth was introduced in \cite{LapvFr13_FGCD} and then refined in \cite{LRZ17_FZF} in the context of proving explicit tube formulae for generalized fractal harps or relative fractal drums, respectively. Strictly speaking, these definitions apply to the geometric object (the harp or the drum) but involve conditions on their associated zeta functions. As in \cite{LRZ17_FZF}, we extend these definitions to apply to a (zeta) function itself, and the corresponding geometric object, including now a self-similar system, may be defined to be languid when its associated zeta function is languid.

There are two types of languid growth, standard (or weak) languid growth and strong languid growth. We will provide the definitions of both, as the scaling zeta functions of a self-similar system will in fact be strongly languid. The formulae which we will obtain for tube and heat zeta functions by means of solving a general scaling functional equation, however, will be only languid, owing to the nature of the contribution from a remainder term in the (approximate) scaling functional equation. Languid growth is the content of Definition~5.1.3 in \cite{LRZ17_FZF} and Definition~5.2 in \cite{LapvFr13_FGCD} and strongly languid growth is the content of Definition~5.1.4 in \cite{LRZ17_FZF} and Definition~5.3 in \cite{LapvFr13_FGCD}. 

As the names would suggest, strong languid growth implies (weak) languid growth. However, there is an important caveat to this implication. A strongly languid function is only languid with respect to some \textit{screen} (to be defined), not any specified screen. In order to compare the languid growth of two different functions, we introduce a notion of \textit{joint languidity} where the \textit{same} screen is used for both functions.

\subsection{Languidity Hypotheses}
%
%

Let $f$ be a function defined on a subset of the complex plane. We suppose that there is an open right half-plane $\HH_D$ on which $f$ is holomorphic. Next, let $S:\RR\to\RR$ be a bounded, Lipschitz continuous function.
We say that a \index{Screen}\textbf{screen} $\Ss$ is a set of the form 
\[ \Ss = \set{S(\tau)+i\tau\suchthat \tau\in\RR}\subset\CC. \]
By a slight abuse of notation, we say that the function $S$ itself is a screen and identify its graph with the set $\Ss$, which is a rotated embedding of its graph in $\RR^2$. 

We say that a set $W\subseteq\CC$ is a \index{Screen!Associated window}\textbf{window associated to a screen} when 
\[ W=\set{s\in\CC\suchthat \Re(s)\geq S(\Im(s))} \] 
for some screen $S$. Given a holomorphic function $f$, we will assume that $f$ has an analytic continuation to a neighborhood of the window $W$. Note that this implies $\HH_D\subset W$. The function $f$ may have a discrete set of exceptional points on which it is singular, but will be holomorphic on the complement of this set. Given a screen $S$, we say that a function $f$ is \index{Screen!Extension to a screen}\textbf{extended with respect to the screen} $S$ if it admits analytic continuation in a neighborhood of the window associated to $S$. (Note that this implies that $f$ is holomorphic in a neighborhood of $S$, as $S\subseteq\partial W$.) 

Additionally, let $\set{\tau_n}_{n\in\ZZ}$ be a doubly infinite sequence with the following properties:
\begin{align}
    \label{eqn:admissibleHeights}
    \lim_{n\to\infty} \tau_n = \infty, \quad \lim_{n\to-\infty}\tau_n =-\infty, \text{ and } 
        \tau_{-n}<0<\tau_n \text{ for all }n\geq1.
\end{align}
For brevity, we will say that $\set{\tau_n}_{n\in\ZZ}$ is a \index{Languidity!Sequence of admissible heights}\textbf{sequence of admissible heights}.

Languidity consists of two hypotheses, the first of which concerns power-law/polynomial growth along a sequence of horizontal lines. The horizontal lines each pass from a point on a screen $S$ to some point in a half-plane in which the function is holomorphic.
\begin{definition}[Languidity Hypothesis \textbf{L1}]
    \label{def:languidL1}
    \index{Languidity!Hypothesis \textbf{L1}}
    We say that $f$ satisfies languidity hypothesis \textbf{L1} with exponent $\kappa=0$ and with respect to the screen $S$ if the following hold. 
    \medskip 
    
    Firstly, there must exist some half-plane $\HH_D$ in which $f$ is holomorphic and $f$ must admit an extension to the screen $S$. Secondly, there must exist a positive constant $C>0$, a constant $\beta>D$, and a sequence of admissible heights $\set{\tau_n}_{n\in\ZZ}$ (in the sense of Equation~\ref{eqn:admissibleHeights}) such that for any $\sigma\in[S(\tau_n),\beta]$,
    \[ |f(\sigma+i\tau_n)| \leq C(|\tau_n|+1)^\kappa. \]
\end{definition}
In other words, on every horizontal contour of the form $I_n = [S(\tau_n) +i\tau_n,\,\beta +i\tau_n ]$, $f$ has at most power-law $\kappa$ growth with respect to the magnitude of its imaginary part $\tau_n$ as $|\tau_n|\to\infty$. Alternatively, on each horizontal contour $I_n$, $f$ is uniformly bounded by the constant $C_n$, where $C_n=O((|\tau_n|+1)^\kappa)$ as $|n|\to\infty$.

The second hypothesis concerns power-law/polynomial growth along a vertical curve, the screen $S$. The standard (or weak) version of languidity concerns growth along a single curve, and strong languidity concerns growth along a sequence of such curves moving to the left. These curves are precisely the screens of the window(s) to which $f$ admits an extension. 
\begin{definition}[Languidity Hypothesis \textbf{L2}]
    \label{def:languidL2}
    \index{Languidity!Hypothesis \textbf{L2}}
    A function $f$ is said to satisfy languidity hypothesis \textbf{L2} with exponent $\kappa$ and with respect to the screen $S$ if there exists a positive constant $C>0$ such that for all $\tau \in\RR$ with $|\tau|\geq 1$, $|f(S(\tau)+i\tau)| \leq C|\tau|^\kappa$.
\end{definition}

A \textit{function} which satisfies both hypotheses \textbf{L1} and \textbf{L2} with respect to the same screen $S$ and exponent $\kappa$ is said to be languid, or equivalently to have languid growth. A geometric structure (such as a generalized fractal harp, a relative fractal drum, or a self-similar system) may be called languid if its associated zeta function (respectively, its geometric zeta function, its relative zeta function, or its scaling zeta function) is languid, i.e. it satisfies languidity hypotheses \textbf{L1} and \textbf{L2}. 

\begin{definition}[(Standard/Weak) Languidity]
    \label{def:languid}
    \index{Languidity}
    We say that $f$ is \textbf{languid} with exponent $\kappa$ if there exists a screen $S$ such that $f$ satisfies both languidity hypotheses \textbf{L1} (Definition~\ref{def:languidL1}) and \textbf{L2} (Definition~\ref{def:languidL2}) with exponent to $\kappa$ and with respect to the screen $S$. 
    \medskip 

    Given a particular screen $S$, we say that a function $f$ is languid \text{with respect to} $S$ (with exponent $\kappa$) if it satisfies these hypotheses (with exponent $\kappa$) with respect to the given screen.
\end{definition}
We will not have use of this presently, but one may also say that a function $f$ is languid if there exists some $\kappa\in\RR$ and some screen $S$ for which $f$ is languid with exponent $\kappa$ and with respect to $S$.

\subsection{Strong Languidity}
%
%

Strong languidity may be seen as a sequence of languidity conditions. Namely, it concerns languidity along \index{Screen!Sequence converging to $-\infty$}\textbf{a sequence of screens} $\set{S_m}_{m\in\NN}$ \textbf{converging to} $-\infty$. To be more precise, let $\sup S$ denote the supremum of the elements of the range of $S$, i.e. $\sup S:=\sup\set{S(\tau)\suchthat\tau\in\RR}$. Then a sequence of screens $\set{S_m}_{m\in\NN}$ is said to converge to $-\infty$ if $\lim_{m\to\infty}\sup{S_m}=-\infty$. 

Additionally, we will impose that the sequence of screens has \index{Screen!Uniformly bounded Lipschitz constants}\textbf{uniformly bounded Lipschitz constants}. For a given screen $S_m$, which is by definition Lipschitz continuous, the \index{Lipschitz constant!of a screen}\index{Screen!Lipschitz constant}\textbf{Lipschitz constant} $K_m$ is the smallest constant such that for all $x,y\in\RR$, we have that $|S_m(x)-S_m(y)|\leq K_m|x-y|$. Then to say that the Lipschitz constants of a sequence of screens $\set{S_m}_{m\in\NN}$ is uniformly bounded means that there exists a constant $K$ such that for every Lipschitz constant $K_m$ of $S_m$, $K_m\leq K$. (In other words, the set of all Lipschitz constants, $\set{K_m}_{m\in\NN}$ is bounded by some finite constant $K$.)

\begin{definition}[Strong Languidity of a Function]
    \label{def:stronglyLanguid}
    \index{Languidity!Strong languidity}
    Let $f$ be a holomorphic function possessing an analytic continuation to the whole complex plane except possibly for a discrete set of singular points. Then $f$ is said to be \textbf{strongly languid} with exponent $\kappa$ if there exists a sequence $\set{S_m}_{m\in\NN}$ of screens converging to $-\infty$ with uniformly bounded Lipschitz constants such that:
    \begin{itemize}
        \item[\textbf{L1}] $f$ satisfies languidity hypothesis \textbf{L1} with respect to each screen $S_m$ and the (fixed) constant $\kappa$. (Equivalently, with respect to the replacement $S(\tau)\equiv -\infty$.)
        \item[\textbf{L2}$'$] There exist positive constants $C$ and $B$ such that for all $\tau\in\RR$ and $m\geq 1$, $f$ satisfies $|f(S_m(\tau)+i\tau)|\leq CB^{|S_m(\tau)|}(|\tau|+1)^\kappa$.
    \end{itemize}
    
\end{definition}
Note that the strong languidity condition allows for a prefactor with exponential growth related to $\norm{S_m}_\infty$, but is otherwise analogous to \textbf{L2}. Condition \textbf{L2}$'$ implies \textbf{L2} for each of the screens with finite supremum. Also, without loss of generality, the constant $C$ may be chosen to be the same in both hypotheses. 

\subsection{Strong Languidity of Self-Similar Systems}
%
%

Self-similar systems and their associated scaling zeta functions will play a critical role in the analysis to follow. As it will turn out, a self-similar system is always strongly languid in the sense that its associated scaling zeta function $\zeta_\Phi$ is strongly languid with exponent $\kappa$. We first state this for scaling operators having positive, integral multiplicities and scaling ratios in $(0,1)$.
\begin{proposition}[Languidity of Scaling Zeta Functions]
    \label{prop:languidSZF}
    \index{Languidity!of scaling zeta functions}
    Let $L:=\sum_{i=1}^m a_i M_{\lambda_i}$ be scaling operator with distinct scaling ratios $\lambda_i\in(0,1)$ and positive integral multiplicities $a_i$. Then its associated scaling zeta function $\zeta_L$ is strongly languid with respect to exponent $\kappa=0$ as in Definition~\ref{def:stronglyLanguid}. 
\end{proposition}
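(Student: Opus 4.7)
The plan is to mirror the proof of strong languidity for geometric zeta functions of self-similar strings from Lapidus--van Frankenhuijsen (\cite{LapvFr13_FGCD}, Ch.~3), applied to the Dirichlet polynomial $g(s) := 1 - \sum_{i=1}^m a_i \lambda_i^s$ sitting in the denominator of $\zeta_L$. Under the hypotheses $\lambda_i \in (0,1)$ and $a_i \in \NN$, $g$ has the same form as the denominator of the scaling zeta function of a one-dimensional self-similar harp with the same scaling data, so the same estimates apply. I will take the sequence of screens to be \emph{constant}: fix a strictly decreasing sequence $\sigma_k \to -\infty$ and set $S_k(\tau) \equiv \sigma_k$. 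Each $S_k$ has Lipschitz constant $0$, so the uniform Lipschitz bound in Definition~\ref{def:stronglyLanguid} is immediate, and $\sup S_k = \sigma_k \to -\infty$.

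The heart of the proof is a uniform lower bound on $|g(s)|$ far to the left. Let $r := \min_i \lambda_i$ and let $a$ be its (unique, since the $\lambda_i$ are distinct) multiplicity. Factoring out the dominant scale and applying the reverse triangle inequality,
\[
\Bigl|\sum_{i=1}^m a_i \lambda_i^s\Bigr| \;\geq\; r^{\sigma}\Bigl( a - \sum_{\lambda_i > r} a_i (\lambda_i/r)^{\sigma}\Bigr),
\qquad \sigma = \Re(s).
\]
Since $\lambda_i/r > 1$ for $i \neq i^*$, each term $(\lambda_i/r)^\sigma \to 0$ as $\sigma \to -\infty$, so there exists $\sigma_\ast < 0$ such that the parenthetical quantity exceeds $a/2$ for all $\sigma \leq \sigma_\ast$. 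Then $r^\sigma \to \infty$ gives $|g(s)| \geq (a/4)\, r^\sigma$ for all $\sigma \leq \sigma_{\ast\ast}$ and every $\tau \in \RR$, whence
\[
|\zeta_L(\sigma + i\tau)| \;\leq\; \frac{4}{a}\, r^{-\sigma}
\qquad (\sigma \leq \sigma_{\ast\ast},\ \tau \in \RR).
\]
Because $r \in (0,1)$ and $\sigma < 0$, the right-hand side is actually bounded by $4/a$. Choosing the $\sigma_k$ below $\sigma_{\ast\ast}$ therefore yields hypothesis \textbf{L1} on the leftmost portion of any horizontal contour, and it yields \textbf{L2}$'$ with $\kappa = 0$, $C = 4/a$, and $B = 1/r$ (any $B \geq 1$ works since the bound is uniformly $O(1)$ on the screens).

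It remains to verify \textbf{L1} on the bounded piece $\sigma \in [\sigma_{\ast\ast}, \beta]$ of each horizontal contour; here one must choose the admissible heights $\{\tau_n\}$ to stay away from the discrete pole set $\Dd_L$. In the lattice case the poles sit on a finite union of vertical lines, periodically spaced in $\tau$ with period $2\pi/\log(1/\lambda_0)$, so it suffices to take $\tau_n$ midway between consecutive pole rows; the function $\zeta_L$ is then bounded on the corresponding horizontal segments by continuity and compactness. In the nonlattice case, one invokes the almost-periodicity of $g$ along vertical lines, together with the structure theorem for the complex dimensions of nonlattice self-similar strings (\cite{LapvFr13_FGCD}, Theorem~3.6), to obtain a sequence $\tau_n \to \pm\infty$ satisfying $\operatorname{dist}(\tau_n,\Im(\Dd_L)) \geq \delta$ for some fixed $\delta > 0$; almost-periodicity then upgrades the resulting bound on $|\zeta_L|$ over a fundamental almost-period to a bound uniform in $n$. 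The main obstacle is precisely this nonlattice case: the clean decay estimate suffices on the left, but avoiding the (possibly densely distributed real parts of) poles in the bounded strip requires the almost-periodic pigeonhole argument rather than an elementary calculation.
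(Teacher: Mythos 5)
Your proposal is correct and follows essentially the same route as the paper: the paper simply cites Theorem~3.26 and the Section~6.4 discussion of \cite{LapvFr13_FGCD} (noting that $\zeta_L$ has the same form as the geometric zeta function of a self-similar harp), whereas you unroll the details of that cited argument — the reverse-triangle-inequality lower bound on the Dirichlet polynomial far to the left, the choice of constant screens $S_k\equiv\sigma_k\to-\infty$, and the lattice/nonlattice pigeonhole selection of admissible heights in the bounded strip. One cosmetic note: your bound $|\zeta_L(\sigma_m+i\tau)|\le(4/a)r^{-\sigma_m}$ already satisfies \textbf{L2}$'$ with $B=1$ (indeed $r^{-\sigma_m}\le1$ for $\sigma_m<0$), so the initial statement $B=1/r$ is superfluous though not wrong, as your own parenthetical acknowledges.
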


\begin{proof}
    This result is a corollary of Theorem~3.26 and the discussion in Section~6.4 of \cite{LapvFr13_FGCD}, as $\zeta_L$ is essentially the same as that of a self-similar fractal string (or harp). (Let the length and gap parameters be one.) Note in particular the explicit estimate in Equation~6.36 of \cite{LapvFr13_FGCD} used to establish the uniform bounds on screens with arbitrarily small real parts.
\end{proof}

For a self-similar system $\Phi$, its associated scaling operator $L_\Phi$ will always have scaling ratios in $(0,1)$ and positive, integral multiplicities. Thus, it is an immediate corollary of Proposition~\ref{prop:languidSZF} that the scaling zeta function $\zeta_\Phi=\zeta_{L_\Phi}$ is strongly languid. Extending the convention that a geometric structure itself may be dubbed languid when its associated zeta function is languid, we may say that self-similar systems are languid.  
\begin{corollary}[Languiditiy of Self-Similar Systems]
    \label{cor:languidSystem}
    \index{Languidity!of self-similar systems}
    Let $\Phi$ be a self-similar system on $\RR^\dimension$. Then $\Phi$ is strongly languid (with exponent $\kappa=0$) in the sense that its associated scaling zeta function $\zeta_\Phi$ is strongly languid (with exponent $\kappa=0$) as in Definition~\ref{def:stronglyLanguid}.
\end{corollary}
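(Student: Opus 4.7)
The plan is to derive the corollary directly from the preceding proposition on languidity of scaling zeta functions, by verifying that the scaling operator $L_\Phi$ associated to a self-similar system $\Phi$ satisfies its hypotheses. Recall that $\Phi$ is a finite collection of nontrivial contractive similitudes, each with scaling ratio $\lambda_\ph \in (0,1)$, and that $L_\Phi = \sum_{\ph \in \Phi} M_{\lambda_\ph}$.

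The only bookkeeping required is to collect similitudes sharing a common scaling ratio. Let $\{\lambda_1, \ldots, \lambda_m\}$ enumerate the \emph{distinct} values taken by $\lambda_\ph$ as $\ph$ ranges over $\Phi$, and set $a_i := \#\{\ph \in \Phi : \lambda_\ph = \lambda_i\}$. Because $\Phi$ is finite and each similitude is nontrivial and contractive, each $\lambda_i$ lies in $(0,1)$ and each $a_i$ is a positive integer. Consequently $L_\Phi = \sum_{i=1}^m a_i M_{\lambda_i}$, and its associated scaling zeta function satisfies
\[
    \zeta_\Phi(s) \;=\; \zeta_{L_\Phi}(s) \;=\; \frac{1}{1 - \sum_{i=1}^m a_i \lambda_i^s},
\]
which is precisely the form required by the previous proposition.

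Applying that proposition then yields strong languidity of $\zeta_\Phi$ with exponent $\kappa = 0$. Adopting the convention established earlier in this chapter that a geometric structure is called languid exactly when its associated zeta function is languid, this is immediately the strong languidity of $\Phi$ with exponent $\kappa = 0$. I do not expect any genuine obstacle here: the corollary is essentially a translation of the proposition into the language of self-similar systems, and the only subtlety is the elementary regrouping of the sum so that the scaling ratios indexing it are distinct, as the proposition's hypotheses demand.
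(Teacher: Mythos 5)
Your proposal is correct and matches the paper's own argument: the paper likewise observes that $L_\Phi$ has scaling ratios in $(0,1)$ with positive integral multiplicities and declares the corollary an immediate consequence of Proposition~\ref{prop:languidSZF}. Your explicit regrouping of similitudes sharing a scaling ratio merely makes precise the bookkeeping the paper leaves implicit.
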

\subsection{Joint Languidity}
\label{sub:jointLanguidity}
%
%

In what follows, we will require the existence of screens $S$ for which two functions $f$ and $g$ are both languid with respect to the \textit{same} screen $S$. We shall call this \textit{joint languidity}. 

As a preliminary, we define an open right \index{Half-plane!of maximal joint extension}\textbf{half-plane of maximal joint extension} for two holomorphic functions $f$ and $g$: it is a half-plane $\HH_{\sigma_0}$, for some $\sigma_0\in[-\infty,\infty)$, such that $f$ and $g$ each admit an analytic continuation to $\HH_{\sigma_0}$ (except perhaps for a discrete subset of singularities) and with the property that for any other half-plane $\HH_{\sigma_1}$ with this property, $\HH_{\sigma_0}\subseteq\HH_{\sigma_1}$. (This is equivalent to imposing $\sigma_0\leq\sigma_1$, however one may define a more general type of domain of maximal joint extension where containment is the appropriate condition.) 

If we assume that there is some half-plane $\HH_{\sigma_1}$ in which $f$ and $g$ are both holomorphic, then a half-plane of maximal joint extension always exists. It may be $\HH_{\sigma_1}$ or else it must contain $\HH_{\sigma_1}$. If a half-plane of maximal joint extension exists, it is necessarily unique. (Any two candidates must contain the other as a subset, which implies that they are the same set.) 

\begin{definition}[Joint Languidity]
    \label{def:jointlyLanguid}
    \index{Languidity!Joint languidity}
    Let $f$ and $g$ be functions which are both holomorphic in some half-plane $\HH_{\sigma_1}$ and let $\HH_{\sigma_0}$ be their half-plane of maximal joint extension. 
    \medskip 

    We say that $f$ and $g$ are \textbf{jointly languid} with exponent $\kappa=0$ if there exists a screen $S$ contained in $\HH_{\sigma_0}$ such that the following hold. 
    \begin{itemize}
        \item There is a sequence of admissible heights for which $f$ and $g$ both satisfy languidity hypothesis \textbf{L1} with exponent $\kappa=0$ and with respect to $S$ using this sequence.
        \item Both $f$ and $g$ satisfy languidity hypothesis \textbf{L2} with exponent $\kappa=0$ and with respect to $S$. 
    \end{itemize}
\end{definition}
\section{Scaling Functional Equations}
\label{sec:generalSFEs}
%
%

In order to treat \textit{tube functions} (see Chapter~\ref{chap:geometry}) and \textit{heat content} (see Chapter~\ref{chap:heat}) in a unified way, we introduce a general framework to describe the nature of both of these functions simultaneously. Each of these quantities is a function $f_X(t)=f(t;X)$ depending on a set $X$ and a (nonnegative) parameter $t$. When we consider a self-similar set $X$, arising as the attractor of self-similar system $\Phi$, we can typically express $f_X$ in terms of the functions $f_{\ph[X]}$, with $\ph\in X$. The key idea is that, after an appropriate normalization of the quantity (e.g. volume or heat content), we obtain a scaling relation of the form $f(t;\ph[X])=f(t/\lambda_\ph^\alpha;X)$, where $\alpha$ is fixed (depending on the context, and in this work either $\alpha=1$ or $\alpha=2$) and $\lambda_\ph$ is the scaling ratio of $\ph$. 

So, if the self-similar system $\Phi$ \textit{induces a decomposition} of the form 
\[ f(t;X) = \sum_{\ph\in\Phi} f(t;\ph[X]) + R(t), \]
then we may rewrite each $f(t;\ph[X])$ in terms of a single function, $f_X$, alone but with a rescaled input. This leads to a new expression, 
\[ f_X(t) = \sum_{\ph\in\Phi} f_X(t/\lambda_\ph^\alpha) + R(t), \]
which is exactly a \textit{scaling functional equation} (with an error term $R$) induced by $\Phi$. Defining the scaling operator $L_\Phi^\alpha=\sum_{\ph\in\Phi}M_{\lambda_\ph^\alpha}$, this equation may be written as $f_X=L_\ph^\alpha[f_X]+R$. 


\subsection{Induced Decompositions and Scaling Functional Equations}
%
%

In our applications, we will have a self-similar system $\Phi$ on $\RR^\dimension$ and a collection of functions $f_A$ from a subset of $\RR$ into $\RR$. $A$ will either be a subset of $\RR^\dimension$ or else a relative fractal drum $(X,\Omega)$ in $\RR^\dimension$. In what follows, we will give meaning to the notion of a self-similar system \textit{inducing a scaling functional equation} in a theoretical manner. The idea is that this section outlines this approach for studying more general sorts of functions indexed by some collection of geometric or algebraic objects. 

The terminology is generalized from the situations which we will see in Chapter~\ref{chap:geometry} and Chapter~\ref{chap:heat} and treats them in a unified way. Further, it suggests how to proceed in an arbitrary context: find a scaling law, which may depend on some parameter (or even a more general sort of transformation), as well as a decomposition rule, and then a scaling functional equation will appear. 

Let $\Phi$ be a self-similar system on $\RR^\dimension$ (or, more generally, on a metric space) and let $\Aa$ be a collection of objects with the following closure property: if $A\in\Aa$, then $\ph[A]$ is well-defined and $\ph[A]\in \Aa$ for each $\ph\in\Phi$. For brevity, we will say that $\Aa$ is $\Phi$-closed. Next, let $\set{f_A}_{A\in\Aa}$ be a family of functions $f_A:I\to G$ for some set $I$ and an additive semigroup $(G,+)$. We will identify $f$ with the family and write $f(t;A)=f_A(t)$ for $t\in I$ and $A\in\Aa$.


\begin{definition}[Induced Decomposion]
    \label{def:inducedDecomp}
    \index{Decomposition!Induced decomposion}
    Let $\Phi$ be a self-similar system on $\RR^\dimension$, let $\Aa$ be a $\Phi$-closed collection of objects, and let $f=\set{f_A}_{A\in\Aa}$ be a family of functions $f_A:I\to G$ for some set $I$ and an additive semigroup $(G,+)$. Let also $R:I\to G$ be a function. 
    \medskip 

    A self-similar system $\Phi$ is said to \textbf{induce a decomposition} of $f$ on $I$ with remainder $R$ if for all $t\in I$, 
    \begin{equation}
        \label{eqn:inducedDecomp}
        f(t;A) = \sum_{\ph\in\Phi} f(t;\ph[A]) + R(t).
    \end{equation}
\end{definition}
In this work, $I$ is always a subset of $\RR$ and the codomain is the field $\RR$ with its standard addition operation. Further, $A$ is some geometric structure, either a subset of $\RR^\dimension$ or a relative fractal drum $(X,\Omega)$. If $A\subset\RR^\dimension$, $\ph[A]$ is its pointwise image. For a relative fractal drum, we define $\ph[(X,\Omega)]:=(\ph[X],\ph[\Omega])$, noting that this also defines a relative fractal drum.

An induced decomposition will be paired with a scaling law satisfied by the function $f(t;A)$. A linear scaling (invariance) law would be of the form $f(\lambda t;\lambda A)=f(t;A)$. However, depending on the context, the real and set parameters may scale differently. For example, heat contents (see Chapter~\ref{chap:heat}) will obey a quadratic scaling law: $f(\lambda^2\,t;\lambda A)=f(t;A)$. We provide a general notion for defining the scaling relation which allows one to specify the parameter $\alpha$ dictating this relationship. 
\begin{definition}[$\alpha$-Scaling Law]
    \label{def:scalingLaw}
    \index{Scaling law}
    Let $f=\set{f(\cdot;A)}_{A\in\Aa}$ be a family of functions $f_A:\RR^+\to\RR$ where $\Aa$ is a $\set{\ph}$-closed collection of objects for any similitude $\ph$ of $\RR^\dimension$. 
    \medskip 
    
    We say that $f$ has an $\alpha$\textbf{-scaling law} if for any similitude $\ph$ with scaling ratio $\lambda>0$ and for any $A\in\Aa$, $f(t;\ph[A])=f(t/\lambda^\alpha;A)$. 
\end{definition}

We note that this equation could alternatively have been written as $f(\lambda^\alpha t;\ph[A])=f(t;A)$. If we were to define the action of $\ph$ on the family $f$ by $\ph[f_A]:=f_{\ph[A]}\circ (t\mapsto\lambda^\alpha t)$, then this equation says that each function is invariant with respect to this action, viz. $\ph[f_A]=f_A$ for any $A\in\Aa$ and similitude $\ph$. However, the way we have written will be more useful for rewriting induced decompositions. In this form, the equation is more akin to a type of equivariance: $f_{\ph[A]}(t)=f_A(\ph*t)$, where $\ph*t:=t/\lambda^\alpha$ is an action on $(\RR^+,\cdot)$. 

We conclude this section with the main idea: a scaling law and an induced decomposition for a family of functions together lead to the scaling functional equation of a single function.
\begin{proposition}[Scaling Functional Equation (SFE)]
    \label{prop:inducedSFE}
    \index{Scaling functional equation (SFE)}
    Let $f=\set{f_A}_{A\in\Aa}$ be a family of functions on $I$ satisfying an $\alpha$-scaling law. Suppose that $\Phi$ is a self-similar system which induces a decomposition of $f$ on $I$ with remainder term $R$. Define the scaling operator $L_\Phi^\alpha$ by $L_\Phi^\alpha:=\sum_{\ph\in\Phi}M_{\lambda_\ph^\alpha}$. 
    \medskip 

    Then each function $f_A$ in the family satisfies the \textbf{scaling function equation} $f_A=L_\Phi^\alpha[f_A]+R$ on $I$ with error term $R$, which is to say that for all $t\in I$,
    \[ f_A(t)=L_\Phi^\alpha[f_A](t)+R(t). \]
    We say that $\Phi$ induces a scaling functional equation for $f_A$ with operator $L_\Phi^\alpha$.
\end{proposition}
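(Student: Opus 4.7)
The plan is essentially a direct substitution: I would take the induced decomposition guaranteed by Definition~\ref{def:inducedDecomp} and apply the $\alpha$-scaling law of Definition~\ref{def:scalingLaw} term-by-term to convert a sum involving the various family members $\{f_{\ph[A]}\}_{\ph\in\Phi}$ into a sum involving only the single function $f_A$ with rescaled arguments. The proof is largely notational bookkeeping, with no genuine obstacle.

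Concretely, I would fix an object $A\in\Aa$ (which is legitimate because $\Aa$ is $\Phi$-closed, so each $\ph[A]$ also lies in $\Aa$) together with a point $t\in I$. Starting from the induced decomposition
\[
    f(t;A) \;=\; \sum_{\ph\in\Phi} f(t;\ph[A]) \;+\; R(t),
\]
I would apply the $\alpha$-scaling law to each summand, using that each $\ph\in\Phi$ has scaling ratio $\lambda_\ph>0$, to obtain $f(t;\ph[A]) = f(t/\lambda_\ph^\alpha;A) = f_A(t/\lambda_\ph^\alpha)$. By the definition of the pure scaling operator in Equation~\ref{eqn:defPureScalingOp}, this is precisely $M_{\lambda_\ph^\alpha}[f_A](t)$. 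Summing over $\ph\in\Phi$ and invoking the definition $L_\Phi^\alpha=\sum_{\ph\in\Phi}M_{\lambda_\ph^\alpha}$ yields $\sum_{\ph\in\Phi}f(t;\ph[A]) = L_\Phi^\alpha[f_A](t)$, and substituting back gives $f_A(t)=L_\Phi^\alpha[f_A](t)+R(t)$ for each $t\in I$, as claimed.

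The real content of the statement is conceptual rather than technical: it packages the observation that an $\alpha$-scaling law together with an induced decomposition automatically produces a scaling functional equation in which $f_A$ is expressed in terms of rescalings of itself, with the remainder $R$ untouched. The only ``obstacle'' worth flagging is ensuring that each $\lambda_\ph^\alpha$ is strictly positive so that $M_{\lambda_\ph^\alpha}$ is well-defined; this is immediate from the fact that a self-similar system consists of nontrivial similitudes, so $\lambda_\ph\in(0,1)$ and hence $\lambda_\ph^\alpha>0$ for any exponent $\alpha$ relevant to our applications.
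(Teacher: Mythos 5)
Your proposal is correct and matches the paper's own argument, which is given only as a one-sentence sketch following the proposition: apply the $\alpha$-scaling law to each term of the induced decomposition, then invoke the definition of $L_\Phi^\alpha$ to rewrite the sum in terms of $f_A$ alone. Your write-up simply makes this bookkeeping explicit, including the correct identification $f_A(t/\lambda_\ph^\alpha)=M_{\lambda_\ph^\alpha}[f_A](t)$ from the reciprocal convention in Equation~\ref{eqn:defPureScalingOp}.
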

Proposition~\ref{prop:inducedSFE} is, in essence, an untangling of definitions meant to lead to a single key concept: \textit{a self-similar system induces a scaling functional equation} for functions with scaling laws. It is, however, a consequence of the definitions that any function in this family satisfies an induced scaling functional equation. The proof is merely to apply the $\alpha$-scaling law to each term of an induced decomposition and then use the definition of the scaling operator $L_\Phi^\alpha$ to rewrite the decomposition in terms of a single function.  

In general, any function $f$ which satisfies an equation of the form $f=L[f]+R$ for some scaling operator $L$ is said to satisfy a scaling functional equation (with error $R$). If $R\equiv 0$, the scaling functional equation is said to be exact, and if not then it is said to be an approximate scaling functional equation. 

\subsection{Admissible Remainders}
%
%

As part of our analysis of scaling functional equations, we will be required to show that certain functions are languid (see Section~\ref{sec:languidity}). In so doing, we will need to constrain the types of remainders which can appear in order for these results to be established. In this section, we introduce terminology for such \textit{admissible remainders} relative to a given self-similar system as well as some sufficient conditions for admissibility. 

The admissibility of a remainder is essentially tied to a need for \textit{joint languidity}. Two functions, the Mellin transform of the remainder $\zeta_R$ and the scaling zeta function $\zeta_\Phi$, must share the same screen and the same horizontal contours on which they are both uniformly bounded. They must both be languid (with exponent $\kappa$) on a shared screen. It is for this reason that we introduced the notion of \textit{joint languidity} (Definition~\ref{def:jointlyLanguid}).

\begin{definition}[Admissible Remainders and Screens]
    \label{def:admissibleRem}
    \index{Scaling functional equation (SFE)!Admissible remainders}
    Let $R$ be a function such that its truncated Mellin transform $\zeta_R(s;\delta)=\Mellin^\delta[R](s)$ is holomorphic in some half-plane $\HH_{\sigma_0}$. This occurs, for instance, if $R$ is continuous on $(0,\delta]$ and $R(t)=O(t^{-\sigma_0})$ as $t\to0^+$ (per Lemma~\ref{lem:MellinHolo}).
    \medskip

    Let $L$ be a scaling operator with associated scaling zeta function $\zeta_L$. We say that $R$ is an \textbf{admissible remainder} for $L$ if there exists a screen $S$ such that $\zeta_L$ and $\zeta_R$ are jointly languid with exponent $\kappa=0$ with respect to $S$ in the sense of Definition~\ref{def:jointlyLanguid}. Any such screen $S$ is called an \textbf{admissible screen} for $R$ and $L$ (or their respective zeta functions). 
    \medskip 

    Given a self-similar system $\Phi$, we say that $R$ is an admissible remainder for $\Phi$ if it is admissible for the associated operator $L_\Phi$ (and similarly for admissible screens). 
\end{definition}

The easiest way to obtain a nontrivial admissible screen occurs when $\zeta_R$ is holomorphic in a half-plane to the left of the vertical strip containing all of the poles of $\zeta_L$. This ensures that a screen may be chosen to the left of any pole of $\zeta_L$ but also in a region where $\zeta_R$ is holomorphic. Supposing that $L=L_\Phi$ is the scaling operator associated to a self-similar system, we have that this strip is bounded explicitly by the similarity dimensions of $\Phi$: if $\omega\in\Dd_\Phi$, then $\lowersimdim(\Phi)\leq\Re(\omega)\leq\simdim(\Phi)$. So, the first criterion is to ensure an estimate for $R$ that guarantees its abscissa of holomorphic convergence is strictly smaller than the lower similarity dimension.

\begin{theorem}[Lower Dimension Criterion for Admissibility]
    \label{thm:lowerDimAdmissibility}
    Let $\Phi$ be a self-similar system and let $D_\ell=\underline{\dim}_S(\Phi)$ be its lower similarity dimension (see Definition~\ref{def:lowerSimDim}). For any $R\in C^0(\RR^+)$, if there exists $\sigma_0<D_\ell=\underline{\dim}_S(\Phi)$ such that as $t\to0^+$, $R(t)=O(t^{-\sigma_0})$, then $R$ is an admissible remainder for $\Phi$ and any screen of the form $S_\e(\tau)\equiv \sigma_0+\e$, with $0<\e<D_\ell-\sigma_0$, is admissible. 
\end{theorem}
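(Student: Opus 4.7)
The plan is to verify joint languidity (Definition~\ref{def:jointlyLanguid}) of $\zeta_R$ and $\zeta_\Phi$ with exponent $\kappa=0$ on the vertical screen $S_\e(\tau)\equiv\sigma_0+\e$, for any $\e\in(0,D_\ell-\sigma_0)$. First I would establish holomorphicity: by Lemma~\ref{lem:MellinHolo}, the hypothesis $R(t)=O(t^{-\sigma_0})$ as $t\to 0^+$ together with continuity of $R$ implies that $\zeta_R=\Mellin^\delta[R]$ is holomorphic in the open right half-plane $\HH_{\sigma_0}$. By Proposition~\ref{prop:holomorphicitySZF}, $\zeta_\Phi$ is meromorphic on $\CC$ with poles confined to the strip $D_\ell\leq\Re(s)\leq D$, the lower bound coming from Theorem~3.6 of \cite{LapvFr13_FGCD}. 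Since $\sigma_0+\e<D_\ell$, the screen $S_\e$ is contained in $\HH_{\sigma_0}$ and lies strictly to the left of every pole of $\zeta_\Phi$, so both functions admit extensions to a common neighborhood of $S_\e$ and $\HH_{\sigma_0}$ is contained in a half-plane of maximal joint extension.

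Next I would verify hypothesis \textbf{L2} for both functions on $S_\e$. For $\zeta_R$, uniform boundedness on the vertical line $\Re(s)=\sigma_0+\e$ follows directly from Corollary~\ref{cor:MellinBounds} applied with $a=\sigma_0+\e>\sigma_0$. For $\zeta_\Phi(s)=(1-p(s))^{-1}$ with $p(s)=\sum_{\ph\in\Phi}\lambda_\ph^s$, the key is to produce a uniform lower bound on $|1-p(s)|$ along the line. Letting $r_M$ denote the smallest of the distinct scaling ratios, with multiplicity $m_M$, and setting $q(s)=p(s)-m_Mr_M^s$, two applications of the triangle inequality yield
\[
    |1-p(s)|\;\geq\;|m_Mr_M^s-1|-|q(s)|\;\geq\;m_Mr_M^\sigma-1-\sum_{k\neq M}m_k r_k^\sigma,
\]
where $\sigma=\Re(s)$. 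Unravelling the definition of $p_\ell$ from Equation~\ref{eqn:defLowerPoly}, the right-hand side is strictly positive precisely when $p_\ell(\sigma)<1$, which, by strict monotonicity of $p_\ell$, holds exactly for $\sigma<D_\ell$. Taking $\sigma=\sigma_0+\e<D_\ell$ therefore yields a uniform upper bound on $|\zeta_\Phi(s)|$ along $S_\e$, independent of $\Im(s)$. This verification is the main obstacle, since strong languidity alone controls $\zeta_\Phi$ only on screens receding to $-\infty$; bridging the gap requires the explicit triangle-inequality estimate above, for which the definition of the lower similarity dimension via $p_\ell$ is precisely tailored.

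For hypothesis \textbf{L1} I would exhibit a common sequence of admissible heights. By Corollary~\ref{cor:languidSystem}, $\zeta_\Phi$ is strongly languid with exponent $\kappa=0$, so there exists a sequence of admissible heights $\{\tau_n\}_{n\in\ZZ}$ along which $|\zeta_\Phi(\sigma+i\tau_n)|\leq C$ uniformly for all $\sigma\leq\beta$ (for some $\beta>D$); in particular this holds for $\sigma\in[\sigma_0+\e,\beta]$. For $\zeta_R$, Corollary~\ref{cor:MellinBounds} applied to the closed vertical strip $\{\sigma_0+\e\leq\Re(s)\leq\beta\}$ yields a uniform bound independent of the imaginary part, so \textbf{L1} automatically holds for $\zeta_R$ on the same sequence $\{\tau_n\}$. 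Combining with the \textbf{L2} estimates above, both conditions of joint languidity hold on $S_\e$, so $R$ is an admissible remainder for $\Phi$ with admissible screen $S_\e$.
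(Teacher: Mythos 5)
Your proposal is correct and follows essentially the same route as the paper's proof: $\zeta_R$ is handled via Lemma~\ref{lem:MellinHolo} and Corollary~\ref{cor:MellinBounds}, \textbf{L1} for $\zeta_\Phi$ via strong languidity, and \textbf{L2} for $\zeta_\Phi$ via a uniform lower bound on the denominator $1-\sum_k m_k r_k^s$ arising from the Dirichlet polynomial that defines $D_\ell$. Your triangle-inequality estimate, after unwinding, produces the identical lower bound $m_M r_M^\sigma(1-p_\ell(\sigma))$ that the paper obtains by first normalizing by $r_M^{-s}/m_M$, so the two arguments are algebraically equivalent.
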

\begin{proof}
    Let $D_\ell=\underline{\dim}_S(\Phi)$ be the lower similarity dimension of $\Phi$ and let $S_\e(\tau)\equiv \sigma_0+\e$ be a constant screen in $\HH_{\sigma_0}$, where $\e>0$ is such that $\sigma_0+\e<D_\ell$. We will show that $\zeta_R$ and $\zeta_\Phi$ are jointly languid with exponent $\kappa=0$ on $S_\e$.  

    To start, we have that $\zeta_\Phi$ is strongly languid with exponent $\kappa=0$ by Corollary~\ref{cor:languidSystem}. This guarantees that for any $\sigma_-$, there exists of a sequence of admissible heights $\set{T_n}_{n\in\ZZ}$ (i.e. with $T_{-n}\to-\infty$ and $T_n\to\infty$ as $n\to\infty$ and with $T_n>0>T_{-n}$ for each $n\geq 1$) and some screen $S_-$ with $\sup S_- < \sigma_-$ such that $\zeta_\Phi$ is uniformly bounded on horizontal contours of the form $[S_-(T_n)+i T_n,\sigma_++i T_n]$, where $\sigma_+>\max(\sigma_-,\dim_S(\Phi))$. In particular, $\zeta_\Phi$ will be uniformly bounded on the subsets $I_n:=[\sigma_0+\e+iT_n,\sigma_++iT_n]$. This establishes languidity hypothesis \textbf{L1} for $\zeta_\Phi$. 

    Next, we note that the estimate on $R$ implies that the function $\zeta_R$ is holomorphic in the open right half-plane $\HH_{\sigma_0}$ by Lemma~\ref{lem:MellinHolo}. By Corollary~\ref{cor:MellinBounds}, we also have that $\zeta_R$ is bounded on the screen $S_\e(\tau)\equiv \sigma_0+\e>\sigma_0$ as well as on any vertical strip of the form $\HH_{\sigma_0+\e}^{\sigma_+}$. This establishes both languidity hypotheses \textbf{L1} and \textbf{L2} with exponent $\kappa=0$ and on the screen $S_\e$ with respect to the intervals $I_n\subset \HH_{\sigma_0+\e}^{\sigma_+}$, with shared sequence of admissible heights.
    
    It remains to show that $\zeta_\Phi$ is bounded on the screen $S_\e$. To this end, we enforce that $\sigma_0+\e<D_\ell$, in which case for any $s$ on the screen, $\Re(s)=\sigma_0+\e<D_\ell$. Let $\set{r_k}_{k=1}^m$ denote the set of unique scaling ratios of $\Phi$, arranged in decreasing order $r_1\geq ...\geq r_M$, and let $m_k$ denote the multiplicity of $r_k$. Define the function 
    \begin{equation}
        p(t) = \frac1{m_M}(r_M^{-1})^t+\sum_{k=1}^{M-1}\frac{m_k}{m_M}\left(\frac{r_k}{r_M}\right)^t,
    \end{equation}
    which is readily seen to be strictly increasing with range $(0,\infty)$. Note that $p(D_\ell)=1$ by definition. We will obtain a bound for $\zeta_\Phi(s)$ when $\Re(s)=\sigma_0+\e<D_\ell$ using this function. 
    
    To that end, let $f(s)=\zeta_\Phi(s)^{-1}=1-\sum_{k=1}^M m_k r_k^s$ be the denominator. Then 
    \begin{align*}
        |r_M^{-s}/m_M f(s)+1| &= \left| r_M^{-s}/m_M - \sum_{k=1}^{M-1} m_k/m_M(r_k/r_M)^s \right| \\
            &\leq p(\sigma) < p(D_\ell) = 1.
    \end{align*}
    By the reverse triangle inequality, we have that 
    \begin{align*}
        1 > p(\sigma) \geq \left| |r_M^{-s}/m_M f(s)| - 1 \right| = \left| r_M^{-\sigma}/m_M|f(s)| - 1 \right|,
    \end{align*}
    which may be rewritten as $| |f(s)|-m_M r^{\sigma} | \leq m_M r^{\sigma}p(\sigma)<m_M r_M^\sigma$. Once again using the reverse triangle inequality, we find that the lower bound for $|f|$ is furnished by
    \begin{align*}
        |f(s)| \geq m_M r_M^\sigma - | |f(s)|-m_M r^{\sigma} | \geq m_M r_M^\sigma - m_M r^{\sigma}p(\sigma) >0.
    \end{align*}
    It follows that when $\Re(s)=\sigma_0+\e$ is fixed, $|\zeta_\Phi(s)|\leq C_\Phi$, with $C_\Phi = (m_M r_M^\sigma - m_M r^{\sigma_0+\e}p(\sigma_0+\e))^{-1}$. This establishes hypothesis \textbf{L2} for $\zeta_\Phi$, and thus $\zeta_\Phi$ and $\zeta_R$ are jointly languid on the screen $S_\e$.
\end{proof}

In order to accommodate $\alpha$-scaling laws with $\alpha>0$ (in the sense of Definition~\ref{def:scalingLaw}), we will consider scaling operators of the form $L_\Phi^\alpha:=\sum_{\ph\in\Phi}M_{\lambda_\ph^\alpha}$, where $\Phi$ is a self-similar system. In this case, the zeta function associated to $L_\Phi^\alpha$ is exactly the function $\zeta_\Phi(\alpha s)$, where $\zeta_\Phi(s)$ is the scaling zeta function associated to $\Phi$ itself. 

This change of variables amounts to a simple rescaling of the bound with respect to the lower similarity dimension of $\Phi$ a remainder must satisfy in order to be admissible by Theorem~\ref{thm:lowerDimAdmissibility}. Namely, when $R(t)=O(t^{-\sigma_R})$ as $t\to0^+$ for some $\sigma_R<\lowersimdim(\Phi)/\alpha$, we have that $\zeta_\Phi(\alpha s)$ has poles in the strip $\lowersimdim(\Phi)/\alpha \leq \Re(s) \leq \simdim(\Phi)/\alpha$ and $\sigma_R$ lies strictly to the left of this bound. The estimates for $\zeta_\Phi$ apply to its rescaled analogue, and thus proof of Theorem~\ref{thm:lowerDimAdmissibility} yields the following corollary.

\begin{corollary}[Rescaled Lower Dimension Criterion]
    \label{cor:lowerDimAdmissibilityRescaled}
    Let $\Phi$ be a self-similar system and let $D_\ell=\underline{\dim}_S(\Phi)$ be its lower similarity dimension (see Definition~\ref{def:lowerSimDim}). Given $\alpha>0$, define the scaling operator $L_\Phi^\alpha:=\sum_{\ph\in\Phi}M_{\lambda_\ph^\alpha}$. 
    \medskip     
    
    For any $R\in C^0(\RR^+)$, if there exists $\sigma_R<D_\ell/\alpha$ such that as $t\to0^+$, $R(t)=O(t^{-\sigma_R})$, then $R$ is an admissible remainder for $L_\Phi^\alpha$ and any screen of the form $S_\e(\tau)\equiv \sigma_R+\e$, with $0<\e<D_\ell/\alpha-\sigma_R$, is admissible. 
\end{corollary}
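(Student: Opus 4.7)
The plan is to reduce the corollary directly to Theorem~\ref{thm:lowerDimAdmissibility} via the change of variables $s \mapsto \alpha s$. The key observation is that a direct substitution into Definition~\ref{def:scalingZetaFunction} gives
\[ \zeta_{L_\Phi^\alpha}(s) = \Bigl(1 - \sum_{\ph\in\Phi}\lambda_\ph^{\alpha s}\Bigr)^{-1} = \zeta_\Phi(\alpha s), \]
so every estimate for the scaling zeta function in the $z$-plane translates into an estimate for $\zeta_{L_\Phi^\alpha}$ in the $s$-plane via the dilation $z = \alpha s$, with $\alpha > 0$.

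I would first note that the poles of $\zeta_\Phi(\alpha s)$ sit in the strip $\lowersimdim(\Phi)/\alpha \leq \Re(s) \leq \simdim(\Phi)/\alpha$, so the proposed screen $S_\e(\tau) \equiv \sigma_R + \e$ with $\sigma_R + \e < D_\ell/\alpha$ lies strictly to the left of every pole. I would then transport the strong languidity of $\zeta_\Phi$ (supplied by Corollary~\ref{cor:languidSystem}) across this dilation: an admissible sequence of heights $\{T_n\}_{n\in\ZZ}$ for $\zeta_\Phi$ rescales to $\{T_n/\alpha\}_{n\in\ZZ}$ for $\zeta_\Phi(\alpha\cdot)$ and remains admissible because $\alpha > 0$, and uniform bounds along the corresponding horizontal contours carry over unchanged since the exponent $\kappa = 0$ produces no height-dependent prefactor. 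For hypothesis \textbf{L2} on $S_\e$, I would replay the reverse triangle inequality argument from the proof of Theorem~\ref{thm:lowerDimAdmissibility} applied at the point $\alpha s$, which works precisely because $\Re(\alpha s) = \alpha(\sigma_R + \e) < D_\ell$.

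The bounds on $\zeta_R$ require no modification from the theorem: the growth hypothesis $R(t) = O(t^{-\sigma_R})$ together with Lemma~\ref{lem:MellinHolo} and Corollary~\ref{cor:MellinBounds} places $\zeta_R$ as holomorphic on $\HH_{\sigma_R}$ and uniformly bounded on a closed vertical strip containing $S_\e$, yielding both \textbf{L1} and \textbf{L2} for $\zeta_R$ with respect to the rescaled sequence of heights. Combining the two estimates gives joint languidity of $\zeta_R$ and $\zeta_{L_\Phi^\alpha}$ on $S_\e$ with exponent $\kappa = 0$, which is exactly the admissibility condition of Definition~\ref{def:admissibleRem}. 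I do not anticipate any substantial obstacle; the argument is essentially a bookkeeping exercise for a linear rescaling, and the only minor point to verify is that the strongly languid screens for $\zeta_\Phi$ can be chosen far enough to the left so that, after the dilation, the rescaled horizontal contours in the $s$-plane still extend past $S_\e$ on the right, which follows from the fact that strong languidity supplies screens converging to $-\infty$.
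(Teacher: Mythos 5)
Your proof is correct and takes essentially the same approach as the paper: reduce to the unscaled criterion via the change of variables $z=\alpha s$, exploiting $\zeta_{L_\Phi^\alpha}(s)=\zeta_\Phi(\alpha s)$ and the fact that the $\kappa=0$ bounds and the boundedness of $\zeta_R$ on vertical strips survive the dilation. The only cosmetic difference is that the paper runs the Theorem~\ref{thm:lowerDimAdmissibility} argument in the $z$-plane for the pair $(\zeta_\Phi(z),\,\zeta_R(z/\alpha))$ and substitutes $z=\alpha s$ at the end, whereas you dilate first and verify \textbf{L1}/\textbf{L2} directly in the $s$-plane; the underlying estimates are identical.
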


\begin{proof}
    By Lemma~\ref{lem:MellinHolo}, we have that $\zeta_R(s)$ is holomorphic in $\HH_{\sigma_R}$ and thus $\zeta_R(z/\alpha)$ is holomorphic when $\Re(z)>\alpha\,\sigma_R$. Further, $\zeta_R(s)$ is bounded on any vertical strip of the form $\HH_a^b$, $\sigma_R<a\leq b<\infty$, so $\zeta_R(z/\alpha)$ is bounded on the corresponding strips $\HH_{a'}^{b'}$, $\alpha\,\sigma_R<a'\leq b'<\infty$. 
    
    Since $\alpha\,\sigma_R<D_\ell$, by repeating the proof of Theorem~\ref{thm:lowerDimAdmissibility} (with $\sigma_0=\alpha\,\sigma_R$) we obtain that $\zeta_\Phi(z)$ and $\zeta_R(z/\alpha)$ are jointly languid on screens of the form $S_\e(\tau)\equiv \alpha\,\sigma_R+\e$ when $0<\e<D_\ell -\alpha\,\sigma_R$. Taking $z=\alpha s$, we see that the functions $\zeta_\Phi(\alpha s)$ and $\zeta_R(s)$ are jointly languid on screens of the form $S'_{\e'}(\tau)\equiv \sigma_R+\e'$ where $\e'=\e/\alpha\in(0,D_\ell/\alpha-\sigma_R)$. 
\end{proof}

The next criterion is related to when we have explicit knowledge of the locations of the singularities of $\zeta_\Phi$. In particular, in the lattice case (see Definition~\ref{def:latticeDichotomy}), we can explicitly show that all of the poles lie on one of finitely many vertical lines (and are distributed periodically along these lines with a shared period for each line). Thus, we can easily choose screens within this region which will never encounter singularities of $\zeta_\Phi$, with distance to any pole bounded by the distance of the real part of the screen to the real part of the closest of the finitely many exceptional points.

\begin{theorem}[Lattice Criterion for Admissiblility]
    \label{thm:latticeCaseAdmissibility}
    Let $\Phi$ be a self-similar system and suppose that its scaling ratios $\set{\lambda_\ph}_{\ph\in\Phi}$ are arithmetically related (see Definition~\ref{def:latticeDichotomy}). Let $R$ be a continuous function on $\RR^+$ with the estimate that $R(t)=O(t^{-\sigma_0})$ as $t\to0^+$ for some $\sigma_0\in\RR$. 
    \medskip
    
    Then for all but finitely many $\sigma>\sigma_0$, $S_\sigma(\tau)\equiv \sigma$ is an admissible screen. Consequently, there are admissible screens of the form $S_{\sigma_0+\e}$ for any $\e>0$ sufficiently small.
\end{theorem}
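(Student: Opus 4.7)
The plan is to exploit the lattice structure of $\Phi$ to locate the poles of $\zeta_\Phi$ on finitely many vertical lines, then verify the two languidity hypotheses on any constant screen $S_\sigma(\tau)\equiv \sigma$ with $\sigma>\sigma_0$ that avoids these lines. First, since the scaling ratios are arithmetically related, there exists $\lambda_0\in(0,1)$ and positive integers $m_k$ such that $\lambda_k=\lambda_0^{m_k}$ for each unique scaling ratio. Writing $z=\lambda_0^s$, the denominator of $\zeta_\Phi$ becomes a polynomial $p(z)=1-\sum_{k} a_k z^{m_k}$, so the poles of $\zeta_\Phi$ are exactly the preimages of the finite zero set $\{z_1,\ldots,z_N\}$ of $p$ under $s\mapsto \lambda_0^s$. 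Each $z_j$ yields a vertical line of poles with real part $\sigma_j=\log|z_j|/\log\lambda_0$, periodically distributed in the imaginary direction with period $2\pi/|\log\lambda_0|$. Thus the exceptional set of ``bad'' real parts $\{\sigma_1,\ldots,\sigma_N\}$ is finite.

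Next, for any $\sigma>\sigma_0$ with $\sigma\notin\{\sigma_1,\ldots,\sigma_N\}$, I would check the four required conditions for joint languidity of $\zeta_\Phi$ and $\zeta_R$ on the constant screen $S_\sigma$ with exponent $\kappa=0$. Holomorphicity and the $\mathbf{L2}$ bound for $\zeta_R$ on the vertical line $\Re(s)=\sigma$, as well as its uniform boundedness on horizontal segments $[\sigma+i\tau,\sigma_++i\tau]$ for any $\sigma_+>\sigma$, follow immediately from Lemma~\ref{lem:MellinHolo} and Corollary~\ref{cor:MellinBounds} applied to the estimate $R(t)=O(t^{-\sigma_0})$. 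For $\zeta_\Phi$, hypothesis $\mathbf{L1}$ is inherited from the strong languidity of $\zeta_\Phi$ established in Corollary~\ref{cor:languidSystem}: that result furnishes a sequence of admissible heights $\{T_n\}_{n\in\ZZ}$ on which $\zeta_\Phi$ is uniformly bounded along horizontal contours extending arbitrarily far to the left, hence in particular along $[\sigma+iT_n,\sigma_++iT_n]$. Since $\zeta_R$ is bounded on the whole strip $\sigma\leq\Re(s)\leq\sigma_+$, the same sequence $\{T_n\}$ works for both functions, which is exactly what joint languidity requires.

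The one remaining point is hypothesis $\mathbf{L2}$ for $\zeta_\Phi$ on $S_\sigma$, and this is where the lattice hypothesis does the essential work. The function $\tau\mapsto\zeta_\Phi(\sigma+i\tau)=1/p(\lambda_0^\sigma\lambda_0^{i\tau})$ is periodic in $\tau$ with period $2\pi/|\log\lambda_0|$, because $\lambda_0^{i\tau}$ is. Since $\sigma$ avoids the finitely many $\sigma_j$, the denominator $p(\lambda_0^{\sigma+i\tau})$ has no zeros on the line $\Re(s)=\sigma$; a continuous, periodic, nonvanishing function on $\RR$ is bounded below in modulus, so $|\zeta_\Phi(\sigma+i\tau)|$ is uniformly bounded on $S_\sigma$, establishing $\mathbf{L2}$ with exponent $\kappa=0$. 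Combining this with the previous paragraph yields the joint languidity of $\zeta_\Phi$ and $\zeta_R$ on $S_\sigma$, proving that $S_\sigma$ is admissible. For the final statement, since the exceptional set $\{\sigma_1,\ldots,\sigma_N\}$ is finite, for all $\e>0$ sufficiently small the value $\sigma_0+\e$ is not exceptional.

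The main obstacle is conceptual rather than technical: the key insight is that the lattice condition reduces $\zeta_\Phi$ to a function of $\lambda_0^s$, so the nonvanishing of its denominator on a vertical line is governed by a finite collection of polynomial roots, and periodicity in the imaginary direction promotes a pointwise lower bound into a uniform one. Once this is in hand, the remaining steps are straightforward applications of the convergence lemmas for Mellin transforms and the strong languidity already proved for self-similar systems.
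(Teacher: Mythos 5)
Your proposal is correct and follows essentially the same plan as the paper's proof: exploit the lattice condition to rewrite the denominator of $\zeta_\Phi$ as a polynomial in $\lambda_0^s$, conclude that the poles lie on finitely many vertical lines, draw the $\mathbf{L1}$ estimates for $\zeta_\Phi$ from the strong languidity of self-similar systems (Corollary~\ref{cor:languidSystem}) and those for $\zeta_R$ from Lemma~\ref{lem:MellinHolo} and Corollary~\ref{cor:MellinBounds}, and then obtain $\mathbf{L2}$ for $\zeta_\Phi$ from a periodicity-plus-nonvanishing argument.

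One point worth noting: your $\mathbf{L2}$ argument is stated more cleanly than the paper's. You correctly observe that $\tau\mapsto P(\sigma+i\tau)$ is \emph{additively} periodic in $\tau$ with period $2\pi/|\log\lambda_0|$ (since $\lambda_0^{i\tau}$ is), and that a continuous, periodic, nonvanishing function is bounded away from zero. The paper's proof at this stage speaks of ``multiplicative periodicity'' and intervals of the form $[p^{m_0},p^{m_0+1}]$, which appears to be a slip: the underlying invariance is $P(\sigma+i(\tau+T))=P(\sigma+i\tau)$ with $T=2\pi/\log\lambda_0$, an additive shift, exactly as you describe. Both routes arrive at the same conclusion, but your phrasing of the periodicity step is the correct one.
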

\begin{proof}
    By definition, in the lattice case there exists some $\lambda_0\in\RR^+$ such that for each $\ph\in\Phi$, there is a positive integer $k_\ph$ such that $\lambda_\ph=\lambda_0^{k_\ph}$. Under this assumption, we may explicitly write the denominator of $\zeta_\Phi$ as the Dirichlet polynomial 
    \[ P(s) = 1-\sum_{\ph\in\Phi} \lambda_\ph^s = 1- \sum_{\ph\in\Phi} \lambda_0^{k_\ph s}. \]
    Under the change of variables $s=\log_{\lambda_0}z$ (so that $\lambda_0^s=z$), we have that 
    \[ P(\log_{\lambda_0}z) = 1- \sum_{\ph\in\Phi} z^{k_\ph}. \]
    This is precisely a polynomial in the variable $z$ (since $|\Phi|=n<\infty$), so by the fundamental theorem of algebra it has finitely many roots (exactly $K=\max(\set{k_\ph}_{\ph\in\Phi})$) in $\CC$. 
    
    Denote these roots by $Z=\set{z_j}_{j=1}^K$. Any solution of $P(\omega)=0$ must then be of the form $\lambda_0^\omega=z_j$ for some $z_j\in Z$. We note that $0\notin Z$ since the polynomial equals one when $z=0$, so there exists a branch of the logarithm for which $\log z_j$ is well defined for each $z_j\in Z$ and without loss of generality we may find a single branch defined for each $z_j\in Z$ and for $\lambda_0$ since $Z$ is finite. The logarithm is multivalued, however, with $\lambda_0^\omega=e^{\omega\log\lambda_0}=e^{\omega\log\lambda_0+2\pi i m}$ for any $m\in\ZZ$. So, we will obtain as solutions to $P(\omega)=0$ exactly the points $\omega_{j,m}$ of the form $\omega_{j,m}= \log(z_j)/\log(\lambda_0) + 2\pi i m/\log(\lambda_0)$. 

    Observe that there are finitely many real parts, $\sigma_j=\Re(\log(z_j)/\log(\lambda_0))$, $j=1,...,K$, at which these poles occur and that the imaginary parts are all distributed with the same period, $2\pi/\log\lambda_0$. We will show that for any screen of the form $S_\sigma(\tau)\equiv \sigma$, $\sigma\neq\sigma_j$ for each $j=1,...,K$, $\zeta_\Phi$ is languid with exponent $\kappa=0$ with respect to $S_\sigma$, starting with hypothesis \textbf{L1}. 

    The next two steps of the proof are the same as in the proof of Theorem~\ref{thm:lowerDimAdmissibility}. In short, by the strong languidity of $\zeta_\Phi$, for any $\sigma_-$, there exists a sequence of admissible heights $\set{T_n}_{n\in\ZZ}$ on which $\zeta_\Phi$ is bounded on horizontal intervals of the form $[\sigma_-+iT_n,\sigma_++iT_n]$ where $\sigma_-$ is arbitrarily small and $\sigma_+>\max(\simdim(\Phi),\sigma_-)$. Choosing $\sigma_-=\sigma$ establishes hypothesis \textbf{L1} with respect to the screen $S_\sigma$ for $\zeta_\Phi$. Secondly, we have that $\zeta_R$ satisfies hypotheses \textbf{L1} and \textbf{L2} for the sequence of heights as above on any screen of the form $S_\sigma$, $\sigma>\sigma_0$, because it is a holomorphic Mellin transform. 

    It remains only to show that $\zeta_\Phi$ satisfies hypothesis \textbf{L2} on $S_\sigma$ when $\sigma$ is not one of the exceptional values $\sigma_j$, $j=1,...,K$. We will show that $f(\tau)$ is bounded from below by a strictly positive constant on all of $\RR$, and thus $\zeta_\Phi$ will be bounded on $S_\sigma$. As we have shown, $P(\sigma+i\tau)\neq0$ whenever $\sigma\notin\set{\sigma_j}_{j=1}^K$. Let $f(\tau)=|P(\sigma+i\tau)|$. Given any finite interval $[a,b]$, we have that $f(t)$ must be nonzero and bounded from below by a strictly positive constant. This follows by continuity and the intermediate value theorem, noting that if $f(\tau)=0$, then $P(\sigma+i\tau)=0$, which is a contradiction. 
    
    We will use (multiplicative) periodicity of $f$ to show that $f(t)=|P(\sigma+i\tau)|$ cannot become arbitrarily small as $|\tau|\to\infty$. Recall that  
    \[
        P(\sigma+i\tau) = 1 - \sum_{j=1}^K \lambda_0^{k_j(\sigma_0+i\tau)} = 1- \sum_{j=1}^K \lambda_0^{k_j\sigma}\cdot e^{i (k_j\log\lambda_0)\tau}.
    \]
    Under the transformation $\tau\mapsto (2\pi m/\log\lambda_0) \,\tau$, for any $m\in\ZZ$, we have that each exponential is invariant since $k_j$ is an integer. Thus, $P(\sigma+i\tau)=P(\sigma+i(2\pi m/\log\lambda_0)\tau)$. Choosing $m=-1$ shows us that the function $f(\tau):=|P(\sigma+i\tau)|$ is multiplicatively periodic with period $p=-2\pi/\log\lambda_0=2\pi/\log\lambda_0^{-1}$ or a rational multiple thereof, depending on the greatest common divisor of the integers $k_j$, $j=1,...,K$. However, if they share a greatest common divisor $\text{GCD}$, it is possible to redefine $\lambda_0$ by $\lambda_0'=\lambda_0^{\text{GCD}}$ and obtain a new set of integers without this property. Note that $\log(\lambda_0^{-1})>0$.

    By the previous argument, $f(\tau)=|P(\sigma+i\tau)|$ is bounded from below on any bounded interval. So, $f$ is nonzero on the interval $[-1,1]$. Now pick one full multiplicative period in $(0,\infty)$, say $[p^{m_0},p^{m_0+1}]$ (if $p>1$) or $[p^{m_0+1},p^{m_0}]$ (if $p<1$). The function $f$ must be bounded from below strictly away from zero as this is a finite interval. By periodicity, this same bound applies to any interval of the form $[p^{m},p^{m+1}]$ or $[p^{m+1},p^{m}]$, respectively, for any $m\in\ZZ$. Since the function $p\mapsto p^t$ is surjective onto $(0,\infty)$, this implies that $f$ is bounded uniformly from below by a strictly positive constant, the same as the first bound, when $\tau>0$. For $\tau<0$, we can use the starting interval $[-p^{m_0+1},-p^{m_0}]$ or $[-p^{m_0},-p^{m_0+1}]$, depending on whether $p>1$ or $p<1$, and the same argument to deduce that it is also bounded from below by a strictly positive constant on $(-\infty,0)$. Taking the minimum of these three bounds shows that $f(\tau)=|P(\sigma+i\tau)|>C>0$. 
    
    It follows that $\zeta_\Phi$ is bounded from above on $S_{\sigma_0}$, establishing hypothesis $\textbf{L2}$. Thus, we have established the joint languidity of $\zeta_\Phi$ and $\zeta_R$ on any screen of the form $S_\sigma$, where $\sigma>\sigma_0$ and where $\sigma\notin\set{\sigma_j}_{j=1}^K$. When choosing a screen $S_{\sigma_0+\e}$, choosing $\e$ with $0<\e<\min_{j=1,...,K}(|\sigma_j-\sigma_0|)$ is sufficient.
\end{proof}

\subsection{Zeta Functions from SFEs}
%
%

Now, we endeavor to solve scaling functional equations in a general framework. The main tool is the truncated Mellin transform (see Section~\ref{sec:Mellin}), $\Mellin^\delta$, where $\delta>0$ is fixed. In what follows, we make the convention that $\Mellin^\delta[f](s)$ is the zeta function $\zeta_f(s;\delta)$ of $f$. The first step of the solution is to apply a fixed Mellin transform, $\Mellin^\delta$, to both sides of the SFE. In doing so, we will be able to produce explicit formulae for the zeta function of a solution, prove that it is meromorphic, and most notably to describe explicitly the locations of its possible poles.  

The following functions will appear in our analysis, thus we introduce notation for them. Let $L:=\sum_{k=1}^m a_k M_{\lambda_k}$ be a scaling operator. We define 
\begin{equation}
    \label{eqn:defPartialZeta}
    \xilf(s;\delta) := \sum_{k=1}^m a_k\lambda_k^s\Mellin_\delta^{\delta/\lambda_k}[f](s)
\end{equation}
and if $\Phi$ is a self-similar system, then define $\xilfphi:=\xi_{L_\Phi,f}$, where $L_\Phi$ is the scaling operator associated to $\Phi$.  Additionally, given the functions $f$ and $R$, we define $\zeta_f(s;\delta)=\Mm^\delta[f](s)$ and $\zeta_R(s;\delta)=\Mm^\delta[R](s)$. Lastly, we recall that $\zeta_L$ is the scaling zeta function of $L$, with singular set $\Dd_L$ (see Definition~\ref{def:scalingZetaFunction}). 

\begin{theorem}[Zeta Functions of Solutions to General SFEs]
    \label{thm:zetaFormula}
    Let $f,R\in C^0(\RR^+)$ and suppose that there exist $\sigma_R,\sigma_1\in\RR$ such that $R(t)=O(t^{-\sigma_R})$ and $f(t)=O(t^{\sigma_1})$ as $t\to0^+$. Suppose that $L:=\sum_{k=1}^m a_k M_{\lambda_k}$ is a scaling operator for which $f$ satisfies the scaling functional equation $f=L[f]+R$ for all $t\in (0,\delta]$. 
    \medskip
 
    Then for all $s\in\HH_{\sigma_R}\setminus \Dd_L$, 
    \begin{equation}
        \label{eqn:ZetaFormula}
        \zeta_f(s;\delta)= \zeta_L(s)(\xilf(s;\delta)+\zeta_R(s;\delta)),
    \end{equation}
    where $\xilf$, defined as in Equation~\ref{eqn:defPartialZeta}, is an entire function. Further, $\zeta_f$ is holomorphic in the half-plane $\HH_{\max(D,\sigma_R)}$, where $D=\dim_S(\Phi)$, and admits a meromorphic continuation to $\HH_{\sigma_R}$ with any poles contained in $\Dd_L(\HH_{\sigma_R})$. Further, these poles are independent of the choice of $\delta>0$.
\end{theorem}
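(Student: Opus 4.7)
The plan is to apply the truncated Mellin transform $\Mellin^\delta$ to both sides of the SFE and exploit the scaling property of truncated Mellin transforms (Lemma~\ref{lem:MellinScaling}) to solve algebraically for $\zeta_f(s;\delta)$. The main computation reduces to a straightforward change of variables, and the remaining work is to identify a nonempty common half-plane of holomorphy where the manipulation is legal and then extend by meromorphic continuation.

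First, I would identify the domain of common holomorphy. By Lemma~\ref{lem:MellinHolo}, the estimate $R(t)=O(t^{-\sigma_R})$ as $t\to0^+$ makes $\zeta_R(\cdot;\delta)$ holomorphic on $\HH_{\sigma_R}$, while $f(t)=O(t^{\sigma_1})$ makes $\zeta_f(\cdot;\delta)$ holomorphic on $\HH_{-\sigma_1}$. Each truncated piece $\Mellin_\delta^{\delta/\lambda_k}[f]$ has integrand on a compact interval bounded away from zero, and is therefore entire by Lemma~\ref{lem:MellinHolo}; multiplication by the entire factor $\lambda_k^s$ preserves this, so $\xilf(\cdot;\delta)$ is entire. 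Finally, $\zeta_L$ is holomorphic on $\HH_D$ by Proposition~\ref{prop:holomorphicitySZF} (in the self-similar case) and meromorphic on $\CC$ with pole set $\Dd_L$. Thus all four objects are holomorphic on the nonempty open half-plane $\HH_{\max(D,\sigma_R,-\sigma_1)}$.

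Next, I would perform the key computation. For each pure scaling operator $M_{\lambda_k}$, substituting $u=t/\lambda_k$ gives
\begin{equation*}
    \Mellin^\delta[M_{\lambda_k}[f]](s)
    = \int_0^\delta t^{s-1} f(t/\lambda_k)\,dt
    = \lambda_k^s \Mellin^{\delta/\lambda_k}[f](s)
    = \lambda_k^s\bigl(\zeta_f(s;\delta) + \Mellin_\delta^{\delta/\lambda_k}[f](s)\bigr),
\end{equation*}
which is exactly the truncated scaling identity of Lemma~\ref{lem:MellinScaling}. Summing with multiplicities,
\begin{equation*}
    \Mellin^\delta[L[f]](s)
    = \Bigl(\textstyle\sum_{k=1}^m a_k\lambda_k^s\Bigr)\zeta_f(s;\delta) + \xilf(s;\delta).
\end{equation*}
Applying $\Mellin^\delta$ to $f = L[f] + R$ (valid term-by-term on the common half-plane) and rearranging yields
\begin{equation*}
    \Bigl(1-\textstyle\sum_{k=1}^m a_k\lambda_k^s\Bigr)\zeta_f(s;\delta) = \xilf(s;\delta)+\zeta_R(s;\delta),
\end{equation*}
and dividing by the Dirichlet polynomial on its nonvanishing set produces Equation~\ref{eqn:ZetaFormula}.

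Finally, I would extend and finish. On $\HH_{\sigma_R}$ the right-hand side of Equation~\ref{eqn:ZetaFormula} is a product of a meromorphic function with pole set $\Dd_L(\HH_{\sigma_R})$ and a holomorphic function (entire plus $\zeta_R$), hence meromorphic on $\HH_{\sigma_R}$ with poles contained in $\Dd_L(\HH_{\sigma_R})$. By the identity principle the formula provides the meromorphic continuation of $\zeta_f(\cdot;\delta)$ to $\HH_{\sigma_R}$, and on $\HH_{\max(D,\sigma_R)}$ the factor $\zeta_L$ is holomorphic so $\zeta_f$ is too. For $\delta$-independence of the poles, note that for $\delta_1,\delta_2>0$ the difference $\zeta_f(s;\delta_1)-\zeta_f(s;\delta_2)=\pm\Mellin_{\delta_1\wedge\delta_2}^{\delta_1\vee\delta_2}[f](s)$ is entire (compact interval bounded away from zero), so the two meromorphic extensions have identical singular sets. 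The only genuinely subtle point is the first step of locating the common domain of holomorphy before extending; once one sees that $-\sigma_1$, $\sigma_R$, and $D$ are all real and that the half-plane $\HH_{\max(D,\sigma_R,-\sigma_1)}$ is nonempty, the rest is routine algebra followed by analytic continuation.
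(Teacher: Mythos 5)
Your proposal is correct and follows essentially the same route as the paper: apply $\Mellin^\delta$ to both sides of the SFE, use the truncated-transform scaling identity to split $\Mellin^{\delta/\lambda_k}[f]$ into $\zeta_f$ plus an entire doubly-truncated piece, solve algebraically for $\zeta_f$ on a common half-plane of convergence, and then continue meromorphically; $\delta$-independence of the poles follows from entireness of $\Mellin_{\delta_1}^{\delta_2}[f]$, exactly as in the paper. Your careful reading of the sign in the hypothesis ($f(t)=O(t^{\sigma_1})$ gives holomorphy of $\zeta_f$ on $\HH_{-\sigma_1}$, not $\HH_{\sigma_1}$) is in fact a small correction to a typo in the paper's own proof, which does not affect the final conclusion since $\sigma_1$ disappears after continuation.
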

\begin{proof}
    We note that by Lemma~\ref{lem:MellinHolo}, $\zeta_f(s;\delta)=\Mm^\delta[f](s)$ and $\zeta_R(s;\delta)=\Mm^\delta[R](s)$ are well-defined and holomorphic in the half-plane $\HH_{\max(\sigma_R,\sigma_1)}$. Additionally, we have that by Lemma~\ref{lem:MellinScaling}, for each $M_{\lambda_k}$, 
    \[ \Mellin^\delta[M_{\lambda_k}[f]](s) = \lambda_k^s \Mellin^{\delta/\lambda_k}[f](s), \]
    noting that $M_{\lambda_k}$ is defined with a reciprocal scaling convention. It follows that $\Mellin^\delta[M_{\lambda_k}[f]]$ is well-defined and holomorphic in this same half-plane for each $k=1,...,m$. 

    So, we may apply the truncated scaling transform $\Mm^\delta$ to each term of the scaling functional equation $f=L[f]+R$, using linearity of the Mellin transform to commute with the summation. Thus, we compute that 
    \begin{align*}
        \Mm^\delta[f](s) &= \Mellin^\delta\left[\sum_{k=1}^m a_k M_{\lambda_k}[f]\right](s) + \Mm^\delta[R](s) \\
            &= \sum_{k=1}^m a_k \Mellin^\delta[M_{\lambda_k}[f]](s) + \Mm^\delta[R](s) \\
            &= \sum_{k=1}^m a_k \lambda_k^s \Mellin^{\delta/\lambda_k}[f](s) + \Mm^\delta[R](s) \\
            &= \sum_{k=1}^m a_k \lambda_k^s \Mellin^{\delta}[f](s) 
                + \sum_{k=1}^m a_k \lambda_k^s \Mellin_\delta^{\delta/\lambda_k}[f](s)+ \Mm^\delta[R](s). 
    \end{align*}
    In the last step, we have partitioned the integrals defining each Mellin transform, splitting each integral over $[0,\delta/\lambda_k]$ into two integrals, over $[0,\delta]$ and $(\delta,\delta/\lambda_k]$ respectively. Also, we note that the second sum is exactly the definition of $\xilf$. Rearranging this equation and using the definitions of $\zeta_f$ and $\zeta_R$, we find that 
    \begin{align*}
        \zeta_f(s;\delta)\Big(1-\sum_{k=1}^m a_k\lambda_k^s\Big) = \xilf(s;\delta)+\zeta_R(s;\delta).
    \end{align*}
    Provided that $s\in\HH_{\max(\sigma_R,\sigma_1)}\setminus\Dd_L$, it follows that $\Mm^\delta[f](s)=\zeta_L(s)(\xilf(s;\delta)+\zeta_R(s;\delta))$.

    From this equation, we can deduce some properties of $\zeta_f$. Firstly, by Lemma~\ref{lem:MellinHolo}, we have that each function $\Mellin_\delta^{\delta/\lambda_k}[f]$, $k=1,...,m$, is entire (using the fact that since $f$ is continuous, it is bounded on any compact interval of the form $[\alpha,\beta]$). Since $\xilf$ is a finite linear combination of these functions, each multiplied by the respective entire function $\lambda_k^s$, it must also be entire. Secondly, we have that the right-hand side is holomorphic in the half plane $\HH_{\max(D,\sigma_R)}$. Thus, $\zeta_f$ admits an analytic continuation to this region and is holomorphic therein (irrespective of the value of $\sigma_1$). Additionally, since $\zeta_L$ admits a meromorphic continuation to the whole complex plane, we find that $\zeta_f$ admits a meromorphic continuation to the half-plane $\HH_{\sigma_R}$ where $\zeta_R$ is holomorphic, thus having poles with a given multiplicity at a point only if the function $\zeta_L$ has a pole of that multiplicity of higher at that point. 

    Lastly, the independence of the poles of $\zeta_f$ on the choice of $\delta$ may be seen as a corollary of Lemma~\ref{lem:MellinHolo}. For any $\delta_1,\delta_2>0$, we have that $\Mellin^{\delta_2}[f](s)-\Mellin^{\delta_1}[f](s) = \Mellin_{\delta_1}^{\delta_2}[f](s)$, which is an entire function. Thus, these two functions must have the same poles. 

\end{proof}

We now specialize to operators of the form $L_\Phi^\alpha:=\sum_{\ph\in\Phi}M_{\lambda_\ph^\alpha}$, where $\Phi$ is a self-similar system and $\alpha>0$. We require that $\alpha>0$ so that $\lambda_\ph^\alpha\in(0,1)$ for each $\ph\in\Phi$. When $\alpha=1$, this is exactly the scaling operator $L_\Phi$ associated to $\Phi$. For other values of $\alpha$, the effect on the associated zeta function is simply a rescaling of the input relative to $\zeta_\Phi$, namely $\zeta_{L_\Phi^\alpha}(s)=\zeta_\Phi(\alpha s)$. (This is an immediate consequence of the definition of these functions and elementary properties of exponents.) Note that it follows that if $D=\dim_S(\Phi)$ is the abscissa of convergence of $\zeta_\Phi$, then $\zeta_{L_{\Phi}^\alpha}$ has the corresponding abscissa of convergence $D/\alpha$. Further, the sets of poles are directly related: any pole $\omega\in\Dd_\Phi$ exactly corresponds to the pole $\omega/\alpha\in\Dd_{L_\Phi^\alpha}$. 

These observations together yield the following corollary of Theorem~\ref{thm:zetaFormula}.
\begin{corollary}[Zeta Functions of Solutions to Self-Similar SFEs]
    \label{cor:structureOfPoles}
    Let $f,R\in C^0(\RR^+)$ and suppose that there exist $\sigma_R,\sigma_1\in\RR$ such that $R(t)=O(t^{-\sigma_R})$ and $f(t)=O(t^{\sigma_1})$ as $t\to0^+$. Let $\Phi$ be a self-similar system, let $\alpha>0$, and let ${L_\Phi^\alpha}:=\sum_{\ph\in\Phi}M_{\lambda_\ph^\alpha}$. If $f$ satisfies the scaling functional equation $f={L_\Phi^\alpha}[f]+R$ for all $t\in (0,\delta]$, then
    \begin{equation}
        \label{eqn:zetaFormulaAlpha}
        \zeta_f(s;\delta) = \zeta_\Phi(\alpha s)(\xilfphia(s;\delta)+\zeta_R(s;\delta))
    \end{equation}
    and is holomorphic in $\HH_{\max(D/\alpha,\sigma_R)}$, where $D=\dim_S(\Phi)$. Further, $\zeta_f(s;\delta)$ is meromorphic in $\HH_{\sigma_R}$ with poles contained in a subset of $\alpha^{-1}\Dd_\Phi(\HH_{\sigma_R})$ and are independent of $\delta$. 
\end{corollary}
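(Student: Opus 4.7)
The plan is to deduce this corollary directly from Theorem~\ref{thm:zetaFormula} applied to the particular operator $L = L_\Phi^\alpha$, with the main task being to unwind how the change of variable $s \mapsto \alpha s$ moves the relevant data (abscissa of convergence, pole set, singular set) through the formulas. Since $\alpha > 0$ and each $\lambda_\ph \in (0,1)$, we have $\lambda_\ph^\alpha \in (0,1)$, so $L_\Phi^\alpha = \sum_{\ph \in \Phi} M_{\lambda_\ph^\alpha}$ qualifies as a scaling operator in the sense of Definition~\ref{def:scalingOp} with positive integer multiplicities (each mapping contributes once), and thus the hypotheses of Theorem~\ref{thm:zetaFormula} are met verbatim.

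First I would observe, directly from Definition~\ref{def:scalingZetaFunction}, that
\[
\zeta_{L_\Phi^\alpha}(s) = \frac{1}{1 - \sum_{\ph \in \Phi} (\lambda_\ph^\alpha)^s} = \frac{1}{1 - \sum_{\ph \in \Phi} \lambda_\ph^{\alpha s}} = \zeta_\Phi(\alpha s),
\]
so that the abstract formula $\zeta_f(s;\delta) = \zeta_{L_\Phi^\alpha}(s)(\xi_{L_\Phi^\alpha,f}(s;\delta) + \zeta_R(s;\delta))$ provided by Theorem~\ref{thm:zetaFormula} becomes exactly Equation~\ref{eqn:zetaFormulaAlpha}. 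Next I would track the singular set: $\omega$ is a pole of $\zeta_{L_\Phi^\alpha}$ if and only if $1 = \sum_{\ph \in \Phi} \lambda_\ph^{\alpha\omega}$, which is precisely the condition that $\alpha\omega \in \Dd_\Phi$. Intersecting with the region $\HH_{\sigma_R}$ yields $\Dd_{L_\Phi^\alpha}(\HH_{\sigma_R}) = \alpha^{-1}\Dd_\Phi(\HH_{\sigma_R})$, as claimed.

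For the half-plane of holomorphicity, I would appeal to the similarity-dimension interpretation (Definition~\ref{def:upperSimDim} and Proposition~\ref{prop:holomorphicitySZF}) applied to $L_\Phi^\alpha$: its similarity dimension is the unique real solution of $1 = \sum_{\ph} \lambda_\ph^{\alpha s}$, which by the same substitution is $s = D/\alpha$ where $D = \dim_S(\Phi)$. So $\zeta_\Phi(\alpha s)$ is holomorphic in $\HH_{D/\alpha}$, and Theorem~\ref{thm:zetaFormula} with $D$ replaced by $D/\alpha$ yields holomorphicity of $\zeta_f(\cdot;\delta)$ in $\HH_{\max(D/\alpha,\sigma_R)}$, meromorphicity in $\HH_{\sigma_R}$ with poles contained in $\alpha^{-1}\Dd_\Phi(\HH_{\sigma_R})$, and independence of the pole set on $\delta$.

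Honestly no step is genuinely hard here; the corollary is bookkeeping on top of Theorem~\ref{thm:zetaFormula}. The one point I would be careful about is making explicit that $\xi_{L_\Phi^\alpha,f}$ is defined via Equation~\ref{eqn:defPartialZeta} with the scaling ratios $\lambda_\ph^\alpha$ (not $\lambda_\ph$), so that the notation $\xi_{L_\Phi^\alpha,f}(s;\delta)$ in Equation~\ref{eqn:zetaFormulaAlpha} really is the function produced by Theorem~\ref{thm:zetaFormula}, and to note that this function is entire regardless of the value of $\alpha$, since the truncated integrals $\Mellin_\delta^{\delta/\lambda_\ph^\alpha}[f]$ remain entire by Lemma~\ref{lem:MellinHolo} as long as $\lambda_\ph^\alpha \in (0,1)$.
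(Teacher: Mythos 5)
Your proof is correct and essentially matches the paper's: both read off $\zeta_{L_\Phi^\alpha}(s) = \zeta_\Phi(\alpha s)$ directly from Definition~\ref{def:scalingZetaFunction}, deduce that the abscissa of convergence and the pole set of $\zeta_{L_\Phi^\alpha}$ are $D/\alpha$ and $\alpha^{-1}\Dd_\Phi$ respectively, and then invoke Theorem~\ref{thm:zetaFormula} with $L = L_\Phi^\alpha$. The only thing to flag is the asserted equality $\Dd_{L_\Phi^\alpha}(\HH_{\sigma_R}) = \alpha^{-1}\Dd_\Phi(\HH_{\sigma_R})$, which silently conflates whether the window is applied before or after the rescaling (the left-hand side imposes $\Re(\alpha s)>\alpha\sigma_R$ while the right imposes $\Re(\alpha s)>\sigma_R$, and these agree only when $\alpha=1$ or $\sigma_R=0$); this notational slip is inherited from the paper's own statement, is reconciled in the proof of Theorem~\ref{thm:pointwiseFormula} by reindexing over $\Dd_\Phi(\HH_{\alpha\sigma_R})$, and is harmless in the applications where $\alpha\in\{1,2\}$ and $\sigma_R=0$.
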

\subsection{Mellin Inversion}
%
%

The next step of solving a scaling functional equation is to apply the Mellin inversion theorem. This result is analogous to the Fourier inversion theorem, and provides a means by which one may invert the Mellin transform; see for instance \cite{Tit86}. We shall make use of the following notation convention: 
\[ \int_{c-i\infty}^{c+i\infty} f(z)\,dz := \lim_{T\to\infty}\int_{c-iT}^{c+iT}f(z)\,dz, \]
which may also be considered a principle value integral such as in \cite{Gra10}. Combined with Theorem~\ref{thm:zetaFormula}, the function may be written in terms of $\zeta_\Phi$, $\zeta_R$, and $\xilfphia$. 
\begin{theorem}[Mellin Inversion Formula]
    \label{thm:MellinInversion}
    Let $f,R\in C^0(\RR^+)$. Let $\Phi$ be a self-similar system, let $\alpha>0$, and let ${L_\Phi^\alpha}:=\sum_{\ph\in\Phi}M_{\lambda_\ph^\alpha}$. Suppose that $f$ satisfies the scaling functional equation $f={L_\Phi^\alpha}[f]+R$ for all $t\in [0,\delta]$ and that $\zeta_R(s;\delta)=\Mellin^\delta[R](s)$ is holomorphic in $\HH_{\sigma_R}$. Lastly, let $D=\dim_S(\Phi)$ and $c>\max(D/\alpha,\sigma_R)$. Then for any $t\in(0,\delta)$, $f$ is given by:
    \begin{align*}
        f(t)    &= \Mm^{-1}[\zeta_\Phi(\alpha s)(\xilfphia(s;\delta)+\zeta_R(s;\delta))](t)\\
                &= \frac{1}{2\pi i} \int_{c-i\infty}^{c+i\infty} 
                    t^{-s} \zeta_\Phi(\alpha s)(\xilfphia(s;\delta)+\zeta_R(s;\delta))\,ds.
    \end{align*}
\end{theorem}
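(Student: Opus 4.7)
The plan is to apply the classical Mellin inversion theorem to $f$ in a suitably truncated form. The key observation is that $\zeta_f(s;\delta) = \Mellin^\delta[f](s)$ is precisely the (full) Mellin transform of the compactly supported function
\[
    g(t) := f(t)\cdot\1_{(0,\delta)}(t),
\]
and that $g(t) = f(t)$ for every $t \in (0,\delta)$. So it suffices to recover $g$ from its Mellin transform along a vertical contour $\Re(s) = c$, and then evaluate at a point strictly inside $(0,\delta)$ where $g$ is continuous (thereby avoiding the possible jump of $g$ at the endpoint $\delta$, where $g$ drops abruptly to zero).

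First, I would verify that the contour choice is legitimate: by Corollary~\ref{cor:structureOfPoles}, $\zeta_f(s;\delta)$ is holomorphic in the half-plane $\HH_{\max(D/\alpha,\sigma_R)}$, so any $c > \max(D/\alpha,\sigma_R)$ places the vertical line $\Re(s)\equiv c$ strictly to the right of every singularity of the factored representation $\zeta_\Phi(\alpha s)\bigl(\xilfphia(s;\delta) + \zeta_R(s;\delta)\bigr)$. In particular, the integrand $t^{-s}\zeta_f(s;\delta)$ is holomorphic and well-defined along the entire contour.

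Next, I would invoke a standard Mellin inversion theorem, most transparently obtained via the substitution $t = e^{-u}$, which converts the Mellin transform of $g$ into the Fourier transform of $u \mapsto e^{-cu}g(e^{-u})$, reducing Mellin inversion to classical Fourier inversion. At any $t_0 \in (0,\delta)$, $g$ is continuous (from continuity of $f$ on $\RR^+$) and vanishes identically near $0$ and beyond $\delta$, so the translated function $u\mapsto e^{-cu}g(e^{-u})$ is continuous at the corresponding point and compactly supported. The Fourier (equivalently Mellin) inversion theorem then yields
\[
    g(t_0) = \frac{1}{2\pi i}\int_{c-i\infty}^{c+i\infty} t_0^{-s}\zeta_f(s;\delta)\,ds
\]
in the principal-value sense. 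Since $g(t_0) = f(t_0)$ for $t_0 \in (0,\delta)$, this is the first claimed identity, and substituting the factorization from Corollary~\ref{cor:structureOfPoles} furnishes the second.

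The main obstacle is justifying convergence of the principal-value integral along $\Re(s) = c$, because the integrand need not be absolutely convergent there. Here the languidity results of Section~\ref{sec:languidity} combined with the uniform bounds on vertical strips from Corollary~\ref{cor:MellinBounds} show that $\zeta_f(c + i\tau;\delta)$ is at most polynomially bounded in $\tau$; the oscillation of $t_0^{-i\tau}$ then furnishes sufficient cancellation for the truncated contour integral over $[c-iT,\,c+iT]$ to converge as $T\to\infty$. This is exactly the standard mechanism by which Mellin inversion accommodates compactly supported, continuous, locally bounded-variation integrands, and is the regime in which our truncated function $g$ lives.
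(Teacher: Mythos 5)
Your proposal follows essentially the same route as the paper: truncate $f$ to the interval $(0,\delta)$, observe that $\zeta_f(\cdot;\delta)$ is the full Mellin transform of that truncation, invoke the classical Mellin inversion theorem, and drop the indicator at interior points by continuity. The paper's own proof is terser, simply citing ``the standard Mellin inversion theorem'' applied to Corollary~\ref{cor:structureOfPoles}, but the mechanism is identical.

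One imprecision in your last paragraph is worth correcting. You attribute the convergence of the symmetric (principal-value) integral to the at-most-polynomial bound on $\zeta_f(c+i\tau;\delta)$ combined with the oscillation of $t_0^{-i\tau}$. That combination alone does not guarantee convergence of a principal-value integral (bounded-and-oscillatory integrands can still have divergent symmetric partial integrals). The correct hypothesis in the classical theorem is local: after the substitution $t=e^{-u}$, one needs the recovered function $u\mapsto e^{-cu}g(e^{-u})$ to be continuous and locally of bounded variation at the target point, which yields convergence by the Dirichlet--Jordan criterion. You do correctly observe at the very end that $g$ lives in this regime; that observation, not the boundedness-plus-oscillation heuristic, is what makes the inversion converge. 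The languidity estimates on $\zeta_f$ become relevant only later, in Section~\ref{sec:SFEsolutions}, when the contour is shifted past the poles.
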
 
\begin{proof}
    This result is an application of the standard Mellin inversion theorem applied to the result of Corollary~\ref{cor:structureOfPoles}. Technically, we modify the function $f$ slightly, namely by considering $f(t)\1_{[0,\delta]}(t)$ which is the same as $f$ when $t\leq \delta$. By Corollary~\ref{cor:structureOfPoles}, we have that $\zeta_\Phi(\alpha s)$ is holomorphic when $\Re(\alpha s)=\alpha \Re(s)>D$ and thus that $\Mellin[f\cdot\1_{[0,\delta]}]=\Mellin^\delta[f]$ is holomorphic in $\HH_{\max(D/\alpha,\sigma_R)}$. 
    
    Thus by application of the Mellin inversion theorem to Equation~\ref{eqn:zetaFormulaAlpha}, we have that for any $c>\max(D/\alpha,\sigma_R)$, 
    \[ 
        f(t)\cdot\1_{[0,\delta]}(x) 
            = \frac1{2\pi i}\int_{c-i\infty}^{c+i\infty} t^{-s} \zeta_\Phi(\alpha s)(\xilfphia(s;\delta)+\zeta_R(s;\delta))\,ds. 
    \]
    Since $t\in(0,\delta)$, we have that $\1_{[0,\delta]}(t)=1$, whence the result follows.
\end{proof}

It remains to evaluate this contour integral by means of the residue theorem. Strictly speaking, it requires a sort of unbounded residue theorem, using a sequence of regions for which the integrand has finitely many poles (on which the residue theorem applies) that converge to the region containing all of the countably many singularities. The conditions of languidity are necessary in this process, used to estimate contour integrals over pieces of the boundaries of regions involved in this process. This method developed to establish explicit formulae for fractal tube functions in \cite{LapvFr13_FGCD} and then, with refinements to the proof, used to prove explicit formulae for relative tube functions in \cite{LRZ17_FZF}. Applying this method in the general setting to obtain explicit formulae is the content Section~\ref{sec:SFEsolutions}. 

\section{Solutions of Scaling Functional Equations}
\label{sec:SFEsolutions}
%
%

In this section, we will obtain explicit formulae for functions which satisfy a scaling functional equation. Given a scaling operator $L$, a scaling functional equation is a relation of the form $f=L[f]+R$, where $R$ is a remainder term. The goal will be to obtain formulae which are valid up to an asymptotic order determined by estimates of an \textit{admissible remainder} term (in the sense of Definition~\ref{def:admissibleRem}). 

We focus on scaling functional equations which are induced by a self-similar system on functions which satisfy certain scaling laws (see Proposition~\ref{prop:inducedSFE}), as these are the main type of scaling functional equation that we will need for our applications in Chapters~\ref{chap:geometry} and \ref{chap:heat}. Explicitly, let $\Phi$ be a self-similar system, let $\alpha>0$, and define the scaling operator $L_\Phi^\alpha:=\sum_{\ph\in\Phi}M_{\lambda_\ph^\alpha}$. The scaling functional equations we study herein will be stated in terms of such scaling operators. Note, though, that any general scaling operator with scaling ratios in $(0,1)$ each having positive, integral multiplicity can be written in this form. 

There are two types of explicit formulae, namely pointwise and distributional. The former have the advantage of being simpler in their statements, but the disadvantage that they are derived only for antiderivatives of the function satisfying the scaling functional equation. The distributional explicit formulae we obtain do not have this restriction, but they will require additional terminology to state and interpret in a weak formulation. 

\subsection{Pointwise Explicit Formulae}
%
%

The pointwise explicit formulae are for antiderivatives of a function $f\in C^0(\RR^+)$ which satisfies a scaling functional equation. To that end, we introduce the following notation and convention. Firstly, define $f^{[0]}:=f$. For any $k>0$, we define $f^{[k]}$ recursively by integrating the previous antiderivative and imposing the convention that $f^{[k]}(0)=0$. Namely, for $k>0$,
\[
    f^{[k]}(t) := \int_0^t f^{[k-1]}(\tau)\,d\tau.
\] 
We will also denote $\zeta_f(s;\delta)=\Mellin^\delta[f](s)$. In our applications, typically we have some function $F(t)$ of interest which must be normalized in order to satisfy a scaling law in the sense of Definition~\ref{def:scalingLaw}, so we will have that $f(t) = t^{-\beta} F(t)$ for a given parameter $\beta$. For tube functions in Chapter~\ref{chap:geometry}, $\beta$ will be the dimension $\dimension$ and for heat content in Chapter~\ref{chap:heat} it will be $\dimension/2$. When we state the explicit formula, we will do so in terms of the parameter $\beta$, noting that typically $\beta=\dimension$. Let $F^{[k]}$ be defined in the same manner as $f^{[k]}$, $k\geq 0$. 

As an additional preliminary to stating the result, we define the Pochhammer symbol $(z)_w:=\Gamma(z+w)/\Gamma(w)$ for $z,w\in\CC$. Note that when $w=k$ is a positive integer, this simplifies to $(z)_k=z(z+1)\cdots(z+k-1)$ or, when $w=0$, to $(z)_0=1$.

Lastly, we must give meaning to the summations which will appear in what follows. Let $\Dd\subset\CC$ be a discrete subset with the property that for any $m\in\NN$, 
\[ 
    \Dd_m:=\set{\omega\in\Dd\suchthat |\Im(\omega)|\leq m}<\infty.
\] 
Then we define 
\[ 
    \sum_{\omega\in\Dd} a_\omega := \lim_{m\to\infty} \sum_{\omega\in\Dd_m}a_\omega.  
\]
Note that this summation is a \textit{symmetric limit} of finite partial sums. As such, the convergence of these sums is comparatively more delicate; it may be that the limit, when taken independently, does not exist. 

\begin{theorem}[Pointwise Explicit Formula]
    \label{thm:pointwiseFormula}
    Let $\Phi$ be a self-similar system, $\alpha>0$, $\beta\in\RR$, and $f(t)=t^{-\beta/\alpha}F(t)\in C^0(\RR^+)$. Suppose that $\Phi$ induces the scaling functional equation $f=L_\Phi^\alpha[f]+R$ on $[0,\delta]$ with admissible remainder term $R$ (in the sense of Definition~\ref{def:admissibleRem}) with corresponding screen $S$. Let $\sigma_R$ denote the abscissa of holomorphicity of $\zeta_R$, $D=\dim_S(\Phi)$ the (upper) similarity dimension of $\Phi$, and suppose $\beta/\alpha+1>\max(D/\alpha,\sigma_R)$. Lastly, suppose that $S$ is contained in the half-plane $\HH_{\sigma_R}$ and write $W_S$ for the window to the right of $S$.
    \medskip 

    Then for every $k\geq 2$ in $\ZZ$ and every $t\in(0,\delta)$, we have that
    \[ 
        F^{[k]}(t) = \sum_{\omega\in\Dd_\Phi(\alpha W_S)} 
            \Res\Bigg(\cfrac{t^{(\beta-s)/\alpha+k}}{((\beta-s)/\alpha+1)_k}\zeta_f(s/\alpha;\delta);\omega\Bigg)
            + \Rr^k(t),
    \]
    where $\zeta_f$ is given by Equation~\ref{eqn:zetaFormulaAlpha}. The error term satisfies $\Rr^k(t)=O(t^{\beta/\alpha-\sup(S)+k})$ as $t\to0^+$. 
\end{theorem}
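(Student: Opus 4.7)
The plan is to mirror the standard contour-shifting strategy from the theory of complex dimensions, adapted to our self-similar scaling operator $L_\Phi^\alpha$ via the Mellin representation of $f$ provided by the preceding sections. I would begin from the Mellin inversion formula of Theorem~\ref{thm:MellinInversion}: for every $t\in(0,\delta)$ and every $c>\max(D/\alpha,\sigma_R)$,
\[
    f(t) = \frac{1}{2\pi i} \int_{c-i\infty}^{c+i\infty} t^{-s}\zeta_f(s;\delta)\,ds.
\]
Multiplying by $t^{\beta/\alpha}$ to recover $F$ and then applying $k$ successive antiderivatives from $0$, the factor $t^{\beta/\alpha-s}$ integrates to $t^{\beta/\alpha-s+k}/(\beta/\alpha-s+1)_k$. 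The hypothesis $\beta/\alpha+1 > \max(D/\alpha,\sigma_R)$ permits $c$ to be chosen in the strip where both the Mellin inversion and each of the $k$ elementary integrals $\int_0^{t_j}\tau^{\beta/\alpha-s}\,d\tau$ converge. Exchanging the orders of integration is legitimized by absolute convergence: the $(\beta/\alpha-s+1)_k$ factor in the denominator yields horizontal decay of order $|\Im s|^{-k}$, which, combined with the weak languidity of $\zeta_f$ with exponent $\kappa=0$ inherited from the joint languidity of $\zeta_\Phi(\alpha s)$ and $\zeta_R(s;\delta)$ (guaranteed by $R$ being admissible), produces an integrable integrand for $k\geq 2$. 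This is essentially the reason the pointwise statement is restricted to $k\geq 2$.

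Next I would shift the contour from the vertical line $\Re s = c$ onto the screen $S$. For each admissible height $T_n$ furnished by Definition~\ref{def:languidL1}, close the contour by horizontal segments at heights $\pm T_n$ and apply the Cauchy residue theorem to the resulting rectangle. Since $\zeta_f(s;\delta)=\zeta_\Phi(\alpha s)(\xilfphia(s;\delta)+\zeta_R(s;\delta))$ and $\xilfphia$ is entire while $\zeta_R$ is holomorphic on and to the right of $S$, the enclosed singularities are precisely the points at which $\zeta_\Phi(\alpha s)$ has poles in $W_S$; in the variable of the statement these are exactly the points $\omega\in\Dd_\Phi(\alpha W_S)$, and the corresponding residues coincide with those written in the theorem after performing the natural change of variables $s\mapsto s/\alpha$ to place the poles of $\zeta_\Phi$ in their canonical locations (so that the residue is written as that of $t^{(\beta-s)/\alpha+k}/((\beta-s)/\alpha+1)_k\cdot\zeta_f(s/\alpha;\delta)$ at $\omega$). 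Taking $n\to\infty$, the horizontal contour pieces vanish in the limit: hypothesis~\textbf{L1} bounds $\zeta_\Phi(\alpha s)(\xilfphia+\zeta_R)$ uniformly on the segments at height $T_n$, the Pochhammer denominator contributes the factor $|T_n|^{-k}$, and integrating across the finite strip of width at most $c-\inf S$ absorbs into an $o(1)$ quantity.

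What remains from the contour shift is the screen integral, which will constitute the error term $\Rr^k(t)$. For $s\in S$, languidity hypothesis~\textbf{L2} gives $|\zeta_f(s;\delta)|\leq C(|\Im s|+1)^0$, the factor $|t^{\beta/\alpha-s+k}|=t^{\beta/\alpha-S(\Im s)+k}\leq t^{\beta/\alpha-\sup(S)+k}$ (for $t<1$, with the reverse sign convention absorbed by the uniform bound on $S$), and $|(\beta/\alpha-s+1)_k|^{-1}=O(|\Im s|^{-k})$, which is integrable along $S$ as soon as $k\geq 2$. Combining these three estimates yields the claimed pointwise bound
\[
    \Rr^k(t) = O\bigl(t^{\beta/\alpha-\sup(S)+k}\bigr) \quad \text{as } t\to 0^+.
\]

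The main technical obstacle is the simultaneous control of the horizontal segments and the screen integral; this is where the notion of admissible remainder and its associated admissible screen (Definition~\ref{def:admissibleRem}) do all of the work. Without joint languidity of $\zeta_\Phi(\alpha s)$ and $\zeta_R(s;\delta)$ on a common sequence of admissible heights and with a common exponent $\kappa=0$, one could not close the contour along the same horizontal heights for both zeta function factors, and the contour-shift argument would break down. Once the joint languidity is in place, the residue computation, the vanishing of the horizontal pieces, and the bound on the screen integral all assemble into the explicit formula as stated.
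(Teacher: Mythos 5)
Your argument is correct and mirrors the paper's underlying strategy: Mellin inversion of the truncated transform, contour shifting from the line $\Re(s)=c$ to the screen $S$, using the joint languidity (hypothesis~\textbf{L1} on horizontal segments at the admissible heights $T_n$, hypothesis~\textbf{L2} on the screen itself) together with the $|\Im s|^{-k}$ decay of the reciprocal Pochhammer factor to justify the residue computation and bound the error. The only presentational difference is that the paper, after establishing the languidity of $\zeta_f$ via Theorem~\ref{thm:languidityFromSFEs}, invokes the pre-existing machinery of Lemma~5.1.10 and Theorem~5.1.11 of \cite{LRZ17_FZF} (checking its hypotheses and integrability conditions) rather than re-deriving the contour-shift estimates from scratch as you do; both routes assemble the same ingredients in the same way.
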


Note, in particular, that the order of the remainder directly controls the error of the approximation. As we are considering small values of $t$, the larger the exponent, the better the remainder estimate. The restriction of $k\geq2$ is related to the exponent $\kappa=0$ of the languid growth of $\zeta_\Phi$, and consequently of $\zeta_f$. 

The leading order term occurs at the pole $D=\dim_S(\Phi)$, as this is the abscissa of convergence of $\zeta_f$. If $D$ is a simple pole, the only pole with real part $D$, and supposing $D>\sigma_R$, then the leading order term is explicitly 
\[ 
    \frac1{((\beta-D)/\alpha+1)_k}\Res(\zeta_f(s/\alpha;\delta);D)\,t^{(\beta-D)/\alpha+k}. 
\]
This occurs when the scaling ratios $\set{\lambda_\ph}_{\ph\in\Phi}$ are non-arithmetically related, called the  nonlattice case (see, for example, Definition~2.13 in \cite{LapvFr13_FGCD}). 

\subsection{Distributional Explicit Formulae}
%
%

Put simply, a \textit{distribution} is a functional on a prescribed space of functions called \textit{test functions}. A standard distribution is a functional on the space $C_c^\infty$ of smooth, compactly supported functions on a given domain. The type of distributions we consider here, called \textit{tempered distributions} will be a subspace of these distributions. 

We define as the space of test functions the class of Schwartz functions, or functions of rapid decrease. For a finite domain $(0,\delta)$, just as in Chapter~5 of \cite{LapvFr13_FGCD} and Chapter~5 of \cite{LRZ17_FZF}, the space is defined by 
\begin{equation}
    \Ss(0,\delta):=\left\{\testfn\in C^\infty(0,\delta)\suchthat 
    \begin{aligned}
        &\forall m\in\ZZ,\,\forall q\in\NN,\text{ as }t\to0^+,\\ 
        &t^m\testfn^{(q)}(t)\to0 \text{ and }(t-\delta)^m\testfn^{(q)}(t)\to0\,
    \end{aligned}
    \right\}.
\end{equation}
Note that $C_c^\infty(0,\delta)\subseteq\Ss(0,\delta)$, with a continuous embedding, whence by duality the space of functionals on $\Ss(0,\delta)$ is a subset of the usual space of distributions, the dual of $C_c^\infty(0,\delta)$. Typically, for a distribution $F$ and a test function $\testfn$, we write $F(\testfn)=\bracket{F,\testfn}$, denoting by $\bracket{\cdot,\cdot}$ the natural pairing of $\Ss(0,\delta)$ with its dual space. 

In this distributional setting, the restriction of $k\geq 2$ may be relaxed in Theorem~\ref{thm:pointwiseFormula}. However, equality must be interpreted \textit{in the sense of distributions}: we say that $F=G$ in the sense of distributions if for any test function $\testfn\in\Ss(0,\delta)$, $\bracket{F,\testfn}=\bracket{G,\testfn}$. In addition to the standard definitions for the definitions of differentiation and integration of distributions, we note the following relevant identity regarding residues at a point:
\begin{equation}
    \bracket{\Res(t^\beta G(s);\omega),\testfn} = \Res(\Mellin[\testfn](\beta+1)G(s);\omega).
\end{equation}
See Equation~5.2.5 in \cite{LRZ17_FZF}, as well as the preceding discussion regarding the extension of $\testfn$ to all of $\RR^+$. 

Secondly, we provide the definition for a distribution $\Rr$ to satisfy the error estimate $O(t^\alpha)$ as $t\to0^+$. Given a test function $\testfn\in\Ss(0,\delta)$ and $a>0$, define the new test function $\testfn_a(t):= a^{-1}\testfn(t/a)$. We say that $\Rr(t)=O(t^\alpha)$ as $t\to0^+$ if for all $a>0$ and for all $\testfn\in\Ss(0,\delta)$,
\begin{equation}
    \label{eqn:distRemEst}
    \bracket{\Rr,\testfn_a(t)} = O(a^{\alpha}),
\end{equation}
as $t\to0^+$ in the usual sense. 

\begin{theorem}[Distributional Explicit Formula]
    \label{thm:distFormula}
    Let $\Phi$ be a self-similar system, $\alpha>0$, $\beta\in\RR$, and $f(t)=t^{-\beta/\alpha}F(t)\in C^0(\RR^+)$. Suppose that $\Phi$ induces the scaling functional equation $f=L_\Phi^\alpha[f]+R$ on $[0,\delta]$ with admissible remainder term $R$ (in the sense of Definition~\ref{def:admissibleRem}) with corresponding screen $S$. Let $\sigma_R$ denote the abscissa of holomorphicity of $\zeta_R$, $D=\dim_S(\Phi)$ the (upper) similarity dimension of $\Phi$, and suppose $\beta/\alpha+1>\max(D/\alpha,\sigma_R)$. Lastly, suppose that $S$ is contained in the half-plane $\HH_{\sigma_R}$ and write $W_S$ for the window to the right of $S$.
    \medskip

    Then for any $k\in\ZZ$, we have that, in the sense of distributions, $F^{[k]}$ satisfies
    \begin{equation}
        \label{eqn:distributionalIdentity}
        \begin{split}
            F^{[k]}(t) &= \sum_{\omega\in\Dd_\Phi(\alpha W_S)} 
                \Res\Bigg(\cfrac{t^{(\beta-s)/\alpha+k}}{((\beta-s)/\alpha+1)_k}\zeta_f(s/\alpha;\delta);\omega\Bigg)
                + \Rr^{[k]}(t),
        \end{split} 
    \end{equation}
    as $t\to0^+$. See Equation~\ref{eqn:bracketIdentity} for the explicit identity of action on test functions. 
    \medskip

    Here, $\zeta_f$ is as in Corollary~\ref{cor:structureOfPoles}. The distributional remainder term satisfies the estimate $\Rr(t)=O(t^{\beta/\alpha-\sup(S)+k})$ as $t\to0^+$, in the sense of Equation~\ref{eqn:distRemEst}.
\end{theorem}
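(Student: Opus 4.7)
The approach is to start from the Mellin inversion representation in Theorem~\ref{thm:MellinInversion} together with the zeta function formula $\zeta_f(s;\delta) = \zeta_\Phi(\alpha s)(\xilfphia(s;\delta)+\zeta_R(s;\delta))$ from Corollary~\ref{cor:structureOfPoles}. Writing $F(t) = t^{\beta/\alpha}f(t)$ and formally integrating $k$ times in $t$ (each antidifferentiation sending $t^{(\beta-s)/\alpha+j}$ to $t^{(\beta-s)/\alpha+j+1}/((\beta-s)/\alpha+j+1)$, accumulating to the Pochhammer denominator), one obtains the candidate contour-integral representation
\begin{equation*}
F^{[k]}(t) = \frac{1}{2\pi i}\int_{c-i\infty}^{c+i\infty}\cfrac{t^{(\beta-s)/\alpha+k}}{((\beta-s)/\alpha+1)_k}\zeta_f(s/\alpha;\delta)\,ds,
\end{equation*}
for any $c > \max(D/\alpha,\sigma_R)$, after the change of variable $s \mapsto s/\alpha$. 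For $k \geq 2$ this integral converges pointwise (Theorem~\ref{thm:pointwiseFormula}); for arbitrary $k \in \ZZ$ we interpret it distributionally by pairing against test functions.

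Next, given $\testfn \in \Ss(0,\delta)$ and $a > 0$, pair both sides against the rescaled test function $\testfn_a(t) = a^{-1}\testfn(t/a)$ and apply Fubini to interchange the test-function pairing with the contour integral. Since $\Mellin[\testfn_a](z) = a^{z-1}\Mellin[\testfn](z)$, this produces a factor $a^{(\beta-s)/\alpha+k}\,\Mellin[\testfn]((\beta-s)/\alpha+k+1)$ in the integrand. The crucial gain is that $\Mellin[\testfn](z)$ decays faster than any polynomial as $|\Im(z)| \to \infty$, uniformly on vertical strips, because $\testfn$ is Schwartz; this Schwartz decay absorbs both the Pochhammer denominator and the languid growth of $\zeta_f(s/\alpha;\delta)$, which is precisely what removes the $k \geq 2$ restriction of the pointwise formula. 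Using the joint languidity of $\zeta_\Phi$ and $\zeta_R$ on the screen $S$ guaranteed by admissibility of $R$ (Definition~\ref{def:admissibleRem}), a truncated residue-theorem argument on rectangles bounded horizontally by the admissible heights $\alpha\tau_n$ shifts the contour from $\Re(s) = c$ onto $\alpha^{-1}S$, picking up residues at the poles of $\zeta_f(s/\alpha;\delta)$, which by Corollary~\ref{cor:structureOfPoles} lie in $\alpha^{-1}\Dd_\Phi(\alpha W_S)$. Converting each pointwise residue into its distributional form via the identity $\bracket{\Res(t^\gamma G(s);\omega),\testfn_a} = \Res(a^\gamma\,\Mellin[\testfn](\gamma+1)G(s);\omega)$ reassembles the residue sum of Equation~\ref{eqn:distributionalIdentity}.

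It remains to identify the distributional remainder $\Rr^{[k]}$ with the contour integral along $\alpha^{-1}S$ and verify its order. Bounding $|a^{(\beta-s)/\alpha+k}| \leq a^{(\beta-\sup(S))/\alpha+k}$ uniformly for $s \in \alpha^{-1}S$ and small $a$, and combining languidity hypothesis \textbf{L2} (which controls $\zeta_f(s/\alpha;\delta)$ along $\alpha^{-1}S$) with the Schwartz decay of $\Mellin[\testfn]$ (yielding absolute integrability along the screen) gives the desired estimate $\bracket{\Rr^{[k]},\testfn_a} = O(a^{\beta/\alpha-\sup(S)+k})$ as $a \to 0^+$ in the sense of Equation~\ref{eqn:distRemEst}. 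The main obstacle is the rigorous justification of the contour shift itself: one must verify that the horizontal closing contributions at heights $\alpha\tau_n$ vanish in the limit $n \to \infty$ using the joint languidity hypothesis \textbf{L1} combined with the Schwartz decay, so that the infinite sum of residues converges as a symmetric limit. The languid exponent $\kappa = 0$ would obstruct this shift in the pointwise regime, but is harmless here thanks to the arbitrary polynomial decay of $\Mellin[\testfn]$ in the imaginary direction.
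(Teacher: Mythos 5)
Your proof follows essentially the same route as the paper's: Mellin inversion (Theorem~\ref{thm:MellinInversion}) followed by a residue-theorem contour shift onto the admissible screen, with the rapid decay of $\Mellin[\testfn]$ on vertical lines absorbing the $\kappa=0$ languid growth (thereby removing the $k\geq2$ restriction of the pointwise formula), and concluding with the reindexing from poles of $\zeta_f$ to points of $\Dd_\Phi$ under the substitution $s\mapsto s/\alpha$. The paper defers the residue-theorem machinery and the Fubini--Tonelli justification to Theorem~5.2.2 and Proposition~5.1.8 of \cite{LRZ17_FZF} and concentrates on verifying the integrability hypotheses (uniform bounds on $\zeta_\Phi$, $\zeta_R$, $\xilfphia$, and the Pochhammer factor, plus the rapid decay of $\testfn$); your outline correctly unpacks what sits inside those cited black boxes. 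One small slip: after the substitution $s\mapsto s/\alpha$, the poles of $s\mapsto\zeta_f(s/\alpha;\delta)$ lie in $\Dd_\Phi(\alpha W_S)$ directly, not in $\alpha^{-1}\Dd_\Phi(\alpha W_S)$ as you write mid-paragraph; your final residue sum is indexed correctly, so this reads as a typo rather than a gap.
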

Most precisely, Equation~\ref{eqn:distributionalIdentity} means that for any $\testfn\in\Ss(0,\delta)$, 
\begin{equation}
    \label{eqn:bracketIdentity}
    \begin{split}
        \bracket{F^{[k]},\testfn} &= \sum_{\omega\in\Dd_\Phi(\alpha W_S)} 
        \Res\Bigg(\cfrac{\Mellin[\testfn]({(\beta-s)/\alpha+k+1})}{((\beta-s)/\alpha+1)_k}\zeta_f(s/\alpha);\omega\Bigg)
        + \bracket{\Rr^{[k]},\testfn}.
    \end{split}
\end{equation}
While this formulation is less direct compared to the pointwise expansion, it does have the advantage of requiring less regularity to leverage the expansion. Namely, it is valid when $k=0$, yielding a formula (in the sense of distributions) for the function $f$ itself. 
\begin{corollary}[Distributional Explicit Formula, $k=0$]
    \label{cor:distFormulaZero}
    Let $\Phi$ be a self-similar system, $\alpha>0$, $\beta\in\RR$, and $f(t)=t^{-\beta/\alpha}F(t)\in C^0(\RR^+)$. Suppose that $\Phi$ induces the scaling functional equation $f=L_\Phi^\alpha[f]+R$ on $[0,\delta]$ with admissible remainder term $R$ (in the sense of Definition~\ref{def:admissibleRem}) with corresponding screen $S$. Let $\sigma_R$ denote the abscissa of holomorphicity of $\zeta_R$, $D=\dim_S(\Phi)$ the (upper) similarity dimension of $\Phi$, and suppose $\beta/\alpha+1>\max(D/\alpha,\sigma_R)$. Lastly, suppose that $S$ is contained in the half-plane $\HH_{\sigma_R}$ and write $W_S$ for the window to the right of $S$.
    \medskip

    Then in the sense of distributions we have that for $t\in[0,\delta)$,
    \begin{equation}
        \label{eqn:distributionalIdentityZero}
            F(t) = \sum_{\omega\in\Dd_\Phi(\alpha W_S)} 
                \Res\big(t^{{\beta-s}/\alpha}\zeta_f(s/\alpha;\delta);\omega\big) + \Rr(t).
    \end{equation}
    The distributional remainder term satisfies the estimate $\Rr(t)=O(t^{\beta/\alpha-\sup(S)})$ as $t\to0^+$, in the sense of Equation~\ref{eqn:distRemEst}.
    \medskip

    If we further assume that the poles of $\zeta_\Phi$ in $W_S$ are simple, then Equation~\ref{eqn:distributionalIdentityZero} simplifies to 
    \[
        f(t) = \sum_{\omega\in\Dd_\Phi(\alpha W_S)} 
            \Res(\zeta_f(s/\alpha;\delta);\omega)\,t^{(\beta-\omega)/\alpha} + \Rr(t).
    \]
\end{corollary}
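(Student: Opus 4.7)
The plan is to obtain this corollary as a direct specialization of Theorem~\ref{thm:distFormula} to $k=0$, together with a short computation for the simple-pole case. All hypotheses in Theorem~\ref{thm:distFormula} are inherited verbatim by the corollary, so nothing new has to be verified about the admissibility of $R$, the languidity of $\zeta_\Phi$, or the location of the screen $S$.

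First I would set $k=0$ in the general distributional formula \eqref{eqn:distributionalIdentity}. The Pochhammer factor in the denominator is $((\beta-s)/\alpha+1)_0 = \Gamma((\beta-s)/\alpha+1)/\Gamma((\beta-s)/\alpha+1) = 1$, and by our convention $F^{[0]}=F$ and $\Rr^{[0]}=\Rr$. Substituting these simplifications into \eqref{eqn:distributionalIdentity} yields exactly \eqref{eqn:distributionalIdentityZero}, with the residues of $t^{(\beta-s)/\alpha}\zeta_f(s/\alpha;\delta)$ taken over the same index set $\Dd_\Phi(\alpha W_S)$. The remainder estimate $\Rr(t)=O(t^{\beta/\alpha-\sup(S)})$ follows from the $k=0$ instance of the error estimate in Theorem~\ref{thm:distFormula}, interpreted in the distributional sense of \eqref{eqn:distRemEst}.

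For the simplification in the simple-pole case, I would combine Corollary~\ref{cor:structureOfPoles} with the elementary residue calculus at a simple pole. By Corollary~\ref{cor:structureOfPoles}, writing out $\zeta_f(s/\alpha;\delta) = \zeta_\Phi(s)\bigl(\xilfphia(s/\alpha;\delta)+\zeta_R(s/\alpha;\delta)\bigr)$, the only possible poles in $\alpha W_S$ come from $\zeta_\Phi(s)$, since $\xilfphia(\cdot;\delta)$ is entire and $\zeta_R(s/\alpha;\delta)$ is holomorphic in $\HH_{\sigma_R}\supset \alpha W_S$ (via $s\mapsto s/\alpha$, using the assumption that $S\subset\HH_{\sigma_R}$). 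At a simple pole $\omega\in\Dd_\Phi(\alpha W_S)$, the map $s\mapsto t^{(\beta-s)/\alpha}$ is entire and nonvanishing, so it may be pulled out of the residue and evaluated at $s=\omega$, giving
\[
    \Res\bigl(t^{(\beta-s)/\alpha}\zeta_f(s/\alpha;\delta);\omega\bigr)
    = t^{(\beta-\omega)/\alpha}\,\Res\bigl(\zeta_f(s/\alpha;\delta);\omega\bigr).
\]
Summing over $\omega\in\Dd_\Phi(\alpha W_S)$ produces the displayed simplified expansion.

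I do not anticipate any substantive obstacle; the proof is essentially bookkeeping, with the only points needing care being (i) the identification $((\beta-s)/\alpha+1)_0=1$ in the Pochhammer convention, (ii) the fact that the non-$\zeta_\Phi$ factors in the decomposition of $\zeta_f$ are holomorphic throughout the relevant window so that all poles come from $\zeta_\Phi$, and (iii) that the distributional remainder estimate transports unchanged from the $k=0$ case of Theorem~\ref{thm:distFormula}.
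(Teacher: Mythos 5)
Your proposal is correct and matches the paper's own treatment, which states only that the corollary ``is a direct application of Theorem~\ref{thm:distFormula}''; you have carried out exactly what that entails, namely the $k=0$ specialization (where $((\beta-s)/\alpha+1)_0=1$ and $F^{[0]}=F$) and the elementary residue calculus at simple poles, pulling out the entire factor $t^{(\beta-s)/\alpha}$. One small detail worth flagging: your computation naturally yields $F(t)$ on the left-hand side of the simple-pole expansion, whereas the corollary as printed has $f(t)$; since $f(t)=t^{-\beta/\alpha}F(t)$ would shift the exponents to $t^{-\omega/\alpha}$, the paper's $f$ there appears to be a typographical slip and your version is the consistent one.
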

In this formulation, and in particular in the case when the self-similar system induces simple poles of its zeta function, the formula is a simple expansion with constant coefficients and powers determined by the poles $\omega$. The computation of the residues, however, is not a trivial matter and beyond the current scope of this work. Corollary~\ref{cor:distFormulaZero} is a direct application of Theorem~\ref{thm:distFormula}. 

\subsection{Languidity from Scaling Functional Equations}
%
%

Before we can prove the explicit formulae, we shall have need of estimating $\zeta_f(s;\delta)=\Mellin^\delta[f](s)$. In particular, using the formula in Corollary~\ref{cor:structureOfPoles}, we show that $\zeta_f$ is languid provided some control over the remainder term itself. 
\begin{theorem}[Languidity of Zeta Functions from SFEs]
    \label{thm:languidityFromSFEs}
    Let $\Phi$ be a self-similar system, let $\alpha>0$, and $f\in C^0(\RR^+)$. Suppose that $\Phi$ induces the scaling functional equation $f=L_\Phi^\alpha[f]+R$ on $[0,\delta]$ with admissible remainder term $R$ for $L_\Phi^\alpha$ (in the sense of Definition~\ref{def:admissibleRem}) and let $S$ be an admissible screen.
    \medskip 
    
    Then the zeta function $\zeta_f(s;\delta)=\Mellin^\delta[f](s)$ is also languid with exponent $\kappa=0$ with respect to $S$. 
\end{theorem}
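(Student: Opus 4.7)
The plan is to leverage the product formula
\[
    \zeta_f(s;\delta) = \zeta_\Phi(\alpha s)\bigl(\xilfphia(s;\delta) + \zeta_R(s;\delta)\bigr)
\]
from Corollary~\ref{cor:structureOfPoles} and show that each factor is uniformly bounded on the relevant horizontal contours and on the admissible screen $S$, so that the product inherits the same $\kappa=0$ languidity. By admissibility of $R$ for $L_\Phi^\alpha$, there is a single sequence of admissible heights $\{\tau_n\}_{n\in\ZZ}$ and a single screen $S$ such that $\zeta_{L_\Phi^\alpha}(s)=\zeta_\Phi(\alpha s)$ and $\zeta_R(\cdot;\delta)$ are jointly languid with exponent $\kappa=0$: both are uniformly bounded on every horizontal contour $[S(\tau_n)+i\tau_n,\beta+i\tau_n]$ (for some $\beta$ in the shared half-plane of holomorphicity) and on $S$ itself.

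Next, I would control the auxiliary function
\[
    \xilfphia(s;\delta) = \sum_{\ph\in\Phi}\lambda_\ph^{\alpha s}\,\Mellin_\delta^{\delta/\lambda_\ph^\alpha}[f](s).
\]
Each truncated Mellin transform appearing here is taken over a compact interval $[\delta,\delta/\lambda_\ph^\alpha]\subset(0,\infty)$ whose lower endpoint is strictly positive. Lemma~\ref{lem:MellinHolo} therefore guarantees that each such transform is entire, and the second conclusion of Corollary~\ref{cor:MellinBounds} ensures it is uniformly bounded on every closed vertical strip. Multiplying by the entire factor $\lambda_\ph^{\alpha s}$, whose modulus $\lambda_\ph^{\alpha\Re(s)}$ is bounded on any bounded vertical strip, and summing the finitely many $\ph\in\Phi$, I conclude that $\xilfphia(\cdot;\delta)$ is entire and uniformly bounded on any closed vertical strip.

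Because $S$ is bounded by definition of a screen, the graph of $S$ together with the horizontal contours $[S(\tau_n)+i\tau_n,\beta+i\tau_n]$ all lie within the closed vertical strip $\{s\in\CC:\inf S\leq\Re(s)\leq\beta\}$. Hence $\xilfphia(\cdot;\delta)$ is bounded by some constant on this strip, satisfying both hypotheses \textbf{L1} (along the same sequence $\{\tau_n\}$) and \textbf{L2} (on $S$) with exponent $\kappa=0$. Adding this to $\zeta_R(\cdot;\delta)$, which satisfies the same bounds by joint languidity, gives that $\xilfphia(\cdot;\delta)+\zeta_R(\cdot;\delta)$ is bounded by a constant on the horizontal contours and on $S$. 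Multiplying by $\zeta_\Phi(\alpha s)$, which is similarly bounded by a constant on the same sets, yields that $\zeta_f(\cdot;\delta)$ is bounded by a (new) constant there as well, which is precisely \textbf{L1} and \textbf{L2} with exponent $\kappa=0$ with respect to $S$.

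The main obstacle is really only conceptual: one must notice that the strictly positive lower cutoff $\delta$ in each truncated transform $\Mellin_\delta^{\delta/\lambda_\ph^\alpha}[f]$ is exactly what makes the bookkeeping go through, promoting $\xilfphia$ to an entire function with uniform bounds on bounded strips without any additional decay or regularity hypotheses on $f$. With that observation, the languidity of $\zeta_f$ is a direct consequence of the joint admissibility hypothesis together with Lemma~\ref{lem:MellinHolo} and Corollary~\ref{cor:MellinBounds}; no further growth estimates are required.
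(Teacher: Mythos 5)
Your proposal is correct and takes essentially the same route as the paper's proof: both start from the factorization $\zeta_f(s;\delta)=\zeta_\Phi(\alpha s)(\xilfphia(s;\delta)+\zeta_R(s;\delta))$, invoke joint languidity of $\zeta_\Phi(\alpha s)$ and $\zeta_R$ from the admissibility hypothesis, bound $\xilfphia$ on vertical strips via Lemma~\ref{lem:MellinHolo} and Corollary~\ref{cor:MellinBounds} (exploiting the strictly positive lower cutoff $\delta$), and then take products and sums of uniformly bounded functions on the horizontal contours and on $S$. The paper phrases the $\xilfphia$ bound as an explicit exponential-type estimate $C_\xi A_\xi^{|\Re(s)|}$ before restricting to a strip, whereas you observe directly that both the exponential prefactors and the compactly truncated Mellin transforms are bounded on any bounded vertical strip; this is a cosmetic rather than substantive difference.
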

\begin{proof}
    Because $R$ is an admissible remainder for $L_\Phi^\alpha$ with admissible screen $S$, we have that $\zeta_\Phi(\alpha s)$ and $\zeta_R(s;\delta)$ are jointly languid (with exponent $\kappa=0$) with respect to this screen $S$. We have that $\xilfphia$ is an entire function by Corollary~\ref{cor:structureOfPoles} and explicitly we show that $\xilfphia$ is bounded on any vertical strip of the form $\HH_a^b=\set{s\in\CC\suchthat a<\Re(s)<b}$.
    
    In order to do so, it suffices to find constants $C_\xi,A_\xi$ such that $|\xilfphia(s;\delta)|\leq C_\xi A_\xi^{|\Re(s)|}$, as this function is bounded when $|\Re(s)|$ is bounded as is the case with vertical strips. 
    
    To that end, let $\sigma=\Re(s)$ and define $\Lambda_+=\max(\set{\lambda_\ph^\alpha}_{\ph\in\Phi})$ and $\Lambda_-=\min(\set{\lambda_\ph^\alpha}_{\ph\in\Phi})$. Then we have that 
    \begin{align*}
        |\xilfphia(s;\delta)| 
            &\leq \sum_{\ph\in\Phi} \lambda_\ph^{\alpha\sigma}
                \left|\Mellin_\delta^{\delta/\lambda_\ph^\alpha}[f](s)\right| \\
            &\leq \max(\Lambda_+^\sigma,\Lambda_-^\sigma) \sum_{\ph\in\Phi}
                \left|\Mellin_\delta^{\delta/\lambda_\ph^\alpha}[f](s)\right|.
    \end{align*}
    By Corollary~\ref{cor:MellinBounds}, we have that each Mellin transform $\Mellin_\delta^{\delta/\lambda_\ph^\alpha}[f]$ is bounded on any vertical strip of the form $\HH_{a}^{b}$. It follows that a finite sum of such functions is bounded on any such vertical strip by the sum of these constants, which we denote by $C_\xi$. Choosing $A_\xi=\max(\Lambda_-,\Lambda_+,\Lambda_-^{-1},\Lambda_+^{-1})$ allows us to write that $A_\xi^{|\sigma|}\geq \max(\Lambda_+^\sigma,\Lambda_-^\sigma)$, and thus obtaining the desired bound. It follows that $|\xilfphia|$ is is bounded in any vertical strip.

    Now let $\set{\tau_n}_{n\in\ZZ}$ be a sequence of admissible heights shared jointly by $\zeta_\Phi(\alpha s)$ and $\zeta_R(s;\delta)$, which exists by fiat. We have that each of these respective functions is uniformly bounded on all of the horizontal contours $I_n=[S(\tau)+i\tau_n,c+i\tau_n]$, for some particular $c$ sufficiently large so that $\zeta_\Phi(\alpha s)$ and $\zeta_R(s;\delta)$ are holomorphic in $\HH_c$. Since $\xilfphia$ is bounded on the strip $\HH_{\inf(S)}^c$, it follows that it is uniformly bounded on all of these intervals $I_n\subset\HH_{\inf(S)}^c$. It follows that $\zeta_f(s;\delta)=\zeta_\Phi(\alpha s)(\xilfphi(s;\delta)+\zeta_R(s;\delta))$ is uniformly bounded on these intervals. This establishes hypothesis \textbf{L1} for the screen $S$. 

    For hypothesis \textbf{L2}, we note that $\zeta_R(s;\delta)$ and $\zeta_\Phi(\alpha s)$ are bounded on $S$ by assumption of their joint languidity. Because $S$ is contained in the strip $\HH_{\inf(S)}^{\sup(S)}$ (noting that screens are by definition bounded), we have that $\xilfphia$ is bounded on $S$. The boundedness of $\zeta_f$ on this screen thus follows. Thus, $\zeta_f$ is languid with respect to the same screen $S$ (with exponent $\kappa$). 
    
    

\end{proof}

\subsection{Proof of the Explicit Formulae}
%
%

With the languid growth conditions of $\zeta_f$ having been established, we may now complete the proofs of the main theorems, Theorems~\ref{thm:pointwiseFormula} and \ref{thm:distFormula}. We shall treat the proofs of these theorems together as they require similar setup and estimates. The main differences come in the application of different theorems to provide the relevant result. Starting from the result of Theorem~\ref{thm:MellinInversion}, we will compute the resulting contour integral by the methods in Theorem~5.1.11 and Theorem~5.2.2 of \cite{LRZ17_FZF} for the pointwise and distributional explicit formula, respectively. 

\begin{proof}
    Fix $c$ such that $\beta/\alpha+1>c>\max(D/\alpha,\sigma_R)$. By Theorem~\ref{thm:MellinInversion}, we have that for any $t\in(0,\delta)$,
    \begin{align*}
        f(t) &= \frac{1}{2\pi i} \int_{c-i\infty}^{c+i\infty} 
            t^{-s} \zeta_\Phi(\alpha s)(\xilfphia(s;\delta)+\zeta_R(s;\delta))\,ds,
    \end{align*}
    where $\zeta_f$, $\xilfphia$, and $\zeta_R$ are as in Corollary~\ref{cor:structureOfPoles}. By Theorem~\ref{thm:languidityFromSFEs}, we have that $\zeta_f$ is languid with exponent $\kappa=0$ on the admissible screen $S$.

    Noting that $F(t)=t^{\beta/\alpha}f(t)$, we have that 
    \begin{align*}
        F(t) &= \frac{1}{2\pi i} \int_{c-i\infty}^{c+i\infty} 
            t^{\beta/\alpha-s} \zeta_\Phi(\alpha s)(\xilfphia(s;\delta)+\zeta_R(s;\delta))\,ds.
    \end{align*}
    It is this expression that is analogous to Equation~5.1.18 in \cite{LRZ17_FZF}, but with $\beta$ in place of $\dimension$, $\alpha=1$, and $\zeta_f(s;\delta)$ in place of the relative tube zeta function $\tubezeta_{A,\Omega}(s;\delta)$. Under these conditions, Theorems~5.1.11 and Theorem~5.2.2 of \cite{LRZ17_FZF} are essentially applicable with only minor modifications and justifications required in the general case. 
    
    Firstly, Proposition~5.1.8 of \cite{LRZ17_FZF} is applicable, where the interchange of the order of integration may justified by the explicit formula for $\zeta_f$ (c.f. Corollary~\ref{cor:structureOfPoles}) and explicit bounds for the constituent functions. The uniform estimates for $\zeta_R$ and $\xilfphia$ follow from Corollary~\ref{cor:MellinBounds}. 
    
    For an estimate of $\zeta_\Phi$, note that when $\sigma=\Re(s)>D/\alpha$,
    \begin{align*}
        \left| \sum_{\ph\in\Phi} \lambda_\ph^{\alpha s}  \right| \leq p(\alpha \sigma):=\sum_{\ph\in\Phi} \lambda_\ph^{\alpha\sigma} < p(D) =1,
    \end{align*}
    using the technique in Proposition~\ref{prop:simDimExistence} (viz. $p$ is strictly decreasing). It follows that there is a uniform bound (based on $c>D/\alpha$) for $\zeta_\Phi(\alpha s)$ when $\Re(s)=c$. Put together, we obtain that the function $t^{\beta/\alpha-s}\zeta_f(s;\delta)$ is integrable for $(t,s)\in [0,\delta]\times (c-i\infty,c+i\infty)$ provided that $\beta/\alpha+1>c>D/\alpha$. The assumption that $\beta/\alpha+1>c$ ensures that $t^{\beta/\alpha-s}$ is integrable as $t\to0^+$. Next, Lemma~5.1.10 (with $\kappa=0$) is directly applicable based on the languidity of the function $\zeta_f$ (see Theorem~\ref{thm:languidityFromSFEs}) followed by Theorem~5.1.11 of \cite{LRZ17_FZF}. This establishes proving Theorem~\ref{thm:pointwiseFormula}. 

    This integrability argument also is required to justify the use of the Fubini--Tonelli theorem in Equation~5.2.10 (for the proof of Theorem~5.5.5) in \cite{LRZ17_FZF}. In this case, because $\testfn\in\Ss(0,\delta)$, we may write the corresponding integral in question as  
    \[
        \int_{c-i\infty}^{c+i\infty}\int_0^\infty 
            \testfn(t)\frac{t^{\beta/\alpha-s+k}}{(\beta/\alpha-s+1)_k}\zeta_f(s;\delta)  \,dt 
        = \int_{c-i\infty}^{c+i\infty}\int_0^\delta 
            \testfn(t)\frac{t^{\beta/\alpha-s+k}}{(\beta/\alpha-s+1)_k}\zeta_f(s;\delta)  \,dt.
    \]

    We claim that once again, the integrand is integrable over the product space, that is for $(t,s)\in[0,\delta]\times(c-i\infty,c+i\infty)$. Firstly, we have the same estimates for $\zeta_f$ as before. For the Pochhammer symbol, we may write that 
    \begin{align*}
        |(\beta/\alpha-s+1)_k| 
            &\geq |\beta/\alpha-c+1|\cdot |\beta/\alpha-c+2|\cdots |\beta/\alpha-c+k-1| \\
            &\geq |\beta/\alpha-c+1|^k,
    \end{align*}
    noting that on the vertical lines $z=\beta/\alpha-c+k+i\RR$, the closest point to the origin occurs exactly on the real axis. The second estimate follows from the fact that $\beta/\alpha-c+1>0$ is the smallest term in the product. This gives a uniform upper bound for $|(\beta/\alpha-s+1)_k^{-1}|$. Lastly, the power of $t$ may be arbitrary since $\testfn$ is a function of rapid decrease: for any $\gamma\in\RR$, $t^\gamma\testfn(t)\to0$ as $t\to0^+$.  
    
    The final step is a minor change of variables. Rather than indexing the sum by the poles of $\zeta_f$ in $W_S$, we note that as long as $\Re(s)>\sigma_R$, the abscissa of holomorphicity of $\zeta_R$, a pole of $\zeta_f$ only occurs at $s$ if $\omega=\alpha s$ is a pole of $\zeta_\Phi$. In other words, we have that 
    \[
        \sum_{s\in\Dd_f(\HH_{\sigma_R})} g(s) = \sum_{\omega/\alpha\in\Dd_f(\HH_{\sigma_R})} g(\omega/\alpha) = \sum_{\omega\in\Dd_\Phi(\HH_{\alpha \sigma_R})} g(\omega/\alpha).
    \]
    Since we assume that $S$ lies in the half-plane $\HH_{\sigma_R}$, this same argument applies with respect to the window $W_S$ and its scaled transformation $\alpha W_S$.
    
\end{proof}

\chapter{Geometry of Fractals}
\label{chap:geometry}
%
%

In this chapter, we apply our results in the general framework to establish explicit formulae for \textit{relative tube functions}. These functions describe the volume of a tubular neighborhood, obtained by thickening up a given set, as a function of the distance parameter describing the thickness. Strictly speaking, we concern ourselves with relative neighborhoods: after thickening up the set, we then intersect with some other set to which the neighborhood is considered relative. Given a \textit{relative fractal drum} $(X,\Omega)$ in $\RR^\dimension$, the relative tube function is $|X_t\cap\Omega|_\dimension$, where $|\cdot|_\dimension$ is the $\dimension$-dimensional Lebesgue measure. 

A key notion in the study of fractals is that of \textit{complex dimensions}. They may be defined explicitly starting from these tube functions (which produce tube zeta functions), our main object of study in this chapter. As we will see, the complex dimensions play a pivotal role in producing explicit formulae for the tube functions. We show that the possible complex dimensions of a self-similar set $X$, the invariant set of a self-similar system $\Phi$, are controlled exactly by the poles of the scaling zeta function $\zeta_\Phi$. This requires some appropriate separation conditions, including the open set condition. 


The results in this work determine the possible complex dimensions for a wide variety of self-similar fractals in dimensions two and larger. Most notably, our method allows the study of self-similar fractals which are not made up of a union of disjoint sets. Instead, the tubular neighborhoods may partition into scaled pieces only up to a given remainder term. We show that this induces a scaling functional equation for the normalized tube function, leading to the application of our results from Chapter~\ref{chap:SFEs}. We conclude with the explicit computation of the tube functions and possible complex dimensions for generalized von Koch fractals. 


\section{Relative Tube Functions}
%
%

In what follows, let $(X,\Omega)$ be a relative fractal drum in $\RR^\dimension$. We recall that per Definition~\ref{def:RFD}, this is a tuple of sets $X,\Omega\subset\RR^d$ with the condition that $\Omega$ is open, has finite Lebesgue measure, and that there exists some $\delta>0$ so that 
\[ \Omega_\delta = \set{x\in\RR^\dimension \suchthat \exists\,y\in\Omega,\,|x-y|<\delta} \supseteq X. \]
Let $|\cdot|_\dimension$ denote the $\dimension$-dimensional Lebesgue measure. A set of the form $\Omega_\delta$ is called a \textit{tubular neighborhood} of $\Omega$. If we consider a tubular neighborhood $X_t$ of the set $X$, a \textit{relative tubular neighborhood} is the set $X_t\cap\Omega$. We will concern ourselves with the following problem: given a relative fractal drum $(X,\Omega)$, find a formula for the volume of the relative tubular neighborhood $X_t\cap\Omega$. Note that we are using volume as a general term for $\dimension$-dimensional measure; it is understood to be area in $\RR^2$ and length in $\RR$. 

\subsection{Tubular Neighborhoods and Tube Functions}
Given a set $X\subset\RR^\dimension$ and $t\geq0$, we say that a \index{Tubular neighborhood}\textbf{tubular neighborhood} of $X$ is any set of the form 
\[ X_t := \{ y\in\RR^\dimension : \exists\,x\in X,\,|y-x|<t \}. \]
Typically, we care about these sets for small, positive values of $t=\e>0$. For this reason, tubular neighborhoods may also be called $\e$-neighborhoods, especially when one wishes to explicitly specify the value of the parameter. If $t=0$, then a $0$-neighborhood is simply the set itself. If the set $X$ is bounded, then it lies within some ball of radius $R$ about the origin. For large $t\gg R$, the $t$-neighborhood of $X$ will asymptotically behave simply like the neighborhood of this ball $B_R(0)$. So, the most interesting behavior occurs for small values of $t>0$. Generally speaking, we will fix $\delta>0$ and let $t\in(0,\delta)$, considering the limit as $t\to0^+$.  

Note that even if $X$ itself is not Lebesgue measurable, its tubular neighborhoods $X_t$ for any $t>0$ are automatically measurable. Note that the following is an equivalent definition for a tubular neighborhood:
\[ X_t = \{ y\in\RR^\dimension : \exists\,x\in X,\,|y-x|<t \} = \bigcup_{x\in X}B_t(x). \]
The latter set is an arbitrary union of open sets $B_t(x)$, the open ball of radius $t$ centered at $X$, and thus is also open. By the construction of Lebesgue measure, open sets are all measurable. 

Given a set $X\subset\RR^\dimension$, the function 
\[ V_X(t) := \measure{X_t}, \]
defined for any $t>0$, is called the \index{Tube function}\textbf{tube function} of $X$. Note that by the properties of Lebesgue measure, it follows that $V_X$ is continuous on $(0,\infty)$ and decreasing. If $X$ is bounded, we have that $V_X(t)$ is finite and strictly decreasing for any $t>0$. If $X$ is measurable, then $V_X(0)$ is well-defined. 

\subsection{Tube Formulae and Codimension}
\label{sub:tubeCodimension}
A formula for the tube function of a set $X$ is naturally connected to the dimensions of the set $X$ and its constituent pieces. Consider for a moment some simple shapes with clear dimension: a point $P=\set{(0,0)}$, a unit line segment $L=[0,1]$, a unit square $S=[0,1]\times[0,1]$ in $\RR^2$. Figure~\ref{fig:threeShapes} depicts the tubular neighborhoods of these shapes, which can be computed explicitly:
\begin{align}
    \label{eqn:simpleTubeFormulae}
    \begin{split}
        V_P(t)   &= \pi t^{2-0}, \\
        V_L(t)   &= \pi t^{2-0}+2t^{2-1}, \\
        V_S(t)   &= \pi t^{2-0}+4t^{2-1}+t^{2-2}.
    \end{split}
\end{align}
These formulae have been written in a particular fashion: each term is of the form $t^{\dimension-k}$, where $k$ is the dimension corresponding to a piece of the shape contributing to the neighborhood and $\dimension=2$. The point has a neighborhood from only a $0$-dimensional piece. The line segment has contributions from its endpoints (forming a full circle), and contributions from the segment itself. The square has contributions from its corners, from its edges, and from the area of the square itself. In other words, the exponents of the distance parameter $t$ in the tube formulae are the \textit{codimensions} of components of the set. 

\begin{figure}[t]
    \centering 
    \includegraphics[width=10cm]{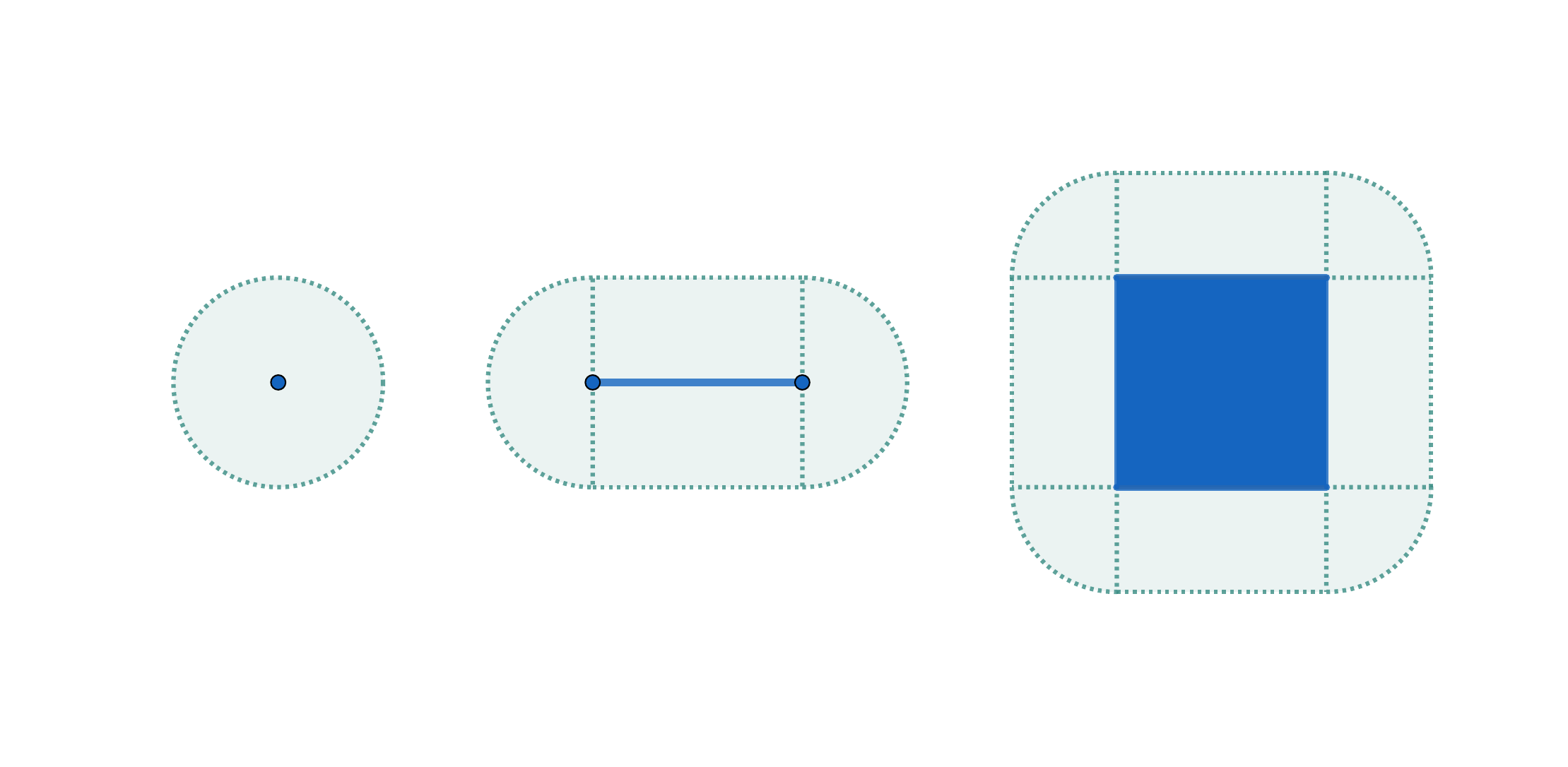}
    \caption[Tubular neighborhoods of some elementary planar shapes]{Tubular neighborhoods of a point, a line segment, and a unit square in $\RR^2$. The volume of these neighborhoods, given by the tube formulae in Equation~\ref{eqn:simpleTubeFormulae}, contain contributions of the form $t^{2-k}$ corresponding to pieces of the neighborhood arising from $k$-dimensional component of the set. Partition lines within the neighborhood have been added to delineate these pieces.}
    \label{fig:threeShapes}
\end{figure}

Now instead consider a fractal $X\subset\RR^\dimension$. If we obtain a tube formula for $\measure{X_t}$, it may contain terms of the form $t^{\dimension-D}$, where $D\in[0,\dimension]$ need not be an integer. In fact, if the formula contains periodic terms---corresponding to \textit{geometric oscillations}--- we can take the Fourier expansion of these periodic terms and the formula can contain terms of the form $t^{\dimension-\omega}$ where $\omega$ is a \textit{complex} number. Writing $\omega=\sigma+i\tau$, a single term can be rewritten as 
\[ 
    t^{\dimension-\omega} = t^{\dimension-\sigma}e^{-i\tau\log t} 
        = t^{\dimension-\sigma}(\cos(\tau\log t)-i\sin(\tau\log t)). 
\]
The real part of $\omega$ controls the amplitude of this term and the imaginary part controls the multiplicative period. For the latter, note that when $t'=e^{2\pi/\tau}t$, $\tau\log t'=\tau\log t+2\pi$ which is a shift by the period of the sine and cosine. 

It is through this means that we can make sense of the \textit{complex dimensions} of fractals. It also explains why a set will typically have multiple complex dimensions. We will take the approach of defining fractal zeta functions through \textit{relative tube zeta functions} which rely on relative tube functions. However, the complex dimensions of a set may be obtained by different types of zeta functions, yielding the exact same set of complex dimensions. For instance, a \textit{distance zeta function} has the same poles as a tube zeta function; see Corollary~2.2.10 in \cite{LRZ17_FZF}.

\subsection{Relative Tube Functions}
\label{sub:tubeFunctions}

Our main object of study is a relative fractal drum $(X,\Omega)$ in $\RR^\dimension$. Before computing the volume of a tubular neighborhood of $X$, we will first intersect with the relative set $\Omega$. The resulting set $X_t\cap\Omega$ is called a \index{Tubular neighborhood!Relative tubular neighborhood}\textbf{relative tubular neighborhood} of $X$. The condition that $\Omega$ is measurable with finite Lebesgue measure (see Definition~\ref{def:RFD}) implies that for any $t>0$, $X_t\cap\Omega$ is measurable and that $\measure{X_t\cap\Omega}<\infty$. 

Note that this allows for $X$ to be an unbounded set, so long as there is a finite set $\Omega$ such that some tubular neighborhood of $\Omega$ contains $X$. For example, take $X$ to be Gabriel's horn, the surface of revolution in $\RR^3$ formed by rotating the curve $y=1/x$, $x\in(0,\infty)$, about the $x$-axis. It is well-known that $X$ has infinite surface area but finite volume. Taking $\Omega$ to be the open region bounded by $X$ and the plane $x=0$, $(X,\Omega)$ is a relative (fractal) drum. Note that $X\subset\partial\Omega$, so $X$ is contained in any tubular neighborhood of $\Omega$. 

The tube function of $(X,\Omega)$ shall be the volume of a relative tubular neighborhood of $X$ as a function of the distance parameter. For this reason, we call them \textit{relative} tube functions to distinguish from the tube function of $X$ itself. We note that the definition works even if $(X,\Omega)$ is not a relative fractal drum provided that $X_t\cap\Omega$ is measurable with finite measure. 
\begin{definition}[Relative Tube Function]
    \label{def:tubeFunction}
    Let $X,\Omega\subset\RR^N$ and suppose that $\Omega$ is a Lebesgue measurable set. Suppose either that $\Omega$ is of finite measure or that $X$ is bounded. The \index{Tube function!Relative tube function}\textbf{relative tube function} $V_{X,\Omega}$ is the function defined by $V_{X,\Omega}(t)=\measure{X_t\cap\Omega}$ for any $t>0$ and at $t=0$ if $X$ is measurable. 
\end{definition}
We say that $V_{X,\Omega}$ is the tube function of $X$ relative to $\Omega$, and if $(X,\Omega)$ is a relative fractal drum, then it is the tube function of $(X,\Omega)$. In a relative fractal drum, $\Omega$ is assumed to be open (hence measurable) and having finite Lebesgue measure, so this function is well-defined on $\RR^+$. We do not need to assume that $X$ is measurable (as for $t>0$, its tubular neighborhoods are open). Note that if we choose $\Omega=\RR^\dimension$, then we obtain a standard tube function for $X$, which is finite when $X$ is bounded. 

\subsection{Scaling Properties}
The Lebesgue measure has the scaling property that, for any $\lambda\in\RR^+$, $\measure{\lambda X} = \lambda^\dimension \measure{X}$. When a tubular neighborhood $X_t$ is scaled, the parameter $t$ defining the new scaled neighborhood will scale linearly. In other words, $\lambda\cdot X_t=(\lambda X)_{\lambda t}$. Combining these properties, we may deduce how ordinary tube functions change under scaling: for any $\lambda>0$, $V_{\lambda X}(\lambda t) = \lambda^\dimension V_{X}(t)$. 

In general, we wish to consider the behavior of a set under the image of a similitude $\ph:\RR^\dimension\to\RR^\dimension$. But such maps are exactly compositions of isometries, which preserve the measure, and uniform scaling maps of the form $x\mapsto\lambda x$. When considering a relative fractal drum, we will also be scaling the corresponding relative set $\Omega$. 
\begin{lemma}[Scaling Property of Relative Tube Functions] 
    \label{lem:volumeScaling}
    \index{Scaling law!of relative tube functions}
    Let $(X,\Omega)$ be a relative fractal drum in $\RR^\dimension$. For any $t>0$ and $\lambda>0$, we have that
    \[ V_{\lambda X,\lambda\Omega}(t) = \lambda^\dimension V_{X,\Omega}(t/\lambda). \]
    More generally, if $\ph:\RR^\dimension\to\RR^\dimension$ is a similitude with scaling ratio $\lambda$, then we have the analogous identity:
    \begin{equation}
        \label{eqn:volumeScaling}
        V_{\ph(X),\ph(\Omega)}(t) = \lambda^\dimension V_{X,\Omega}(t/\lambda).
    \end{equation}
\end{lemma}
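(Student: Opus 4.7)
The plan is to unravel the definition of $V_{\ph(X),\ph(\Omega)}$ and track how tubular neighborhoods, intersections, and Lebesgue measure each transform under a similitude. Since the special case $\ph(x)=\lambda x$ would already suffice (by composing with a measure-preserving isometry), the target identity reduces to a single ``change of variables'' calculation. I will give the proof for a general similitude $\ph$ with scaling ratio $\lambda$, as it is no harder.

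First, I would establish the set-theoretic identity
\[ \ph(X)_t = \ph(X_{t/\lambda}). \]
The inclusion $\ph(X_{t/\lambda}) \subseteq \ph(X)_t$ is immediate: if $z \in X_{t/\lambda}$, pick $x \in X$ with $|z-x|<t/\lambda$; then $|\ph(z)-\ph(x)| = \lambda|z-x|<t$, so $\ph(z) \in \ph(X)_t$. For the reverse inclusion, use that every similitude of $\RRN$ is a bijection: given $y \in \ph(X)_t$, write $y = \ph(z)$ and pick $x\in X$ with $|y-\ph(x)|<t$; the defining property $|\ph(z)-\ph(x)| = \lambda|z-x|$ then yields $|z-x|<t/\lambda$, hence $z \in X_{t/\lambda}$ and $y\in\ph(X_{t/\lambda})$.

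Next, since $\ph$ is a bijection, $\ph(A)\cap \ph(B) = \ph(A\cap B)$ for any sets $A,B$, so combining with the previous step gives
\[ \ph(X)_t \cap \ph(\Omega) = \ph\bigl(X_{t/\lambda}\bigr) \cap \ph(\Omega) = \ph\bigl(X_{t/\lambda}\cap \Omega\bigr). \]
The last ingredient is the scaling behavior of Lebesgue measure under a similitude: $|\ph(E)|_\dimension = \lambda^\dimension |E|_\dimension$ for every Lebesgue measurable $E\subseteq\RRN$. This is standard, and can be seen either as a direct consequence of the change of variables formula (the Jacobian of $\ph$ has absolute value $\lambda^\dimension$) or by decomposing $\ph$ as an isometry composed with the homothety $x\mapsto \lambda x$, using that isometries preserve the Lebesgue measure while homotheties scale it by $\lambda^\dimension$. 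Applying this to $E = X_{t/\lambda}\cap\Omega$, which is Lebesgue measurable since $X_{t/\lambda}$ is open and $\Omega$ is measurable, yields Equation~\ref{eqn:volumeScaling}:
\[ V_{\ph(X),\ph(\Omega)}(t) = \bigl|\ph(X_{t/\lambda}\cap\Omega)\bigr|_\dimension = \lambda^\dimension |X_{t/\lambda}\cap\Omega|_\dimension = \lambda^\dimension V_{X,\Omega}(t/\lambda). \]

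There is no genuine obstacle here; the proof is almost entirely bookkeeping. The only subtle point worth highlighting is the use of bijectivity of $\ph$ in passing intersections through the image, and the mild technical check that $\ph(X_{t/\lambda}\cap\Omega)$ remains Lebesgue measurable so that the scaling property of the Lebesgue measure applies. The first stated equality, $V_{\lambda X,\lambda\Omega}(t) = \lambda^\dimension V_{X,\Omega}(t/\lambda)$, is recovered as the special case where $\ph$ is the homothety $x\mapsto \lambda x$.
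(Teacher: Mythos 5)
Your proof is correct. The route you take is slightly different from, and arguably cleaner than, the paper's. The paper writes $\ph = \psi_1\circ\ph_\lambda\circ\psi_2$ with $\psi_1,\psi_2$ isometries and $\ph_\lambda$ the homothety $x\mapsto\lambda x$, then handles the three pieces separately: it shows tube functions are invariant under injective isometries (via $\psi(X_t)=(\psi(X))_t$) and that the homothety gives $\ph_\lambda(X_t)=(\lambda X)_{\lambda t}$, finally assembling the result. You instead prove the single identity $\ph(X)_t = \ph(X_{t/\lambda})$ directly from the defining metric property $|\ph(z)-\ph(x)|=\lambda|z-x|$, which subsumes both the isometry step and the homothety step in one stroke. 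Your version has fewer moving parts; the paper's version is more modular and makes the role of each factor in the decomposition explicit (which is perhaps why they chose it, since the decomposition of similitudes is introduced elsewhere in the text). Both hinge on the same two background facts --- injectivity to pass intersections through images, and the $\lambda^\dimension$ scaling of Lebesgue measure --- and you correctly flag the measurability check, so there is nothing to fix.
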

Note that this scaling property is Lemma~3.3 of \cite{Hof25}. We reproduce the proof with slight tweaks to the notation for consistency with our present conventions.
\begin{proof}
    Let $\ph=\psi_1\circ\ph_\lambda\circ\psi_2$, where $\psi_2,\psi_1$ are isometries (equivalently, compositions of rotations, reflections, and translations) and where $\ph_\lambda$ is a uniform scaling transformation, viz. $\ph_\lambda(x)=\lambda x$. Any similitude of scaling ratio $\lambda$, by definition, may be written this way. Note that the first identity is a special case of the second property; merely set $\psi_1=\psi_2=\text{Id}$. For convenience, let us denote $U'=\psi_2(U)$ in what follows, and note that $\lambda U = \ph_\lambda(U)$.
    
    The maps in question have lots of nice properties. First, we note that $\psi_2$ and $\psi_1$ are isometries, which is to say that $\measure{U'}=\measure{U}$ and $\measure{\psi_1(U)}=\measure{U}$ for any $U\subset\RR^\dimension$. Since $\psi_1,\psi_2,\ph_\lambda,$ and their composition $\ph$ are injective, they each have the property that the intersection of the images of two sets is the image of the intersection of those sets.
    The same property is true (unconditionally) for the union of images being the image of the union. 

    This lattermost fact is one way to see that $\psi_1(X_t)=(\psi_1(X))_t$. One may write the neighborhood $X_t$ as a union of $t$-balls, and then apply the preservation of unions under images to see that this set is exactly the neighborhood of the transformed set $\psi_1(X)$. 
    
    Using this fact and the above properties, we find that the tube functions are unaffected by an injective isometry such as $\psi_1$:
    \begin{align*}
        V_{\psi_1(\lambda X'),\psi_1(\lambda \Omega')}(t) 
            &= \measure{\psi_1((\lambda X')_t)\cap\psi_1(\lambda\Omega')} \\
            &= \measure{\psi_1[(\lambda X')_t\cap\lambda\Omega']}
                =V_{\lambda X',\lambda\Omega'}(t).
    \end{align*}
    As mentioned before, we have that $\ph_\lambda(X_t)=(\lambda X)_{\lambda t}$. Writing $t=\lambda t$, we obtain that:
    \begin{align}
        \label{eqn:scalingFormula}
        V_{\lambda X',\lambda\Omega'}(\lambda t) = \measure{\ph_\lambda(X'_t\cap\Omega')} =\lambda^N V_{X',\Omega'}(t),
    \end{align}
    where the last step is the scaling property of the Lebesgue measure. We have already shown that for an injective isometry like $\psi_2$, $V_{\psi_2(X),\psi_2(\Omega)}(t)=V_{X,\Omega}(t)$. Thus, the result follows by replacing $t$ in Equation~\ref{eqn:scalingFormula} by $t/\lambda$ and using this isometry invariance. 
\end{proof}

This scaling property leads to a choice of normalizing factor $t^{-\dimension}$. Since $t^{-\dimension}\lambda^\dimension=(t/\lambda)^{-\dimension}$, if we multiply both sides of Equation~\ref{eqn:volumeScaling} by this factor, then the normalized function $t^{-\dimension}V_{X,\Omega}(t)$ satisfies a homogenized scaling law. 
\begin{corollary}[Volume Scaling Law]
    \label{cor:volumeScalingLaw}
    \index{Scaling law!of relative tube functions}
    Let $V_{X,\Omega}$ be the (relative) tube function of an RFD $(X,\Omega)$ in $\RR^\dimension$. Then the normalized (relative) tube function, $ f(t;(X,\Omega)) := t^{-\dimension}V_{X,\Omega}(t)$, satisfies the $1$-scaling law 
    \[ f(t;(\ph[X],\ph[\Omega])) = f(t/\lambda;(X,\Omega))  \]
    in the sense of Definition~\ref{def:scalingLaw}.
\end{corollary}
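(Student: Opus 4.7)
The plan is a one-line calculation leveraging the immediately preceding Lemma~\ref{lem:volumeScaling}. Indeed, the normalizing factor $t^{-\dimension}$ in the definition of $f$ is chosen precisely to absorb the prefactor $\lambda^\dimension$ produced by the Lebesgue measure's scaling behavior. This homogenizes the inhomogeneous scaling identity of the unnormalized tube function into exactly the form demanded by Definition~\ref{def:scalingLaw} with $\alpha=1$. All of the analytic content was already handled upstream in proving Lemma~\ref{lem:volumeScaling}, so nothing substantive remains.

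Concretely, I would fix a similitude $\ph$ of scaling ratio $\lambda>0$ and a parameter $t>0$, unpack the definition of $f$ on the pushed-forward drum $(\ph[X],\ph[\Omega])$, apply Equation~\ref{eqn:volumeScaling}, and regroup the resulting power of $\lambda$ with $t$:
\begin{align*}
    f(t;(\ph[X],\ph[\Omega]))
        &= t^{-\dimension}\,V_{\ph[X],\ph[\Omega]}(t) \\
        &= t^{-\dimension}\,\lambda^{\dimension}\,V_{X,\Omega}(t/\lambda) \\
        &= (t/\lambda)^{-\dimension}\,V_{X,\Omega}(t/\lambda) \\
        &= f(t/\lambda;(X,\Omega)).
\end{align*}
This is precisely the identity required by Definition~\ref{def:scalingLaw} with $\alpha=1$.

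The only preliminary check worth noting is that the collection of relative fractal drums in $\RRN$ is closed under the action $\ph[(X,\Omega)] := (\ph[X],\ph[\Omega])$, so that $f$ is genuinely defined on the transformed drum. This is immediate from Definition~\ref{def:RFD}: the image of an open set of finite Lebesgue measure under a similitude is again open with finite measure (scaled by $\lambda^{\dimension}$), and the containment $X\subseteq\Omega_\delta$ transforms into $\ph[X]\subseteq\ph[\Omega]_{\lambda\delta}$. There is accordingly no real obstacle here; the corollary is a bookkeeping step that rephrases Lemma~\ref{lem:volumeScaling} in the language of $\alpha$-scaling laws, setting up the hypotheses of Proposition~\ref{prop:inducedSFE} so that the scaling functional equation machinery of Chapter~\ref{chap:SFEs} can later be applied to relative tube functions.
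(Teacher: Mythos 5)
Your proof is correct and follows exactly the paper's own reasoning, which is given informally in the sentence preceding the corollary: multiply both sides of Equation~\ref{eqn:volumeScaling} by $t^{-\dimension}$ and use the identity $t^{-\dimension}\lambda^{\dimension}=(t/\lambda)^{-\dimension}$. Your additional remark that the class of RFDs is closed under the action of similitudes is a sensible bookkeeping check that the paper leaves implicit.
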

\section{Tube Zeta Functions and Complex Dimensions}
%
%

Tube zeta functions were introduced in \cite{LRZ17_FZF} to study fractals in higher dimensions and to extend the theory of \textit{complex dimensions}, introduced in \cite{LapvFr13_FGCD} for the theory of one dimensional fractal harps, to these fractals. A (relative) tube zeta function is constructed from a (relative) tube function, which is the volume of tubular neighborhoods of the given set (relative to another) as a function of the parameter. To be precise, a (relative) tube zeta function is a truncated Mellin transforms of the corresponding tube function. Consequently, these zeta functions possess important scaling properties (c.f. Lemma~\ref{lem:zetaScaling}) which will be important for studying self-similar fractals which are related to scaled copies of itself. 

The complex dimensions of a set correspond to the singularities of its associated fractal zeta function. We note that there are other types of fractal zeta functions, such as distance zeta functions also introduced in \cite{LRZ17_FZF}. However, the key feature of a fractal zeta function is the collection of its singularities (and the residues of the function at these values), and it turns out that these two types of zeta functions are equivalent in the sense of having identical poles (with corresponding multiplicities); see Corollary~2.2.10 of \cite{LRZ17_FZF}. The singularities of a fractal zeta function of a set $X$ are called the complex dimensions of $X$. In the case of relative tube functions, and their corresponding relative tube zeta functions, we say that these are the complex dimensions of $X$ relative to the given set. 

\subsection{Relative Tube Zeta Functions}

Tube zeta functions come from normalized tube functions. Note that here, we will strictly consider relative tube zeta functions, which come from relative tube functions. One may define a standard tube zeta function simply by replacement of relative tube functions with standard tube functions, or equivalently by replacement of the set $\Omega$ with the ambient space $\RR^\dimension$. Thus, for convenience we will sometimes simply call them tube zeta functions. 

Let $(X,\Omega)$ be a relative fractal drum in $\RR^\dimension$ and $V_{X,\Omega}$ its relative tube function. We recall from Corollary~\ref{cor:volumeScalingLaw} that the function $t^{-\dimension}V_{X,\Omega}(t)$ satisfies a homogenized scaling law. Fixing some $\delta>0$, we will take the truncated Mellin transform (see Definition~\ref{def:truncatedMellin}) of this normalized quantity to obtain the zeta function. We note that this construction suggests how to define zeta functions for quantities unrelated to volume as we shall see in Chapter~\ref{chap:heat} in the context of heat content. 
\begin{definition}[Relative Tube Zeta Function]
    \label{def:tubeZeta}
    \index{Tube zeta function!Relative tube zeta function}
    Let $(X,\Omega)$ be a relative fractal drum\footnote{Alternatively, let $X,\Omega\subset\RR^\dimension$ and suppose that $\Omega$ has finite Lebesgue measure.} in $\RR^\dimension$, let $V_{X,\Omega}$ its relative tube function, and let $\delta>0$. The \textbf{relative tube zeta function} $\tubezeta_{X,\Omega}$ of $X$ relative to $\Omega$ is given by
    \[ \tubezeta_{X,\Omega}(s;\delta) := \Mellin^\delta[t^{-N}V_{X,\Omega}(t)](s) = \int_0^\delta t^{s-N-1}V_{X,\Omega}(t)\,dt \]
    for $s\in\CC$ with sufficiently large real part and analytically continued elsewhere in the complex plane. 
\end{definition}
Generally speaking, the analytic continuation of a (zeta) function may encounter poles, essentially singularities, or even worse (logarithmic singularities, natural boundaries, and so forth). In this work, however, we shall only have need of meromorphic functions and thus will specialize to domains in which the zeta functions only have poles for singularities. We note that in fact $\Re(s)>\dimension$ is a sufficiently large real part, as the following shows. 
\begin{lemma}[Tube Function Asymptotics]
    \label{lem:tubeAsymptotics}
    Let $X,\Omega\subset\RR^\dimension$. Suppose that $\Omega$ is measurable and that either $\Omega$ has finite measure or $X$ is bounded. Then the relative tube function $\tubefn$ is continuous and decreasing on $\RR^+$. Consequently, $\tubezeta_{X,\Omega}(s;\delta)=\Mellin^\delta[t^{-\dimension}\tubefn(t)](s)$ is holomorphic in $\HH_{\dimension}$. 
\end{lemma}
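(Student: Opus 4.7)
The plan is to establish three properties of $\tubefn$---continuity on $\RR^+$, monotonicity, and a uniform bound near zero---and then deduce holomorphicity of $\tubezeta_{X,\Omega}$ in $\HH_{\dimension}$ by a direct integral estimate (or equivalently, by invoking Lemma~\ref{lem:MellinHolo}).

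First I would prove the monotonicity. For $0 < t_1 \leq t_2$, the inclusion $X_{t_1} \subseteq X_{t_2}$ (any point within distance $t_1$ of $X$ is also within distance $t_2$) yields $X_{t_1} \cap \Omega \subseteq X_{t_2} \cap \Omega$, and hence $\tubefn(t_1) \leq \tubefn(t_2)$. Strictly speaking the map $t \mapsto \tubefn(t)$ is thus nondecreasing in $t$; reading this in the reverse direction says $\tubefn$ is nonincreasing (``decreasing'') as $t$ decreases toward zero, which is the content the statement appeals to. From this monotonicity I would extract the uniform bound $\tubefn(t) \leq \tubefn(\delta) \leq \min\bigl(\measure{\Omega},\,\measure{X_\delta}\bigr) =: C < \infty$ for $t \in (0,\delta]$, noting that under either hypothesis ($\measure{\Omega}<\infty$ or $X$ bounded) at least one side of the minimum is finite.

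Second I would prove continuity on $\RR^+$ using the continuity properties of Lebesgue measure. For left-continuity at $t > 0$, observe that $X_t = \bigcup_{s<t} X_s$ since each $X_s$ is open and any $y$ with $d(y,X) < t$ has $d(y,X) < s$ for some $s \in (d(y,X),t)$; continuity from below on the increasing sequence $\{X_{t_n}\cap\Omega\}$ with $t_n \uparrow t$ then gives $\tubefn(t_n) \to \tubefn(t)$. For right-continuity, as $t_n \downarrow t$ one has $\bigcap_n (X_{t_n}\cap\Omega) = \{y \in \Omega : d(y,X) \leq t\}$, which exceeds $X_t \cap \Omega$ only by the isodistance set $E_t := \{y \in \Omega : d(y,X) = t\}$. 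The sets $\{E_t\}_{t > 0}$ are pairwise disjoint and lie in a set of finite Lebesgue measure, so at most countably many of them carry positive measure; at every other $t$, continuity from above yields right-continuity.

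Third, for the holomorphicity consequence, I would apply Lemma~\ref{lem:MellinHolo} to the normalized function $f(t) := t^{-\dimension}\tubefn(t)$. The uniform bound $\tubefn(t) \leq C$ on $(0,\delta]$ gives $f(t) = O(t^{-\dimension})$ as $t \to 0^+$, so Lemma~\ref{lem:MellinHolo} (with $\sigma_0 = \dimension$) shows that $\tubezeta_{X,\Omega}(s;\delta) = \Mellin^\delta[f](s)$ is convergent and holomorphic in $\HH_{\dimension}$. A direct estimate confirms this: for $\Re(s) > \dimension$,
\[
    |\tubezeta_{X,\Omega}(s;\delta)| \leq C \int_0^\delta t^{\Re(s)-\dimension-1}\,dt = \frac{C\,\delta^{\Re(s)-\dimension}}{\Re(s)-\dimension}.
\]
The principal obstacle is the right-continuity at the countably many exceptional $t$ where $E_t$ has positive measure; in the strictest reading, $\tubefn$ is only continuous off this countable set. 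However, this minor failure affects neither the measurability nor the integrability of the integrand, so the holomorphicity deduction in $\HH_{\dimension}$ goes through unchanged.
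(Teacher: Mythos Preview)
Your proposal is correct and follows essentially the same route as the paper: monotonicity via containment of the $X_t$, boundedness near zero from the finite-measure hypothesis, and then Lemma~\ref{lem:MellinHolo} with $\sigma_0=\dimension$. The paper's own argument is a two-line sketch (``follows from properties of the Lebesgue measure and the containment $X_t\subset X_s$''), so your version is considerably more detailed; in particular, you correctly flag that right-continuity can fail at the countably many $t$ where the isodistance set $E_t\cap\Omega$ has positive measure, and that the word ``decreasing'' in the statement should really be read as nondecreasing in $t$---both points the paper glosses over, and neither of which affects the holomorphicity conclusion.
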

\begin{proof}
    The continuity and monotonicity of $\tubefn$ follow from properties of the Lebesgue measure and the containment $X_t\subset X_s$ whenever $t<s$. The key element here is that $\tubefn$ is bounded as $t\to0^+$, which implies that $t^{-\dimension}\tubefn(t)=O(t^{-\dimension})$ as $t\to0^+$. This estimate not generally optimal, but is enough to ensure that tube zeta functions are well-defined. By Lemma~\ref{lem:MellinHolo}, we have that $\tubezeta_{X,\Omega}$ is holomorphic in $\HH_\dimension$.
\end{proof}

Note that the choice of $\delta>0$ is essentially irrelevant. For any two $\delta_2>\delta_1>0$, we have that 
\begin{equation}
    \label{eqn:zetaDiff}
    \tubezeta_{X,\Omega}(s;\delta_2)-\tubezeta_{X,\Omega}(s;\delta_1)
        = \int_{\delta_1}^{\delta_2}t^{s-N-1}V_{X,\Omega}(t)\,dt 
        = \Mellin_{\delta_1}^{\delta_2}[t^{-N}V_{X,\Omega}(t)](s). 
\end{equation}
By Lemma~\ref{lem:MellinHolo}, this is an entire function. Thus, these two functions must have the same set of singularities and respective residues because their difference is entire. In other words, the poles of a relative tube zeta function are not dependent on the choice of $\delta$. 

Tube zeta functions are well-suited to studying self-similar or nearly self-similar objects due to their scaling properties. The scaling property may be deduced directly from Lemma~\ref{lem:MellinScaling} and Corollary~\ref{cor:volumeScalingLaw}. 
\begin{lemma}[Scaling Property of $\tubezeta_{X,\Omega}$]
    \label{lem:zetaScaling}
    Let $(X,\Omega)$ be an RFD in $\RR^\dimension$, let $\tubezeta_{X,\Omega}$ be its relative tube zeta function, and fix $\delta>0$. Then for any similitude $\ph$ with scaling ratio $\lambda>0$ and for all $s$ in the domain of $\tubezeta_{X,\Omega}$,
    \[ 
        \tubezeta_{\phi(X),\phi(\Omega)}(s;\delta) = \lambda^s \tubezeta_{X,\Omega}(s;\delta/\lambda) 
        = \lambda^s\tubezeta_{X,\Omega}(s;\delta) + \lambda^s\Mellin_{\delta}^{\delta/\lambda}[t^{-N}V_{X,\Omega}(t)](s). 
    \] 
\end{lemma}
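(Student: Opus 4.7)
The plan is to deduce the identity directly from two earlier results: the volume scaling law (Corollary~\ref{cor:volumeScalingLaw}) and the scaling property of truncated Mellin transforms (Lemma~\ref{lem:MellinScaling}). The tube zeta function is defined as a truncated Mellin transform of the normalized tube function $g(t) := t^{-\dimension}\tubefn(t)$, and the corresponding normalized tube function for the transformed drum $(\ph(X),\ph(\Omega))$ differs from $g$ only by an input rescaling, so the identity falls out by changing variables.

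Concretely, I would begin by writing $g(t) := t^{-\dimension}V_{X,\Omega}(t)$ and $h(t) := t^{-\dimension}V_{\ph(X),\ph(\Omega)}(t)$, so that by Definition~\ref{def:tubeZeta}, $\tubezeta_{X,\Omega}(s;\delta)=\Mellin^\delta[g](s)$ and $\tubezeta_{\ph(X),\ph(\Omega)}(s;\delta)=\Mellin^\delta[h](s)$. Corollary~\ref{cor:volumeScalingLaw} (which in turn rests on Lemma~\ref{lem:volumeScaling}) gives $h(t)=g(t/\lambda)$ for all $t>0$. In the notation of Lemma~\ref{lem:MellinScaling}, setting $\mu := 1/\lambda$ and $S_\mu(t)=\mu t = t/\lambda$, this says $h = g\circ S_{1/\lambda}$.

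Next I would apply Equation~\eqref{eqn:truncMellinScaling} with lower cutoff $0$, upper cutoff $\delta$, and scaling factor $\mu=1/\lambda$, obtaining
\[
    \Mellin^\delta[h](s) \;=\; \Mellin_0^\delta[g\circ S_{1/\lambda}](s) \;=\; (1/\lambda)^{-s}\,\Mellin_{0}^{\delta/\lambda}[g](s) \;=\; \lambda^s\,\Mellin^{\delta/\lambda}[g](s),
\]
which is exactly the first equality $\tubezeta_{\ph(X),\ph(\Omega)}(s;\delta) = \lambda^s\,\tubezeta_{X,\Omega}(s;\delta/\lambda)$. For the second equality, I would simply split the integral on $(0,\delta/\lambda)$ as the sum of the integrals on $(0,\delta)$ and $(\delta,\delta/\lambda)$, i.e.\ $\Mellin^{\delta/\lambda}[g](s) = \Mellin^{\delta}[g](s) + \Mellin_\delta^{\delta/\lambda}[g](s)$, and multiply through by $\lambda^s$.

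There is no real obstacle here; the argument is a direct bookkeeping exercise. The only point requiring care is tracking the direction of the rescaling (since $M_\lambda$ in Definition~\ref{def:scalingOp} uses the reciprocal convention while $S_\lambda$ in Lemma~\ref{lem:MellinScaling} uses the direct convention) and verifying that $s$ lies in a half-plane for which both $\Mellin^\delta[g](s)$ and $\Mellin^{\delta/\lambda}[g](s)$ converge—this is guaranteed by Lemma~\ref{lem:tubeAsymptotics}, which places both in $\HH_\dimension$, and holds throughout the analytic continuation by uniqueness.
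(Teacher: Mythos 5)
Your argument is correct and follows exactly the route the paper prescribes: the paper states the lemma without a displayed proof, noting only that it ``may be deduced directly from Lemma~\ref{lem:MellinScaling} and Corollary~\ref{cor:volumeScalingLaw},'' which is precisely the deduction you carry out. Your attention to the reciprocal versus direct scaling convention and to the domain of validity (holomorphicity in $\HH_\dimension$ via Lemma~\ref{lem:tubeAsymptotics}, extended by uniqueness of analytic continuation) is appropriate and matches the paper's intent.
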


We will see these doubly truncated transforms often enough that we given them a name, \textit{partial tube zeta functions}, and some notation. The term partial is owing to the fact that their domain of integration does not reach zero. It is precisely the behavior near zero that matters for the development of singularities, so these functions do not detect scaling behavior in the same way that tube zeta functions do. Strictly speaking, one might call them partial \textit{relative} tube zeta functions but we will not need this distinction.
\begin{definition}[Partial (Relative) Tube Zeta Function]
    \label{def:partialTubeZeta}
    \index{Tube zeta function!Partial tube zeta function}
    Let $(X,\Omega)$ be a relative fractal drum in $\RR^\dimension$, $V_{X,\Omega}$ its relative tube function, and let $0<\delta_1<\delta_2$. A \textbf{partial (relative) tube zeta function} is the function
    \[ 
        \partialzeta_{X,\Omega}(s;\delta_1,\delta_2) 
            := \int_{\delta_1}^{\delta_2} t^{s-\dimension-1}V_{X,\Omega}(t)\,dt.
    \] 
    By Lemma~\ref{lem:MellinHolo}, it is an entire function with respect to $s$.
\end{definition}

\subsection{Complex Dimensions}

At last we may formally define complex dimensions, which lie at the heart of our analysis of self-similar fractals. First, we specify a set $W\subseteq\CC$ called a window. What we call complex dimensions are dubbed \textit{visible}, or accessible, complex dimension in the window $W$. Secondly, we note that the definition we provide is, in a sense, only a special case. Namely, we have restricted our attention to the study of meromorphic zeta functions, having only poles for singularities. Essential singularities, or even other types of singular behavior, may yet deserve to be incorporated into the general theory. However, that is beyond the scope of the present work. 

Lastly, we note that our definition of complex dimensions is for the complex dimensions of a set $X$ relative to another, $\Omega$. So, these may be called the relative complex dimensions of a given set $X$ (relative to $\Omega$) or more simply as the complex dimensions of the relative fractal drum $(X,\Omega)$ itself. We will use the terminology interchangeably. 
\begin{definition}[Complex Dimensions of a Relative Fractal Drum]
    \label{def:cDims}
    \index{Complex dimensions}
    Let $(X,\Omega)$ be a relative fractal drum and let $\tubezeta_{X,\Omega}$ be its relative tube zeta function.
    
    Let $W\subset \CC$ such that $\tubezeta_{X,\Omega}$ is meromorphic within $W$. Then the \textbf{complex dimensions} of $X$ relative to $\Omega$ contained in the window $W$ are the poles of $\tubezeta_{X,\Omega}$ contained within $W$. We denote them by $\Dd_{X,\Omega}(W)$, or $\Dd_X(W)$ when the set $\Omega$ is clear from context. 
\end{definition}

As alluded to earlier, the complex dimensions of a set control the exponents which appear in explicit formulae for tube functions. Namely, the exponents are codimensions of the form $\dimension-\omega$, where $\omega$ is in the set $\Dd_X$, and the sums are indexed by these complex dimensions. See for instance the general explicit formulae for tube functions of (generalized) fractal harps (in Chapter~5 of \cite{LapvFr13_FGCD}) or for relative tube functions (in Chapter~5 of \cite{LRZ17_FZF}). 

In the present context, we will explicitly compute the possible complex dimensions of self-similar fractals using our framework of scaling functional equations, with the results having appeared in \cite{Hof25}. In particular, Corollary~\ref{cor:structureOfPoles} will produce a formula for the relative tube zeta function and a description of its possible poles, namely being governed by the scaling zeta function $\zeta_\Phi$ for the self-similar system $\Phi$ having $X$ as its invariant set.

\section{General Results on Self-Similar Sets}
\label{sec:fractalSFEs}
%
%

In this section, we apply the results of Chapter~\ref{chap:SFEs} to study self-similar fractals. The method of establishing explicit formulae for relative tube functions was established in \cite{LRZ17_FZF}, and the approach of explicitly computing the tube zeta functions via scaling functional equations was established in \cite{Hof25}. Here, the main results lie in establishing formulae for the relevant fractal zeta functions and the possible complex dimensions of self-similar fractals. In particular, we will see that the self-similar system $\Phi$, whose invariant set is the given self-similar fractal $X$, controls the possible complex dimensions. Essentially, the possible complex dimensions may be deduced from only knowledge of the scaling ratios (and respective multiplicities) that define the self-similarity of the given fractal. 

We note that this framework requires certain separation conditions to be well-defined, namely the introduction of an \textit{osculant fractal drum}. Suppose that $(X,\Omega)$ is a fractal drum where $X$ is a self-similar set. Since an IFS with a given invariant set is not unique, the first condition (the open set condition, Definition~\ref{def:OSC}) ensures that the particular self-similar system $\Phi$ carries authentic geometric information about its invariant set $X$. The second condition is that $\Omega$ osculates $X$ in the sense that its iterates $\ph[\Omega]$ stay closest to the corresponding iterate $\ph[X]$ of $X$ itself. 

This constraint is vital for establishing an induced decomposition of the relative tube functions for the fractal drum $(X,\Omega)$ in terms of the tube functions for its corresponding rescaled drums, $\ph[(X,\Omega)] := (\ph[X],\ph[\Omega])$. Together with the scaling properties of these volume functions, we obtain an induced scaling functional equation which may be solved to obtain explicit formulae. 

\subsection{Osculant Fractal Drums}
%
%

When $X$ is a self-similar set, its tubular neighborhoods $X_t$ need not themselves be self-similar in the same sense. However, with some appropriate separation conditions imposed on a fractal drum $(X,\Omega)$, we can partition the relative tubular neighborhood for $(X,\Omega)$ in terms of relative tubular neighborhoods of the similar RFDs $(\ph[X],\ph[\Omega])$, $\ph\in\Phi$, and a residual RFD $(X,R)$, where $R=\Omega\setminus\cup_{\ph\in\Phi}\ph[\Omega]$. Ultimately, such a decomposition will lead to a scaling function equation for the relative tube function induced by $\Phi$, in the sense of Definition~\ref{def:inducedDecomp}. Ideally, only a small portion of this neighborhood should lie in the set $R$ as this is precisely what yields the remainder term. 

In order for this construction to make sense, we need the images $\ph[\Omega]$ and $\psi[\Omega]$ to be disjoint whenever $\ph\neq\psi$, where $\ph,\psi\in\Phi$, as well as the containment $\ph[\Omega]\subseteq\Omega$ for each $\ph\in\Phi$. Note that these are exactly the conditions so that $\Omega$ is a feasible set for the open set condition for $\Phi$ (see Definition~\ref{def:OSC}). This is our first condition.

Next we note a small technical point: $R$ need not be open, so strictly speaking $(X,R)$ is not an RFD. If $\partial R$ has measure zero, we may safely replace $R$ by its interior and $V_{X,R}$ shall be unaffected. Even if not, however, the definition of $R$ ensures that it is measurable with $\measure{R}\leq \measure{\Omega}<\infty$, and thus $V_{X,R}$ (and consequently $\tubezeta_{X,R}$) will still be well-defined as in Definition~\ref{def:tubeFunction} (resp. Definition~\ref{def:tubeZeta}).

Next, we will need that $X_t\cap \ph(\Omega)$ is in fact $(\ph(X))_t\cap\ph(\Omega)$ for each $\ph\in\Phi$. Equivalently, for each map $\ph\in\Phi$ and every point $y\in\ph(\Omega)$, $d(y,X)=d(y,\ph(X))$. This condition will be necessary to relate $V_{X,\ph(\Omega)}$ to $V_{\ph(X),\ph(\Omega)}$. This we call the osculating condition, introduced in \cite{Hof25}: the points in iterates $\ph(\Omega)$ of $\Omega$ ($\ph\in\Phi$) stay closest to the corresponding iterate $\ph(X)$ of $X$, rather than to a different iterate $\psi(X)$, where $\psi\in\Phi$ and $\psi\neq\ph$.

\begin{definition}[Osculating Sets and Osculant Fractal Drums]
    \label{def:oscRFD}
    \index{Osculating set}\index{Relative fractal drum (RFD)!Osculant Fractal Drum}
    Let $\Phi$ be an iterated function system on $\RR^\dimension$ and let $X$ be its attractor. A nonempty, open set $\Omega\subset\RR^\dimension$ is said to be an \textbf{osculating set} for $\Phi$ if the following hold:
    \begin{enumerate}
        \item $\Phi$ satisfies the open set condition with respect to $\Omega$;
        \item For each $\ph\in\Phi$, if $y\in\ph[\Omega]$, then $d(y,X)=d(y,\ph[X])$. 
    \end{enumerate}
    A relative fractal drum $(X,\Omega)$ is called an \textbf{osculant fractal drum} if there exists an IFS $\Phi$ for which $X$ is its attractor and for which $\Omega$ is an osculating set thereof. 
\end{definition}

In practice, it is generally easiest to ensure that a relative fractal drum is osculant in the construction of the set $\Omega$. In Section~\ref{sec:appToGKFs}, we will see that the interior of an $(n,r)$-von Koch snowflake 
(contained within a sector of angle $2\pi/n$) serves as an osculating set when $r$ is chosen to be sufficiently small. 

In the general setting, the following algorithm may be attempted to construct an osculating set. Note, however, that this may produce an empty set. Suppose that $X$ is the attractor of an iterated function system $\Phi$. For each map $\ph\in\Phi$, 
define the set
\begin{equation} 
    \label{eqn:defUphi}
    U_\ph := \set{ x\in\RR^\dimension \suchthat 
        d(x,\ph[X])<\min_{\psi\in\Phi\setminus\set{\ph}}d(x,\psi[X]) }. 
\end{equation}
By construction, $U_\ph$ consists of all the points in $\RR^\dimension$ which are closest to some point of $\ph[X]$ and strictly further from any other respective image. By construction, $U_\ph$ is open (it may be written as the preimage of a continuous function, the difference of the distance functions, of an open interval). Note that if $\ph[X]\subset\psi[X]$ for some other map $\psi$---which means that $\Phi$ will fail to satisfy the open set condition---this set is empty. 

Now consider the new open set $U'$ defined by 
\[ U' = \bigcap_{\ph\in\Phi} \ph^{-1}[U_\ph]. \]
By construction, if $x\in U'$ then $\ph(x)\in U_\ph$. Note, however, that $U'$ may or may not be nonempty in general (even if each individual $U_\ph$ is nonempty). The osculating condition is satisfied by $U'$: for any $x\in U'$ and $\ph\in\Phi$, we have that $\ph(x)\in U_\ph$, whence $d(x,\ph[X])=d(x,X)$ by definition. The set $U'$, or some nonempty subset of it, is a candidate for an osculating set $\Omega$ if it is also a feasible set for which $\Phi$ satisfies the open set condition. In fact, if $\Phi$ consists only of injective maps (such as is the case for self-similar systems), then it follows by construction that the pairwise images of $U'$ are disjoint. In this case, one must then check that $U'$ is nonempty and that it contains its images, i.e. for each $\ph\in\Phi$, $\ph[U']\subseteq U'$.

\subsection{Induced Decompositions of Osculant Fractal Drums}
%
%

The conditions of an osculant fractal drum $(X,\Omega)$ ensure that its relative tube function $V_{X,\Omega}$ has an induced decomposition in the sense of Definition~\ref{def:inducedDecomp}. This, together with the volume scaling law (Corollary~\ref{cor:volumeScalingLaw}), will allow us to apply the results of Chapter~\ref{chap:SFEs} to set up and solve scaling functional equations for the tube function. 

\begin{theorem}[Induced Decomposition of Relative Tube Functions]
    \label{thm:inducedDecompTubes}
    \index{Decomposition!Decomposition of tube functions}
    Let $\Phi$ be a self-similar system, let $(X,\Omega)$ be an osculant fractal drum with respect to $\Phi$, and let $V_{X,\Omega}$ be the relative tube function of $(X,\Omega)$. Let $R=\Omega\setminus\cup_{\ph\in\Phi}\ph[\Omega]$, which we call the \index{Decomposition!Residual set}\textbf{residual set}. 
    \medskip 

    Then $\Phi$ induces a decomposition of $V_{X,\Omega}$ on $\RR^+$ with remainder term $V_{X,R}(t)=\measure{X_t\cap R}$ in the sense of Definition~\ref{def:inducedDecomp}.
\end{theorem}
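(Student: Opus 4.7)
The plan is to verify directly that the relative tube function $V_{X,\Omega}$ satisfies Equation~\ref{eqn:inducedDecomp} with $A = (X,\Omega)$ and $R(t) = V_{X,R}(t)$. The proof should break into three natural steps: a disjoint set-theoretic decomposition of $\Omega$, an application of Lebesgue additivity, and finally the use of the osculating condition to identify $X_t \cap \ph[\Omega]$ with $(\ph[X])_t \cap \ph[\Omega]$.

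First, I would invoke the open set condition, which is built into the definition of an osculant fractal drum (Definition~\ref{def:oscRFD}): the family $\{\ph[\Omega]\}_{\ph\in\Phi}$ consists of pairwise disjoint subsets of $\Omega$. By the very definition of the residual set $R$, this gives the strict disjoint decomposition $\Omega = R \sqcup \bigsqcup_{\ph\in\Phi}\ph[\Omega]$. Note that $R$ is automatically Lebesgue measurable as the difference of $\Omega$ and a finite union of measurable sets, and each intersection $X_t \cap \ph[\Omega]$ is measurable since $X_t$ is open.

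Second, intersecting the decomposition above with the open set $X_t$ and applying finite additivity of the Lebesgue measure yields
\[
V_{X,\Omega}(t) = |X_t \cap \Omega|_\dimension = |X_t \cap R|_\dimension + \sum_{\ph\in\Phi}|X_t \cap \ph[\Omega]|_\dimension = V_{X,R}(t) + \sum_{\ph\in\Phi}|X_t \cap \ph[\Omega]|_\dimension.
\]
Third, I would rewrite each term in the sum using the osculating condition. For any $y \in \ph[\Omega]$, condition~(2) of Definition~\ref{def:oscRFD} asserts that $d(y,X) = d(y,\ph[X])$. Since $y \in X_t$ iff $d(y,X) < t$ and $y \in (\ph[X])_t$ iff $d(y,\ph[X]) < t$, these two conditions agree on $\ph[\Omega]$. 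Consequently $X_t \cap \ph[\Omega] = (\ph[X])_t \cap \ph[\Omega]$, which gives $|X_t \cap \ph[\Omega]|_\dimension = V_{\ph[X], \ph[\Omega]}(t)$. Substituting this into the previous display produces
\[
V_{X,\Omega}(t) = \sum_{\ph\in\Phi} V_{\ph[X],\ph[\Omega]}(t) + V_{X,R}(t),
\]
which is precisely the induced decomposition (Equation~\ref{eqn:inducedDecomp}) applied to the family $\{V_{A}\}$ indexed by RFDs under the action $\ph[(X,\Omega)] = (\ph[X],\ph[\Omega])$, with remainder $V_{X,R}$.

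There is no significant obstacle here: every piece of work has been front-loaded into the definition of an osculant fractal drum. The open set condition supplies disjointness (so additivity, rather than mere subadditivity, applies), and the osculating condition is precisely tailored so that the nearest point to any $y \in \ph[\Omega]$ lies in $\ph[X]$, allowing the reduction of $X_t \cap \ph[\Omega]$ to a tube neighborhood of the local piece $\ph[X]$. The only subtlety worth flagging explicitly is that strict set-theoretic equality (not merely equality up to measure zero) holds in both the partition of $\Omega$ and the identification $X_t \cap \ph[\Omega] = (\ph[X])_t \cap \ph[\Omega]$, so the argument proceeds without any null-set manipulations.
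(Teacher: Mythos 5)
Your proof is correct and follows essentially the same route as the paper's: partition $\Omega$ into $\{\ph[\Omega]\}_{\ph\in\Phi}$ and the residual set $R$ via the open set condition, apply disjoint additivity of Lebesgue measure to $X_t\cap\Omega$, and then use the osculating condition to replace each $X_t\cap\ph[\Omega]$ by $(\ph[X])_t\cap\ph[\Omega]$. The argument is complete and matches the paper's proof in both structure and content.
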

We note that this result is essentially the proof of Theorem~5.3 in \cite{Hof25}. The final step to obtain the statement of this theorem is the application of Lemma~\ref{lem:volumeScaling}, which we state here as a separate corollary. 

\begin{proof}    
    We begin by partitioning $\Omega$ according to the iterates and the residual set: 
    \[ \Omega = \enclose{\biguplus_{\ph\in\Phi} \ph[\Omega]} \uplus R.\] 
    Observe that this is a partition since $\Omega$ is an osculating set for $\Phi$ (and by the definition of $R$). Next, we intersect each of the sets of the partition with $X_t$ and take the Lebesgue measure. By disjoint additivity, we obtain that
    \[ V_{X,\Omega}(t) = \sum_{\ph\in\Phi} V_{X,\ph[\Omega]}(t) + V_{X,R}(t), \]
    where $V_{X,R}(t)=\measure{X_t\cap R}$. It is not strictly speaking a relative tube function; however, $R$ is measurable, so it is well-defined. To see this, note that $\Omega$ is measurable and so too is each set $\ph[\Omega]$. The easiest way to see that $\ph[\Omega]$ is measurable is to note that a similitude is injective, whence its inverse is well-defined and continuous, in which case $\ph[\Omega]$ is the preimage of $\Omega$ under a continuous (hence measurable) function.

    By the osculant condition, we have that for each $\ph\in\Phi$, $d(y,X)=d(y,\ph[X])$ for all $y\in\ph[\Omega]$. Consequently, $X_t\cap\ph[\Omega]=(\ph[X])_t\cap\ph[\Omega]$, as each of these neighborhoods $X_t$ is the preimage of the interval $[0,t)$ under the respective distance functions, $d(\cdot,X)\equiv d(\cdot,\ph[X])$, equivalent in $\ph[\Omega]$ by assumption. Thus, $V_{X,\ph[\Omega]}=V_{\ph[X],\ph[\Omega]}$ on $\RR^+$. Putting everything together, we obtain the identity that 
    \[ V_{X,\Omega}(t) = \sum_{\ph\in\Phi} V_{\ph[X],\ph[\Omega]}(t) + V_{X,R}(t), \]
    which is exactly an induced decomposition.
\end{proof}

Given an induced decomposition and the scaling law from Lemma~\ref{lem:volumeScaling}, we obtain an induced scaling functional equation for $V_{X,\Omega}$ in the sense of Proposition~\ref{prop:inducedSFE}.
\begin{corollary}[Induced Scaling Functional Equation]
    \label{cor:inducedTubeSFE}
    \index{Scaling functional equation (SFE)!Induced SFE of tube functions}
    Let $\Phi$ be a self-similar system, let $(X,\Omega)$ be an osculant fractal drum with respect to $\Phi$, and let $V_{X,\Omega}$ be the relative tube function of $(X,\Omega)$. Let $R=\Omega\setminus\cup_{\ph\in\Phi}\ph[\Omega]$ be the residual set and let $V_{X,R}(t)=\measure{X_t\cap R}$. 
    \medskip 

    Then $\Phi$ induces the scaling functional equation $V_{X,\Omega}=L_\Phi[V_{X,\Omega}]+V_{X,R}$ on $\RR^+$, where $L_\Phi$ is the scaling operator associated to $\Phi$ as in Equation~\ref{eqn:defSystemScalingOp}.
\end{corollary}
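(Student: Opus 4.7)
The plan is to obtain the induced scaling functional equation as a direct consequence of the pieces already assembled, namely the induced decomposition from Theorem~\ref{thm:inducedDecompTubes}, the scaling law from Lemma~\ref{lem:volumeScaling} (equivalently the homogenized version in Corollary~\ref{cor:volumeScalingLaw}), and the general framework of Proposition~\ref{prop:inducedSFE}. In essence, once the decomposition and scaling law are both available, the SFE is the formal combination of the two.

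First, I would set up the family $\set{V_{X',\Omega'}}$ indexed by RFDs in $\RR^\dimension$ (or more to the point, the family indexed by $\set{\ph[(X,\Omega)]\suchthat \ph\in\Phi}\cup\set{(X,\Omega)}$, which is $\Phi$-closed in the sense of Definition~\ref{def:inducedDecomp} because the image of a similitude applied to an RFD is again an RFD). By Theorem~\ref{thm:inducedDecompTubes}, $\Phi$ induces a decomposition
\[
    V_{X,\Omega}(t) = \sum_{\ph\in\Phi} V_{\ph[X],\ph[\Omega]}(t) + V_{X,R}(t),\qquad t\in\RR^+,
\]
with remainder $V_{X,R}$. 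Next, I would invoke the scaling law: by Corollary~\ref{cor:volumeScalingLaw}, the normalized tube function satisfies the $1$-scaling law, so that for each $\ph\in\Phi$ with scaling ratio $\lambda_\ph$, one has $V_{\ph[X],\ph[\Omega]}(t) = \lambda_\ph^\dimension V_{X,\Omega}(t/\lambda_\ph)$ on $\RR^+$.

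Substituting the scaling law into each summand of the induced decomposition rewrites each term $V_{\ph[X],\ph[\Omega]}$ in terms of a single function, $V_{X,\Omega}$ itself, evaluated at the rescaled input $t/\lambda_\ph$. Interpreted via the pure scaling operator $M_{\lambda_\ph}[f](t) := f(t/\lambda_\ph)$ of Equation~\ref{eqn:defPureScalingOp}, and then summing over $\ph\in\Phi$, the right-hand side of the decomposition is precisely $L_\Phi[V_{X,\Omega}](t) + V_{X,R}(t)$, where $L_\Phi = \sum_{\ph\in\Phi}M_{\lambda_\ph}$ as in Equation~\ref{eqn:defSystemScalingOp}. (If one prefers to state the SFE in the strictly homogenized form required by Proposition~\ref{prop:inducedSFE}, then one applies the normalization $f(t):=t^{-\dimension}V_{X,\Omega}(t)$ and remainder $R(t):=t^{-\dimension}V_{X,R}(t)$, so that $f$ satisfies $f=L_\Phi[f]+R$; the $\lambda_\ph^\dimension$ factor from Lemma~\ref{lem:volumeScaling} is absorbed into the normalization.) Alternatively, one may simply appeal to Proposition~\ref{prop:inducedSFE} with $\alpha=1$ applied to the normalized family.

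The proof is entirely routine given the previously established machinery, so I do not anticipate any serious obstacle. The only mild subtlety to flag is keeping track of the normalization: the identity, taken literally in the form $V_{X,\Omega}=L_\Phi[V_{X,\Omega}]+V_{X,R}$, is most precisely read as an equality for the homogenized functions $t^{-\dimension}V_{X,\Omega}$ and $t^{-\dimension}V_{X,R}$, matching the hypothesis of the general framework of Section~\ref{sec:generalSFEs}. Once the normalization is handled, the verification is one line, and the result is immediate.
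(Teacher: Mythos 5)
Your proposal is correct and takes essentially the same approach as the paper, which presents the corollary with no separate proof precisely because it is the immediate combination of Theorem~\ref{thm:inducedDecompTubes}, Corollary~\ref{cor:volumeScalingLaw}, and Proposition~\ref{prop:inducedSFE}. You are also right to flag the normalization subtlety: the identity $V_{X,\Omega}=L_\Phi[V_{X,\Omega}]+V_{X,R}$ is implicitly an equality of the homogenized functions $t^{-\dimension}V_{X,\Omega}$ and $t^{-\dimension}V_{X,R}$ (as made explicit, e.g., in Theorem~\ref{thm:vonKochSFE}), since applying $L_\Phi$ literally to $V_{X,\Omega}$ would omit the $\lambda_\ph^\dimension$ factors coming from Lemma~\ref{lem:volumeScaling}.
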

\subsection{Complex Dimensions}
%
%

In order to apply the results of Chapter~\ref{chap:SFEs}, we must have some control over the remainder term, $V_{X,R}$, that appears in the induced functional equation of Corollary~\ref{cor:inducedTubeSFE}. Let $(X,\Omega)$ be an osculant fractal drum with respect to a self-similar system $\Phi$, and let $R=\Omega\setminus\cup_{\ph\in\Phi}\ph[\Omega]$ and $V_{X,R}$ be defined accordingly. 

Our main type of estimate is when there exists some $\sigma_0<\underline{\dim}_S$ such that at $t\to0^+$, we have that $R(t)=O(t^{\dimension-\sigma_0})$. Then $t^{-\dimension}V_{x,R}(t)=O(t^{-\sigma_0})$ as $t\to0^+$ and thus by Theorem~\ref{thm:lowerDimAdmissibility} it is an admissible remainder for the scaling functional equation $V_{X,\Omega}=L_\Phi[V_{X,\Omega}]+V_{X,R}$ for any screen of the form $S(\tau)\equiv \sigma_0+\e$, $\e>0$ arbitrarily small. 

Given such an estimate, we can deduce an explicit formula for the relative tube zeta function of $(X,\Omega)$, and thus the possible complex dimensions as well as explicit tube formulae in either a pointwise or distributional sense. 

\begin{theorem}[Tube Zeta Function Formula]
    \label{thm:tubeZetaFormula}
    \index{Tube zeta function!Explicit formula}
    Let $(X,\Omega)$ be an osculant fractal drum in $\RR^\dimension$ with corresponding self-similar system $\Phi$. Let $V_{X,\Omega}$ be its relative tube function and let $V_{X,R}$ be the relative tube function of the residual set (as in Theorem~\ref{thm:inducedDecompTubes}). Suppose that as $t\to0^+$, $V_{X,R}(t)=O(t^{\dimension-\sigma_0})$. 
    \medskip 
    
    Then the relative tube zeta function of $(X,\Omega)$ is given by 
    \begin{equation}
        \label{eqn:tubeZetaFormula}
        \tubezeta_{X,\Omega}(s;\delta) = \zeta_\Phi(s)(\xilfphi(s;\delta)+\zeta_{X,R}(s;\delta)),
    \end{equation} 
    where $\xilfphi$ is defined as in Equation~\ref{eqn:defPartialZeta} with $f(t)=t^{-\dimension}\tubefn(t)$. Letting $D=\simdim(\Phi)$, we have that $\tubezeta_{X,\Omega}(s;\delta)$ is holomorphic in $\HH_{D}$. Further, $\xilfphi$ is an entire function and $\zeta_R$ is holomorphic in $\HH_{\sigma_0}$, whereby $\tubezeta_{X,\Omega}$ admits a meromorphic continuation to $\HH_{\sigma_0}$, having poles contained in the set $\Dd_\Phi(\HH_{\sigma_0})$. If $\sigma_0<\lowersimdim(\Phi)$, then $\Dd_\Phi(\HH_{\sigma_0})=\Dd_\Phi(\CC)$.
\end{theorem}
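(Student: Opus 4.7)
The plan is to reduce the theorem to a direct application of Theorem~\ref{thm:zetaFormula} with $\alpha=1$, after properly normalizing the relative tube function to satisfy a homogenized scaling law. First, I would invoke Corollary~\ref{cor:inducedTubeSFE}, which gives the induced scaling functional equation $V_{X,\Omega}=L_\Phi[V_{X,\Omega}]+V_{X,R}$ on $\RR^+$, using the fact that $(X,\Omega)$ is an osculant fractal drum. Next, I normalize by setting $f(t):=t^{-\dimension}V_{X,\Omega}(t)$ and $R(t):=t^{-\dimension}V_{X,R}(t)$; by Corollary~\ref{cor:volumeScalingLaw} the family of normalized tube functions obeys a $1$-scaling law in the sense of Definition~\ref{def:scalingLaw}, and so by Proposition~\ref{prop:inducedSFE} the induced decomposition above transports to an induced scaling functional equation $f=L_\Phi[f]+R$ on $\RR^+$, in which $\Mellin^\delta[f](s)=\tubezeta_{X,\Omega}(s;\delta)$ and $\Mellin^\delta[R](s)=\zeta_{X,R}(s;\delta)$ by construction.

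Second, I would verify the required growth hypotheses for Theorem~\ref{thm:zetaFormula}. By Lemma~\ref{lem:tubeAsymptotics}, $V_{X,\Omega}$ is bounded as $t\to 0^+$, so $f(t)=O(t^{-\dimension})$, giving the required order $\sigma_1=-\dimension$. The hypothesis $V_{X,R}(t)=O(t^{\dimension-\sigma_0})$ as $t\to0^+$ immediately yields $R(t)=O(t^{-\sigma_0})$, so $\sigma_R=\sigma_0$ in the language of Theorem~\ref{thm:zetaFormula}. I now apply that theorem to obtain Equation~\ref{eqn:tubeZetaFormula} together with the assertion that $\xilfphi$ is entire (being a finite linear combination of entire truncated Mellin transforms of $f$ per Lemma~\ref{lem:MellinHolo}) and the meromorphic continuation of $\tubezeta_{X,\Omega}$ to $\HH_{\sigma_0}$ with poles contained in $\Dd_\Phi(\HH_{\sigma_0})$.

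Third, I would argue holomorphicity in $\HH_D$. By Proposition~\ref{prop:holomorphicitySZF}, $\zeta_\Phi$ is holomorphic in $\HH_D$, and by Lemma~\ref{lem:MellinHolo} the function $\zeta_{X,R}(s;\delta)$ is holomorphic in $\HH_{\sigma_0}$; combined with the fact that $\xilfphi$ is entire, the right-hand side of Equation~\ref{eqn:tubeZetaFormula} is holomorphic in $\HH_D\cap\HH_{\sigma_0}$. Since $\tubezeta_{X,\Omega}$ is a priori holomorphic in $\HH_\dimension\subseteq\HH_D$ by Lemma~\ref{lem:tubeAsymptotics}, the formula serves as an identity of meromorphic functions that provides the holomorphic extension to $\HH_D$ (the only potential poles in $\HH_D$ are those of $\zeta_\Phi$, of which there are none). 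Finally, for the last assertion, if $\sigma_0<\lowersimdim(\Phi)$ then by the lower bound for the real parts of poles of $\zeta_\Phi$ discussed after Definition~\ref{def:lowerSimDim} (following Theorem~3.6 of \cite{LapvFr13_FGCD}), every element of $\Dd_\Phi(\CC)$ has real part at least $\lowersimdim(\Phi)>\sigma_0$ and therefore lies in $\HH_{\sigma_0}$, giving $\Dd_\Phi(\HH_{\sigma_0})=\Dd_\Phi(\CC)$.

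The main obstacle is essentially bookkeeping: keeping the normalization factor $t^{-\dimension}$ straight through both the scaling law and the Mellin transform, and correctly matching $\sigma_R=\sigma_0$ after the normalization shift. The rest is a direct transfer of the general scaling functional equation machinery to the geometric setting, with no substantial new analysis required.
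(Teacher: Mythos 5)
Your proposal is correct and takes essentially the same approach as the paper: it chains together Corollary~\ref{cor:inducedTubeSFE}, Corollary~\ref{cor:volumeScalingLaw}, and Proposition~\ref{prop:inducedSFE} to obtain the normalized scaling functional equation, then applies the general zeta-formula machinery (you cite Theorem~\ref{thm:zetaFormula} with $\alpha=1$, which is precisely Corollary~\ref{cor:structureOfPoles} as the paper uses it), and finishes with the lower similarity dimension bound. You are somewhat more explicit than the paper about tracking the $t^{-\dimension}$ normalization and verifying holomorphicity in $\HH_D$, but there is no substantive difference in method.
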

\begin{proof}
    First, we note that by Theorem~\ref{thm:inducedDecompTubes}, $\Phi$ induces a decomposition of $V_{X,\Omega}$ on $\RR^+$ with remainder term $R$. By Corollary~\ref{cor:volumeScalingLaw}, we have that $V_{X,\Omega}$ satisfies a $1$-scaling law, so by Proposition~\ref{prop:inducedSFE}, $\Phi$ induces the scaling functional equation $V_{X,\Omega}=L_\Phi[V_{X,\Omega}]+R$ on $\RR^+$. Thus, by Corollary~\ref{cor:structureOfPoles} (with $\alpha=1$), we have that Equation~\ref{eqn:tubeZetaFormula} holds and that $\tubezeta_{X,\Omega}$ is holomorphic in $\HH_D$ and meromorphic in $\HH_{\sigma_0}$, having poles contained in $\Dd_\Phi(\HH_{\sigma_0})$. If we have that $\sigma_0<D_\ell=\lowersimdim(\Phi)$, we have that $\Dd_\Phi(\HH_{\sigma_0})=\Dd_\Phi(\CC)$ because all the poles $\omega$ of $\zeta_\Phi$ have $D_\ell\leq \Re(\omega)\leq D$.
\end{proof}

From this theorem, we can describe the possible complex dimensions of the fractal drum explicitly, as they are governed by the explicit function $\zeta_\Phi$. Note, however, that it is possible for the residue of the tube function to vanish at a pole $\omega$ of $\zeta_\Phi$, in which case that it is not a complex dimension of $(X,\Omega)$. This occurs if $\xilfphi(\omega;\delta)+\zeta_{X,R}(\omega;\delta)=0$ and vanishes to a degree equal to or larger than than the multiplicity of the pole $\omega\in\Dd_\Phi$.
\begin{corollary}[Possible Complex Dimensions]
    \label{cor:possibleCdims}
    \index{Relative fractal drum (RFD)!Possible complex dimensions}
    Let $(X,\Omega)$ be an osculant fractal drum in $\RR^\dimension$ with corresponding self-similar system $\Phi$ and suppose that the hypotheses of Theorem~\ref{thm:tubeZetaFormula} hold.
    \medskip 
    
    Then the set $\Dd_X(\HH_{\sigma_0})$ of complex dimensions of $X$ relative to $\Omega$ in the window $\HH_{\sigma_0}$ satisfies the containment
    \begin{equation}
        \label{eqn:cDimsFullDim}
        \Dd_X(\HH_{\sigma_0}) \subset \Dd_{\Phi}(\HH_{\sigma_0}) = \Big\{\omega \in \CC : 1 = \sum_{\ph\in\Phi} \lambda_\ph^\omega \Big\}.
    \end{equation}
    A point $\omega\in\Dd_\Phi(\HH_{\sigma_0})$ is a complex dimension of $X$ (relative to $\Omega$) exactly when $\Res(\tubezeta_{X,\Omega}(s;\delta);\omega)\neq 0$. If $\sigma_0<\lowersimdim(\Phi)$, then $\Dd_\Phi(\HH_{\sigma_0})=\Dd_\Phi({\CC})$.
\end{corollary}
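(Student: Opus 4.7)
The plan is to deduce this corollary directly from the structural formula established in Theorem~\ref{thm:tubeZetaFormula}, without needing to revisit the underlying scaling functional equation. The key observation is that the factorization
\[
\tubezeta_{X,\Omega}(s;\delta) = \zeta_\Phi(s)\bigl(\xilfphi(s;\delta)+\zeta_{X,R}(s;\delta)\bigr)
\]
expresses the relative tube zeta function as a product of two meromorphic functions in $\HH_{\sigma_0}$, one of which (the second factor) is explicitly holomorphic throughout that half-plane. First I would invoke this formula together with its stated holomorphicity claims: $\xilfphi$ is entire by Theorem~\ref{thm:tubeZetaFormula} (equivalently by the entire portion of Corollary~\ref{cor:structureOfPoles} and Lemma~\ref{lem:MellinHolo}), and $\zeta_{X,R}(\cdot;\delta)$ is holomorphic in $\HH_{\sigma_0}$ from the decay hypothesis $V_{X,R}(t)=O(t^{\dimension-\sigma_0})$ and Lemma~\ref{lem:MellinHolo}. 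Consequently any singularity of $\tubezeta_{X,\Omega}$ lying in $\HH_{\sigma_0}$ must be inherited from $\zeta_\Phi$.

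Next, I would identify the pole set $\Dd_\Phi(\HH_{\sigma_0})$ with the solution set of the complexified Moran equation $1 = \sum_{\ph\in\Phi}\lambda_\ph^\omega$, as recorded in Proposition~\ref{prop:holomorphicitySZF}. This yields the containment $\Dd_X(\HH_{\sigma_0}) \subseteq \Dd_\Phi(\HH_{\sigma_0})$ asserted in Equation~\ref{eqn:cDimsFullDim}, since by Definition~\ref{def:cDims} the complex dimensions are precisely the poles of $\tubezeta_{X,\Omega}$ in the chosen window.

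The more delicate point is the ``exactly when'' claim, which expresses that the containment may be strict: a point $\omega\in\Dd_\Phi(\HH_{\sigma_0})$ fails to be a complex dimension precisely when the holomorphic factor $\xilfphi(\,\cdot\,;\delta)+\zeta_{X,R}(\,\cdot\,;\delta)$ vanishes at $\omega$ to order at least the pole order of $\zeta_\Phi$ at $\omega$, cancelling the singularity. I would phrase this criterion via the residue: since the singularity at $\omega$ is at worst a pole of finite order, $\omega$ is a genuine pole of $\tubezeta_{X,\Omega}$ if and only if $\Res(\tubezeta_{X,\Omega}(s;\delta);\omega) \neq 0$ (or more generally the principal part is nonzero, which in the simple-pole case reduces to nonvanishing of the residue). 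I expect this bookkeeping about cancellations at a pole, rather than the containment itself, to be the only genuinely nontrivial step; the subtlety is merely that the product formula does not force nonvanishing of the second factor.

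Finally, for the concluding assertion, I would invoke the two-sided dimensional bound $\lowersimdim(\Phi) \leq \Re(\omega) \leq \simdim(\Phi)$ for every $\omega\in\Dd_\Phi(\CC)$, which follows from Proposition~\ref{prop:holomorphicitySZF} (for the upper bound, via the abscissa of convergence $D=\simdim(\Phi)$) together with the discussion following Definition~\ref{def:lowerSimDim} (for the lower bound via the Dirichlet polynomial argument of Theorem~3.6 in \cite{LapvFr13_FGCD}). Once $\sigma_0 < \lowersimdim(\Phi)$, every complex pole of $\zeta_\Phi$ lies in $\HH_{\sigma_0}$, so $\Dd_\Phi(\HH_{\sigma_0}) = \Dd_\Phi(\CC)$, completing the proof.
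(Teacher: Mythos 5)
Your proposal matches the paper's approach exactly: the paper in fact gives no separate proof of Corollary~\ref{cor:possibleCdims}, presenting it as an immediate readout of the factorization $\tubezeta_{X,\Omega}(s;\delta)=\zeta_\Phi(s)(\xilfphi(s;\delta)+\zeta_{X,R}(s;\delta))$ from Theorem~\ref{thm:tubeZetaFormula}, the holomorphicity of the second factor in $\HH_{\sigma_0}$, the identification $\Dd_\Phi=\set{\omega : 1=\sum\lambda_\ph^\omega}$ from Proposition~\ref{prop:holomorphicitySZF}, and the two-sided bound $\lowersimdim(\Phi)\le\Re(\omega)\le\simdim(\Phi)$ for the final assertion — all steps you reproduce in the same order.

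One minor remark worth keeping in mind: your parenthetical about the ``exactly when'' clause is actually a sharper observation than the paper's own phrasing. When $\zeta_\Phi$ has a pole of order $m\ge 2$ at $\omega$, it is possible for $\tubezeta_{X,\Omega}$ to retain a pole at $\omega$ while the coefficient of $(s-\omega)^{-1}$ accidentally vanishes (e.g.\ $a_{-1}g(\omega)+a_{-2}g'(\omega)=0$ with $g(\omega)\neq 0$), so the literal residue criterion is only complete for simple poles; the correct general criterion is nonvanishing of the full principal part, as you note. The paper's surrounding prose acknowledges this implicitly by speaking of vanishing ``to a degree equal to or larger than the multiplicity of the pole,'' so your reading is faithful to the intent.
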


Of note, Theorem~\ref{cor:possibleCdims} requires a suitable estimate for the remainder to produce any results. For example, the knowledge that $V_{X,R}$ is bounded yields no information, as the estimate corresponds to $\sigma_0=\dimension$ and in $\HH_\dimension$, $\zeta_\Phi$ is holomorphic since $\simdim(\Phi)\leq\dimension$. The result in general requires that there exists a screen on which $\zeta_\Phi$ and $\zeta_R$ are both jointly languid. Imposing that $\sigma_0<\lowersimdim(\Phi)$ is tantamount to imposing that the remainder is holomorphic in a half-plane which contains all of the poles of $\zeta_\Phi$, which are contained in the strip defined by $\lowersimdim(\Phi)\leq\Re(s)\leq\simdim(\Phi)$. 

It is possible to extend the result even if this is not possible, but it requires knowledge of the locations of the poles of $\zeta_\Phi$ in the form of dimension-free regions. This is easily done in the lattice case (see Definition~\ref{def:latticeDichotomy}), but is in general a much more challenging problem in the generic nonlattice case.

The best results occur if the parameter $\sigma_0$ is small; in Section~\ref{sec:appToGKFs}, our results shall apply for any order $\e>0$ sufficiently close to $0$. Informally, $\sigma_0$ corresponds to the (Minkowski) dimension of the remainder quantity: when $V_{X,R}$ scales like the volume of an epsilon neighborhood of a point, the exponent is $\dimension$ and thus $\sigma_0=0$; when $V_{X,R}$ scales like the volume of a neighborhood of a line, so the exponent is $\dimension-1$ and $\sigma_0=1$; and so forth. (Recall the discussion in Subsection~\ref{sub:tubeCodimension}.) In other words, our work determines the complex dimensions with real parts (or amplitudes) up to the real part corresponding to the dimension of (the estimate of) the remainder term.

\subsection{Explicit Formulae}
\label{sub:generalTubeFormulae}
%
%

Having established the identities for the tube zeta functions and deduced the possible complex dimensions, we may now proceed to proving explicit formulae for the relative tube functions themselves. First, though, we recall some notation from Section~\ref{sec:SFEsolutions} adapted to the present context. Let $(X,\Omega)$ be an osculant fractal drum in $\RR^\dimension$ with associated self-similar system $\Phi$. 

Firstly, define $\tubefn^{[0]}:=\tubefn$. For any $k>0$, we define the $k\th$ antiderivative $\tubefn^{[k]}$ recursively by integrating the previous antiderivative and imposing the convention that $f^{[k]}(0)=0$. That is, for $k>0$,
\[
    \tubefn^{[k]}(t) := \int_0^t \tubefn^{[k-1]}(\tau)\,d\tau.
\] 
Next, we recall that the Pochhammer symbol is defined by $(z)_w:=\Gamma(z+w)/\Gamma(w)$ for $z,w\in\CC$. In the special case that $w=k$ is a positive integer, this simplifies to $(z)_k=z(z+1)\cdots(z+k-1)$. If $w=0$, then we have that $(z)_0=1$.

Lastly, we recall an important caveat regarding the nature of the sums which will appear forthwith. Namely, they are symmetric limits and thus may have delicate convergence. Letting $\Dd\subset\CC$ denote a discrete subset with the property that for any $m\in\NN$, 
\[ 
    \Dd_m:=\set{\omega\in\Dd\suchthat |\Im(\omega)|\leq m}<\infty.
\] 
We define 
\[ 
    \sum_{\omega\in\Dd} a_\omega := \lim_{m\to\infty} \sum_{\omega\in\Dd_m}a_\omega.  
\]
In what follows, $\Dd$ will be the set of poles of the zeta function $\zeta_\Phi$ contained in a vertical strip. When the imaginary parts are bounded, the region is thus also finite. It is an elementary fact (following from the identity theorem) that a meromorphic function can have only finitely many poles in a bounded, connected region, so the sums over the set of poles of $\Dd_\Phi$ are well-defined. 

\begin{theorem}[Pointwise Explicit Tube Formula]
    \label{thm:pointwiseTubeFormula}
    \index{Tube function!Pointwise formulae}
    Let $(X,\Omega)$ be an osculant fractal drum in $\RR^\dimension$ with corresponding self-similar system $\Phi$. Let $V_{X,\Omega}$ be its relative tube function and let $V_{X,R}$ be the relative tube function of the residual set (as in Theorem~\ref{thm:inducedDecompTubes}). Suppose that as $t\to0^+$, $V_{X,R}(t)=O(t^{\dimension-\sigma_0})$. Suppose either that $\sigma_0<\lowersimdim(\Phi)$ (see Definition~\ref{def:lowerSimDim}) or that the scaling ratios of $\Phi$ are arithmetically related (see Definition~\ref{def:latticeDichotomy}). 
    \medskip 

    Then for every $k\geq 2$ in $\ZZ$ and every $t\in(0,\delta)$, we have that
    \[ 
        \tubefn^{[k]}(t) = \sum_{\omega\in\Dd_\Phi(\HH_{\sigma_0})} 
            \Res\Bigg(\cfrac{t^{\dimension-s+k}}{(\dimension-s+1)_k}\tubezeta_{X,\Omega}(s;\delta);\omega\Bigg)
            + \Rr^k(t),
    \]
    where $\tubezeta_{X,\Omega}$ is given explicitly by Theorem~\ref{thm:tubeZetaFormula}. For any $\e>0$ sufficiently small, the error term satisfies $\Rr^k(t)=O(t^{\dimension-\sigma_0-\e+k})$ as $t\to0^+$. 
\end{theorem}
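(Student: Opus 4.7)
The plan is to deduce this result as a direct specialization of the general pointwise explicit formula of Theorem~\ref{thm:pointwiseFormula}, taken with the parameters $\alpha=1$, $\beta=\dimension$, $F(t)=\tubefn(t)$, and $f(t)=t^{-\dimension}\tubefn(t)$. First, I set up the requisite scaling functional equation. By Corollary~\ref{cor:volumeScalingLaw}, the normalized function $f$ satisfies the $1$-scaling law with respect to similitudes. Combining the induced decomposition in Theorem~\ref{thm:inducedDecompTubes} (which splits $\tubefn$ into pieces indexed by $\Phi$ plus a residual contribution) with Lemma~\ref{lem:volumeScaling} and Proposition~\ref{prop:inducedSFE} shows that $f$ satisfies the scaling functional equation $f=L_\Phi[f]+R$ on $\RR^+$, where the remainder is $R(t)=t^{-\dimension}V_{X,R}(t)$. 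The hypothesis $V_{X,R}(t)=O(t^{\dimension-\sigma_0})$ as $t\to 0^+$ then translates to $R(t)=O(t^{-\sigma_0})$ as $t\to 0^+$, providing the estimate needed to invoke the admissibility results of Section~\ref{sec:generalSFEs}.

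Next I furnish an admissible screen for $R$ with respect to $L_\Phi$, treating the two cases of the hypothesis in parallel. In the first case, where $\sigma_0<\lowersimdim(\Phi)$, Theorem~\ref{thm:lowerDimAdmissibility} applies directly: the estimate on $R$ guarantees that every constant screen $S_\e(\tau)\equiv \sigma_0+\e$ with $0<\e<\lowersimdim(\Phi)-\sigma_0$ is admissible, with associated window $W_{S_\e}=\HH_{\sigma_0+\e}$. In the lattice case, I appeal instead to Theorem~\ref{thm:latticeCaseAdmissibility}, which provides that all but finitely many constant screens $S_\sigma$ with $\sigma>\sigma_0$ are admissible; thus for each sufficiently small $\e>0$ I may choose a nearby admissible constant screen with supremum $\sigma_0+\e$, again yielding window $\HH_{\sigma_0+\e}$. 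In either case the regularity condition $\beta/\alpha+1>\max(D/\alpha,\sigma_0)$ of Theorem~\ref{thm:pointwiseFormula} reduces to $\dimension+1>\max(D,\sigma_0)$ and is immediate, since $D=\simdim(\Phi)\leq\dimension$ and $\sigma_0<D\leq\dimension$.

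Applying Theorem~\ref{thm:pointwiseFormula} then produces, for every $k\geq 2$ and every $t\in(0,\delta)$,
\[
    \tubefn^{[k]}(t) = \sum_{\omega\in\Dd_\Phi(\HH_{\sigma_0+\e})} \Res\Bigg(\cfrac{t^{\dimension-s+k}}{(\dimension-s+1)_k}\,\zeta_f(s;\delta);\,\omega\Bigg) + \Rr^k(t),
\]
with error $\Rr^k(t)=O(t^{\dimension-\sigma_0-\e+k})$ as $t\to 0^+$. A direct computation from the definitions gives $\zeta_f(s;\delta)=\Mellin^\delta[t^{-\dimension}\tubefn(t)](s)=\tubezeta_{X,\Omega}(s;\delta)$, and the tube zeta function factorization in Theorem~\ref{thm:tubeZetaFormula} identifies this with the expression appearing in the statement. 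Finally, the set $\Dd_\Phi(\HH_{\sigma_0+\e})$ agrees with $\Dd_\Phi(\HH_{\sigma_0})$ once $\e$ is chosen smaller than the gap between $\sigma_0$ and the leftmost vertical line of poles of $\zeta_\Phi$ in the half-plane $\HH_{\sigma_0}$: in case one this holds for all $\e>0$ since every pole satisfies $\Re(\omega)\geq D_\ell>\sigma_0$, and in the lattice case the gap is strictly positive because poles lie on finitely many vertical lines.

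The main subtlety, rather than a true obstacle, lies in carefully coordinating the choice of $\e$ and screen in the lattice case to simultaneously avoid the exceptional vertical lines of poles of $\zeta_\Phi$ and preserve the entire pole configuration in $\HH_{\sigma_0}$. The remainder of the argument is bookkeeping: verifying the identification $\zeta_f=\tubezeta_{X,\Omega}$, checking the numerical inequalities, and reading off the final error exponent $\dimension-\sigma_0-\e+k$ from the languidity-driven estimate in Theorem~\ref{thm:pointwiseFormula}.
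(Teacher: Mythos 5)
Your proposal is correct and follows essentially the same route as the paper's own proof: apply Theorem~\ref{thm:pointwiseFormula} with $\alpha=1$, $\beta=\dimension$, and $f(t)=t^{-\dimension}\tubefn(t)$, obtain the induced scaling functional equation via Corollary~\ref{cor:inducedTubeSFE} (equivalently, via Corollary~\ref{cor:volumeScalingLaw} plus Theorem~\ref{thm:inducedDecompTubes} and Proposition~\ref{prop:inducedSFE}), handle admissibility with Theorem~\ref{thm:lowerDimAdmissibility} or Theorem~\ref{thm:latticeCaseAdmissibility} depending on the case, identify $\zeta_f=\tubezeta_{X,\Omega}$, and reindex from $\Dd_\Phi(\HH_{\sigma_0+\e})$ to $\Dd_\Phi(\HH_{\sigma_0})$ by choosing $\e$ small enough. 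One small inaccuracy: your justification of $\dimension+1>\max(D,\sigma_0)$ cites $\sigma_0<D$, which is guaranteed only in the first case (where $\sigma_0<\lowersimdim(\Phi)\leq D$); in the lattice case $\sigma_0$ need not be below $D$, though since $V_{X,R}$ is bounded one may always take $\sigma_0\leq\dimension$, so the required inequality still holds --- the paper's own proof has the same implicit gap, so this is cosmetic rather than a real defect.
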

\begin{proof}
    This theorem is a corollary of Theorem~\ref{thm:pointwiseFormula} from Chapter~\ref{chap:SFEs}. Note that $\beta=\dimension$, $\alpha=1$, and $\dimension+1>\simdim(\Phi)\geq\lowersimdim(\Phi)>\sigma_0$.

    The induced scaling functional equation, with the scaling operator $L_\Phi$, is the result of Corollary~\ref{cor:inducedTubeSFE}. The condition that $\sigma_0<\lowersimdim(\Phi)$ is a sufficient condition for the admissibility of the remainder term and any screen of the form $S_\e(\tau)\equiv \sigma_0+\e$ per Theorem~\ref{thm:lowerDimAdmissibility}, together with the presupposed estimate for $V_{X,R}$. In the event that the scaling ratios are arithmetically related, instead we may apply Theorem~\ref{thm:latticeCaseAdmissibility} to deduce admissibility of the remainder and screens $S_\e$. Note that $W_{S_\e}=\HH_{\sigma_0+\e}$ is simply a half-plane.

    In both cases, $\e$ is chosen so that $|\Re(\omega)-\sigma_0-\e|>0$ for any $\omega\in\Dd_\Phi(\HH_{\sigma_0})$. (Either we choose $\e<D_\ell-\sigma_0$ or $\e$ to be smaller than the minimum distance to any of the finitely many exceptional real parts.) By the estimate for $V_{X,R}$, we have that $\tubezeta_R$ is holomorphic in $\HH_{\sigma_0}$ and $\sup(S)=\sigma_0+\e>\sigma_0$, so $S_\e$ is contained in the half plane $\HH_{\sigma_R}=\HH_{\sigma_0}$. 

    Lastly, we note that in these specific cases, we may reindex by the set $\HH_{\sigma_0}$ rather than $\HH_{\sigma_0+\e}$ for uniformity. This follows because $\zeta_\Phi$, and thus $\tubezeta_{X,\Omega}$, has no poles with real part $\Re(s)\in(\sigma_0,\sigma_0+\e]$ by our choice of $\e$ sufficiently small.
\end{proof}

Next, we recall the theory of Schwartz distributions. For our test functions, let $\Ss(0,\delta)$ be the space of 
functions of rapid decrease, given by 
\begin{equation}
    \label{eqn:defSchwartzFns}
    \Ss(0,\delta):=\left\{\testfn\in C^\infty(0,\delta)\suchthat 
    \begin{aligned}
        &\forall m\in\ZZ,\,\forall q\in\NN,\text{ as }t\to0^+,\\ 
        &t^m\testfn^{(q)}(t)\to0 \text{ and }(t-\delta)^m\testfn^{(q)}(t)\to0\,
    \end{aligned}
    \right\}.
\end{equation}
The space of Schwartz distributions is the dual space, $\Ss'(0,\delta)$. For a distribution $F\in\Ss'(0,\delta)$ and a test function $\testfn\in\Ss(0,\delta)$, we write $F(\testfn)=\bracket{F,\testfn}$. Two distributions $F,G$ are said to be equal if for any $\testfn\in\Ss(0,\delta)$, $\bracket{F,\testfn}=\bracket{G,\testfn}$. 

Given a test function $\testfn\in\Ss(0,\delta)$ and $a>0$, define the new test function $\testfn_a(t):= a^{-1}\testfn(t/a)$. A distribution $\Rr$ is said to be $\Rr(t)=O(t^\alpha)$ as $t\to0^+$ if for all $a>0$ and for all $\testfn\in\Ss(0,\delta)$,
\begin{equation}
    \label{eqn:distRemEst2}
    \bracket{\Rr,\testfn_a(t)} = O(a^{\alpha}),
\end{equation}
as $t\to0^+$ in the usual sense. 

\begin{theorem}[Distributional Explicit Tube Formula]
    \label{thm:distTubeFormula}
    \index{Tube function!Distributional formulae}
    Let $(X,\Omega)$ be an osculant fractal drum in $\RR^\dimension$ with corresponding self-similar system $\Phi$. Let $V_{X,\Omega}$ be its relative tube function and let $V_{X,R}$ be the relative tube function of the residual set (as in Theorem~\ref{thm:inducedDecompTubes}). Suppose that as $t\to0^+$, $V_{X,R}(t)=O(t^{\dimension-\sigma_0})$. Suppose either that $\sigma_0<\lowersimdim(\Phi)$ (see Definition~\ref{def:lowerSimDim}) or that the scaling ratios of $\Phi$ are arithmetically related (see Definition~\ref{def:latticeDichotomy}). 
    \medskip

    Then for any $k\in\ZZ$, we have that for $t\in(0,\delta)$, $\tubefn^{[k]}$ satisfies
    \begin{equation}
        \label{eqn:distTubeFormula}
        \begin{split}
            \tubefn^{[k]}(t) &= \sum_{\omega\in\Dd_\Phi(\HH_{\sigma_0})} 
                \Res\Bigg(\cfrac{t^{\dimension-s+k}}{(\dimension-s+1)_k}\tubezeta_{X,\Omega}(s;\delta);\omega\Bigg)
                + \Rr^{[k]}(t),
        \end{split} 
    \end{equation}
    in the sense of distributions. See Equation~\ref{eqn:bracketIdentity2} for the explicit identity of action on test functions. Here, $\tubezeta_{X,\Omega}$ is given explicitly by Theorem~\ref{thm:tubeZetaFormula} and the distributional remainder term satisfies the estimate $\Rr(t)=O(t^{\dimension-\sigma_0-\e+k})$ as $t\to0^+$, in the sense of Equation~\ref{eqn:distRemEst2}, for any $\e>0$ sufficiently small. 
\end{theorem}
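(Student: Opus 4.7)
The plan is to derive this theorem as a direct corollary of the general distributional explicit formula established in Theorem~\ref{thm:distFormula} (see also Corollary~\ref{cor:distFormulaZero} for the case $k=0$), following the same strategy as in the pointwise counterpart Theorem~\ref{thm:pointwiseTubeFormula}. The essential task is to verify that all hypotheses of Theorem~\ref{thm:distFormula} hold, with parameters $\alpha = 1$, $\beta = N$, and $f(t) = t^{-N} V_{X,\Omega}(t)$.

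First, I would invoke Theorem~\ref{thm:inducedDecompTubes} together with the scaling law from Corollary~\ref{cor:volumeScalingLaw}, and then apply Proposition~\ref{prop:inducedSFE}, to conclude (as in Corollary~\ref{cor:inducedTubeSFE}) that $\Phi$ induces the scaling functional equation $V_{X,\Omega} = L_\Phi[V_{X,\Omega}] + V_{X,R}$ on $\mathbb{R}^+$. Since $f(t) = t^{-N}V_{X,\Omega}(t)$ and $R(t) = t^{-N}V_{X,R}(t) = O(t^{-\sigma_0})$ as $t \to 0^+$ by hypothesis, the Mellin transform $\zeta_R(s;\delta)$ is holomorphic in $\mathbb{H}_{\sigma_0}$ by Lemma~\ref{lem:MellinHolo}.

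Next, I would establish admissibility of the remainder. In the first case, $\sigma_0 < \underline{\dim}_S(\Phi)$, Theorem~\ref{thm:lowerDimAdmissibility} produces a positive $\varepsilon_0 = \underline{\dim}_S(\Phi) - \sigma_0$ such that every constant screen $S_\varepsilon(\tau) \equiv \sigma_0 + \varepsilon$ with $0 < \varepsilon < \varepsilon_0$ is admissible. In the second, arithmetically related case, Theorem~\ref{thm:latticeCaseAdmissibility} supplies the same conclusion for all sufficiently small $\varepsilon > 0$ (specifically, any $\varepsilon$ smaller than the distance from $\sigma_0$ to the nearest of the finitely many exceptional real parts). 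In either scenario, $S_\varepsilon \subset \mathbb{H}_{\sigma_0}$, and the associated window is the half-plane $W_{S_\varepsilon} = \mathbb{H}_{\sigma_0+\varepsilon}$.

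With these preliminaries in hand, Theorem~\ref{thm:distFormula} applies with $\alpha = 1$, $\beta = N$, so that $\beta/\alpha + 1 = N + 1 > \dim_S(\Phi) \geq \max(D/\alpha, \sigma_R)$. This yields that for each $k \in \mathbb{Z}$, the distribution $V_{X,\Omega}^{[k]}$ admits the representation in Equation~\ref{eqn:distTubeFormula} (with $\zeta_f(s;\delta) = \tubezeta_{X,\Omega}(s;\delta)$ by definition of the tube zeta function), where the pairing with an arbitrary test function $\testfn \in \mathcal{S}(0,\delta)$ takes the explicit form
\begin{equation}
    \label{eqn:bracketIdentity2}
    \langle V_{X,\Omega}^{[k]}, \testfn \rangle
        = \sum_{\omega \in \mathcal{D}_\Phi(\mathbb{H}_{\sigma_0+\varepsilon})}
            \Res\!\left(\frac{\Mellin[\testfn](N-s+k+1)}{(N-s+1)_k}\,\tubezeta_{X,\Omega}(s;\delta);\,\omega\right)
            + \langle \mathcal{R}^{[k]}, \testfn \rangle,
\end{equation}
with distributional remainder $\mathcal{R}(t) = O(t^{N - \sigma_0 - \varepsilon + k})$ as $t \to 0^+$ in the sense of Equation~\ref{eqn:distRemEst2}. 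Finally, since the choice of $\varepsilon$ ensures that $\zeta_\Phi$ (and hence $\tubezeta_{X,\Omega}$) has no poles with real part in the interval $(\sigma_0, \sigma_0 + \varepsilon]$, the indexing set $\mathcal{D}_\Phi(\mathbb{H}_{\sigma_0+\varepsilon})$ coincides with $\mathcal{D}_\Phi(\mathbb{H}_{\sigma_0})$, yielding the stated formula. The only conceptual obstacle is really the verification of admissibility of the remainder on a shared screen, and this has already been packaged into Theorems~\ref{thm:lowerDimAdmissibility} and~\ref{thm:latticeCaseAdmissibility}; the remainder of the argument is a careful translation of the abstract result into the geometric setting.
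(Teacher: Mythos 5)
Your proposal is correct and follows essentially the same route as the paper: reduce to the general Theorem~\ref{thm:distFormula} with $\alpha=1$, $\beta=\dimension$, using Corollary~\ref{cor:inducedTubeSFE} for the scaling functional equation and Theorems~\ref{thm:lowerDimAdmissibility} / \ref{thm:latticeCaseAdmissibility} for admissibility of the remainder on constant screens $S_\e$, followed by the reindexing from $\Dd_\Phi(\HH_{\sigma_0+\e})$ to $\Dd_\Phi(\HH_{\sigma_0})$. The one slight imprecision is the chain $N+1>\dim_S(\Phi)\geq\max(D/\alpha,\sigma_R)$, whose middle inequality need not hold in the lattice case (where $\sigma_0$ may exceed $\dim_S(\Phi)$); what actually matters, and always holds since both are at most $N$, is that $N+1>\max(\dim_S(\Phi),\sigma_R)$.
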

Most precisely, Equation~\ref{eqn:distTubeFormula} means that for any $\testfn\in\Ss(0,\delta)$, 
\begin{equation}
    \label{eqn:bracketIdentity2}
    \begin{split}
        \bracket{\tubefn^{[k]},\testfn} &= \sum_{\omega\in\Dd_\Phi(\HH_{\sigma_0})} 
        \Res\Bigg(\cfrac{\Mellin[\testfn]({\dimension-s+k+1})}{(\dimension-s+1)_k}\tubezeta_{X,\Omega}(s);\omega\Bigg)
        + \bracket{\Rr^{[k]},\testfn}.
    \end{split}
\end{equation}

\begin{proof}
    This theorem is a corollary of Theorem~\ref{thm:distFormula} from Chapter~\ref{chap:SFEs}. The application of the theorem follows the same comments and justifications as in the proof of Theorem~\ref{thm:pointwiseTubeFormula} in regards to the induced scaling functional equation, values of the parameters, admissibility of the remainder, and so forth. 
\end{proof}

\begin{corollary}[Distributional Explicit Tube Formula, $k=0$]
    \label{cor:distTubeFormulaZero}
    \index{Tube function!Distributional formulae}
    Let $(X,\Omega)$ be an osculant fractal drum in $\RR^\dimension$ with corresponding self-similar system $\Phi$ be such that the hypotheses of Theorem~\ref{thm:distTubeFormula} holds. 
    \medskip

    Then we have that for $t\in(0,\delta)$, $\tubefn$ satisfies
    \begin{equation}
        \label{eqn:distTubeFormulaZero}
        \begin{split}
            \tubefn(t) &= \sum_{\omega\in\Dd_\Phi(\HH_{\sigma_0})} 
                \Res\Bigg({t^{\dimension-s}}\tubezeta_{X,\Omega}(s;\delta);\omega\Bigg)
                + \Rr^{[k]}(t),
        \end{split} 
    \end{equation}
    in the sense of distributions, and the remainder satisfies the estimate $\Rr(t)=O(t^{\dimension-\sigma_0-\e+k})$ as $t\to0^+$, in the sense of Equation~\ref{eqn:distRemEst2} with $k=0$, for any $\e>0$ sufficiently small.
    \medskip

    If we further assume that the poles of $\zeta_\Phi$ in $\HH_{\sigma_0}$ are simple, then Equation~\ref{eqn:distTubeFormulaZero} simplifies to 
    \[
        \tubefn(t) = \sum_{\omega\in\Dd_\Phi(\HH_{\sigma_0})} 
            \Res(\tubezeta_{X,\Omega}(s;\delta);\omega)\,t^{\dimension-\omega} + \Rr(t).
    \]
\end{corollary}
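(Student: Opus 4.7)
The plan is to obtain Corollary~\ref{cor:distTubeFormulaZero} as an immediate specialization of Theorem~\ref{thm:distTubeFormula} to the case $k=0$, followed by an elementary residue computation under the simple pole hypothesis. Since the hypotheses of the corollary are exactly those of Theorem~\ref{thm:distTubeFormula}, the induced scaling functional equation, admissibility of the remainder $V_{X,R}$, and the existence of a suitable admissible screen $S_\e(\tau)\equiv \sigma_0+\e$ (for $\e>0$ sufficiently small) all carry over verbatim, without any further verification needed.

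First, I would apply Theorem~\ref{thm:distTubeFormula} with $k=0$. Observe that the Pochhammer prefactor collapses, since by convention $(\dimension-s+1)_0 = \Gamma(\dimension-s+1)/\Gamma(\dimension-s+1) = 1$, and that $\tubefn^{[0]} = \tubefn$ and $\Rr^{[0]} = \Rr$ by definition. Consequently, Equation~\ref{eqn:distTubeFormula} reduces directly to Equation~\ref{eqn:distTubeFormulaZero}, with the same distributional meaning: for every $\testfn \in \Ss(0,\delta)$,
\[
    \bracket{\tubefn,\testfn} = \sum_{\omega\in\Dd_\Phi(\HH_{\sigma_0})}
        \Res\bigl(\Mellin[\testfn](\dimension-s+1)\tubezeta_{X,\Omega}(s;\delta);\omega\bigr) + \bracket{\Rr,\testfn}.
\]
The remainder estimate $\Rr(t) = O(t^{\dimension-\sigma_0-\e})$ as $t\to 0^+$, in the sense of Equation~\ref{eqn:distRemEst2}, follows directly by setting $k=0$ in the corresponding estimate of Theorem~\ref{thm:distTubeFormula}.

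For the simplification under the simple pole assumption, I would proceed by explicit residue calculation. By Theorem~\ref{thm:tubeZetaFormula}, the zeta function factorizes as $\tubezeta_{X,\Omega}(s;\delta) = \zeta_\Phi(s)(\xilfphi(s;\delta)+\zeta_{X,R}(s;\delta))$, where $\xilfphi$ is entire and $\zeta_{X,R}$ is holomorphic in $\HH_{\sigma_0}$. Hence the poles of $\tubezeta_{X,\Omega}$ in $\HH_{\sigma_0}$ come solely from $\zeta_\Phi$, with the same multiplicities. When $\omega\in\Dd_\Phi(\HH_{\sigma_0})$ is a simple pole of $\zeta_\Phi$, it is also a simple pole of $\tubezeta_{X,\Omega}$. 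Since the factor $t^{\dimension-s}$ is holomorphic and nonzero at $s=\omega$, the residue of the product factors as
\[
    \Res(t^{\dimension-s}\tubezeta_{X,\Omega}(s;\delta);\omega)
        = t^{\dimension-\omega}\Res(\tubezeta_{X,\Omega}(s;\delta);\omega),
\]
which yields the clean sum in the final display of the corollary.

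I do not expect any serious obstacle here, as the corollary is genuinely a direct specialization. The only minor technical point worth verifying is that taking $k=0$ in Theorem~\ref{thm:distTubeFormula} is legitimate, which it is: the parent theorem was stated for any $k\in\ZZ$ precisely to accommodate this case, since the distributional formulation (in contrast to the pointwise Theorem~\ref{thm:pointwiseTubeFormula}, which requires $k\geq 2$ to offset the $\kappa=0$ languidity exponent) imposes no lower bound on $k$.
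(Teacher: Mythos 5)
Your proof is correct and follows the same route the paper intends: direct specialization of Theorem~\ref{thm:distTubeFormula} to $k=0$ (where $(\dimension-s+1)_0=1$, $\tubefn^{[0]}=\tubefn$, $\Rr^{[0]}=\Rr$), plus the standard simple-pole residue factorization $\Res(t^{\dimension-s}\tubezeta_{X,\Omega}(s;\delta);\omega)=t^{\dimension-\omega}\Res(\tubezeta_{X,\Omega}(s;\delta);\omega)$ using the entirety of $t^{\dimension-s}$ at $s=\omega$. You also quietly fix the typographical artifacts in the corollary's statement (the stray $\Rr^{[k]}$ and the $+k$ in the remainder exponent), correctly reading them as $\Rr$ and $O(t^{\dimension-\sigma_0-\e})$ for $k=0$.
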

\section{Examples: Generalized von~Koch Fractals}
\label{sec:appToGKFs}
%
%

In this section, we shall apply the results of Section~\ref{sec:fractalSFEs} to generalized von Koch fractals (or GKFs.) See Section~\ref{sec:fractalExamples} for the definitions and properties concerning these fractals. In particular, we provide a scaling functional equation for the tubular neighborhood volume and estimates of the error term that allow us to describe the possible complex dimensions of these GKFs. 

A tube formula (see Definition~\ref{def:tubeFunction}) for the standard von Koch snowflake was established in the work of Lapidus and Pearse \cite{LP06}. They computed that the volume of an inner epsilon neighborhood of the von Koch snowflake takes the form
\[ V(\e) = \sum_{n\in\ZZ}\phi_n \e^{2-D-in P} + \sum_{n\in\ZZ}\psi_n \e^{2-in P}, \]
where $D=\log_3 4$ is the Minkowski dimension of the snowflake, $P=2\pi/\log 3$ is the multiplicative period of the oscillations, and with $\psi_n,\phi_n$ as constants depending only on $n$ \cite{LP06}. As a consequence, they deduced the possible complex (fractal) dimensions of the von Koch snowflake. These complex dimensions encode both the amplitude and period of geometric oscillations in a fractal, and are pivotal to establishing such tube formulae in the work of Lapidus and collaborators; see \cite{LapvFr13_FGCD,LRZ17_FZF} and the references therein. 

In \cite{Hof25}, we extended this analysis to general $(n,r)$-von Koch snowflakes, where $r$ is chosen to be sufficiently small, using the methods which have been generalized here in Chapter~\ref{chap:SFEs}. In particular, we deduced the possible complex dimensions of these GKFs with positive real part. In this setting, the lattice/nonlattice dichotomy arises depending on whether the scaling ratio $r$ and its conjugate factor $\ell=\frac12(1-r)$ are arithmetically related or not (see Definition~\ref{def:latticeDichotomy}). For the standard $(3,\frac13)$-von Koch snowflake, the ratios are the same: $r=\ell$, and thus the fractal falls under the lattice case. The emergence of the nonlattice case is unique to the generalizations of the von Koch snowflake. 

\subsection{Geometric Preliminaries}
%
%

Let $\Knr$ be a $(n,r)$-von Koch snowflake which is a simple, closed curve.\footnote{See Proposition~\ref{prop:selfAvoid} for a sufficient condition based on $r$ and $n$.} By the Jordan curve theorem, a simple closed curve bisects the plane into two connected components, one of which is bounded and the other unbounded. Let us denote by $\Omega$ the bounded component; then $\partial\Omega=\Knr$. By Proposition~\ref{prop:selfAvoid}, we have an explicit upper bound on $r$ which is a sufficient condition for this to hold.

\begin{figure}[t]
    \centering
    \subfloat{\includegraphics[width=4cm]{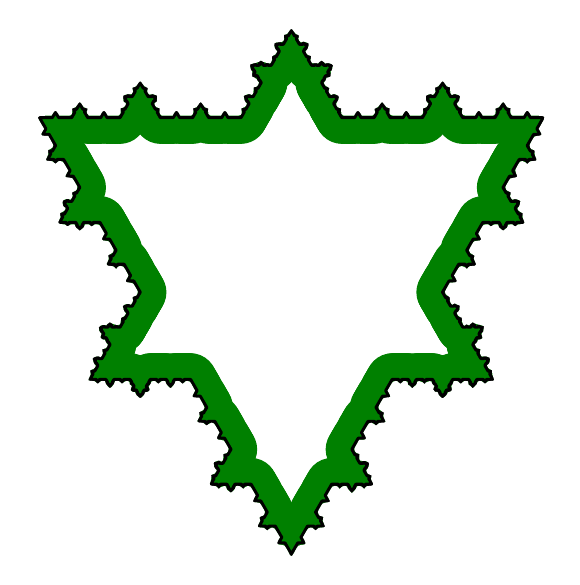}}
    \qquad
    \subfloat{\includegraphics[width=4cm]{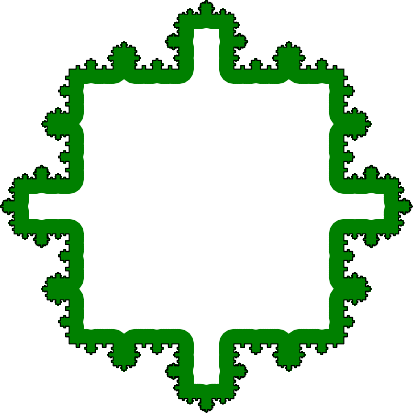}}
    \caption[Inner tubular neighborhoods of some fractal snowflakes]{Inner tubular neighborhoods of the fractals $K_{3,\frac15}$ (left) and $K_{4,\frac15}$ (right).}
    \label{fig:innerTubeNbds}
\end{figure}

Note that $(\Knr,\Omega)$ is readily seen to be a relative fractal drum since the sets are both bounded. We shall be interested in describing the inner tubular neighborhood of $\Knr$, which is the tube function of $\Knr$ relative to the set $\Omega$. See Figure~\ref{fig:innerTubeNbds} for two such examples.

We note that by the $n$-fold symmetry of these regions, it is enough to compute a formula for the portion of the neighborhood contained in a sector of angle $2\pi/n$, chosen so that the rays defining the section pass through the edges of the regular $n$-gon used to define the generalized von Koch fractal. So, let $S$ be any one of the $n$ congruent (open) sectors defined by the center of $\Knr$ and two adjacent vertices of this underlying regular $n$-gon. For simplicity of notation, we define the new relative tube function by 
\[ V_{K,U}(t) := V_{\Knr,\Omega\cap S}(t) = \frac1n V_{\Knr,\Omega}(t),  \]
defined for the RFD $(K,U)=(\Knr,\Omega\cap S)$. Note that $K$ is precisely an $(n,r)$-von Koch curve. Let $\Phinr$ be the explicit self-similar system defined in Equation~\ref{eqn:defGKCsystem} and, up to application of an isometry, we may without loss of generality assume that $K$ is the attractor of $\Phinr$. 

\begin{figure}[t]
    \centering
    \includegraphics[width=6cm]{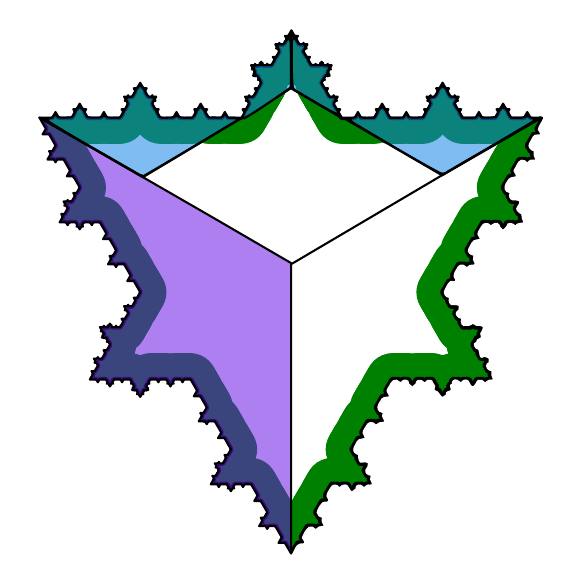}
    \caption[A partitioned inner tubular neighborhood of a generalized von Koch fractal]{An inner tubular neighborhood of a $(3,\frac15)$-von Koch snowflake. A relative neighborhood in one sector of angle $2\pi/3$ is shaded in the bottom left and images of the relative neighborhood in the top third, under mappings of the self-similar system, are shaded in the top. Note the leftover portion of the inner tubular neighborhood contained in the residual set.}
    \label{fig:partitionGKF}
\end{figure}

If we can show that this is an osculant fractal drum and appropriately estimate the remainder, then we will be able to apply the results of Section~\ref{sec:fractalSFEs}. We may partition $K_t\cap U$ into $2+(n-1)$ self-similar copies of itself as well as $4$ pieces contained within congruent triangles and $2$ pieces which are exactly circular sectors. See Figure~\ref{fig:partitionGKF} for a depiction of such a decomposition. This partitioning leads to the following theorem.

\begin{theorem}[SFE for an $(n,r)$-von Koch Curve]
    \label{thm:vonKochSFE}
    Let $\Knr$ be an $(n,r)$-von Koch snowflake boundary satisfying the self-avoidance criterion in Proposition~\ref{prop:selfAvoid} and let $\Omega$ be the interior region defined by $\Knr$ (i.e. the bounded component of $\RR^2\setminus\Knr$). Let $S$ be a sector of angle $2\pi/n$ centered in the regular $n$-gon used to define $\Knr$ and with edges passing through two adjacent vertices of this polygon. 
    \medskip

    Then $(K,U)=(\Knr,\Omega\cap S)$ is an osculant fractal drum with respect to the self-similar system $\Phinr$. For any $t\geq 0$, we that $V_{K,U}$ satisfies the scaling functional equation 
    \begin{equation}
        \label{eqn:volumeFunctionalEq}
        t^{-2}V_{K,U}(t) = 2(t/\ell)^{-2} V_{K,U}(t/\ell) + (n-1)(t/r)^{-2}V_{K,U}(t/r) + t^{-2}V_{K,R}(t),
    \end{equation}
    where $\ell=\frac12(1-r)$ is the conjugate scaling factor to $r$. We have that $V_{K,R}(t)=O(t^2)$ as $t\to 0^+$ with the explicit estimate that $V_{K,R}(t)(t)\leq (2\cot(\theta_n/2)+\theta_n)t^2$ for any $t\geq0$. Here, $\theta_n=\frac{2\pi}n$ is the central angle of a regular $n$-gon.   
\end{theorem}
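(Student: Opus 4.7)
The plan is to verify in sequence that (i) $(K,U)$ is an osculant fractal drum relative to $\Phi_{n,r}$, (ii) the stated scaling functional equation then follows by invoking Corollary~\ref{cor:inducedTubeSFE}, and (iii) the explicit bound on $V_{K,R}$ can be read off from an elementary planar description of the residual set $R$.

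First, I would establish that $U=\Omega\cap S$ is an osculating set for $\Phi_{n,r}$ in the sense of Definition~\ref{def:oscRFD}. The sector $S$ is chosen so that its two bounding rays pass through adjacent vertices of the underlying regular $n$-gon and bisect the interior angles there, whence the $n$-fold symmetry of $\Knr$ makes $U$ exactly $1/n$ of the interior $\Omega$. Under the self-avoidance condition of Proposition~\ref{prop:selfAvoid}, the images $\varphi[U]$ for $\varphi\in\Phi_{n,r}$ are pairwise disjoint and each contained in $U$; this is the open set condition. The osculating condition requires that for every $y\in\varphi[U]$ the nearest point of $K$ to $y$ lies in $\varphi[K]$, not in some other piece $\psi[K]$. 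This I would argue geometrically: each of the pieces $\varphi_L[K]$, $\varphi_R[K]$, and $\psi_k[K]$ lies on a line segment or polygonal bump separating $\varphi[U]$ from the other iterates, and the sufficient smallness of $r$ in Proposition~\ref{prop:selfAvoid} guarantees that these separating segments occlude any shorter path from $\varphi[U]$ to $\psi[K]$ for $\psi\neq\varphi$.

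Second, since $\Phi_{n,r}$ consists of two similitudes of ratio $\ell=\tfrac{1}{2}(1-r)$ (namely $\varphi_L,\varphi_R$) and $n-1$ similitudes of ratio $r$ (namely $\psi_1,\ldots,\psi_{n-1}$), its associated scaling operator is $L_{\Phi_{n,r}} = 2 M_\ell + (n-1) M_r$. Having verified that $(K,U)$ is an osculant fractal drum, Corollary~\ref{cor:inducedTubeSFE} immediately yields
\[
V_{K,U}(t) = 2\,V_{K,U}(t/\ell) + (n-1)\,V_{K,U}(t/r) + V_{K,R}(t)
\]
on $\RR^+$, and multiplying both sides by $t^{-2}$ produces Equation~\ref{eqn:volumeFunctionalEq} in the normalized form compatible with the $1$-scaling law of Corollary~\ref{cor:volumeScalingLaw}.

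Finally, the quantitative estimate $V_{K,R}(t)\le (2\cot(\theta_n/2)+\theta_n)\,t^2$ is the genuinely new content and will likely be the main obstacle. The residual set $R = U\setminus\bigcup_{\varphi\in\Phi_{n,r}}\varphi[U]$ is supported near the two vertices of the regular $n$-gon that sit on the boundary of $S$, together with the interface between the two bounding radii and the outer fringe. I would cover $R$ explicitly by four congruent right-triangular regions (two at each vertex, each with one leg along an edge of $S$ and vertex angle $\theta_n/2$) and two circular-sector caps that round off the vertices. Intersecting each of these pieces with $K_t$ and bounding the result by the largest possible $t$-slab at the corner gives the four triangular contributions $4\cdot\tfrac{1}{2}t\cdot t\cot(\theta_n/2) = 2 t^2\cot(\theta_n/2)$ and the two sector contributions totaling $\theta_n t^2$. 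The delicate point will be to verify that these elementary regions genuinely contain $K_t\cap R$ uniformly in $t$, which in turn relies on the self-avoidance bound of Proposition~\ref{prop:selfAvoid} to prevent the fractal fringes of neighboring $\varphi[K]$ pieces from intruding into the corner region; once that containment is justified, the stated estimate, and in particular the $O(t^2)$ decay, follows by summing the elementary areas.
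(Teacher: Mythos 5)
Your proposal follows the paper's proof in its overall architecture: verify that $(K,U)$ is an osculant fractal drum for $\Phi_{n,r}$, invoke Corollary~\ref{cor:inducedTubeSFE} to obtain the scaling functional equation, and then bound $V_{K,R}$ by a cover of $K_t\cap R$ by elementary triangles and circular sectors. The first two steps match the paper closely; for the osculating condition, the paper argues via the reflection symmetry across the lines separating the images $\psi_j[K]$ and $\psi_{j+1}[K]$ plus a direct geometric argument for $\phi_L,\phi_R$, which is essentially what you gesture at with the ``separating segments occlude any shorter path'' claim, though you should spell out that it is the mirror symmetry of adjacent pieces about the common dividing ray that forces $d(\cdot,K)\equiv d(\cdot,\psi_j[K])$ in each $\psi_j[U]$.

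There is, however, a genuine geometric error in your estimate of $V_{K,R}(t)$. You describe $K_t\cap R$ as concentrated ``near the two vertices of the regular $n$-gon that sit on the boundary of $S$,'' i.e.\ near the endpoints of the curve $K$. But those two corners of $U$ are in fact covered by $\phi_L[U]$ and $\phi_R[U]$ respectively: the map $\phi_L=S_\ell$ fixes the endpoint $0$ and sends the boundary ray of $S$ through $p_k$ into itself, so $\phi_L[U]$ fills the full corner of $U$ adjacent to $p_k$, and symmetrically for $\phi_R$ near $p_{k+1}$. Hence $K_t\cap R$ is empty near the $n$-gon vertices. The actual contributions come from the interior junction points of the von Koch curve---where $\phi_L[K]$ meets $\psi_1[K]$, where $\psi_j[K]$ meets $\psi_{j+1}[K]$, and where $\psi_{n-1}[K]$ meets $\phi_R[K]$---which is what Figure~\ref{fig:trianglet} in the paper depicts (triangles at $x=\ell$ and $x=\ell+r$ along $C_{3,1/3}$, not at $x=0$ or $x=1$). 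Your final formula $(2\cot(\theta_n/2)+\theta_n)t^2$ happens to coincide with the paper's bound, but the argument as you've written it is estimating the area of regions that do not actually contain $K_t\cap R$, so the containment step you flag as ``the delicate point'' would fail as stated. To repair this, relocate the four triangles and two sectors to the internal seams of the self-similar partition and argue the containment there; the angles and dimensions are then as you computed.
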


\begin{proof}
    Without loss of generality, we may suppose that $K:=\Knr\cap S$ is exactly the $(n,r)$-von Koch curve $\Cnr$; in general, $K$ is isometric to this set. In this setting, we have explicitly the similitudes for the self-similar system $\Phinr=\set{\phi_L,\phi_R,\psi_k, k=1,...,n-1}$ which defines $\Cnr$ as in Definition~\ref{def:vkCurve}. (In the general setting, simply compose these maps $\ph\in\Phinr$ with the isometry between the sets.) 

    By definition, $K$ is the attractor of the self-similar system $\Phi$ in Definition~\ref{def:vkCurve}. There are two similitudes $\phi_L$ and $\phi_R$ with scaling ratios $\ell:=\frac12(1-r)\in(0,1)$ and $n-1$ similitudes $\psi_k$ with scaling ratios $r\in(0,1)$. So, we define the scaling operator $L=L_{n,r}$ by $L:= 2M_\ell + (n-1)M_r.$ 

    Let $U=\Omega\cap S$. Then we have that $U$ is an osculating set for $\Phi$. Indeed, the collection of sets $\set{\phi_L[U],\phi_R[U],\psi_k[U],k=1,...,n-1}$ is readily seen to be pairwise disjoint and each set is contained within $U$ itself. $(K,U)$ is a readily seen to be a relative fractal drum since $K$ is contained in the bound of $U$ and $U$ is a bounded, measurable set. 
    
    The osculating condition may be seen from the symmetry involved in the construction of $K$. In particular, we shall use the property that if $X$ and $X'$ are two sets in $\RR^2$ related by reflection about the line $L$, then the set of points which are equidistant to $X$ and $X'$ must lie on $L$. We note that by choice of $U$, its images $\psi_k[U]$ partition a regular $n$-gon scaled by a factor of $r$ about the lines of symmetry passing from the center of the shape to its vertices. Because the adjacent images $\psi_j[K]$ and $\psi_{j+1}[K]$ are reflections of each other about these lines, it follows that $d(\cdot,K)=\min_{\phi\in\Phi}d(\cdot,\phi[K])$ can only change form across these lines by continuity of the distance functions. In other words, it follows that $d(\cdot,K)\equiv d(\cdot,\psi_j[K])$ in each $\psi_j[U]$, $j=1,...,n-1$. 

    Now, the other two iterates on the edges are simpler. Indeed, we have that $d(\cdot,K)\equiv d(\cdot,\phi_L[K])$ in $\phi_L[U]$ (resp. $d(\cdot,K)\equiv d(\cdot,\phi_R[K])$ in $\phi_R[U]$) is clear from the geometry of $K$ since a ball $B_\e(y)$ for $y\in\phi_L[U]$ (resp. $y\in\phi_R[U]$) will intersect with $\phi_L[K]$ (resp. $\phi_R[K])$ before ever reaching $\psi_1[K]$ (resp. $\psi_{n-1}[K]$). 

    It follows by Corollary~\ref{cor:inducedTubeSFE} that $t^{-2}V_{K,U}(t)$ satisfies the scaling functional equation 
    \[ t^{-2}V_{K,U}(t) = 2(t/\ell)^{-2} V_{K,U}(t/\ell) + (n-1)(t/r)^{-2}V_{K,U}(t/r) + t^{-2}V_{K,R}(t), \]
    where $R=U\setminus(\cup_{\phi\in\Phi} \phi[U])$ and $V_{K,R}(t) = \measure{K_t\cap R}$. We note that $K_t\cap R$ may be partitioned into two sectors of angle $\theta_n$ and length $\e$ and four congruent pieces contained in triangles of height $t$ and width $t\cot(\theta_n/2)$. See Figure~\ref{fig:trianglet} for a depiction of two such triangles. Therefore, we may write that
    \[ R(\e) := V_{K,R}(t) \leq \theta_nt^2+2t^2\cot(\theta_n/2). \]
\end{proof}

\begin{figure}[t]
    \centering
    \includegraphics[width=0.6\textwidth,trim=10 20 10 0, clip]{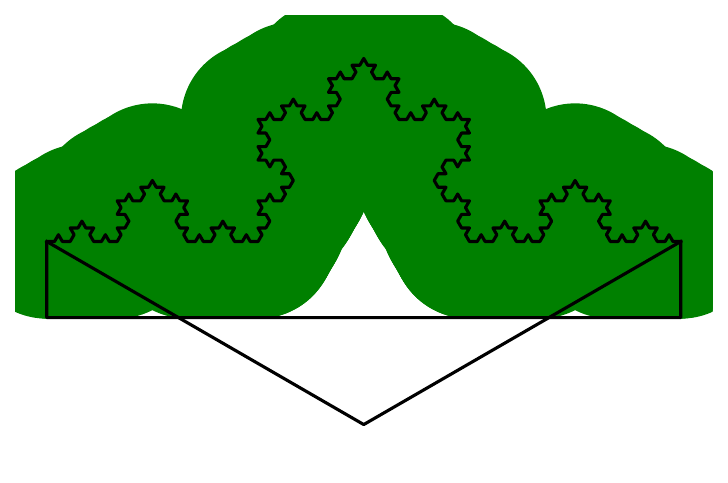}
    \caption[A portion of the tubular neighborhood of the von Koch curve]{A portion of the tubular neighborhood of the von Koch curve $C_{3,1/3}$ is depicted. In particular, note the regions contained in the triangles of height $t$ and width $t\cot(\theta_n/2)$, where $\theta_n=2\pi/n$, on the left and the right of the figure. These are used as part of an estimate the volume of the relative tubular neighborhood of a von Koch snowflake contained in the residual set.}
    \label{fig:trianglet}
\end{figure} 

Because we have that $V_{\Knr,\Omega}(t)=nV_{K,U}(t)$, we obtain the following scaling functional equation for the relative fractal drum of the boundary of the fractal snowflake relative to its interior as an immediate corollary. Note that the remainder term is also scaled by a factor of $n$ but otherwise satisfies the same estimate. 

\begin{corollary}[SFE for an $(n,r)$-von Koch Snowflake]
    \label{cor:fullVonKochSFE}
    Let $\Knr$ be an $(n,r)$-von Koch Snowflake boundary satisfying the self-avoidance criterion in Proposition~\ref{prop:selfAvoid} and let $\Omega$ be the interior region defined by $\Knr$ (i.e. the bounded component of $\RR^2\setminus\Knr$). 
    \medskip 

    Then the tube function $V_{\Knr,\Omega}$ of the relative fractal drum $(\Knr,\Omega)$ satisfies the scaling functional equation
    \[ t^{-2}V_{\Knr,\Omega}(t) = L_\Phinr[t^{-2}V_{\Knr,\Omega}(t)](t) + t^{-2}V_{\Knr,R}(t) \]
    where $\Phinr$ is as in Equation~\ref{eqn:defGKCsystem}. The remainder term is continuous and satisfies the estimate $|V_{\Knr,R}(t)| \leq (2n\cot(\pi/n)+2\pi)t^2$ as $t\to0^+$. 
\end{corollary}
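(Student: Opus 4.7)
The plan is to reduce the snowflake equation to Theorem~\ref{thm:vonKochSFE} via the $n$-fold rotational symmetry of $\Knr$ about the center of the underlying regular $n$-gon. Let $S_1,\ldots,S_n$ be the $n$ congruent open sectors of angle $2\pi/n$ based at this center and whose bounding rays pass through the vertices of the $n$-gon, so that each $S_j$ contains exactly one $(n,r)$-von Koch curve $K_j \subset \Knr$ and $\Omega$ decomposes (up to a set of planar measure zero) as the disjoint union of the pieces $U_j := \Omega \cap S_j$. By rotational symmetry, each pair $(K_j, U_j)$ is isometric to the pair $(K,U) = (\Cnr, \Omega \cap S)$ appearing in Theorem~\ref{thm:vonKochSFE}, so $V_{K_j, U_j}(t) = V_{K,U}(t)$ for all $j$ and all $t \geq 0$.

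The next step, and the main geometric point to verify, is the identity
\[
V_{\Knr, \Omega}(t) \;=\; \sum_{j=1}^{n} V_{K_j, U_j}(t) \;=\; n\, V_{K,U}(t).
\]
The second equality is immediate from the previous paragraph; the first is the claim that within each sector piece $U_j$, points are at least as close to $K_j$ as to the remaining curves $K_i$ ($i \neq j$), so that the distance function to $\Knr$ coincides with the distance function to $K_j$ on $U_j$. I would argue this by the same mirror-symmetry argument used in the proof of Theorem~\ref{thm:vonKochSFE}: the bounding rays of the sectors are precisely axes of reflection exchanging $K_j$ and its neighbor $K_{j\pm 1}$, so the set where $d(\,\cdot\,,K_j) = d(\,\cdot\,,K_{j\pm 1})$ must contain these rays, and continuity then forces $d(\,\cdot\,, \Knr) = d(\,\cdot\,, K_j)$ throughout $U_j$. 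Disjoint additivity of Lebesgue measure then gives the sum decomposition of $V_{\Knr,\Omega}$.

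With this identity in hand, the scaling functional equation is obtained purely algebraically. Setting $h(t) := t^{-2} V_{K,U}(t)$ and $g(t) := t^{-2} V_{\Knr,\Omega}(t) = n\,h(t)$, Theorem~\ref{thm:vonKochSFE} yields
\[
h(t) \;=\; 2\, h(t/\ell) + (n-1)\, h(t/r) + t^{-2} V_{K,R}(t) \;=\; L_\Phinr[h](t) + t^{-2} V_{K,R}(t),
\]
using the definition of $L_\Phinr = 2 M_\ell + (n-1) M_r$ and the convention $M_\lambda[h](t) = h(t/\lambda)$. Multiplying by $n$ and invoking linearity of $L_\Phinr$ gives the stated equation with $V_{\Knr,R}(t) = n\, V_{K,R}(t)$. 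The remainder estimate then follows by scaling the bound $V_{K,R}(t) \leq (2\cot(\pi/n) + 2\pi/n)\, t^2$ of Theorem~\ref{thm:vonKochSFE} by $n$, which produces exactly $(2n\cot(\pi/n) + 2\pi)\, t^2$, and continuity of $V_{\Knr,R}$ is inherited from continuity of $V_{K,R}$. The main obstacle is the geometric verification in the previous paragraph, but for values of $r$ small enough that Proposition~\ref{prop:selfAvoid} applies, this reduces to the symmetry argument already invoked for the single-curve case.
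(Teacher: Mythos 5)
Your proposal is correct and follows essentially the same route as the paper: reduce to a single sector via the $n$-fold symmetry, invoke Theorem~\ref{thm:vonKochSFE}, and multiply through by $n$ using linearity of $L_\Phinr$. The paper presents the corollary as immediate from the identity $V_{\Knr,\Omega}(t) = n\,V_{K,U}(t)$; you merely supply the symmetry argument justifying that identity, which the paper leaves implicit.
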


Next, we consider the lower similarity dimension of generalized von Koch snowflakes. These depend only on $n$ and $r$, since $\ell=(1-r)/2$. Since our remainder estimate is of the form $t^{-2}V_{K,R}(t)=O(t^{0})$ (as $t\to0^+$), in order to apply the results of Section~\ref{sec:fractalSFEs} without further assumption, we need information about the lower similarity dimension, $D_\ell=\lowersimdim(\Phi)$. Because $\sigma_0=0$ and $r<\ell$ (or in the case when $r=\ell=\frac13$), it turns out that whether or not $D_\ell>0$ is independent of the value of $r$, depending only one the multiplicities, and thus only on $n$. We have that when $n\geq5$, $D\ell>0$, when $n=4$, $D_\ell=0$, and when $n=3$, $D_\ell<0$. 

In the latter two cases, we will need to assume that the real parts of the poles of $\zeta_\Phi$ are not dense in a right hand neighborhood of the origin so that screens of the form $S_\e(\tau)=\e$ exist on which $\zeta_\Phi$ is jointly languid with the remainder zeta function. Assuming that the ratios are arithmetically related, viz. assuming the lattice case, is sufficient. Notably, in the case of GKFs, it is actually possible to show that the remainder admits a meromorphic extension to the entire complex plane with a (jointly) strongly languid remainder zeta function, in which case this assumption may be dispensed with. This will be the subject of a future work. 

\begin{lemma}[Sufficient Lower Dimension Bounds]
    \label{lem:lowerDimBoundsGKF}
    Let $n\geq3$, $r\in(0,\frac13]$, and define $\ell=(1-r)/2$. Let $\Phi$ a self similar system having scaling ratios $r$ with multiplicity $n-1$ and $\ell$ with multiplicity $2$, such as $\Phinr$ as in Equation~\ref{eqn:defGKCsystem}. Denote by $D_\ell=\lowersimdim(\Phi)$. Then we have the following.
    \begin{align*}
        \text{If }n=3,\,D_\ell<0. && \text{If }n=4,\,D_\ell=0. && \text{If }n\geq5,\,D_\ell>0.
    \end{align*}
\end{lemma}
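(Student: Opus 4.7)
The plan is a direct evaluation of Definition~\ref{def:lowerSimDim} at $t = 0$. First I would identify the smallest scaling ratio of $\Phi$: since $r \in (0, 1/3]$, a quick calculation gives $\ell = (1-r)/2 \geq 1/3 \geq r$, so (in the generic case $r < 1/3$) the two distinct ratios satisfy $\ell > r$. Thus $r_M = r$ with multiplicity $m_M = n - 1$, and the remaining unique ratio is $r_1 = \ell$ with multiplicity $m_1 = 2$. The boundary case $r = 1/3$ degenerates to a single ratio and is excluded anyway by the self-avoidance criterion of Proposition~\ref{prop:selfAvoid} whenever $n \geq 4$; the residual $n = 3$ classical case can be set aside separately.

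With $r_M$ and $r_1$ identified, Equation~\ref{eqn:defLowerDim} reads
\[
    p(t) := \frac{1}{n-1}\, r^{-t} + \frac{2}{n-1}\,(\ell/r)^t = 1,
\]
and $D_\ell$ is the unique real root. I would then invoke the monotonicity argument already present in the paper's discussion following Equation~\ref{eqn:defLowerPoly}: since both bases $r^{-1}$ and $\ell/r$ strictly exceed $1$, the function $p$ is a sum of strictly increasing exponentials with positive coefficients, hence strictly increasing on $\RR$ with range $(0,\infty)$. Consequently, the sign of $D_\ell$ is determined entirely by the sign of $p(0) - 1$.

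The final step is a one-line computation: $p(0) = \tfrac{1}{n-1} + \tfrac{2}{n-1} = \tfrac{3}{n-1}$. Strict monotonicity of $p$ now gives
\[
    p(0) > 1 \iff n = 3, \qquad p(0) = 1 \iff n = 4, \qquad p(0) < 1 \iff n \geq 5,
\]
which correspond respectively to $D_\ell < 0$, $D_\ell = 0$, and $D_\ell > 0$, exactly the claimed trichotomy. There is no genuine obstacle here; the only point requiring care is the elementary verification that $\ell \geq r$ throughout the stated parameter range, which justifies the initial identification of $r_M$.
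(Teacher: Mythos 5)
Your computation for the generic case $r<\tfrac13$, where $\ell>r$ and the two distinct scaling ratios are $r$ (multiplicity $n-1$) and $\ell$ (multiplicity $2$), is exactly the paper's argument: identify $r_M=r$, $m_M=n-1$, form $p(t)=\tfrac1{n-1}r^{-t}+\tfrac2{n-1}(\ell/r)^t$, note that $p$ is strictly increasing, and evaluate $p(0)=\tfrac{3}{n-1}$ to read off the sign of $D_\ell$ by the intermediate value theorem. That part is correct and is the heart of the matter.

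The gap is your handling of the boundary case $r=\tfrac13$. You claim it is "excluded anyway by the self-avoidance criterion of Proposition~\ref{prop:selfAvoid} whenever $n\geq4$" and that the "residual $n=3$ classical case can be set aside separately." Neither clause holds up. First, the lemma as stated makes no reference to self-avoidance; its only hypothesis on $r$ is $r\in(0,\tfrac13]$, so the boundary case is genuinely in scope. Second, for $n=3$ the self-avoidance bound in Proposition~\ref{prop:selfAvoid} is $r<1-\cos(\pi/3)=\tfrac12$, which does \emph{not} exclude $r=\tfrac13$ — the classical von Koch curve is self-avoiding and is precisely this case. The paper does not punt here: when $r=\ell=\tfrac13$ the two ratios coalesce into a single ratio $\tfrac13$ of multiplicity $n+1$, the denominator of $\zeta_\Phi$ becomes $1-(n+1)3^{-s}$, and all poles lie on the line $\Re(s)=\log_3(n+1)>0$, so $D_\ell=\log_3(n+1)>0$ for \emph{every} $n\geq3$. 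That explicit computation is the substantive extra step you omitted, and it is not a formality: it shows the trichotomy in the lemma statement is actually \emph{false} at $r=\tfrac13$ for $n=3,4$ (one gets $D_\ell>0$ there, not $D_\ell<0$ or $D_\ell=0$). Had you carried out the boundary computation rather than waving it off, you would have discovered that the case genuinely behaves differently; the proposal as written neither proves the stated claim on its full parameter range nor flags the discrepancy.
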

\begin{proof}
    Let $\zeta_\Phi$ the zeta function of $\Phi$ and let $P(s)$ denote its denominator. Because $r\leq\frac13$, we have that $r\leq \ell$. Suppose first that $r=\ell=\frac13$. Then we have that 
    \[ P(s) = 1 - 2\ell^s-(n-1)r^s = 1-(n+1)3^{-s}. \]
    We may explicitly solve to find that $P(\omega)=0$ when $(n+1)=3^\omega$, or $\omega = \log_3(n+1)+2\pi i k/\log 3$ for $k\in\ZZ$. Since $n\geq3$, the real part of these poles are always positive and exactly equal to $\log_3(n+1)$. Thus, $0<\lowersimdim(\Phi)=\log_3(n+1)$. 

    Now suppose that $r<\frac13$, in which case $r<\ell$. Consider the polynomial 
    \[ p(t) = \frac1{n-1} \enclose{r^{-1}}^t + \frac2{n-1} \enclose{\frac{\ell}{r}}^t; \]
    by Definition~\ref{def:lowerSimDim}, $D_\ell=\lowersimdim(\Phi)$ is the unique real solution to $p(D_\ell)=1$. Note that $p$ is strictly increasing, so if $p(t)<1$, then $t<D_\ell$. We have that $p(0)=\frac{3}{n-1}$. If $n\geq5$, then $p(0)<1$. If $n=4$, $p(0)=1$ and thus $D_\ell=0$. If $n=3$, $p(0)>1$ whence $D_\ell<0$.  
\end{proof}

\subsection{Complex Dimensions of GKFs}
%
%

Now we will deduce the possible complex dimensions of generalized von Koch fractals. In this work, we will assume that either $n\geq5$ or that the scaling ratios $r$ and $\ell=(1-r)/2$ are arithmetically related. In the case of two scaling ratios, this condition explicitly means that the ratio of their logarithms must be rational:
\[ \frac{\log\ell}{\log r}=\frac{\log{(\frac{1-r}2)}}{\log{r}} = \frac{p}q \in\QQ. \]
This ratio, viewed as a function of $r$, is continuous and strictly decreasing, taking every value in $(0,2]$ over the interval $(0,\frac12]$; see Figure~\ref{fig:LogScalingRatioPlot} for a plot of this function. Thus, by the density of the rational numbers, for any desired $r\in(0,\frac12]$ we may find a choice of $r_0$ arbitrarily close to $r$ for which the corresponding $(n,r_0)$-von Koch snowflake is in the lattice case. 

\begin{figure}[t]
    \centering
    \includegraphics[width=6cm]{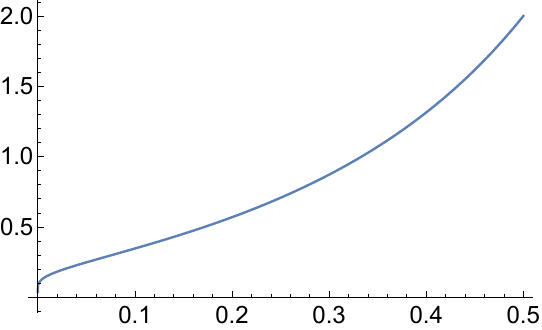}
    \caption[A plot of the ratios of logarithms of scaling ratios]{A plot of $\log\ell/\log r$, the ratio of logarithms of the scaling ratios, where $r\in(0,\frac12]$ and $\ell=(1-r)/2$. It is strictly increasing with range $(0,2]$, taking the value $1$ when $r=\ell=\frac13$ and $2$ when $r=\frac12$. As $r\to0^+$, it converges to zero.}
    \label{fig:LogScalingRatioPlot}
\end{figure}

Under these assumptions, we may apply the general results for self-similar fractals to deduce the possible complex dimensions of an $(n,r)$-von Koch snowflake and to compute explicit formulae for their relative tube functions. First, however, we observe a property in the lattice case regarding the poles of the scaling zeta function $\zeta_\Phinr$ which will have consequences for the complex dimensions and the explicit formulae of the corresponding $(n,r)$-von Koch snowflakes. 

\begin{proposition}[Simplicity of Poles in the Lattice Case]
    \label{prop:latticeSimplePolesGKF}
    Let $r\in(0,\frac13]$, let $\ell=(1-r)/2$, and let $n\geq 3$. If $n=4$, suppose further that $r<\frac13$.\footnote{Note that this is automatically satisfied when we impose the self-avoidance condition in Proposition~\ref{prop:selfAvoid}.} If $r$ and $\ell$ are arithmetically related, then the poles of the function 
    \[
        \zeta_{\Phi_{n,r}}(s) = \cfrac1{1-(n-1)r^{s}-2\ell^{s}}
    \]
    are all simple.  
\end{proposition}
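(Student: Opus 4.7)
The plan is to reduce the simplicity question to one about roots of a polynomial via the substitution $z = \lambda_0^s$. By Definition~\ref{def:latticeDichotomy}, in the lattice case there exist a generator $\lambda_0 \in (0,1)$ and positive coprime integers $k_r, k_\ell$ with $r = \lambda_0^{k_r}$ and $\ell = \lambda_0^{k_\ell}$. Since $r \leq \ell < 1$, we have $k_r \geq k_\ell \geq 1$. Under this substitution, the denominator of $\zeta_{\Phi_{n,r}}$ becomes the polynomial
\[
    Q(z) := 1 - (n-1) z^{k_r} - 2 z^{k_\ell}.
\]
Because $s \mapsto z = \lambda_0^s$ is a local biholomorphism away from $z = 0$ and $Q(0) = 1 \neq 0$, a pole $\omega$ of $\zeta_{\Phi_{n,r}}$ has multiplicity equal to the multiplicity of $z_0 := \lambda_0^\omega$ as a root of $Q$. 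Hence it suffices to prove that $Q$ has only simple roots.

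The degenerate case $r = \ell = \frac{1}{3}$ gives $k_r = k_\ell = 1$, so $Q$ is linear and the claim is trivial. In the remaining case $r < \frac{1}{3}$ one has $r < \ell$, and therefore $k_r > k_\ell \geq 1$. If $z_0 \neq 0$ is a common zero of $Q$ and $Q'$, the two relations $(n-1)z_0^{k_r} + 2z_0^{k_\ell} = 1$ and $(n-1)k_r z_0^{k_r} + 2 k_\ell z_0^{k_\ell} = 0$ form a $2 \times 2$ linear system in the unknowns $z_0^{k_r}$ and $z_0^{k_\ell}$, with nonzero determinant $2(n-1)(k_\ell - k_r)$. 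Cramer's rule yields
\[
    z_0^{k_r} = -\frac{k_\ell}{(n-1)(k_r - k_\ell)}, \qquad z_0^{k_\ell} = \frac{k_r}{2(k_r - k_\ell)},
\]
so a multiple root must satisfy: $z_0^{k_r}$ is a negative real and $z_0^{k_\ell}$ is a positive real.

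Writing $z_0 = \rho e^{i\theta}$ with $\rho > 0$, the argument conditions become $k_\ell \theta \equiv 0 \pmod{2\pi}$ and $k_r \theta \equiv \pi \pmod{2\pi}$. Eliminating $\theta$ produces a relation of the form $k_\ell(2m+1) = 2j(k_r - k_\ell)$ for some integers $j, m$; the right-hand side is even, so if $k_\ell$ is odd we obtain an immediate contradiction. Therefore we may assume $k_\ell$ is even, which by the coprimality $\gcd(k_r, k_\ell) = 1$ forces $k_r$ (and hence $k_r - k_\ell$) to be odd. In this subcase the only admissible candidate is $z_0 = -\rho$, and combining the two magnitude identities from Cramer's rule gives the Diophantine-like equation
\[
    (n-1)^{k_\ell} \, k_r^{k_r} = k_\ell^{k_\ell} \, 2^{k_r} \, (k_r - k_\ell)^{k_r - k_\ell}.
\]

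The main obstacle, and the heart of the proof, is showing this last equation has no positive integer solutions $n$ under the parity and coprimality constraints. I would finish by comparing $2$-adic valuations: since $k_r$ and $k_r - k_\ell$ are odd while $v_2(k_r) = 0$, the right-hand side has $v_2 = k_\ell\, v_2(k_\ell) + k_r$, while the left-hand side has $v_2 = k_\ell\, v_2(n-1)$. The resulting equality $k_r = k_\ell\bigl(v_2(n-1) - v_2(k_\ell)\bigr)$ forces $k_\ell \mid k_r$; since $\gcd(k_\ell, k_r) = 1$ this requires $k_\ell = 1$, contradicting $k_\ell$ being even. This closes the argument and establishes that every root of $Q$, and therefore every pole of $\zeta_{\Phi_{n,r}}$, is simple.
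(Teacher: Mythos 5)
Your proof is correct, and it genuinely diverges from the paper's at the crucial step. Both proofs begin the same way: pass to the variable $z=\lambda_0^s$, observe that multiplicities of poles of $\zeta_{\Phi_{n,r}}$ match multiplicities of roots of the polynomial $Q(z)$, handle the degenerate case $r=\ell=\frac13$ where $Q$ is linear, and then in the main case write the two conditions $Q(z_0)=Q'(z_0)=0$ as a linear system solvable by Cramer's rule. From that point on the arguments differ. The paper passes directly to the Diophantine-type relation obtained by eliminating $z_0$ and then cites Proposition~1.3(i) and Equation~3.6 of \cite{vdB00_generalGKF} for the conclusion that this relation has a unique integer solution ($n=4$ with exponents corresponding to $r>\frac13$), which is excluded by hypothesis. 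You instead make the argument self-contained: you first extract from the signs of $z_0^{k_r}$ and $z_0^{k_\ell}$ that $k_\ell$ must be even (hence $k_r$ odd and $k_r-k_\ell$ odd, by coprimality) and that the phase of $z_0$ must be $\pi$, and then close the resulting magnitude equation $(n-1)^{k_\ell}k_r^{k_r} = k_\ell^{k_\ell}2^{k_r}(k_r-k_\ell)^{k_r-k_\ell}$ with a clean $2$-adic valuation count, concluding $k_\ell\mid k_r$ and hence $k_\ell=1$, contradicting $k_\ell$ even. This buys a proof that does not lean on the external reference and also makes transparent why the excluded exceptional case ($r>\frac13$) is genuinely necessary.

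Two small gaps should be filled before this stands on its own. First, the claim that $k_r,k_\ell$ may be taken coprime is a choice, not an automatic consequence of Definition~\ref{def:latticeDichotomy}: you should note that $\lambda_0$ may (and must) be chosen to generate the group $r^\ZZ\ell^\ZZ$, in which case $\gcd(k_r,k_\ell)>1$ would contradict minimality of the generator. Second, you assert ``the only admissible candidate is $z_0=-\rho$'' without derivation. The argument is short but not vacuous: writing $\theta=2\pi a/k_\ell$ from the first phase condition, the second condition forces $2ak_r\equiv k_\ell\pmod{2k_\ell}$; reducing modulo $k_\ell$ and using $\gcd(k_r,k_\ell/2)=1$ gives $(k_\ell/2)\mid a$, so $\theta\in\{0,\pi\}$ mod $2\pi$, and $\theta=0$ is ruled out since $z_0^{k_r}$ must be negative. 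With these two points spelled out, your proof is complete and, in my view, preferable for a reader without \cite{vdB00_generalGKF} at hand.
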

\begin{proof}
    This follows from the proof of Proposition~1.3 (i) in \cite{vdB00_generalGKF} with the following modifications regarding the notation and the definition of the polynomial used therein. 

    Let $P(s)=\zeta_\Phinr(s)^{-1}$. A pole $\omega$ of $\zeta_\Phi$ occurs with multiplicity two or greater if we have that both $P(\omega)=0$ and $P'(\omega)=0$. In the lattice case, let $p$ and $q$ be such that $r=r_0^q$ and $\ell=r_0^p$ (so that $\log\ell/\log r=p/q$) and define $z=r_0^\omega$. When $p=q$, we have that $r=\ell=\frac13$ and thus $P(s)=1-(n+1)3^{-s}$ and $P'(s)=(n+1)3^{-s}\log 3$. It follows immediately that $P'(s)\neq 0$ by properties of the complex exponential, so we may assume that $p\neq q$.
    
    Solving now for when $P(\omega)=0$ and $P'(\omega)=0$, we obtain the system of equations:
    \begin{align*}
        1 &= (n-1)z^q+2z^p, \\
        0 &= (n-1)qz^q+2pz^p.
    \end{align*}
    Solving the system yields the relations that 
    \begin{align*}
        z^q &= \frac1{(n-1)(1-q/p)}, \\
        z^p &= \frac1{2(1-p/q)}.
    \end{align*}
    By considering the quantity $z^{-pq}$, we obtain the equations
    \begin{align*}
        (n-1)^p(1-q/p)^p &= 2^q(1-p/q)^q, \\
        (-1)^p q^q(n-1)^p(q-p)^p &= 2^q p^p (q-p)^q.
    \end{align*}
    Note that this last identity is equivalent to Equation~3.6 in \cite{vdB00_generalGKF}. There are no solutions when $q>p$. If $p<q$, then the only solution occurs when $n=4$, $p=3$, and $q=2$. Because we enforce that $r<\frac13$, we have that $r<\ell$ and thus that $\log\ell/\log r=p/q<1$, reaching the same contradiction as in \cite{vdB00_generalGKF}. 
\end{proof}

We now apply our results to deduce the possible complex dimensions of generalized von Koch fractals. We observe that since the relative tube function $V_{\Knr,\Omega}(t)=nV_{K,U}(t)$, its normalization satisfies the same scaling functional equation (with remainder also scaled by $n$). Consequently, it will have the same possible complex dimensions, so we state our result for the $(n,r)$-von Koch snowflake itself.

\begin{theorem}[Complex Dimensions of $(n,r)$-von Koch Snowflakes]
    \label{thm:cDimsOfGKFs}
    \index{Generalized von Koch fractal!Complex dimensions}
    Let $\Knr$ be an $(n,r)$-von Koch Snowflake boundary satisfying the self-avoidance criterion in Proposition~\ref{prop:selfAvoid} and let $\Omega$ be the interior region defined by $\Knr$ (i.e. the bounded component of $\RR^2\setminus\Knr$).
     \medskip 

    Then the complex dimensions $\Dd_\Knr(\HH_0)$ satisfy the containment 
    \[ \Dd_\Knr(\HH_0)\subseteq \Dd_\Phinr(\HH_0)=\set{\omega\in\CC \suchthat 1 = 2((1-r)/2)^\omega + (n-1)r^\omega}, \]
    where $\Dd_\Phinr(\HH_0)$ is the set of possible complex dimensions.
\end{theorem}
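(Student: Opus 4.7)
The plan is to combine the scaling functional equation established for the $(n,r)$-von Koch snowflake in Corollary~\ref{cor:fullVonKochSFE} with the general containment result on possible complex dimensions of osculant fractal drums from Corollary~\ref{cor:possibleCdims}. By Corollary~\ref{cor:fullVonKochSFE}, the normalized relative tube function $t^{-2}V_{\Knr,\Omega}(t)$ satisfies the scaling functional equation induced by $\Phinr$ with remainder $t^{-2}V_{\Knr,R}(t)$, where the explicit estimate $|V_{\Knr,R}(t)| \leq (2n\cot(\pi/n) + 2\pi)\,t^2$ yields $V_{\Knr,R}(t) = O(t^{2-0}) = O(t^{\dimension-\sigma_0})$ as $t \to 0^+$, with ambient dimension $\dimension = 2$ and $\sigma_0 = 0$.

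Next I would verify that this remainder is admissible (in the sense of Definition~\ref{def:admissibleRem}) for the operator $L_\Phinr$, which reduces to establishing the joint languidity of $\tubezeta_{\Knr,R}$ and $\zeta_\Phinr$ on some screen in $\HH_0$. I would split into two cases governed by the standing assumptions of the section. First, when $n \geq 5$, Lemma~\ref{lem:lowerDimBoundsGKF} shows $D_\ell = \lowersimdim(\Phinr) > 0$, so $\sigma_0 = 0 < D_\ell$ and Theorem~\ref{thm:lowerDimAdmissibility} (the lower dimension criterion) applies directly, producing admissible constant screens of the form $S_\e(\tau) \equiv \e$ for $0 < \e < D_\ell$. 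Second, when $n \in \{3,4\}$ (with the further restriction $r < \tfrac{1}{3}$ in the case $n=4$), the lower dimension bound is no longer strictly positive, and I would instead invoke the assumed arithmetic relatedness of $r$ and $\ell = (1-r)/2$ to place us in the lattice case; Theorem~\ref{thm:latticeCaseAdmissibility} then supplies admissible screens $S_\e$ for all $\e > 0$ sufficiently small, avoiding the finitely many exceptional real parts where poles of $\zeta_\Phinr$ accumulate.

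With admissibility in hand, Corollary~\ref{cor:possibleCdims} immediately yields the containment $\Dd_\Knr(\HH_0) \subseteq \Dd_\Phinr(\HH_0)$, where the right-hand side is characterized by the complexified Moran equation for $\Phinr$. Since $\Phinr$ contains two similitudes of scaling ratio $\ell = (1-r)/2$ and $n-1$ similitudes of scaling ratio $r$, this Moran equation reads exactly
\[
1 = 2\left(\tfrac{1-r}{2}\right)^\omega + (n-1)r^\omega,
\]
giving the claimed characterization of the possible complex dimensions of $\Knr$ relative to its interior in the right half-plane $\HH_0$.

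The main obstacle I anticipate is the boundary regime $n \in \{3,4\}$, where the lower similarity dimension is not strictly positive and the lower dimension criterion alone cannot certify admissibility up to the imaginary axis. Here the lattice assumption is genuinely essential: it is what restricts the real parts of poles of $\zeta_\Phinr$ to finitely many vertical lines, allowing constant screens to be threaded arbitrarily close to $\{\Re(s)=0\}$ while still avoiding poles and maintaining a uniform lower bound on the Dirichlet polynomial $P(s) = 1 - 2\ell^s - (n-1)r^s$. A minor technical check, which should follow routinely from Lemma~\ref{lem:MellinHolo} applied to the bound $V_{\Knr,R}(t) = O(t^2)$, is that $\tubezeta_{\Knr,R}(s;\delta)$ is holomorphic throughout $\HH_0$ and bounded on vertical strips, as required for the joint languidity hypothesis.
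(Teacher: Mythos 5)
Your proposal takes an unnecessary detour that inadvertently restricts the result. The containment $\Dd_{\Knr}(\HH_0) \subseteq \Dd_{\Phinr}(\HH_0)$ does \emph{not} require admissibility of the remainder (joint languidity) at all; that condition is needed only for the explicit tube formula theorems (Theorems~\ref{thm:pointwiseTubeFormula} and \ref{thm:distTubeFormula}), where one must shift and estimate a contour integral. The containment of complex dimensions is a purely algebraic consequence of the zeta formula from Corollary~\ref{cor:structureOfPoles}: once the scaling functional equation from Corollary~\ref{cor:fullVonKochSFE} is in place, and once you observe via Lemma~\ref{lem:MellinHolo} that $\zeta_R(s;\delta) = \Mellin^\delta[t^{-2}V_{\Knr,R}(t)](s)$ is holomorphic in $\HH_0$ (which follows directly from the estimate $V_{\Knr,R}(t)=O(t^2)$, $\sigma_0 = 0$), the factorization
\[
\tubezeta_{\Knr,\Omega}(s;\delta) = \zeta_{\Phinr}(s)\bigl(\xi_{n,r}(s;\delta) + \zeta_R(s;\delta)\bigr)
\]
exhibits $\tubezeta_{\Knr,\Omega}$ as meromorphic on $\HH_0$ with poles contained in $\Dd_{\Phinr}(\HH_0)$, since $\xi_{n,r}$ is entire and $\zeta_R$ has no poles there. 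No screen, no languidity estimate, and no case split is needed.

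Because your argument routes through Theorem~\ref{thm:lowerDimAdmissibility} (needing $\sigma_0 < \lowersimdim(\Phinr)$, hence $n \geq 5$) or Theorem~\ref{thm:latticeCaseAdmissibility} (needing arithmetic relatedness), it only proves the statement under those extra hypotheses. But Theorem~\ref{thm:cDimsOfGKFs} is stated for every $(n,r)$ satisfying the self-avoidance criterion, including nonlattice pairs with $n = 3$ or $n = 4$, and the paper's proof covers them by invoking Theorem~\ref{thm:tubeZetaFormula} and Corollary~\ref{cor:possibleCdims} directly. Your admissibility analysis is correct and becomes essential in the very next results (Theorem~\ref{thm:pointwiseTubeFormulaGKF} and its companions), so the work is not wasted; but it belongs there, not here. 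The conceptual point to internalize is that locating the \emph{possible} complex dimensions only needs the meromorphic continuation of the tube zeta function, whereas \emph{summing over} them in an explicit formula needs languidity.
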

\begin{proof}
    By Corollary~\ref{cor:fullVonKochSFE}, we have that $(\Knr,\Omega)$ is an osculant fractal drum and that its relative tube function $\tubezeta_{\Knr,\Omega}$ satisfies a scaling functional equation of the form 
    \[ t^{-2}V_{\Knr,\Omega}(t) = L_\Phinr[t^{-2}V_{\Knr,\Omega}(t)](t) + t^{-2}V_{\Knr,R}(t), \]
    where $\Phinr$ is as in Equation~\ref{eqn:defGKCsystem} and where $t^{-2}V_{\Knr,R}(t)$ is continuous with estimate $t^{-2}V_{\Knr,R}(t)=O(t^0)$ as $t\to0^+$. 

    We may follow the same argument as in the proof of Theorem~\ref{thm:tubeZetaFormula}, namely applying Corollary~\ref{cor:structureOfPoles}, to deduce that the corresponding tube function satisfies the identity 
    \begin{equation}
        \label{eqn:zetaFormulaGKF}
        \tubezeta_{\Knr,\Omega}(s;\delta) = \zeta_\Phinr(s) (\xi_{n,r}(s;\delta)+\tubezeta_{R}(s;\delta))
    \end{equation}
    where $\tubezeta_R=\Mellin^\delta[t^{-2}V_{\Knr,R}(t)](s)$ and where $\xi_{n,r}:= \xi_{L_\Phinr,t^{-2}V_{\Knr,R}}$. 

    The result follows just as in Corollary~\ref{cor:possibleCdims}, with respect to the self-similar system $\Phinr$ and $\sigma_0=0$. 
\end{proof}

\begin{figure}[t]
    \centering
    \begin{tabular}{ccc}
        \includegraphics[width=3cm]{figures/GKFs/Knr/K_n=3,r=0.33,lv=6.pdf}
        &
        \includegraphics[width=3cm]{figures/GKFs/Knr/K_n=4,r=0.25,lv=6.pdf}
        &
        \includegraphics[width=3cm]{figures/GKFs/Knr/K_n=5,r=0.20,lv=6.pdf}
        \\
        \includegraphics[width=2.5cm]{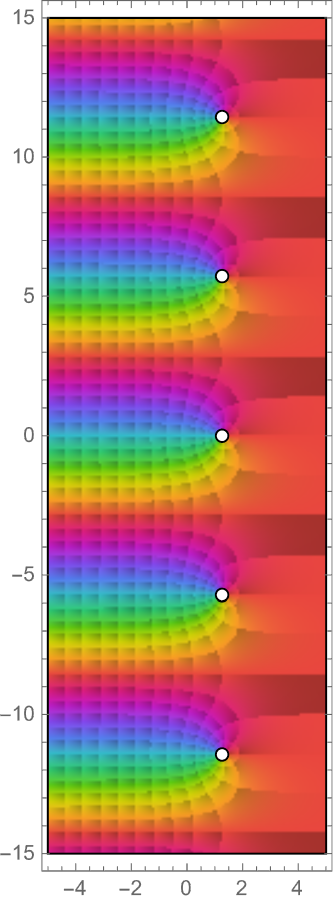}
        &
        \includegraphics[width=2.5cm]{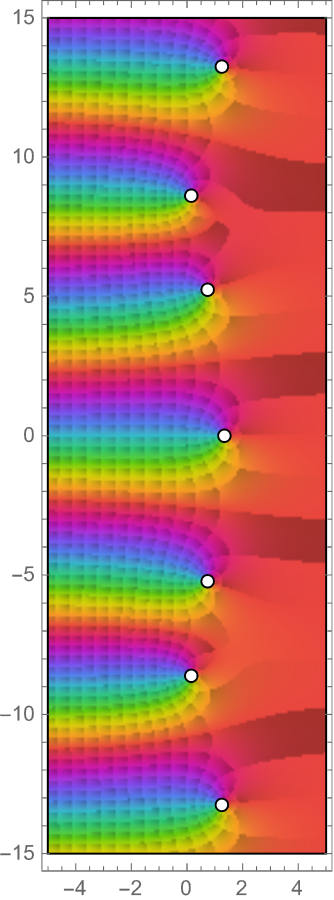}
        &
        \includegraphics[width=2.5cm]{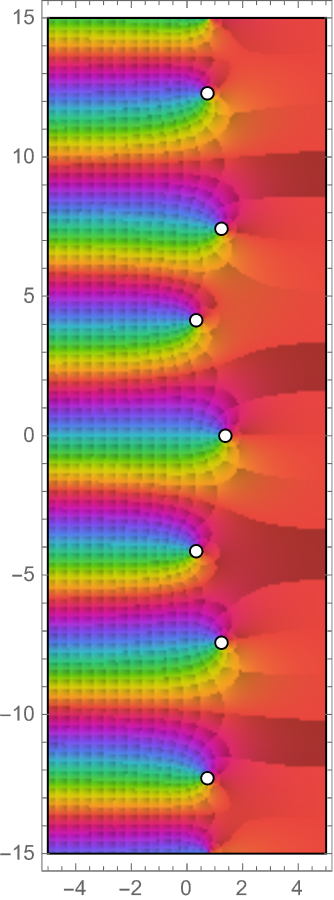} 
    \end{tabular}
    \caption[Complex plots of scaling zeta functions of generalized von Koch fractals]{Two dimensional complex argument plots of associated scaling zeta functions of the fractals $K_{3,\frac13}$ (left), $K_{4,\frac14}$ (middle), and $K_{5,\frac15}$ (right). The possible complex dimensions of the fractals occur at the poles of these functions, indicated in the plots by the white circles.}
    \label{fig:cDimPlots2D}
\end{figure}

Using this result, we may compute explicitly the possible complex dimensions of an $(n,r)$-von Koch snowflake using the corresponding (complexified) Moran equation alone. See Figure~\ref{fig:cDimPlots2D} for some plots of the scaling zeta functions $\zeta_\Phinr$, whose poles determine the possible complex dimensions, as well as the corresponding $(n,r)$-von Koch snowflakes.

\subsection{Tube Formulae for GKFs}
\label{sub:tubeFormulaeGKF}
%
%

In this subsection, we deduce explicit tube formulae for $(n,r)$-von Koch snowflakes. In what follows, we will assume that either $n\geq5$ or that the scaling ratios $r$ and $\ell=(1-r)/2$ are arithmetically related. We also impose the self-avoidance criterion (Proposition~\ref{prop:selfAvoid}) for $r$ as well. The latter case ensures that $\Knr$ has a connected interior region $\Omega$ which we use to construct its relative fractal drum $(\Knr,\Omega)$. The former restriction ensures that the scaling functional equations satisfied by the relative tube function $V_{\Knr,\Omega}$ have admissible remainders. 

For the preliminaries regarding the notation and conventions defining the antiderivatives $V_{\Knr,\Omega}^{[k]}$, the Pochhammer symbol $(z)_w$, $z,w\in\CC$, and the definition of the sums over complex dimensions as symmetric limits, we refer the reader to Section~\ref{sec:fractalSFEs}. 

\begin{theorem}[Pointwise Tube Formulae for GKFs]
    \label{thm:pointwiseTubeFormulaGKF}
    Let $\Knr$ be an $(n,r)$-von Koch Snowflake boundary satisfying the self-avoidance criterion in Proposition~\ref{prop:selfAvoid} and let $\Omega$ be the interior region defined by $\Knr$ (i.e. the bounded component of $\RR^2\setminus\Knr$). Suppose that either $n\geq5$ or that the scaling ratios $r$ and $\ell=(1-r)/2$ are arithmetically related (viz. the ratio of their logarithms is rational). 
    \medskip

    Then for every $k\geq 2$ in $\ZZ$, any $\delta>0$, and every $t\in(0,\delta)$, we have that
    \[ 
        V_{\Knr,\Omega}^{[k]}(t) = \sum_{\omega\in\Dd_\Phinr(\HH_{0})} 
            \Res\Bigg(\cfrac{t^{2-s+k}}{(2-s+1)_k}\tubezeta_{\Knr,\Omega}(s;\delta);\omega\Bigg)
            + \Rr^k(t),
    \]
    where $\tubezeta_{\Knr,\Omega}$ is given explicitly by Equation~\ref{eqn:zetaFormulaGKF}. For any $\e>0$ sufficiently small, the error term satisfies $\Rr^k(t)=O(t^{2-\e+k})$ as $t\to0^+$. 
\end{theorem}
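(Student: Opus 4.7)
The plan is to derive this theorem as a direct application of the general pointwise explicit tube formula (Theorem~\ref{thm:pointwiseTubeFormula}) to the osculant fractal drum $(\Knr,\Omega)$ with respect to the self-similar system $\Phinr$. The key observation is that all the structural work has already been done: Corollary~\ref{cor:fullVonKochSFE} establishes that $\Phinr$ induces the scaling functional equation $V_{\Knr,\Omega}=L_\Phinr[V_{\Knr,\Omega}]+V_{\Knr,R}$, together with the explicit pointwise estimate $|V_{\Knr,R}(t)|\leq (2n\cot(\pi/n)+2\pi)\,t^2$. Since $\dimension=2$, this matches the form $V_{\Knr,R}(t)=O(t^{\dimension-\sigma_0})$ with $\sigma_0=0$, which is exactly what Theorem~\ref{thm:pointwiseTubeFormula} requires as input.

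First I would verify the admissibility of the remainder by bifurcating on the hypotheses. If $n\geq 5$, Lemma~\ref{lem:lowerDimBoundsGKF} gives $\lowersimdim(\Phinr)>0=\sigma_0$, so the lower-dimension criterion (Theorem~\ref{thm:lowerDimAdmissibility}) applies directly, yielding admissibility on any screen of the form $S_\e(\tau)\equiv \e$ with $0<\e<\lowersimdim(\Phinr)$. In the cases $n=3,4$ (or whenever the lower dimension hypothesis fails), the assumption that $r$ and $\ell=(1-r)/2$ are arithmetically related permits the lattice-case criterion (Theorem~\ref{thm:latticeCaseAdmissibility}) to apply, giving admissibility for all but finitely many screens of the form $S_\sigma(\tau)\equiv \sigma>0$, in particular for $S_\e$ with $\e>0$ sufficiently small to avoid the finite set of exceptional real parts.

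Next I would invoke Theorem~\ref{thm:pointwiseTubeFormula} with parameters $\alpha=1$, $\beta=\dimension=2$, $\sigma_R=0$, and the screen $S_\e$ just produced. The bound $\beta/\alpha+1=3>\max(D/\alpha,\sigma_R)$ is automatic since $D=\dim_S(\Phinr)\leq \dimension=2$, and the admissible screen $S_\e$ trivially lies in $\HH_{\sigma_R}=\HH_0$. The theorem then delivers the formula with the sum indexed by $\Dd_\Phinr(\HH_{S_\e})=\Dd_\Phinr(\HH_\e)$ and remainder of order $O(t^{\dimension-\sup(S_\e)+k})=O(t^{2-\e+k})$, and the formula for $\tubezeta_{\Knr,\Omega}$ coincides with Equation~\ref{eqn:zetaFormulaGKF} established in the proof of Theorem~\ref{thm:cDimsOfGKFs}.

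The only remaining bookkeeping issue, and the one mild obstacle in the argument, is to reindex the sum by $\Dd_\Phinr(\HH_0)$ as stated rather than by $\Dd_\Phinr(\HH_\e)$. This is legitimate because in either case $\zeta_\Phinr$ has no poles with real part in the half-open interval $(0,\e]$ for $\e$ chosen sufficiently small: in the case $n\geq 5$ this follows because all poles have real part at least $\lowersimdim(\Phinr)>0$, while in the lattice case all poles have real parts lying in the finite set $\{\Re(\log(z_j)/\log(r_0))\}$ produced in the proof of Theorem~\ref{thm:latticeCaseAdmissibility}, so we simply take $\e$ smaller than the distance from $0$ to the nearest strictly positive such real part. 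With that choice, $\Dd_\Phinr(\HH_\e)=\Dd_\Phinr(\HH_0)$ and the conclusion of the theorem follows verbatim.
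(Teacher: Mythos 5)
Your approach matches the paper's in outline, but there is one genuine technical mis-step: you claim to apply Theorem~\ref{thm:pointwiseTubeFormula} to ``the osculant fractal drum $(\Knr,\Omega)$ with respect to $\Phinr$.'' The pair $(\Knr,\Omega)$ is \emph{not} an osculant fractal drum with respect to $\Phinr$, because the attractor of $\Phinr$ is a single $(n,r)$-von Koch curve $\Cnr$, not the full snowflake boundary $\Knr$, which is a union of $n$ isometric copies of $\Cnr$. The hypotheses of Theorem~\ref{thm:pointwiseTubeFormula} require that the first coordinate of the RFD be the invariant set of $\Phi$, so that theorem cannot be invoked on $(\Knr,\Omega)$ directly.

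The paper's proof deals with this explicitly: the osculant fractal drum is $(K,U)$ from Theorem~\ref{thm:vonKochSFE}, a single-sector piece, and the scaling functional equation for $V_{\Knr,\Omega}=nV_{K,U}$ follows by linearity (Corollary~\ref{cor:fullVonKochSFE}). Because Theorem~\ref{thm:pointwiseFormula} from Chapter~\ref{chap:SFEs} only requires the scaling functional equation and admissibility of the remainder --- not osculance of the RFD --- the argument goes through by invoking that general theorem directly rather than its tube-formula specialization. Everything else in your proposal is correct: the SFE and remainder estimate from Corollary~\ref{cor:fullVonKochSFE}, the bifurcation on $n\geq 5$ (via Lemma~\ref{lem:lowerDimBoundsGKF} and Theorem~\ref{thm:lowerDimAdmissibility}) versus the lattice case (Theorem~\ref{thm:latticeCaseAdmissibility}), the parameter choices $\alpha=1$, $\beta=\dimension=2$, $\sigma_R=0$, and the reindexing of $\Dd_\Phinr(\HH_\e)$ to $\Dd_\Phinr(\HH_0)$ using the absence of poles with small positive real part. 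You would only need to replace the appeal to Theorem~\ref{thm:pointwiseTubeFormula} by a direct appeal to Theorem~\ref{thm:pointwiseFormula}, justified by the SFE established in Corollary~\ref{cor:fullVonKochSFE}, to make the argument watertight.
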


\begin{proof}
    We have by Corollary~\ref{cor:fullVonKochSFE} that the tube function $V_{\Knr,\Omega}$ of the relative fractal drum $(\Knr,\Omega)$ satisfies the scaling functional equation
    \[ t^{-2}V_{\Knr,\Omega}(t) = L_\Phinr[t^{-2}V_{\Knr,\Omega}(t)](t) + t^{-2}V_{\Knr,R}(t) \]
    where $\Phinr$ is as in Equation~\ref{eqn:defGKCsystem}. The remainder term is continuous and satisfies the estimate $|V_{\Knr,R}(t)| \leq (2n\cot(\pi/n)+2\pi)\,t^2$ as $t\to0^+$.

    We note that $(\Knr,\Omega)$ is strictly speaking not an osculant fractal drum with respect to the self-similar system $\Phinr$ because this requires that $\Knr$ is the invariant set of $\Phinr$. Instead, each of the $n$ congruent \textit{components} of $\Knr$ is isometric to its attractor, an $(n,r)$-von Koch curve. However, because we have a scaling functional equation (established by linearity), the proof of Theorem~\ref{thm:pointwiseTubeFormula} is essentially unchanged as we apply Theorem~\ref{thm:pointwiseFormula} in the same manner. 

    As to the admissibility of the remainder with respect to the screens $S_\e(\tau)\equiv \e>0$, we apply Lemma~\ref{lem:lowerDimBoundsGKF} when $n\geq5$ to ensure that $\sigma_0=0<\lowersimdim(\Phinr)$ which is a sufficient condition per Theorem~\ref{thm:lowerDimAdmissibility}. In the lattice case, we use Theorem~\ref{thm:latticeCaseAdmissibility}.
\end{proof}

Note that in the lattice case, we have by Proposition~\ref{prop:latticeSimplePolesGKF} that the poles of $\zeta_\Phinr$ are simple. Thus, we may simplify the residues to obtain an expansion in powers of $t^{2-\omega+k}$. 
\begin{corollary}[Pointwise Tube Formulae for GKFs, Lattice Case]
    \label{cor:pointwiseTubeFormulaGKFLattice}
    Let $\Knr$ be an $(n,r)$-von Koch Snowflake boundary satisfying the self-avoidance criterion in Proposition~\ref{prop:selfAvoid} and let $\Omega$ be the interior region defined by $\Knr$ (i.e. the bounded component of $\RR^2\setminus\Knr$). Suppose that its scaling ratios $r$ and $\ell=(1-r)/2$ are arithmetically related (i.e. the ratio of their logarithms is rational).
    \medskip 

    Then for every $k\geq 2$ in $\ZZ$, any $\delta>0$, and every $t\in(0,\delta)$, we have that
    \[ 
        V_{\Knr,\Omega}^{[k]}(t) = \sum_{\omega\in\Dd_\Phinr(\HH_{0})} 
            \cfrac{r_\omega}{(2-\omega+1)_k}\,t^{2-\omega+k}
            + \Rr^k(t),
    \]
    where $r_\omega$ is a constant given by $r_\omega:=\Res\enclose{\tubezeta_{\Knr,\Omega}(s;\delta);\omega}$, with $\tubezeta_{\Knr,\Omega}$ given explicitly by Equation~\ref{eqn:zetaFormulaGKF}. For any $\e>0$ sufficiently small, the error term satisfies $\Rr^k(t)=O(t^{2-\e+k})$ as $t\to0^+$.
\end{corollary}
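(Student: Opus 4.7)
The plan is to invoke Theorem~\ref{thm:pointwiseTubeFormulaGKF} as the starting point, which already supplies the desired identity but with each summand written as a full residue. The remaining work is then simply to evaluate each residue explicitly, using the fact that in the lattice case the poles of $\zeta_{\Phi_{n,r}}$ are all simple by Proposition~\ref{prop:latticeSimplePolesGKF}. Concretely, I would first verify that under the hypotheses, $\Knr$ indeed falls within the scope of Theorem~\ref{thm:pointwiseTubeFormulaGKF}: arithmetic relatedness of $r$ and $\ell$ is one of the two alternatives allowed there, and the self-avoidance condition ensures that $\Omega$ is a well-defined connected interior region so that $(\Knr,\Omega)$ is a relative fractal drum with the decomposition from Corollary~\ref{cor:fullVonKochSFE}.

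Next, I would examine the structure of the poles of the integrand $t^{2-s+k}/(2-s+1)_k \cdot \tubezeta_{\Knr,\Omega}(s;\delta)$ at a pole $\omega \in \Dd_{\Phi_{n,r}}(\HH_0)$. Using the explicit formula $\tubezeta_{\Knr,\Omega}(s;\delta) = \zeta_{\Phi_{n,r}}(s)(\xi_{n,r}(s;\delta) + \tubezeta_R(s;\delta))$ from Equation~\ref{eqn:zetaFormulaGKF}, and recalling that $\xi_{n,r}$ is entire while $\tubezeta_R$ is holomorphic in $\HH_0$, any pole of $\tubezeta_{\Knr,\Omega}$ in $\HH_0$ comes entirely from a pole of $\zeta_{\Phi_{n,r}}$. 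Simplicity of these poles in the lattice case therefore transfers to $\tubezeta_{\Knr,\Omega}$. Since $\Re(\omega) \leq \simdim(\Phi_{n,r}) \leq \dim_\Mink(\Knr) < 2$, the Pochhammer symbol $(3-\omega)_k = (2-\omega+1)_k$ never vanishes at a pole, and the prefactor $t^{2-s+k}/(2-s+1)_k$ is holomorphic in $s$ at each $\omega$.

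Consequently, the residue of the integrand at any such simple pole $\omega$ factors as
\[
\Res\!\left(\frac{t^{2-s+k}}{(2-s+1)_k}\,\tubezeta_{\Knr,\Omega}(s;\delta);\,\omega\right)
= \frac{t^{2-\omega+k}}{(2-\omega+1)_k}\,\Res\bigl(\tubezeta_{\Knr,\Omega}(s;\delta);\,\omega\bigr)
= \frac{r_\omega}{(2-\omega+1)_k}\,t^{2-\omega+k},
\]
which is the pointwise identity asserted in the corollary. Substituting this into the sum from Theorem~\ref{thm:pointwiseTubeFormulaGKF} term-by-term (with the error $\Rr^k$ preserved since it is inherited directly) yields the desired formula.

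The main delicate point is conceptual rather than computational: the sum over $\omega$ is defined as a symmetric limit of finite partial sums, so the term-by-term replacement is justified only because each partial sum is finite and linearity of residues applies on each. The rest is routine: the lattice simplicity guaranteed by Proposition~\ref{prop:latticeSimplePolesGKF} collapses the residue to a single product, and the upper bound $\simdim(\Phi_{n,r}) < 2$ ensures that the analytic prefactor never develops spurious singularities at the poles of $\zeta_{\Phi_{n,r}}$.
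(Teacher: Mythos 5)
Your proposal is correct and follows exactly the route the paper takes: cite Theorem~\ref{thm:pointwiseTubeFormulaGKF}, invoke Proposition~\ref{prop:latticeSimplePolesGKF} for simplicity of the poles in the lattice case, observe that the prefactor $t^{2-s+k}/(2-s+1)_k$ is holomorphic at each $\omega$ (since the zeros of the Pochhammer factor lie at $s=3,\dots,2+k$ with real part $\geq 3 > \Re(\omega)$), and evaluate each residue as a product. Your added remark about the term-by-term replacement being valid because each partial sum in the symmetric limit is finite is a careful observation that the paper leaves implicit.
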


We note that the shift by one in the Pochhammer symbol $(z)_k$ can be understood in a straightforward manner. If we use its expansion as a product when $k$ is a positive integer and write its terms in reverse order, we have that 
\[ (2-\omega+1)_k = (2-\omega+k)(2-\omega+k-1)\cdots(2-\omega+1). \]
So, this first term corresponds to the power $t^{2-\omega+k}$. 

Lastly, we stress that these sums over the complex dimensions $\Dd_\Phinr(\HH_0)$ may have delicate convergence properties, as they must be defined as symmetric limits. The use of integration, leading to pointwise formulae for the \text{antiderivatives} of the relevant tube functions, leads to increased regularity. If we wish to view the formula for the tube function itself, when $k=0$, we may use the distributional formulation in order to do so. We will state the formula when $k=0$ and refer the reader to Theorem~\ref{thm:distTubeFormula} for the formulae which are valid for any $k\in\ZZ$. 

\begin{theorem}[Distributional Tube Formula for GKFs, $k=0$]
    \label{thm:distTubeFormulaGKF}
    Let $\Knr$ be an $(n,r)$-von Koch Snowflake boundary satisfying the self-avoidance criterion in Proposition~\ref{prop:selfAvoid} and let $\Omega$ be the interior region defined by $\Knr$ (i.e. the bounded component of $\RR^2\setminus\Knr$). Suppose that either $n\geq5$ or that the scaling ratios $r$ and $\ell=(1-r)/2$ are arithmetically related (i.e. the ratio of their logarithms is rational). 
    \medskip

    Then for any $\delta>0$ and every $t\in(0,\delta)$, we have that as an equality of distributions in the Schwartz space $\Ss'(0,\delta)$ (the dual of the space defined in Equation~\ref{eqn:defSchwartzFns}),
    \[ 
        V_{\Knr,\Omega}(t) = \sum_{\omega\in\Dd_\Phinr(\HH_{0})} 
            \Res\Bigg(t^{2-s}\tubezeta_{\Knr,\Omega}(s;\delta);\omega\Bigg)
            + \Rr(t),
    \]
    where $\tubezeta_{\Knr,\Omega}$ is given explicitly by Equation~\ref{eqn:zetaFormulaGKF}. For any $\e>0$ sufficiently small, the error term satisfies $\Rr(t)=O(t^{2-\e})$ as $t\to0^+$ in the sense of Equation~\ref{eqn:distRemEst2}. The action of $V_{\Knr,\Omega}$ on a test function is given explicitly by Equation~\ref{eqn:bracketIdentity2} with $k=0$.
\end{theorem}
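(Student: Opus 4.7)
The plan is to deduce this theorem as a direct application of the general distributional tube formula (Theorem~\ref{thm:distTubeFormula} or more precisely Corollary~\ref{cor:distTubeFormulaZero}) applied with $k=0$, specialized to the self-similar system $\Phinr$ from Equation~\ref{eqn:defGKCsystem}. The input data needed are (i) the scaling functional equation for the tube function of $(\Knr,\Omega)$ established in Corollary~\ref{cor:fullVonKochSFE}, and (ii) a verification that the remainder in that equation is admissible for $L_\Phinr$ with respect to a screen arbitrarily close to the imaginary axis. Since the general machinery already delivers the explicit formula in Equation~\ref{eqn:distributionalIdentityZero} once these two ingredients are supplied, the bulk of the work is simply to match hypotheses.

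First I would invoke Corollary~\ref{cor:fullVonKochSFE}, which provides the scaling functional equation $t^{-2}V_{\Knr,\Omega}(t) = L_\Phinr[t^{-2}V_{\Knr,\Omega}(t)](t) + t^{-2}V_{\Knr,R}(t)$ together with the estimate $|V_{\Knr,R}(t)|\leq (2n\cot(\pi/n)+2\pi)\,t^2$ as $t\to 0^+$. Normalized, this means $t^{-2}V_{\Knr,R}(t)=O(t^0)$ as $t\to 0^+$, so the parameter in the general machinery is $\sigma_0=0$, and the scaling parameters are $\alpha=1$ and $\beta=\dimension=2$. The hypothesis $\beta/\alpha+1>\max(D/\alpha,\sigma_R)$ of Corollary~\ref{cor:distTubeFormulaZero} reads $3>\max(\simdim(\Phinr),0)$, which holds since $\simdim(\Phinr)\leq 2$.

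Next I would dispatch the admissibility of the remainder, splitting into two cases exactly as in the proof of Theorem~\ref{thm:pointwiseTubeFormulaGKF}. When $n\geq 5$, Lemma~\ref{lem:lowerDimBoundsGKF} gives $\lowersimdim(\Phinr)>0$, so any screen $S_\e(\tau)\equiv\e$ with $0<\e<\lowersimdim(\Phinr)$ is admissible by Theorem~\ref{thm:lowerDimAdmissibility}. When the scaling ratios $r$ and $\ell=(1-r)/2$ are arithmetically related, Theorem~\ref{thm:latticeCaseAdmissibility} ensures that $S_\e(\tau)\equiv\e$ is admissible for all but finitely many $\e>0$, so $\e>0$ may be chosen arbitrarily small. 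In either case the window $W_{S_\e}=\HH_\e$ is a half-plane, and since $\zeta_\Phinr$ has no poles with real part in $(0,\e]$ for $\e$ sufficiently small (either by the lower-dimension bound or by the discreteness of the finitely many exceptional real parts in the lattice case), we may reindex the sum by $\Dd_\Phinr(\HH_0)$ rather than $\Dd_\Phinr(\HH_\e)$.

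With those pieces in hand, Corollary~\ref{cor:distTubeFormulaZero} yields the identity
\[
V_{\Knr,\Omega}(t) = \sum_{\omega\in\Dd_\Phinr(\HH_0)} \Res\bigl(t^{2-s}\tubezeta_{\Knr,\Omega}(s;\delta);\omega\bigr) + \Rr(t)
\]
in $\Ss'(0,\delta)$, with remainder estimate $\Rr(t)=O(t^{2-\sup(S_\e)})=O(t^{2-\e})$ in the distributional sense of Equation~\ref{eqn:distRemEst2}, where $\tubezeta_{\Knr,\Omega}$ has the explicit form in Equation~\ref{eqn:zetaFormulaGKF}. The one subtle point, which I would address explicitly as in the proof of Theorem~\ref{thm:pointwiseTubeFormulaGKF}, is that $\Knr$ is not literally the attractor of $\Phinr$ but rather a union of $n$ congruent isometric copies of such an attractor; nevertheless linearity of the Mellin transform and of the induced decomposition (a factor of $n$ appears uniformly on both sides and in the remainder) means the scaling functional equation is genuinely of the form required by Corollary~\ref{cor:distTubeFormulaZero}, and the argument goes through unchanged. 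I do not anticipate any genuinely new obstacle beyond this bookkeeping — the main work was already done in establishing Corollary~\ref{cor:fullVonKochSFE} and the general distributional formula.
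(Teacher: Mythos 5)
Your proposal is correct and takes essentially the same approach as the paper: invoke the scaling functional equation of Corollary~\ref{cor:fullVonKochSFE}, verify admissibility of the remainder via Lemma~\ref{lem:lowerDimBoundsGKF} with Theorem~\ref{thm:lowerDimAdmissibility} (for $n\geq 5$) or Theorem~\ref{thm:latticeCaseAdmissibility} (lattice case), note the technical point that $\Knr$ is a union of $n$ isometric copies of the attractor rather than the attractor itself, and apply the general distributional formula. The paper states this more tersely as a direct corollary of Theorem~\ref{thm:distTubeFormula} with the same modifications as the pointwise proof, and also offers the alternative of applying the general result to $(K,U)$ and rescaling by $n$, but the substance is the same as what you describe.
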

\begin{proof}
    This is a direct corollary of Theorem~\ref{thm:distTubeFormula} with the same minor modification as in the proof of Theorem~\ref{thm:pointwiseTubeFormulaGKF}. Alternatively, one may first apply Theorem~\ref{thm:distTubeFormula} to the osculant fractal drum $(K,U)$ with corresponding self-similar system $\Phinr$ as in Theorem~\ref{thm:vonKochSFE}. Then, one deduces the formula here using the relation that $V_{\Knr,\omega}(t)=nV_{K,U}(t)$. By linearity of Mellin transforms, it follows that $\tubezeta_{\Knr,\Omega}(s;\delta)=n\tubezeta_{K,U}(s;\delta)$. Similar holds for the remainder term and its Mellin transform. Lastly, the result follows by the linearity of limits (namely, the symmetric limit defining the summations), of the finite summations within this limit, and of residues (which themselves may be seen as either limits or equivalently as contour integrals).  
\end{proof}

We conclude with the same corollary which is true in the lattice case, namely that the residues are constants computed directly from the tube zeta function of the relative fractal drum. 
\begin{corollary}[Distributional Tube Formula for GKFs, $k=0$]
    \label{cor:distTubeFormulaGKFLattice}
    Let $\Knr$ be an $(n,r)$-von Koch Snowflake boundary satisfying the self-avoidance criterion in Proposition~\ref{prop:selfAvoid} and let $\Omega$ be the interior region defined by $\Knr$ (i.e. the bounded component of $\RR^2\setminus\Knr$). Suppose that the scaling ratios $r$ and $\ell=(1-r)/2$ are arithmetically related (i.e. the ratio of their logarithms is rational). 
    \medskip

    Then for any $\delta>0$ and every $t\in(0,\delta)$, we have that as an equality of distributions in the Schwartz space $\Ss'(0,\delta)$ (the dual of the space defined in Equation~\ref{eqn:defSchwartzFns}),
    \[ 
        V_{\Knr,\Omega}(t) = \sum_{\omega\in\Dd_\Phinr(\HH_{0})} 
            r_\omega\,t^{2-\omega}
            + \Rr(t),
    \]
    where $r_\omega$ is a constant given by $r_\omega:=\Res\enclose{\tubezeta_{\Knr,\Omega}(s;\delta);\omega}$, with $\tubezeta_{\Knr,\Omega}$ given explicitly by Equation~\ref{eqn:zetaFormulaGKF}. For any $\e>0$ sufficiently small, the error term satisfies $\Rr(t)=O(t^{2-\e})$ as $t\to0^+$ in the sense of Equation~\ref{eqn:distRemEst2}. The action of $V_{\Knr,\Omega}$ on a test function is given explicitly by Equation~\ref{eqn:bracketIdentity2} with $k=0$.
\end{corollary}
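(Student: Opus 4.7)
The plan is to deduce this corollary directly from Theorem~\ref{thm:distTubeFormulaGKF}, the general distributional tube formula for GKFs. The only refinement needed in the lattice case is that the residues of the form $\Res(t^{2-s}\tubezeta_{\Knr,\Omega}(s;\delta);\omega)$ collapse into the simple closed-form expression $r_\omega\,t^{2-\omega}$. So the proof is entirely a residue computation that hinges on the simplicity of the poles in the lattice setting.

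First I would invoke Theorem~\ref{thm:distTubeFormulaGKF}, which applies because the hypotheses of the present corollary are strictly stronger (arithmetic relatedness of $r$ and $\ell$ is one of the two alternatives allowed there, and the self-avoidance assumption is shared). This immediately yields the distributional identity
\[
V_{\Knr,\Omega}(t) = \sum_{\omega\in\Dd_{\Phinr}(\HH_0)}\Res\!\left(t^{2-s}\tubezeta_{\Knr,\Omega}(s;\delta);\omega\right) + \Rr(t),
\]
with remainder satisfying $\Rr(t)=O(t^{2-\e})$ in the sense of Equation~\ref{eqn:distRemEst2}, valid for any $\e>0$ sufficiently small. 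The distributional action and the error estimate thus carry over verbatim, so no work on the remainder is required.

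Next I would use Proposition~\ref{prop:latticeSimplePolesGKF} (noting that the self-avoidance restriction in Proposition~\ref{prop:selfAvoid} forces $r<1/3$ when $n=4$, so that proposition applies in all cases under consideration) to conclude that every pole of $\zeta_{\Phinr}$ in $\HH_0$ is simple. Combining this with the explicit factorization from Equation~\ref{eqn:zetaFormulaGKF},
\[
\tubezeta_{\Knr,\Omega}(s;\delta) = \zeta_{\Phinr}(s)\bigl(\xi_{n,r}(s;\delta) + \tubezeta_{R}(s;\delta)\bigr),
\]
where $\xi_{n,r}(\,\cdot\,;\delta)$ is entire and $\tubezeta_{R}(\,\cdot\,;\delta)$ is holomorphic in $\HH_0$ (since $t^{-2}V_{\Knr,R}(t)=O(t^{0})$ as $t\to 0^+$), I conclude that $\tubezeta_{\Knr,\Omega}(s;\delta)$ inherits only simple poles in $\HH_0$, all located within $\Dd_{\Phinr}(\HH_0)$.

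Finally, at each simple pole $\omega\in\Dd_{\Phinr}(\HH_0)$, since $s\mapsto t^{2-s}$ is entire and therefore holomorphic at $\omega$ with value $t^{2-\omega}$, the residue factors as
\[
\Res\!\left(t^{2-s}\tubezeta_{\Knr,\Omega}(s;\delta);\omega\right) = t^{2-\omega}\,\Res\!\left(\tubezeta_{\Knr,\Omega}(s;\delta);\omega\right) = r_\omega\,t^{2-\omega},
\]
with $r_\omega$ a constant independent of $t$. Substituting into the distributional identity yields the stated formula. There is no genuine obstacle here — the main subtlety is merely keeping track of where the simplicity of the poles is being used, and verifying that the auxiliary factor $\xi_{n,r}(s;\delta)+\tubezeta_R(s;\delta)$ does not introduce additional singularities in $\HH_0$ that could alter the residue structure.
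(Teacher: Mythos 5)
Your proof is correct and takes essentially the same route the paper intends: invoke Theorem~\ref{thm:distTubeFormulaGKF} with $k=0$, use Proposition~\ref{prop:latticeSimplePolesGKF} to get simplicity of the poles, and evaluate the residue of a simple pole multiplied by the entire factor $t^{2-s}$. The paper leaves this proof implicit (it has already recorded the same reduction once in general form in Corollary~\ref{cor:distTubeFormulaZero}, and the pointwise analogue is explained in the sentence preceding Corollary~\ref{cor:pointwiseTubeFormulaGKFLattice}), so your fully spelled-out version, including the check that $\xi_{n,r}+\tubezeta_R$ introduces no new poles in $\HH_0$, fills in exactly what was being taken for granted.
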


\chapter{Heat Content of Fractals}
\label{chap:heat}
%
%


In this chapter, we study the \textit{heat content} of regions with self-similar fractal boundaries. On such regions, we consider the \textit{Dirichlet problem for the heat equation} and the total heat within the region as a function of time. The boundary conditions we consider are a model for heat flow into a region, with zero initial temperature inside the region and constant temperature on the boundary.

Our main goal is to demonstrate the role of complex dimensions, which describe the geometric oscillations within a fractal, in this spectral-adjacent context. Our main result is to produce explicit formulae for the heat content, the integral of the temperature function, for small values of the time parameter. This expansion is indexed by the possible complex dimensions of the fractal boundary, with these dimensions controlling the exponents which appear in the expansion. These results constitute partial progress toward understanding the relationship between geometry and spectrum of self-similar fractals in higher dimensions.

Our setting will be the heat equation on a region $\Region$ whose boundary $\Regionbd$ is the attractor of a self-similar system (or the union of such regions placed symmetrically, in the case of generalized von Koch fractals). Given a decomposition of the heat content into the images of $\Region$ under the mappings of the self-similar system, as well as the scaling property of the heat content, we apply the results of Chapter~\ref{chap:SFEs} to deduce scaling functional equations and explicit formulae for the heat content. Using the results of \cite{vdB00_generalGKF}, we produce explicit results for generalized von Koch fractals. 

\section{Perron-Wiener-Brelot Solutions to the Heat Equation}
\label{sec:heatPWBsoln}
%
%

Let $\Region\subset\RR^\dimension$ be a bounded open set. The heat equation with Dirichlet boundary conditions on $\Region$ is the second order, parabolic differential equation given by 
\index{Heat equation}
\begin{equation}
    \label{prob:generalHeatProblem}
    \left\{
    \begin{aligned}
        \partial_t u - C\Delta u &= 0   &&\text{in }\Region\times[0,\infty) \\
            u &= f                      &&\text{on }\Region\times\set{0} \\
            u &= g                      &&\text{on }\partial\Region\times (0,\infty).
    \end{aligned}
    \right.
\end{equation}
Here, $C$ is a positive constant called the diffusivity constant. Here, we consider the Dirichlet problem on a cylinder set, $\Region\times[0,\infty)$, where the spatial region $\Region$ is fixed with respect to the time parameter, for simplicity. However, in order to understand solutions to the heat equation when the boundary of $\Region$ is arbitrary, most notably fractal, we will study the theory of the heat equation on an arbitrary open set $\Regionplus$ and then specialize to this case to establish scaling functional equations. 

To understand the properties of solutions to Problem~\ref{prob:generalHeatProblem}, we must discuss the nature of what a solution means first. In particular, we will discuss the notion of a Perron-Wiener-Brelot (PWB) solution, which is suitable for extending the classical theory to arbitrary open sets $\Regionplus\subset\RRNplus$ with boundary conditions imposed on $\partial\Regionplus$. While we will later restrict our attention to cylindrical sets when imposing specific boundary conditions, the general theory of PWB solutions does not have this constraint. The advantage of the PWB solution (over, say, a weak solution existing in some corresponding Sobolev space) is that it yields a function which is pointwise defined everywhere in the interior, which is once (in time) or twice (in spatial coordinates) continuously differentiable, and which converges to the prescribed boundary value where possible. 

\subsection{History of the PWB Method for the Heat Equation}
%
%

Perron introduced his famous method for solving the Dirichlet problem for Laplace's equation \cite{Per23}. The basic premise is to obtain the solution, a harmonic function, as a supremum over a family of subharmonic functions. This method was studied by Wiener, who in particular defined a generalized solution as one obtained in this way with limits converging to the prescribed boundary values at regular boundary points and proved a criterion for the convergence of such solutions to the prescribed boundary value conditions \cite{Wie24_Dirichlet,Wie24_Potential}. 

Wiener's criterion established existence of such generalized solutions, and the uniqueness of such solutions (namely, that such a solution is identically zero when all its limits approaching regular boundary points are zero) was established by Kellogg \cite{Kel28} and Evans \cite{Eva33}. Brelot's contributions include simplifying this proof \cite{Bre55} as well as developing a general theory and extending the method (c.f. \cite{Bre67}). His general study in potential theory was already inspired by others, notably including the work of Doob \cite{Doo66}. The potential theory would be further developed and axiomatized by Bauer \cite{Bau66} and Constantinescu and Cornea \cite{CC72}.

The application of the PWB method to the heat equation relates to the probabilistic approach, owing to the connection of this equation to Brownian motion. This connection was discovered independently by Einstein \cite{Ein56} and Smoluchowski \cite{Smo06}. However, some of these same ideas were discovered as early as the work of Bachelier, who studied mathematical finance \cite{Bac00}. Einstein's work especially was a large inspiration for Wiener's mathematical development of the theory of Brownian motion \cite{Wie23,Wie24_Average,Wie76}. It is for this reason that Brownian motion may be instead called a Wiener process. 

The analysis of the heat equation on an arbitrary bounded open set in Euclidean space, using the probabilistic approach, is due to Doob \cite{Doo55}. He obtained the solution in the sense of Perron and Wiener, using the Martingale analogues of subharmonic functions, though he did not provide details for the classification of regular and irregular boundary points. Evans and Gariepy proved an analogue of Wiener's criterion for the classification of regular and irregular boundary points for solutions of the heat equation \cite{EG82}. Watson would later show that Doob's approach is equivalent to the later developments in the theory, including filling in details not originally provided by Doob \cite{Wat15}. For a modern reference, we recommend Watson's book on the subject \cite{Wat12}. 

Lastly, we note that the Dirichlet problem for the heat equation requires some extra assumptions in order to be well-posed. (A Dirichlet problem is well-posed if there exists a unique solution which depends continuously on initial conditions. We shall only need concern ourselves with existence and uniqueness in this work, though, having fixed constant boundary conditions.) As was discovered by Tychonoff, functions with growth larger than quadratic exponentials can yield nonunique solutions, but under the restriction to this growth condition solutions are unique \cite{Tyc35}. Widder further analyzed the problem and showed that nonnegative solutions are unique \cite{Wid44}. For more information about the (non)uniqueness, we recommend Widder's paper. 

\subsection{Classes of Temperature Functions}
%
%

In order to define PWB solutions, we must define a class of \textit{hypertemperatures} (and respectively \textit{hypotemperatures}) to be considered. Let $\Regionplus\subseteq\RRNplus$ be an arbitrary open set and we write $p=(x,t)\in\RR^\dimension\times\RR$ to distinguish the \textit{spatial coordinates} $x$ and the \textit{time coordinate} $t$. Let $C^{2,1}(\Regionplus)$ denote the class of continuously differentiable functions on $\Regionplus$ with continuous second partial derivatives with respect to the spatial coordinates (each $x_k$, $k=1,...,\dimension$) and one continuous derivative with respect to the time coordinate $t$. 

In the simplest case, if we consider $u\in C^{2,1}(\Regionplus)$, then we may define the heat operator\footnote{Note that we take the opposite convention as that in \cite{Wat12}, where the heat operator therein is $\Theta:=-\heatop$.} $\heatop$ by $\heatop u:= \partial_t u -\Delta u$, in which case the heat equation is simply $\heatop u =0$. A \index{Temperature}\textbf{temperature} is a function which is in $C^{2,1}(\Regionplus)$ and which solves the heat equation, i.e. $\heatop u=0$. 

A \index{Temperature!Subtemperature}\textbf{subtemperature} is then a function $u\in C^{2,1}(\Regionplus)$ such that $\heatop u\leq 0$. A \index{Temperature!Supertemperature}\textbf{supertemperature} is a function such that $\heatop u\geq 0$. Note that a function $u$ is a supertemperature if and only if $-u$ is a subtemperature, and in general we need only define one class of functions and we may use this relation to define the other. We see already the theme of the PWB approach: a temperature function is both a subtemperature and a supertemperature. 

However, for the study of the heat problem on an arbitrary open set, we will need to consider functions which are not necessarily continuously differentiable. Thus, we will need an analogue of the heat operator that does not require differentiability, as we cannot define $\heatop$ for such functions to consider sub/supertemperatures. There are several equivalent approaches. For instance, one may consider a mean value operator over an appropriate type of set (so-called \textit{heat balls}) and look for functions which agree with their mean values. This is a fundamental property of solutions to the classical heat equation, and can be used to show that such functions obey the heat equation. One may also use a classification in terms of a type of integral called a \textit{Poisson integral} which involves types of measures called \textit{parabolic measures}. 

Let us presently take the mean value approach. Let $p_0=(x_0,t_0)$ and $r>0$. A \index{Ball!Heat ball}\textbf{heat ball} is a set of the form 
\[ 
    E_{p_0,r} := \set{p=(x,t)\in\Regionplus \suchthat 
        |x-x_0| \leq \sqrt{2\dimension(t_0-t)\log(r/(t_0-t))},\, 0<t_0-r<t<t_0}.
\]
(One may represent this set in terms of a lower bound for the \textit{fundamental solution} to the heat equation, hence the name.) We say that a \index{Heat sphere}\textbf{heat sphere} is the boundary $\partial E_{p_0,r}$ of a heat ball.

In order to avoid the singular behavior which occurs near the center $p_0$ of a heat ball, we will introduce the function 
\begin{equation}
    \label{eqn:cutoffKernel}
    Q(p) = Q(x,t) := \begin{cases}
        \cfrac{|x|^2}{\strut \sqrt{ 4|x|^2|t|^2 + (|x|^2 - 2\dimension t)^2 } } & (x,t)\neq 0, \\
        1 & (x,t)=(0,0).
    \end{cases}
\end{equation}
Note that $Q$ has the property that for any sequence of points $p\in\partial E_{p_0,r}$ converging to $p_0$, 
$Q(p) \to 1 $, i.e. it is continuous on $\partial E_{p_0,r}$. 

\begin{definition}[Heat Sphere Mean Value Operator]
    \label{def:heatMeanValueOp}
    \index{Mean value operator!on a heat sphere}
    Letting $p_0=(x_0,t_0)\in\RR^\dimension\times\RR$ and $r>0$, consider the heat ball $E_{p_0,r}$. Let $dS$ denote the surface area measure on $\partial E_{p_0,r}$. Then for any function $u$ for which the integral converges, we define the \textbf{mean value operator} $\heatmean_{p_0,r}$ on $\partial E_{p_0,r}$ to be 
    \begin{equation}
        \label{eqn:defHeatMean}
        \heatmean_{p_0,r}[u]=\heatmean[u;\partial E_{p_0,r}]:= \frac1{(4\pi)^{\dimension/2}} \int_{\partial E_{p_0,r}} Q(p-p_0) u(p) dS(p).
    \end{equation}
\end{definition}

The last preliminary is to define the properties that our functions must satisfy in lieu of continuous differentiability. We first define the notions of upper and lower \textbf{semicontinuity}: a function $f:\RRNplus\to[-\infty,\infty]$ is \index{Semicontinuity!Upper semicontinuity}\textbf{upper semicontinuous} on a set $\Regionplus\subseteq \RRNplus$ if for any $r\in\RR$, the set $\set{p\in\Regionplus \suchthat f(p)<a}$ is a relatively open subset of $S$ (i.e. open in the subspace topology it inherits from $\RRNplus$). We say that an extended real-valued function $f$ is \index{Semicontinuity!Lower semicontinuity}\textbf{lower semicontinuous} if $-f$ is upper semicontinuous. 

Next, we say that an extended real-valued function $f:\RRNplus\to[-\infty,\infty]$ is \index{Finite function!Upper finite function}\textbf{upper finite} on a set $\Regionplus\subseteq\RRNplus$ if $f[\Regionplus]\subseteq [-\infty,\infty)$ and \index{Finite function!Lower finite function}\textbf{lower finite} on $\Regionplus$ if $f[\Regionplus]\subset(-\infty,\infty]$. Note that when $f$ is both upper and lower finite, it is simply a real-valued function. 

We may now define hypertemperatures and hypertemperatures. 
\begin{definition}[Hypertemperatures]
    \label{def:hypertemperature}
    \index{Temperature!Hypertemperature}\index{Temperature!Hypotemperature}
    Let $\Regionplus\subseteq\RRNplus$. We say that an extended real-valued function $f$ is a \textbf{hypertemperature} on $\Regionplus$ if $f$ is \textit{lower finite}, \textit{lower semicontinuous}, and with the following property. For any $p\in\Regionplus$ and any $\e>0$, there exists $r<\e$ such that 
    \[ f(p)\geq \heatmean[f;\partial E_{p,r}]. \]
    Let $\hyperspace(\Regionplus)$ denote the set of all hypertemperatures on $\Regionplus$.
\end{definition}

The efficient way to define when a function $f$ is a hypotemperature is to require that $-f$ be a hypertemperature. However, for the sake of completeness and establishing notation, we will give an independent definition and let this fact be an immediate corollary. 
\begin{definition}[Hypotemperatures]
    \label{def:hypotemperature}
    \index{Temperature!Hypertemperature}\index{Temperature!Hypotemperature}
    Let $\Regionplus\subseteq\RRNplus$. We say that an extended real-valued function $f$ is a \textbf{hypotemperature} on $\Regionplus$ if $f$ is \textit{upper finite}, \textit{upper semicontinuous}, and with the following property. For any $p\in\Regionplus$ and any $\e>0$, there exists $r<\e$ such that 
    \[ f(p)\leq \heatmean[f;\partial E_{p,r}]. \]
    Let $\hypospace(\Regionplus)$ denote the space of all hypotemperatures on $\Regionplus$. 
\end{definition}

Ultimately, when we look for solutions of the heat equation, we will consider the collection of all hypertemperatures and hypotemperatures that satisfy the prescribed boundary conditions. By taking an infimum of hypertemperatures or a supremum of hypotemperatures, then ideally we arrive at the same function which now satisfies the boundary conditions and will be a \text{temperature}, satisfying the heat equation proper. 

\subsection{Classification of Boundary Points}
%
%

In what follows, let $\Regionplus\subseteq\RRNplus$ be an arbitrary open set. We consider the boundary of $\Regionplus$ relative to the one-point compactification of $\RRNplus$, so that $\infty\in\partial\Regionplus$ exactly when $\Regionplus$ is unbounded. With this proviso, $\partial\Regionplus$ is otherwise the standard Euclidean boundary of $\Regionplus$. For an arbitrary open set $\Regionplus$, it turns out that we cannot prescribe boundary conditions on all of $\partial\Regionplus$. This is to be \textit{reasonably expected}, however. 

Consider the simple problem of a one-dimensional rod given by the interval $[a,b]$ and suppose that we wish to consider the problem for time $t\in[0,1]$. Suppose now to look for solutions of the heat equation $\partial_x^2u=\partial_tu$ in the set $(a,b)\times(0,1)$. Prescribing conditions on the boundary of $[a,b]\times[0,1]$ amounts to three types of constraints. On the set $(a,b)\times\set{0}$, we prescribe the initial amount of heat on the inside of the rod at time $t=0$. On each of the sets $\set{a}\times[0,1]$ and $\set{b}\times[0,1]$, we are prescribing the temperature values of the endpoints of the rod. Lastly, on the boundary piece $(a,b)\times\set{1}$, we are prescribing the {future} temperature of the inside of rod at time one. However, in this deterministic model, the conditions at time $t=0$ and on the boundary for all time $t\in[0,1]$ will determine the future temperature; it cannot be arbitrarily chosen. 

This leads us to a general classification of the types of boundary points of $\partial\Regionplus$. In order to do so, let us establish some notation regarding half-balls with respect to the time coordinate. Letting $p_0=(x_0,t_0)\in\RRNplus$, we write 
\[ H_*(p_0,r) = \set{(x,t)\suchthat |x-x_0|^2+|t-t_0|^2<r^2,\, t<t_0} \] 
for the open \index{Ball!Lower half-ball}\textbf{lower half-ball}. Similarly, we write $H^*(p_0,r)$ for the open \index{Ball!Upper half-ball}\textbf{upper half-ball}, defined analogously but with $t>t_0$. Note that these are standard balls cut in half, not heat balls. 

The boundary $\partial\Regionplus$ is classified into three types of points: \textit{normal}, \textit{semi-singular} (or \textit{semi-normal}), and \textit{(fully) singular points}. If $q\in\partial\Regionplus$, then $q$ is said to be \index{Boundary point!Normal boundary point}\textbf{normal} if either $q=\infty$ is the point at infinity or if every lower half-ball $H_*(q,r)$ centered at $q$ intersects the complement of $\Regionplus$, viz. $H_*(q,r)\setminus\Regionplus\neq\emptyset$ for all $r>0$, namely as $r\to0^+$. For abnormal boundary points $q$, eventually there exists an $r_0>0$ so that $H_*(q,r)\subseteq\Regionplus$. If there exists $r_1<r_0$ such that $H^*(q,r_1)\cap \Regionplus=\emptyset$, then $q$ is a \index{Boundary point!Fully singular boundary point}{(fully) singular} boundary point. (Note that this implies that any smaller upper half-ball $H^*(p,r_2)$, $r_2<r_1$, will also be contained in the complement of $Y$.) Otherwise, for every $r<r_0$ we have that $H^*(q,r_1)\cap \Regionplus\neq\emptyset$ and $q$ is said to be \index{Boundary point!Semi-singular boundary point}\textbf{semi-singular}.

Returning to the example from earlier, we can see now how these criteria describe the different parts of the boundary. The initial condition set, $(a,b)\times\set{0}$, clearly contains only normal points since the lower half-balls $H_*(p,r)$ are contained in the complement for any $r$, let alone merely intersecting the complement. On the edges of the rod, we also have that arbitrarily small lower half-balls always intersect the complement to the left or right, respectively. Thus this part of the boundary is also normal. 

The top part of the boundary, however, is fully singular: the lower half-balls are strictly contained in the set $\Regionplus$ and the upper half-balls are strictly contained in the complement. For these fully singular points, they can only be approached from within the set $\Regionplus$ and only from past times, not future times. The conditions above amount to requiring that fully singular boundary points look locally like such points. 

This example does not illustrate a semi-singular point, but for a simple example puncture the square $(a,b)\times(0,1)$ at an interior point, say $(x_0,t_0)$. This new point in the boundary is then semi-singular. Imposing a boundary condition here amounts to instantaneously changing the temperature on the rod at $x_0$ at the time $t_0$. In this case, one could expect a solution for times $t>t_0$ which are compatible with the current values of the temperature at $t_0$ for $x\neq x_0$ and the new prescribed temperature at $x_0$. However, the solution in the larger set will not approach this arbitrarily changed value from times $t<t_0$. 

So, for the Perron-Wiener-Brelot solutions to follow, we will impose boundary conditions only on the set of normal and semi-singular boundary points. For this reason, we say that the \index{Boundary!Essential boundary}\textbf{essential boundary} $\partial_e\Regionplus$ of a set $\Regionplus\subseteq\RRNplus$ is the union of the set of all \textit{normal} boundary points, the \index{Boundary!Normal boundary}\textbf{normal boundary} $\partial_n\Regionplus$, and the set of all \textit{semi-singular} boundary points, the \index{Boundary!Semi-singular boundary}\textbf{semi-singular boundary} $\partial_{ss}\Regionplus$. If we denote the set of \textit{fully singular} boundary points by $\partial_s\Regionplus$, called the \index{Boundary!Singular boundary}\textbf{singular boundary}, then we have that $\partial_e\Regionplus=\partial\Regionplus\setminus\partial_s\Regionplus$. Note also that the normal boundary and the semi-singular boundary are disjoint by the definitions of normal and semi-singular boundary points. Recall that $\partial\Regionplus$ is considered with respect to the one-point compactification of $\RRNplus$ and that $\infty\in\partial_n\Regionplus$ is always a normal boundary point when $\Regionplus$ is unbounded.

We now may define the sense in which boundary conditions will be satisfied. We will require convergence of the function to the boundary function at normal points to be for any sequence in $\Regionplus$ converging to the boundary points. Meanwhile, for convergence to the value of the function at semi-singular boundary points, we will only impose convergence when the sequence approaches from future times, not past times.
\begin{definition}[Regularity and Convergence on the Essential Boundary]
    \label{def:convergenceOnEssenBd}
    \index{Boundary!Essential boundary convergence}
    \index{Boundary!Regular (essential) boundary}
    Let $\Regionplus\subseteq\RRNplus$ be an open set and let $\partial_e\Regionplus=\partial_n\Regionplus\uplus\partial_{ss}\Regionplus$ be its essential boundary. Let $f$ be a function on $\partial\Regionplus$ and $u$ a function on $\Regionplus$ itself. 
    \medskip 

    We say that $u$ \textbf{converges to} $f$ at a point $q\in\partial_e\Regionplus$ if we have the following.
    \begin{itemize}
        \item If $q\in\partial_n\Regionplus$, then we have that for any sequence $\set{p_n}_{n\in\NN}\subset\Regionplus$ with $p_n\to q$, $u(p_n)\to f(q)$. 
        \item If $q=(y,s)\in\partial_{ss}\Regionplus$, then we have that for any sequence $\set{p_n=(x_n,t_n)}_{n\in\NN}\subset\Regionplus$ with $x_n\to y$ and $t_n\to s^+$, $u(p_n)\to f(q)$.
    \end{itemize}
    When $q\in \partial_e\Regionplus$ is such that these limits exist, $q$ is called a \index{Boundary point!Regular boundary point}\textbf{regular boundary point}. We say that $u$ \textbf{converges to} $f$ \textbf{on the regular (essential) boundary} if $u$ converges to $f$ at every regular point $q$ in the essential boundary.
\end{definition}

The last caveat is that given a set $\Regionplus$, it can have both \textit{regular} and \textit{irregular} boundary points in its essential boundary. (A boundary point is called \index{Boundary point!Irregular boundary point}\textbf{irregular} if it is not regular in the sense above.) So, the final task is to determine which points in the essential boundary are regular. In the case that the entire essential boundary is not regular, i.e. there exist irregular boundary points, then one must seek to understand which points are regular and which are irregular. We will not explore this issue further here, but refer the reader to the work of Evans and Gariepy \cite{EG82} or to Section~8.6 of \cite{Wat12} for more information, including sufficient conditions to ensure that an entire boundary or a specific point in the boundary is regular. 

\subsection{Perron-Wiener-Brelot Solutions}
%
%

We now have the means to define a Perron-Wiener-Brelot solution to the heat equation on an arbitrary open set. Consider the following Dirichlet problem for the heat equation on an arbitrary open set $\Regionplus\subseteq\RRNplus$. 
\index{Heat equation}
\begin{equation}
    \label{prob:arbitraryHeatProb}
    \left\{
    \begin{aligned}
        \partial_t u - C\Delta u &= 0   &&\text{in }\Regionplus, \\
            u &= f                      &&\text{on }\partial\Regionplus, \\
    \end{aligned}
    \right.
\end{equation}
Here, $C>0$ is a positive constant. 

We consider now two collections of functions which we define based on our boundary function $f$. These are either hypertemperatures bounded from below by $f$ or hypotemperatures bounded from above by $f$.
\begin{align*}
    \hyperspace(\Regionplus;f) &:= \set{w\in\hyperspace(\Regionplus)\suchthat 
        \begin{aligned}
            \liminf_{(x,t)\to (y,s)}  w(x,t) &\geq f(y,s) 
                & (x,t)\in\Regionplus,\, (y,s)\in\partial_n\Regionplus \\
            \liminf_{(x,t)\to(y,s^+)} w(x,t) &\geq f(y,s) 
                & (x,t)\in\Regionplus,\, (y,s)\in\partial_{ss}\Regionplus
        \end{aligned}
    }, \\
    \hypospace(\Regionplus;f) &:= \set{w\in\hyperspace(\Regionplus)\suchthat 
        \begin{aligned}
            \limsup_{(x,t)\to (y,s)}  w(x,t) &\leq f(y,s) 
                & (x,t)\in\Regionplus,\, (y,s)\in\partial_n\Regionplus \\
            \limsup_{(x,t)\to(y,s^+)} w(x,t) &\leq f(y,s) 
                & (x,t)\in\Regionplus,\, (y,s)\in\partial_{ss}\Regionplus
        \end{aligned}
    }.
\end{align*}
These spaces of functions are each nonempty since $w\equiv\infty$ is in the first and $w\equiv-\infty$ is in the second.

We next define the upper and lower candidates for the solution to the heat equation. 
\begin{align}
    \label{eqn:PWBcandidates}
    \begin{split}
        \wupper(p)  &:= \inf\set{w(p) \suchthat w\in\hyperspace(\Regionplus;f)}, \\
        \wlower(p)  &:= \sup\set{w(p) \suchthat w\in\hypospace(\Regionplus;f)}.
    \end{split}
\end{align}
These define two functions $\wupper$ and $\wlower$ on $\Regionplus$. If they agree (and are sufficiently regular), then the function $f$ is said to be \textit{resolutive}.

\begin{definition}[Resolutive Function]
    \label{def:resolutiveFn}
    \index{Resolutive function}
    \index{Heat equation!Resolutive boundary conditions}
    A function $f$ on $\partial_e\Regionplus$, the essential boundary of $\Regionplus\subseteq\RRNplus$, is said to be \textbf{resolutive} if the following hold. Firstly, the functions defined by Equation~\ref{eqn:PWBcandidates} are the same, i.e. $\wupper\equiv\wlower$, and in this case, we define $\wsoln:=\wupper=\wlower$. Secondly, we require that $\wsoln\in C^{2,1}(\Regionplus)$. 
\end{definition}

When the boundary function $f$ is resolutive, we obtain a \textit{temperature} function $\wsoln$, as it follows that since $\wsoln\in C^{2,1}(\Regionplus)$, $\heatop\wsoln=0$. From the definitions of the function spaces $\hyperspace(\Regionplus,f)$ and $\hypospace(\Regionplus,f)$, note that whenever the limit of $\wsoln$ as $p$ approaches a point $q$ in the essential boundary $\partial_e\Regionplus$ exists, the value of the limit must necessarily equal $f(q)$ since $\wsoln\in\hyperspace(\Regionplus,f)\cap \hypospace(\Regionplus,f)$. Therefore, $\wsoln$ converges to $f$ on the regular essential boundary of $\Regionplus$. This is exactly what it means for $\wsoln$ to be a solution to a Dirichlet problem for the heat equation in the sense of in the sense of Perron-Wiener-Brelot. 
\begin{proposition}[Perron-Wiener-Brelot (PWB) Solution]
    \label{def:PWBsolution}
    \index{Heat equation!Perron-Wiener-Brelot solution}
    Let $\Regionplus\subseteq\RRNplus$ and suppose that $f$ is a \textit{resolutive} function in the sense of Definition~\ref{def:resolutiveFn}. Then we say that $\wsoln$ is the \textbf{Perron-Wiener-Brelot solution} to Problem~\ref{prob:arbitraryHeatProb} and we have that $\wsoln$ converges to $f$ on the regular essential boundary of $\Regionplus$ in the sense of Definition~\ref{def:convergenceOnEssenBd}.
\end{proposition}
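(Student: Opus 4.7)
The plan is to verify the two substantive assertions packaged into the statement: that $\wsoln$ converges to $f$ on the regular essential boundary, and that $\wsoln$ is in fact a temperature on $\Regionplus$ (so that the word ``solution'' is justified). The resolutive hypothesis already supplies $\wupper=\wlower=\wsoln$ together with the $C^{2,1}$ regularity, so the work is to unwind the membership of $\wsoln$ in the defining families and to upgrade the pointwise mean-value information into the heat equation.

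For the boundary convergence I would proceed by a direct squeeze. Fix a regular boundary point $q\in\partial_e\Regionplus$; by Definition~\ref{def:convergenceOnEssenBd} the limit of $\wsoln(p)$ as $p\to q$ (with the directional convention appropriate to $q\in\partial_n\Regionplus$ or $q\in\partial_{ss}\Regionplus$) exists. Every $w\in\hyperspace(\Regionplus;f)$ satisfies $\liminf_{p\to q}w(p)\geq f(q)$, so taking the pointwise infimum over this family preserves the inequality in the limit and yields $\liminf_{p\to q}\wupper(p)\geq f(q)$; the symmetric argument applied to $\hypospace(\Regionplus;f)$ gives $\limsup_{p\to q}\wlower(p)\leq f(q)$. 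Since $\wupper=\wlower=\wsoln$ and this common limit exists, the two bounds collapse to $\lim_{p\to q}\wsoln(p)=f(q)$, as required.

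For the heat equation itself I would proceed via the mean-value characterizations built into Definition~\ref{def:hypertemperature} and Definition~\ref{def:hypotemperature}. The delicate point is that $\wupper$ is an infimum of hypertemperatures and $\wlower$ a supremum of hypotemperatures, and these operations do not automatically preserve lower or upper semicontinuity together with the requisite mean-value inequality. The standard remedy is a heat-ball modification: given $w\in\hyperspace(\Regionplus;f)$ and a heat ball $E_{p,r}$ compactly contained in $\Regionplus$, one replaces $w$ on $E_{p,r}$ by the heat-kernel extension of its boundary values (built from $\heatmean_{p,r}$) to produce a new function that is still in $\hyperspace(\Regionplus;f)$ and no larger than $w$. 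Applying this modification over a countable basis of heat balls and taking an infimum shows $\wupper$ itself obeys $\wupper(p)\geq\heatmean_{p,r}[\wupper]$ for all sufficiently small $r$, and the symmetric argument yields $\wlower(p)\leq\heatmean_{p,r}[\wlower]$ on the same balls. Since $\wupper=\wlower=\wsoln\in C^{2,1}(\Regionplus)$, both inequalities become equalities on small heat balls, and the classical equivalence between the heat-ball mean-value identity and $\heatop u=0$ for $C^{2,1}$ functions then forces $\heatop\wsoln=0$ throughout $\Regionplus$.

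The hard part will be the heat-ball modification and verifying that the modified function remains in $\hyperspace(\Regionplus;f)$; this requires care with the lower semicontinuous envelope at points of $\partial E_{p,r}$ and with the distinction between the normal and semi-singular parts of the boundary when checking the asymptotic conditions near $\partial_e\Regionplus$. Once the modification lemma is in hand, the remainder of the argument is routine. A more economical route, which I would likely use in the final write-up, is to invoke the Perron-Wiener-Brelot framework for the heat operator as developed in Watson~\cite{Wat12} (building on Doob~\cite{Doo55} and Evans-Gariepy~\cite{EG82}), since those sources carry out exactly this modification procedure in the generality required here; a self-contained proof would only reproduce as much of that theory as needed for the resolutive data appearing in our applications.
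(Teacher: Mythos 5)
The boundary-convergence ``squeeze'' in your second paragraph has a real gap. You claim that from $\liminf_{p\to q} w(p) \geq f(q)$ for every $w\in\hyperspace(\Regionplus;f)$, it follows that $\liminf_{p\to q}\wupper(p)\geq f(q)$, where $\wupper(p)=\inf_{w}w(p)$. This inference does not hold: a pointwise infimum over an infinite family can destroy a $\liminf$ lower bound, because the neighborhood of $q$ on which a given $w$ stays above $f(q)-\e$ depends on $w$, and without uniform control the infimum can dip arbitrarily low arbitrarily near $q$. For instance, on $(0,1)$ with $q=0$ and $f(q)=0$, take $w_n(p)=-1$ for $p\in(\tfrac1{n+1},\tfrac1n)$ and $w_n(p)=1$ otherwise; each $\liminf_{p\to0^+}w_n(p)=1$, yet $\inf_n w_n\equiv-1$. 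The dual supremum/$\limsup$ claim fails for the same reason. This is precisely the step that the barrier (Poisson modification) machinery in the PWB theory is designed to supply.

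The paper's route is different and does not attempt a squeeze through the family: it asserts directly that $\wsoln\in\hyperspace(\Regionplus;f)\cap\hypospace(\Regionplus;f)$, from which $\liminf_{p\to q}\wsoln(p)\geq f(q)$ and $\limsup_{p\to q}\wsoln(p)\leq f(q)$ follow immediately for $\wsoln$ itself, and then combines these with the existence of the limit at a regular point. That membership assertion is itself the nontrivial fact (it is again a barrier or modification theorem from the general theory), and the paper relies on Watson~\cite{Wat12} for the machinery rather than reproving it. Your third paragraph, which outlines a heat-ball modification and then proposes to cite Watson, is the right instinct; the same move is what is needed here for the boundary convergence, not the pointwise infimum/supremum step. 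If you keep the direct argument, you must first establish that $\wsoln$ belongs to $\hyperspace(\Regionplus;f)$ and $\hypospace(\Regionplus;f)$ (or invoke the convergence-at-regular-points theorem from~\cite{Wat12}); the inequality cannot be pushed through the infimum as written.

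The second and third paragraphs (recovering $\heatop\wsoln=0$ via the mean-value characterization and the modification lemma) are a correct and more detailed sketch of what the paper compresses into one sentence, and deferring the modification lemma to~\cite{Wat12} matches the paper's own level of rigor here.
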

\subsection{Parabolic Measure and Resolutive Functions}
%
%

The final ingredient in our study of PWB solutions is the last but most important remaining question. Given an open set $\Regionplus\subseteq\RRNplus$, when is a function $f$ on $\partial_e\Regionplus$ resolutive? Per Proposition~\ref{def:PWBsolution}, this condition is exactly what is required for a PWB solution to exist with the Dirichlet boundary conditions specified by $f$. 

Luckily, such functions are plentiful. In fact, any continuous function $f\in C^0(\partial_e\Regionplus)$ on the essential boundary is resolutive; see for instance Theorem~8.26 in \cite{Wat12}. We can give a characterization of the class of resolutive functions in terms of an integrability condition relating to a family of measures called \textit{parabolic measures}. 

Given the open set $\Regionplus\subseteq\RRNplus$ and a point $p\in\Regionplus$, it can be shown (see Theorem~8.27 in \cite{Wat12}) that there exists a unique nonnegative Borel probability measure $\parameasure$ on $\partial_e\Regionplus$ such that for any continuous function $f\in C^0(\partial_e\Regionplus)$, 
\begin{equation}
    \label{eqn:defPoissonInt}
    \wsoln(p) = \int_{\partial_e\Regionplus} f\,d\parameasure.
\end{equation}
An integral of this form may be called a \textit{Poisson integral} owing to their connection to the classical theory of Poisson kernels and integral representations (in the setting of harmonic functions which can be extended to the heat equation). 

\begin{definition}[Parabolic Measure and Parabolic Integrability]
    \label{def:paraMeasure}
    \index{Parabolic measure}
    \index{Measure!Parabolic measure}
    \index{Parabolic measure!Parabolical integrability}
    The completion of the measure defined by Equation~\ref{eqn:defPoissonInt}, which we shall also denote by $\parameasure$, is called the \textbf{parabolic measure} relative to $\Regionplus\subseteq\RRNplus$ and $p\in\Regionplus$. 
    \medskip 

    A function $f$ on $\partial_e\Regionplus$ is said to be \textbf{parabolically integrable} if it is $\parameasure$-integrable for any $p\in\Regionplus$. 
\end{definition}

The set of resolutive functions $f$ for a given set $\Regionplus$ is exactly the class of \textit{parabolically integrable} functions on $\partial_e(\Regionplus)$ (see for instance Corollary~8.34 of \cite{Wat12}). The integral representation given by Equation~\ref{eqn:defPoissonInt} means that the PWB solution $\wsoln$ to Problem~\ref{prob:arbitraryHeatProb} at a point $p$ is exactly the parabolic average of the boundary value $f$ with respect to the associated parabolic measure $\parameasure$. 

\section{General Heat Content Results for Self-Similar Sets}
\label{sec:heatContent}
%
%

The heat content of a set is the amount of heat within that spatial region, given by an integral of the heat solution over that region as a function of the time variable. In what follows, we will standardize the boundary conditions from Problem~\ref{prob:generalHeatProblem} in the following way. First, consider a bounded open set $\Region\subset\RR^\dimension$. At the initial time, suppose the total heat inside of $\Region$ is zero and for all time in the future, suppose that the boundary $\partial\Region$ is held at a constant temperature, which we suppose to be equal to one. Explicitly,
\index{Heat equation}
\begin{equation}
    \label{prob:specificHeatProblem}
    \left\{
    \begin{aligned}
        \partial_t u - C\Delta u &= 0   &&\text{in }\Region\times[0,\infty), \\
            u &= 0                      &&\text{on }\Region\times\set{0},    \\
            u &= 1                      &&\text{on }\partial\Region\times (0,\infty).
    \end{aligned}
    \right.
\end{equation}
In particular, we will obtain an explicit formula for this quantity for small values of $t$. These boundary conditions are a model for heat flow into a region. 

We note that for an arbitrary open set $\Regionplus\subseteq\RRNplus$, we may use the PWB solution to Problem~\ref{prob:specificHeatProblem} as established in Section~\ref{sec:heatPWBsoln}. The choice of piecewise constant boundary conditions is sufficient to ensure that the boundary function is resolutive. Explicitly, we may define the function 
\[ F(x,t) = \1_{\partial\Region}(x), \]
which is a measurable function for every $\parameasure$, with $\Regionplus=\Region\times(0,\infty)$ and $p\in\Regionplus$, since they are Borel measures and $\partial\Region$ is closed. Further, it is clearly integrable since $\parameasure(\partial_e\Regionplus)=1$ since they are probability measures. Thus $F$ is parabolically integrable and admits a PWB solution $w_{\Regionplus,F}$. 

Further, we also have that the solution to Problem~\ref{prob:specificHeatProblem} is unique since it is nonnegative, which can be seen from the minimum principle for the heat equation (originally shown in \cite{Wid44}), or else because it is bounded by the maximum principle, hence certainly of sub-quadratic exponential growth (originally shown in \cite{Tyc35}).

The choice of constant boundary conditions has another important property: it is \textit{scale invariant}. This is important for establishing the relationship between Problem~\ref{prob:specificHeatProblem} and the analogous problem for the set $\ph[X]$, where $\ph$ is a similitude. In general, the results of this section may be generalized to parabolically integrable, scale-invariant functions. 

\subsection{Heat Content and its Properties}
%
%

Let $u_\Region$ be the PWB solution to Problem~\ref{prob:specificHeatProblem} on the open bounded region $\Region$. The heat content in $F\subseteq\Region$, where $F$ is a measurable subset of $\Region$, is simply the integral of $u_\Region$ over $F$ as a function of the time parameter. 
\begin{definition}[Heat Content]
    \label{def:heatContent}
    \index{Heat content}
    Let $u_\Region$ be the PWB solution to the Dirichlet Problem~\ref{prob:generalHeatProblem} on the set $\Region$ with resolutive boundary conditions. The heat content $\heatContent$ inside any measurable set $F\subseteq\Region$ is defined as 
    \begin{equation}
        \heatContent(t;F) := \int_F u_\Region(x,t) \,dx.
    \end{equation}
\end{definition}
When $F=\Region$, then we simply write $\heatContent(t)=E_\Region(t;\Region)$ and call it the \index{Heat content!Total heat content}\textbf{total heat content}. Note that this definition may be stated for the PWB solution for any Dirichlet problem for the heat equation with resolutive boundary conditions, viz. Problem~\ref{prob:arbitraryHeatProb}. 

In the case of Problem~\ref{prob:specificHeatProblem}, we note that the heat content $\heatContent(t;F)$ is uniformly bounded for all $t\in[0,\infty)$ and any measurable set $F$. 
\begin{proposition}[Heat Content Boundedness]
    \label{prop:heatContentBounded}
    Let $u_\Region$ be the PWB solution to Problem~\ref{prob:specificHeatProblem}. Then for any $t\in[0,\infty)$ and any measurable set $F\subseteq\Region$,
    \begin{equation}
        \label{eqn:heatContentBounded}
        |\heatContent(t;F)| \leq \measure{\Region}.
    \end{equation}
\end{proposition}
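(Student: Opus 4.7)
The plan is to establish the pointwise bound $0 \leq u_\Omega(x,t) \leq 1$ on $\Omega \times [0,\infty)$ and then integrate. Once this pointwise bound is in hand, the proposition follows immediately from the monotonicity of the Lebesgue integral: for any measurable $F \subseteq \Omega$,
\[
    |\heatContent(t;F)| = \left|\int_F u_\Region(x,t)\,dx\right| \leq \int_F |u_\Region(x,t)|\,dx \leq \int_F 1\,dx = \measure{F} \leq \measure{\Region}.
\]

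To obtain the pointwise bound, I would appeal to the Poisson integral representation of the PWB solution given in Equation~\ref{eqn:defPoissonInt}. Writing $\Regionplus = \Region \times (0,\infty)$ and letting $F(x,t) = \1_{\partial\Region}(x)$ be the (resolutive) boundary function of Problem~\ref{prob:specificHeatProblem}, for every $p = (x,t) \in \Regionplus$ we have
\[
    u_\Region(x,t) = \wsoln(p) = \int_{\partial_e\Regionplus} F\,d\parameasure.
\]
Since $F$ takes values in $\{0,1\}$ and $\parameasure$ is a nonnegative Borel probability measure on $\partial_e\Regionplus$ (per Definition~\ref{def:paraMeasure}), the integrand is between $0$ and $1$ and the total mass of $\parameasure$ is $1$. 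Hence $0 \leq u_\Region(x,t) \leq 1$ for every $(x,t) \in \Regionplus$, as required.

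The only subtlety I anticipate is justifying that the representation Equation~\ref{eqn:defPoissonInt} applies here, namely that $F = \1_{\partial\Region}$ is parabolically integrable. This is a mild technical point since $F$ is a bounded Borel-measurable function and $\parameasure$ is a probability measure (hence a finite Borel measure on $\partial_e\Regionplus$), so boundedness suffices for integrability. Alternatively, one may bypass the Poisson representation and invoke the parabolic maximum and minimum principles directly: the constant functions $w \equiv 1$ and $w \equiv 0$ are, respectively, a hypertemperature in $\hyperspace(\Regionplus;F)$ and a hypotemperature in $\hypospace(\Regionplus;F)$, so by the defining infimum and supremum in Equation~\ref{eqn:PWBcandidates}, $0 = \wlower(p) \leq \wsoln(p) \leq \wupper(p) = 1$. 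Either route yields the pointwise bound in a single step, after which the integration argument above completes the proof.
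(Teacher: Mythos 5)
Your proposal is correct, and its overall architecture --- establish the pointwise bound $0 \leq u_\Region \leq 1$ and then integrate --- is the same as the paper's. The difference lies in how the pointwise bound is justified. The paper simply invokes the minimum principle (to get $u_\Region \geq 0$, since $u_\Region(x,0)=0$) and the maximum principle (to get $u_\Region \leq 1$, the largest boundary value), treating these as black boxes. You instead offer two routes: the Poisson representation $\wsoln(p) = \int_{\partial_e\Regionplus} F\,d\parameasure$ combined with the facts that $F \in \{0,1\}$ and $\parameasure$ is a probability measure, or, alternatively, the direct observation that $w\equiv 1 \in \hyperspace(\Regionplus;F)$ and $w\equiv 0 \in \hypospace(\Regionplus;F)$, so the inf/sup defining $\wupper$ and $\wlower$ are sandwiched between $0$ and $1$. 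Your second route is arguably better adapted to the setting the paper has built up, since it is a one-line consequence of the Perron-Wiener-Brelot construction itself and sidesteps the question of exactly which version of the parabolic maximum/minimum principle applies to PWB solutions on an arbitrary open cylinder. One small slip in your writeup: the chain should read $0 \leq \wlower(p) = \wsoln(p) = \wupper(p) \leq 1$, not $0 = \wlower(p) \leq \wsoln(p) \leq \wupper(p) = 1$; the comparison functions $w\equiv 0$ and $w\equiv 1$ give inequalities $\wlower \geq 0$ and $\wupper \leq 1$, and $\wlower = \wupper = \wsoln$ by resolutivity, but $\wlower$ and $\wupper$ themselves are not equal to the constants $0$ or $1$.
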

\begin{proof}
    This is an immediate corollary of the maximum principle for the heat equation and the boundedness of our chosen set $\Region$. Firstly, we have that our solution $u_\Region$ is nonnegative because $u_\Region(x,0)=0$ and the minimum principle. This implies that for any $F\subseteq\Region$,
    \[ 0\leq \heatContent(t;F) \leq \heatContent(t;\Region), \]
    which follows from properties of the Lebesgue measure. Using now the maximum principle for a solution to the heat equation, we have that $|u_\Region(x,t)|\leq 1$, the largest boundary value. Thus,
    \[ \heatContent(t;\Region) \leq \int_\Region dx = \measure{X} <\infty. \]
\end{proof}

Next, we deduce a scaling property for the heat content. This will require imposing scale-invariant boundary conditions, the simplest of which are constant boundary conditions.
\begin{proposition}[Heat Content Scaling Property]
    \label{prop:heatContentScaling}
    \index{Scale invariance}
    \index{Scale invariance!Parabolic scale invariance}
    \index{Scaling law!of heat content}
    Let $u_\Region$ and $u_{\ph[\Region]}$ be PWB solutions to Problem~\ref{prob:generalHeatProblem} on a bounded open set $\Region$ and $\ph[\Region]$ in $\RR^\dimension$, respectively, where $\ph$ is a similitude of $\RR^\dimension$ with scaling ratio $\lambda>0$. Suppose that $f$ and $g$ are chosen so that $u_\Region$ is unique, e.g. if they are both nonnegative (in which case $u_\Region\geq 0$, which is sufficient by \cite{Wid44}).
    \medskip
    
    Suppose that the boundary functions $f$ and $g$ have the following scale invariance properties with respect to $\ph$: for any $x\in\Region$, $f$ is \textbf{scale invariant} (i.e. $f(\ph(x),0)=f(x,0)$) and $g$ is \textbf{parabolic scale invariant} (i.e. $g(\ph(x),\lambda^2t)=g(x,t)$). Then 
    \[ E_{\ph[\Region]}(t) = \lambda^\dimension E_\Region(t/\lambda^2). \]
\end{proposition}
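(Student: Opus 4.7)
The plan is to exploit the parabolic scaling symmetry of the heat equation, translate it into a corresponding symmetry for the PWB solutions via uniqueness, and then convert to heat content by a change of variables.

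First I would build a candidate solution on $\ph[\Region]$ from the known solution $u_\Region$ on $\Region$. Specifically, define
\[
\tilde u(y,s) := u_\Region\bigl(\ph^{-1}(y),\, s/\lambda^2\bigr), \qquad (y,s)\in \ph[\Region]\times[0,\infty).
\]
Since $\ph$ is a similitude of scaling ratio $\lambda$, the map $\ph^{-1}$ is a similitude of scaling ratio $1/\lambda$, hence $C^\infty$ and volume-scaling by $\lambda^{-\dimension}$. Decomposing $\ph^{-1}$ as the composition of an isometry and the uniform dilation $y\mapsto y/\lambda$, the chain rule gives $\Delta_y = \lambda^{-2}\Delta_x$ (isometries commute with the Laplacian), while $\partial_s = \lambda^{-2}\partial_t$ from the time rescaling. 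Thus $\partial_s\tilde u - C\Delta_y \tilde u = \lambda^{-2}(\partial_t u_\Region - C\Delta_x u_\Region) \circ (\ph^{-1},\,\cdot/\lambda^2) = 0$ on $\ph[\Region]\times(0,\infty)$, so $\tilde u$ solves the heat equation.

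Next I would check the boundary data on $\ph[\Region]$. At $s=0$, scale invariance of $f$ gives $\tilde u(y,0) = f(\ph^{-1}(y),0) = f(y,0)$; on $\partial\ph[\Region] = \ph[\partial\Region]$ for $s>0$, parabolic scale invariance of $g$ gives $\tilde u(y,s) = g(\ph^{-1}(y),s/\lambda^2) = g(y,s)$ (using $g(\ph(x),\lambda^2 t) = g(x,t)$ with $x=\ph^{-1}(y)$, $t=s/\lambda^2$). These boundary identifications are pointwise, but what is really needed is that $\tilde u$ converges to the prescribed boundary function on the regular essential boundary of $\ph[\Region]\times[0,\infty)$ in the sense of Definition~\ref{def:convergenceOnEssenBd}; this follows because $\ph$ is a homeomorphism that preserves the classification of boundary points (normal, semi-singular, fully singular) and sequential limits transport through $\ph^{-1}$. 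Combined with the uniqueness hypothesis imposed on $u_{\ph[\Region]}$ (e.g.\ nonnegativity of the boundary data together with Widder's theorem \cite{Wid44}, which forces nonnegativity and hence uniqueness of both $u_\Region$ and $\tilde u$), I conclude $u_{\ph[\Region]} \equiv \tilde u$.

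Finally, I would compute the heat content by changing variables $y = \ph(x)$, whose Jacobian determinant has absolute value $\lambda^\dimension$:
\[
E_{\ph[\Region]}(t) = \int_{\ph[\Region]} u_{\ph[\Region]}(y,t)\,dy = \int_\Region u_\Region(x, t/\lambda^2)\,\lambda^\dimension\,dx = \lambda^\dimension E_\Region(t/\lambda^2).
\]
The main obstacle is the middle step: making precise the claim that $\tilde u$ is the PWB solution on $\ph[\Region]$, since the PWB framework is formulated for general open subsets of $\RR^{\dimension+1}$ and requires checking that $\ph\times(\cdot/\lambda^2)$ sends the classes $\hyperspace$ and $\hypospace$ (and the limit conditions defining them) for the cylinder $\Region\times(0,\infty)$ bijectively onto those for $\ph[\Region]\times(0,\infty)$. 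Once one observes that the heat ball mean-value operator $\heatmean_{p,r}$ transforms covariantly under parabolic similitudes---which is a direct computation using the explicit form of $Q$ in Equation~\ref{eqn:cutoffKernel} and the fact that parabolic similitudes map heat balls to heat balls---this transport of $\hyperspace(\Regionplus;f)$ and $\hypospace(\Regionplus;f)$ follows, and uniqueness finishes the argument.
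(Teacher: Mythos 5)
Your proof is correct and follows essentially the same route as the paper, merely reversing the direction of the conjugation: you define $\tilde u(y,s) = u_\Region(\ph^{-1}(y),s/\lambda^2)$ as a pushforward of the known solution and show it equals $u_{\ph[\Region]}$ by uniqueness, whereas the paper defines $v(x,t) = u_{\ph[\Region]}(\ph(x),\lambda^2 t)$ as a pullback of the unknown solution and shows it equals $u_\Region$. The chain-rule verification of the heat equation, the check of boundary data, the appeal to uniqueness, and the final Jacobian change of variables are all the same.

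The one place you diverge is the technical point of transporting the PWB structure: you propose showing that parabolic similitudes carry heat balls to heat balls and that $\heatmean_{p,r}$ transforms covariantly (so that $\hyperspace$ and $\hypospace$ are mapped to each other), whereas the paper instead invokes the barrier criterion (Theorem~8.46 of \cite{Wat12}) to conclude that regular boundary points are preserved. Both are legitimate ways to close that gap; the barrier argument is shorter because it reduces to transporting a single function, while your mean-value route is more structural and would also directly verify that the classes $\hyperspace(\Regionplus;f)$, $\hypospace(\Regionplus;f)$ correspond bijectively, but it requires a computation with $Q$ that you only sketch. Either way the proof is sound.
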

\begin{proof}
    
    
    First, we have that for any $x\in \Region$ and for all $t>0$,
    \begin{equation}
        \label{eqn:heatScaling}
        u_{\ph[X]}(\ph(x),\lambda^2 t)=u_\Region(x,t).
    \end{equation}
    To see this, we note that the function $v(x,t):=u_{\ph[\Region]}(\ph(x),\lambda^2 t)$ satisfies the heat equation for any $x\in\Region$ and for all $t>0$, since $(\partial_t-\Delta)v=\lambda^2(\partial_t-\Delta)u_{\ph[\Region]}=0$ on the points $(\ph(x),t)\in\ph[\Region]\times(0,\infty)$. 
    
    Furthermore, $v$ satisfies the same boundary conditions as the function $u_\Region$. Let us use the notation in the discussion of Problem~\ref{prob:arbitraryHeatProb}: let $u_\Region=u_{\Regionplus,F}$ and $u_{\ph[\Region]}=u_{\ph[\Regionplus],F}$, where $\Regionplus=\Region\times(0,\infty)$ and where $F$ represents respective boundary conditions $f$ and $g$ when appropriately restricted. Let $p_n=(x_n,t_n)\to q=(y,s)$ be a sequence of points in $\Regionplus$ approaching the boundary point $q$, and in the case of a semi-singular boundary points we assume that $t\to s^+$. Let $p'=(\ph(x_n),\lambda^2 t)$ and $q'=(\ph(y),\lambda^2t)$ denote the corresponding points under parabolic transformation. 
    
    By assumption, at any regular boundary point $q$ (respectively $q'$) we have that 
    \begin{align*}
        u_{\Regionplus,F}(p_n)&\to F(q) &\text{ as }p_n\to q\in\partial_e\Regionplus, \\
        u_{\ph[\Regionplus],F}(p'_n)&\to F(q') &\text{ as }p_n\to q\in\partial_e\Regionplus.
    \end{align*}
    In other words, $u_{\Regionplus,F}$ (respectively $u_{\ph[\Regionplus],F}$) converge on the regular essential boundary in the sense of Definition~\ref{def:convergenceOnEssenBd}. We note that if $q$ is a regular boundary point, then so too is $q'$ for its corresponding boundary. This can be seen from the barrier criterion (see for instance Theorem~8.46 of \cite{Wat12}). Namely, if there is a barrier $w$ defined near the point $q$, then the corresponding function $w(\ph(x),\lambda^2t)$ will be a barrier at $q'$. 
    
    By definition, $v(p)=u_{\ph[\Regionplus],F}(p')$, so $v(p)\to F(q')$. Under the parabolic scale invariance assumption, $F(q')=F(q)$. Thus, we conclude by uniqueness that $v(x,t)=u_\Region(x,t)$, which is exactly Equation~\ref{eqn:heatScaling}. 
    Using properties of the Lebesgue integral and Equation~\ref{eqn:heatScaling}, it follows that
    \begin{align*}
        E_{\ph[\Region]}(t) &= \int_\Region u_{\ph[\Region]}(\ph(x) ,t)\,d\ph(x) \\ 
        &= \lambda^{\dimension}\int_\Region u_{\ph[\Region]}(\ph(x),\lambda^2(t/\lambda^2))\,dx \\ 
        &= \lambda^{\dimension}\int_\Region u_{\Region}(x,t/\lambda^2)\,dx \\ 
        &= \lambda^{\dimension}E_\Region(t/\lambda^2).
    \end{align*}
\end{proof}

Note that an important part of this scaling property is that the heat content in a region is related to a \textit{rescaled} problem's heat content (c.f. $E_{\lambda\Region}(t)$), not a \textit{restricted} heat content (c.f. $E_\Region(t;\lambda\Region)$). Further, we note that Proposition~\ref{prop:heatContentScaling} implies that the total heat content $\heatContent(t)=E(t;\Region)$ viewed as a family of functions satisfies a $2$-scaling law in the sense of Definition~\ref{def:scalingLaw}, provided a fixed set of boundary conditions for which Proposition~\ref{prop:heatContentScaling} holds. We state this in our specific case, i.e. for Problem~\ref{prob:specificHeatProblem}.

\begin{corollary}[$2$-Scaling Law of Heat Content]
    \label{cor:heatScalingLaw}
    \index{Scaling law!of heat content}
    Given any bounded open set $A\subset\RRN$ and any similitude $\ph$ on $\RRN$, let $E_A(t)$ be the total heat content for the PWB solution $u_\Region$ of Problem~\ref{prob:specificHeatProblem} for each of the sets $\Region=A$ and $\Region=\ph^{\circ k}[A]$, $k\in\NN$. Then the family of normalized functions $t^{-\dimension/2}E_A(t)$ (with respect to the $\Phi$-closed set $\Aa=\set{\ph^k[A]\suchthat k\in\NN_0}$) satisfies a $2$-scaling law in the sense of Definition~\ref{def:scalingLaw}, viz. for every $A\in\Aa$ and $t>0$,
    \[ t^{-\dimension/2}E_A(t/\lambda^2) = t^{-\dimension/2}E_{\ph[A]}(t). \]
\end{corollary}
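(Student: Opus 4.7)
The plan is to view Corollary~\ref{cor:heatScalingLaw} as an immediate consequence of Proposition~\ref{prop:heatContentScaling} once its hypotheses are verified and a one-line algebraic manipulation is carried out. First, I would check that for each $A\in\Aa$ the Dirichlet Problem~\ref{prob:specificHeatProblem} is well-posed with a unique PWB solution $u_A$. The initial datum $f\equiv 0$ and the lateral datum $g\equiv 1$ are piecewise constant, hence resolutive (by the standard fact that continuous boundary data on $\partial_e\Regionplus$ are resolutive, cited from \cite{Wat12}). Uniqueness follows from the minimum principle: since $u_A\geq 0$ by the maximum/minimum principle (both data are nonnegative), Widder's theorem \cite{Wid44} gives uniqueness.

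Next, I would verify that the boundary data satisfy the scale-invariance hypotheses of Proposition~\ref{prop:heatContentScaling}. Since $f\equiv 0$ and $g\equiv 1$ are constant functions, they are trivially scale invariant and parabolically scale invariant, respectively. Moreover, by construction $\Aa=\set{\ph^{\circ k}[A]\suchthat k\in\NN_0}$ is $\set{\ph}$-closed, and since $\ph$ is a similitude (hence a homeomorphism), each element of $\Aa$ is a bounded open set in $\RRN$, so the same problem is posed on each one with identical (constant) boundary data. Applying Proposition~\ref{prop:heatContentScaling} to the pair $A$ and $\ph[A]$ with scaling ratio $\lambda>0$ then yields
\begin{equation*}
    E_{\ph[A]}(t) = \lambda^{\dimension}\,E_A(t/\lambda^2), \qquad t>0.
\end{equation*}

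Finally, with the normalized family $f(t;A):=t^{-\dimension/2}E_A(t)$, I would simply compute
\begin{equation*}
    f(t;\ph[A]) = t^{-\dimension/2}E_{\ph[A]}(t)
        = t^{-\dimension/2}\lambda^{\dimension}E_A(t/\lambda^2)
        = (t/\lambda^2)^{-\dimension/2}E_A(t/\lambda^2)
        = f(t/\lambda^2;A),
\end{equation*}
which is precisely the $\alpha=2$ scaling law of Definition~\ref{def:scalingLaw}. There is no serious obstacle here: the corollary is a bookkeeping consequence of the previously established scaling identity, and the only subtle point is observing that constant boundary conditions are automatically scale invariant in both the ordinary and parabolic senses, so that the hypotheses of Proposition~\ref{prop:heatContentScaling} are satisfied free of charge.
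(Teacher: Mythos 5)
Your proposal is correct and follows essentially the same route as the paper's proof: verify that the constant boundary data of Problem~\ref{prob:specificHeatProblem} are scale invariant and parabolically scale invariant (and give a unique, nonnegative PWB solution), check that each $\ph^{\circ k}[A]$ is a bounded open set, apply Proposition~\ref{prop:heatContentScaling}, and finish with the one-line manipulation $t^{-\dimension/2}\lambda^\dimension = (t/\lambda^2)^{-\dimension/2}$. The extra remarks you include on resolutivity are fine but not needed beyond what Proposition~\ref{prop:heatContentScaling} already presupposes.
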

\begin{proof}
    The constant, nonnegative boundary conditions of Problem~\ref{prob:specificHeatProblem} ensure that $u_\Region$ is unique and are clearly scale invariant and parabolically scale invariant in the respective variables. Because $\ph$ is a similitude, every set $\ph^{\circ k}[A]$ is bounded (use the bound $\lambda^kC$, where $C$ is a bound for $A\subseteq B_C(0)$) and open (which follows because $\ph$ is injective and its inverse is also a similitude, hence continuous). 

    Applying Proposition~\ref{prop:heatContentScaling} to an arbitrary set $A\in\Aa$, we have that 
    \begin{align*}
        t^{-\dimension/2}E_{\ph[A]}(t) = t^{-\dimension/2}\lambda^\dimension E_{A}(/\lambda^2) = (t/\lambda^2)^{-\dimension/2}E_A(t/\lambda^2).
    \end{align*}
\end{proof}

Suppose now that our region $\Region$ has a self-similar boundary $\Regionbd$ corresponding to a self-similar system $\Phi$ and that $\Omega$ is an osculating set for $\Phi$. In order to use the scaling property to obtain a scaling functional equation, we will need an induced decomposition in the sense of Definition~\ref{def:inducedDecomp}. For later use, we define the remainder which occurs as the difference of the heat content and the sum of scaled copies. Given a suitable estimate of this quantity, we will be able to obtain explicit formulae for the heat content.

\begin{definition}[Decomposition Remainder]
    \label{def:heatDecompRem}
    \index{Decomposition!Decomposition remainder}
    Let $u_\Region$ be the PWB solution to Problem~\ref{prob:specificHeatProblem} on the bounded open set $\Region\subset\RRN$. Suppose that $\Regionbd=\partial\Region$ is the attractor of a self-similar system $\Phi$ and that the relative fractal drum $(\Regionbd,\Region)$ is osculant. 
    \medskip 

    We define the \textbf{decomposition remainder} of $\heatContent$ to be the quantity 
    \begin{equation}
        \label{eqn:defDecompRem}
        \decompRem(t) := \heatContent(t) - \sum_{\ph\in\Phi} E_{\ph[\Region]}(t).
    \end{equation}
    The normalized remainder is the quantity $R(t):=t^{-\dimension/2}\decompRem(t)$ for $t>0$.    
\end{definition}

Note that in the case of Problem~\ref{prob:specificHeatProblem}, we can deduce that $\decompRem$ is bounded for all $t\geq 0$ as a corollary of Proposition~\ref{prop:heatContentBounded}. This will provide an upper bound for the abscissa of convergence of its truncated Mellin transform, $\Mellin^\delta[t^{-\dimension}\decompRem(t)]$, as by Lemma~\ref{lem:MellinHolo} we may deduce that it is holomorphic in the right half plane $\HH_{\dimension/2}$. This helps establish some technical preliminaries for later proofs, allowing us to concern ourselves solely with asymptotic estimates of the form $\decompRem(t)=O(t^{\rempow/2})$ as $t\to0^+$. 

The most precise results occur when there is no error term $R$ (in which case we have estimates of the form $O(t^{n})$ at $t\to0^+$ for any $n>0$), which occurs when the set $X$ partitions into disjoint, self-similar copies (up to sets of measure zero). The novelty of this method, though, lies in its ability to handle cases where this is not true, but with explicit estimates for the degree to which the decomposition is not exact, such as in the case of generalized von Koch fractals. The key part of using this scaling function approach is obtaining good estimates for the remainder term (viz. for small values of $\sigma_0$), as the expansion will be explicit up to an order determined by these estimates. For the reader wishing to establish such functional equations for a particular class of self-similar sets, we recommend studying the methods and results in \cite{FLV95a,vdB00_squareGKF}.

\subsection{Heat Zeta Functions}
%
%

Suppose that $\Region\subset\RR^\dimension$ is a bounded, self-similar set and let $\Phi$ be a self-similar system with $\Region$ as its attractor. Suppose that $\Omega\subset\RR^\dimension$ is an osculating set for $\Phi$ so that $(X,\Omega)$ is an osculant fractal drum (see Definition~\ref{def:oscRFD}). Let 
\begin{equation}
    \label{eqn:partition}
    X = \enclose{\biguplus_{\ph\in\Phi} \ph[\Omega]} \uplus \remset
\end{equation}
be a partition of $X$ where $\remset$ is defined as $\remset:=X\setminus (\cup_{\ph\in\Phi}\ph[\Omega])$. That the union is disjoint follows from the definition of an osculating set.  

Our goal is to use the methods of Chapter~\ref{chap:SFEs} to study Problem~\ref{prob:specificHeatProblem} on the region $\Region$. If the self-similar system $\Phi$ induces a decomposition of the total heat contents (in the sense of Definition~\ref{def:inducedDecomp}), we can use Corollary~\ref{cor:heatScalingLaw} to obtain a scaling functional equation. To this end, we establish now some preliminaries and notation for the application of this theory. 

We begin with the notion of a \textit{heat zeta function}, defined in analogy with the definition for tube zeta functions and their normalized quantities which satisfy a scaling law. We define it for the general problem on a cylindrical set and in the case of Problem~\ref{prob:specificHeatProblem}, we say that the heat zeta function is a parabolic (heat) zeta function owing to the nature of the solution $u_\Region$ as a parabolic average at the points of the given set $\Region$. 
\begin{definition}[Heat Zeta Function]
    \label{def:heatZetaFn}
    \index{Heat zeta function}
    \index{Heat zeta function!Parabolic (heat) zeta function}
    Let $u_\Region$ be the PWB solution to Problem~\ref{prob:generalHeatProblem} on the open set $\Region\subset\RRNplus$ with a resolutive boundary function $F=f\1_{t=0}+g\1_{\partial\Region}$. Suppose that the total heat content $\heatContent(t)$ associated to $u_\Region$ is integrable and bounded for $t\in(0,\delta]$ and satisfies the estimate that $\heatContent(t)=O(t^{-\sigma_0})$ as $t\to0^+$. 
    \medskip

    Then we define the \textbf{heat zeta function} for $s\in\HH_{\sigma_0}$ to be
    \begin{equation}
        \label{eqn:defHeatZeta}
        \genheatzeta(s;\delta) := \Mellin^\delta[t^{-\dimension/2}\heatContent(t)](s)
    \end{equation}
    and extended by analytic continuation. If $F=\1_{\partial\Region}$ as in Problem~\ref{prob:specificHeatProblem}, we say that $\heatzeta:=\genheatzeta$ is a \textbf{parabolic} (heat) zeta function. 
\end{definition}
Note that we have that $\genheatzeta$ is holomorphic in $\HH_{\sigma_0}$ by Lemma~\ref{lem:MellinHolo}. When we establish functional equations for the heat content induced by a self-similar system $\Phi$, we also note that the relevant scaling operator will be $L_\Phi^2:=\sum_{\ph\in\Phi}M_{\lambda_\ph^2}$, but the associated scaling zeta function $\zeta_\Phi$ is still the primary function of interest since $\zeta_\Phi(2s)=\zeta_{L_\Phi^2}(s)$. Owing to the effect of this quadratic scaling, viz. $\alpha=2$ as in Chapter~\ref{chap:SFEs}, we will also state our estimates for the heat content remainders in the form $R(t)=O(t^{-\sigma_0/2})$ (as $t\to0^+$). We state our results provided some control over the decomposition remainder as in Definition~\ref{def:heatDecompRem}. 

\begin{theorem}[Formula for the Heat Zeta Function]
    \label{thm:heatZetaFormula}
    \index{Heat zeta function!Explicit formula}
    Let $\heatContent$ be the heat content of the PWB solution to Problem~\ref{prob:specificHeatProblem}. Let $\Phi$ be a self-similar system and suppose that $\heatContent$ decomposes according to $\Phi$ with decomposition remainder $\decompRem$ (as in Definition~\ref{def:heatDecompRem}) which satisfies $\decompRem(t)=O(t^{\rempow/2})$ as $t\to0^+$.
    \medskip 

    Define $\sigma_R:=\sigma_0/2$ and let $\zeta_\Phi$ be the scaling zeta function associated to $\Phi$. Then for any $\delta>0$ and for all $s\in\HH_{\sigma_R}\setminus\Dd_\Phi$,
    \begin{equation}
        \label{eqn:heatZetaFormula}
        \heatzeta(s;\delta) = \zeta_\Phi(2s)(\partialheatzeta(s;\delta)+\zeta_R(s;\delta)),
    \end{equation}
    with $\heatzeta(s;\delta)$ meromorphic in $\HH_{\sigma_R}$, having poles contained in a subset of $\Dd_\Phi$, the set of poles of $\zeta_\Phi$. Here, $\zeta_R(s;\delta):= \Mellin^\delta[t^{-\dimension/2}\decompRem(t)](s)$ is a holomorphic function in $\HH_{\sigma_R}$ and 
    \begin{equation}
        \label{eqn:partialZetaDef}
        \partialheatzeta(s;\delta):= \sum_{\ph\in\Phi}\lambda_\ph^{2s}\Mellin_\delta^{\delta/\lambda_\ph^2}[t^{-\dimension/2}\heatContent(t)](s)
    \end{equation}
    is an entire function. 
\end{theorem}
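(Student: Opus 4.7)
The plan is to deduce Theorem~\ref{thm:heatZetaFormula} as a direct application of Corollary~\ref{cor:structureOfPoles} (with parameter $\alpha = 2$) once the appropriate scaling functional equation has been identified. The bulk of the work is bookkeeping: rewriting the hypothesized decomposition of the heat content as an SFE for its $(-\dimension/2)$-normalized counterpart, and then checking that the ingredients meet the hypotheses of the general result.

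First, I would verify that $\Phi$ induces the scaling functional equation $f = L_\Phi^2[f] + R$ on $\RR^+$, where $f(t) := t^{-\dimension/2}\heatContent(t)$ and $R(t) := t^{-\dimension/2}\decompRem(t)$. By Definition~\ref{def:heatDecompRem}, we already have the induced decomposition $\heatContent(t) = \sum_{\ph\in\Phi} E_{\ph[\Region]}(t) + \decompRem(t)$. Combining this with the $2$-scaling law of Corollary~\ref{cor:heatScalingLaw} (which gives $E_{\ph[\Region]}(t) = \lambda_\ph^{\dimension}\heatContent(t/\lambda_\ph^2)$), and dividing through by $t^{\dimension/2}$, each scaled term becomes
\[
    t^{-\dimension/2}E_{\ph[\Region]}(t) = (t/\lambda_\ph^2)^{-\dimension/2}\heatContent(t/\lambda_\ph^2) = M_{\lambda_\ph^2}[f](t),
\]
so the decomposition assumes exactly the form $f = L_\Phi^2[f] + R$ required, with $L_\Phi^2 = \sum_{\ph\in\Phi} M_{\lambda_\ph^2}$.

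Second, I would verify the hypotheses on $f$ and $R$ needed to invoke Corollary~\ref{cor:structureOfPoles}. Continuity of $f$ and $R$ on $\RR^+$ follows from the continuity of the PWB solution $u_\Region$ (hence of $\heatContent$, as an integral over the fixed bounded region $\Region$) and from the closure properties of the decomposition (\ref{eqn:defDecompRem}). For the asymptotic bounds: Proposition~\ref{prop:heatContentBounded} gives $|\heatContent(t)| \leq \measure{\Region}$ for all $t \geq 0$, so $f(t) = O(t^{-\dimension/2})$ as $t \to 0^+$, giving $\sigma_1 = -\dimension/2$. By the standing hypothesis $\decompRem(t) = O(t^{\rempow/2})$, we get $R(t) = O(t^{-\sigma_0/2})$, so we may take $\sigma_R = \sigma_0/2$, which matches the definition of $\sigma_R$ in the theorem statement.

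Third, I would apply Corollary~\ref{cor:structureOfPoles} with $\alpha = 2$. This directly produces the identity
\[
    \zeta_f(s;\delta) = \zeta_\Phi(2s)\bigl(\xi_{L_\Phi^2,f}(s;\delta) + \zeta_R(s;\delta)\bigr),
\]
which, upon recognizing that $\zeta_f(s;\delta) = \heatzeta(s;\delta)$ and that Equation~\ref{eqn:defPartialZeta} specializes to Equation~\ref{eqn:partialZetaDef} when the scaling ratios are $\lambda_\ph^2$, is exactly Equation~\ref{eqn:heatZetaFormula}. Corollary~\ref{cor:structureOfPoles} also gives that $\xi_{L_\Phi^2,f}$ is entire, that $\zeta_R(s;\delta)$ is holomorphic on $\HH_{\sigma_R}$ (by Lemma~\ref{lem:MellinHolo}), and that $\heatzeta(s;\delta)$ inherits a meromorphic extension to $\HH_{\sigma_R}$ whose poles are contained in the pole set of $\zeta_\Phi(2s)$ in that half-plane. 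The main thing to be careful about is simply the translation between the $(\alpha = 2)$-poles of $\zeta_{L_\Phi^2}$ and the poles $\Dd_\Phi$ of $\zeta_\Phi$ itself; I do not anticipate any serious obstacle, since the rest of the argument is a direct citation.
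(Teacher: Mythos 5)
Your proposal is correct and takes essentially the same route as the paper: establish the scaling functional equation $t^{-\dimension/2}\heatContent = L_\Phi^2[t^{-\dimension/2}\heatContent] + t^{-\dimension/2}\decompRem$ from Corollary~\ref{cor:heatScalingLaw} and Proposition~\ref{prop:inducedSFE}, then invoke the general zeta-function identity (Corollary~\ref{cor:structureOfPoles} with $\alpha=2$, which the paper cites as Theorem~\ref{thm:zetaFormula}). The extra bookkeeping you supply—the boundedness of $\heatContent$ via Proposition~\ref{prop:heatContentBounded}, the identification $\sigma_R=\sigma_0/2$, and the pole-set translation—is all implicit in the paper's terser proof, though your parenthetical attributes the unnormalized scaling identity $E_{\ph[\Region]}(t)=\lambda_\ph^\dimension\heatContent(t/\lambda_\ph^2)$ to Corollary~\ref{cor:heatScalingLaw} when it is really Proposition~\ref{prop:heatContentScaling}; this is a harmless slip of citation.
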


\begin{proof}
    By Corollary~\ref{cor:heatScalingLaw}, we have that $\heatContent$ satisfies a $2$-scaling law. Together with the presupposed induced decomposition, by Proposition~\ref{prop:inducedSFE} we have that the normalized $\heatContent$ satisfies the scaling functional equation 
    \begin{equation}
        \label{eqn:heatSFE}
        t^{-\dimension/2}\heatContent(t) 
            = L_\Phi^2[t^{-\dimension/2}\heatContent(t)] + t^{-\dimension/2}\decompRem(t)
    \end{equation}
    for any $t\geq 0$. The result is then a corollary of Theorem~\ref{thm:zetaFormula}, noting that the normalized remainder satisfies $R(t)=t^{-\dimension/2}\decompRem(t)=O(t^{-\sigma_R})$ at $t\to0^+$, with $\sigma_R=\sigma_0/2$.
\end{proof}

Note that this formula implies that the poles of the heat zeta function $\heatzeta$ in the half-plane $\HH_{\sigma_R}=\HH_{\sigma_0/2}$ occur exactly at the points $\omega/2$ where $\omega\in\Dd_\Phi(\HH_{\sigma_0})$. In Chapter~\ref{chap:geometry}, we see that this set controls the possible complex dimensions of the set $\Regionbd$ relative to its interior $\Region$. Here, we see that the scaling zeta function $\zeta_\Phi$, determined from the scaling ratios and their multiplicities alone, also governs the nature of the zeta functions for self-similar fractals.

As we will use this scaling functional equation in order to obtain explicit formulae in what follows, we record this result as a proposition to which we will refer later. 
\begin{proposition}[Heat Scaling Functional Equation]
    \label{prop:heatSFE}
    \index{Scaling functional equation (SFE)!Induced SFE of heat content}
    Let $\heatContent$ be the heat content of the PWB solution to Problem~\ref{prob:specificHeatProblem}. Let $\Phi$ be a self-similar system and suppose that $\heatContent$ decomposes according to $\Phi$ with decomposition remainder $\decompRem$ (as in Definition~\ref{def:heatDecompRem}) which satisfies $\decompRem(t)=O(t^{\rempow/2})$ as $t\to0^+$.
    \medskip 

    Then for all $t\geq 0$, the heat content $\heatContent$ satisfies the scaling functional equation given by Equation~\ref{eqn:heatSFE}. Note that the normalized remainder $R(t)=t^{-\dimension/2}\decompRem(t)$ satisfies the corresponding estimate $R(t)=O(t^{-\sigma_R})$ as $t\to0^+$, where $\sigma_R:=\sigma_0/2$.  
\end{proposition}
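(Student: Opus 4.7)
The plan is to view this proposition as essentially a translation exercise: assemble the 2-scaling law for normalized heat content with the assumed induced decomposition and feed them into the general machinery of Proposition~\ref{prop:inducedSFE}, whose conclusion is exactly a scaling functional equation. First I would identify the relevant family and $\Phi$-closed collection $\Aa$. Namely, take $\Aa := \{\ph^{\circ k}[\Region]\suchthat \ph\in\Phi,\,k\in\NN_0\}$ (closed under the action of each $\ph\in\Phi$) and let $f_A(t) := t^{-\dimension/2}E_A(t)$ for $A\in\Aa$. By Corollary~\ref{cor:heatScalingLaw}, this family satisfies the $2$-scaling law $f_{\ph[A]}(t) = f_A(t/\lambda_\ph^2)$ in the sense of Definition~\ref{def:scalingLaw}, which is the first hypothesis needed.

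Next I would unpack the decomposition hypothesis. By Definition~\ref{def:heatDecompRem}, writing $\decompRem(t) = \heatContent(t) - \sum_{\ph\in\Phi}E_{\ph[\Region]}(t)$ is equivalent to the identity $\heatContent(t) = \sum_{\ph\in\Phi}E_{\ph[\Region]}(t) + \decompRem(t)$, which matches Definition~\ref{def:inducedDecomp} for an induced decomposition of the family $\{E_A\}_{A\in\Aa}$ on $I=[0,\infty)$ with remainder $\decompRem$. Dividing both sides by $t^{\dimension/2}$ gives the corresponding induced decomposition for the normalized family $\{f_A\}$ with remainder $R(t) := t^{-\dimension/2}\decompRem(t)$.

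With the $2$-scaling law and the induced decomposition in hand, Proposition~\ref{prop:inducedSFE} (applied with $\alpha=2$) directly produces the scaling functional equation
\[
    f_\Region(t) = L_\Phi^2[f_\Region](t) + R(t) \quad\text{for all } t\in I,
\]
which upon substituting $f_\Region(t)=t^{-\dimension/2}\heatContent(t)$ and $R(t)=t^{-\dimension/2}\decompRem(t)$ is exactly Equation~\ref{eqn:heatSFE}. The final step is the remainder estimate, which is immediate bookkeeping: from $\decompRem(t)=O(t^{(\dimension-\sigma_0)/2})$ as $t\to 0^+$ we obtain $R(t)=t^{-\dimension/2}\decompRem(t) = O(t^{-\sigma_0/2}) = O(t^{-\sigma_R})$ with $\sigma_R := \sigma_0/2$, as claimed. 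There is no genuine obstacle here; the only thing to be careful about is that the hypotheses of Proposition~\ref{prop:inducedSFE} are stated in a general semigroup framework and we are specializing to $(G,+)=(\RR,+)$ with codomain on which pointwise division by $t^{\dimension/2}$ is legitimate, so we should explicitly note that division by a nonvanishing positive function commutes with the finite sum defining $L_\Phi^2$.
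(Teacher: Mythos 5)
Your proof is correct and follows the same route the paper takes (implicitly, inside the proof of Theorem~\ref{thm:heatZetaFormula}): invoke Corollary~\ref{cor:heatScalingLaw} for the $2$-scaling law, read the hypothesis ``decomposes according to $\Phi$ with decomposition remainder $\decompRem$'' as an induced decomposition in the sense of Definition~\ref{def:inducedDecomp}, and apply Proposition~\ref{prop:inducedSFE} with $\alpha=2$, with the remainder estimate $R(t)=t^{-\dimension/2}\decompRem(t)=O(t^{-\sigma_0/2})$ being immediate. The only small quibble is that the $\Phi$-closed collection $\Aa$ should include images under arbitrary words in $\Phi$ (not just powers of a single $\ph$), but this does not affect the argument.
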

\subsection{Explicit Heat Content Formulae}
\label{sub:genHeatFormulae}
%
%

We now state some general results regarding pointwise and distributional explicit formulae for the heat contents of open bounded regions $\Region\subset\RRN$ whose boundary $\Regionbd$ is a self-similar set and for which $(\Regionbd,\Region)$ is an osculant fractal drum. The pointwise expansions require integrating the heat content to improve regularity of the function to obtain. Meanwhile, the distributional identities (while less direct) remove this technical hurdle.

First, we state the pointwise identity, given in terms of antiderivatives of the heat content. We denote by  $\heatContent^{[k]}$ the $k^{\text{th}}$ antiderivative of the heat content defined so that $\heatContent^{[k]}(0)=0$. Explicitly, it may be defined by recursion with $\heatContent^{[0]}:=\heatContent$ and for $k>0$ by
\[
    \heatContent^{[k]}(t) := \int_0^t \heatContent^{[k-1]}(\tau)\,d\tau.
\] 
As an additional preliminary to stating the result, we define the Pochhammer symbol $(z)_w:=\Gamma(z+w)/\Gamma(w)$ for $z,w\in\CC$. Note that when $w=k$ is a positive integer, this simplifies to $(z)_k=z(z+1)\cdots(z+k-1)$ or, when $w=0$, to $(z)_0=1$. Lastly, we note that the summations to follow are defined as symmetric limits of the sums taken over values of $\omega$ with bounded imaginary parts for increasingly large upper bounds. 

\begin{theorem}[Heat Content, Pointwise Expansion]
    \label{thm:heatFormulaPtw}
    \index{Heat content!Pointwise explicit formulae}
    Let $\heatContent$ be the heat content of the PWB solution to Dirichlet Problem~\ref{prob:specificHeatProblem}. Let $\Phi$ be a self-similar system and suppose that $\heatContent$ decomposes according to $\Phi$ with decomposition remainder $\decompRem$ satisfying $\decompRem(t)=O(t^{\rempow/2})$ as $t\to0^+$. Suppose either that $\sigma_0<\lowersimdim(\Phi)$ (see Definition~\ref{def:lowerSimDim}) or that the scaling ratios of $\Phi$ are arithmetically related (see Definition~\ref{def:latticeDichotomy}).
    \medskip

    Let $k\in\ZZ$ with $k\geq 2$ and let $\delta>0$. Then for all $t\in(0,\delta)$, we have that
    \[ 
        \heatContent^{[k]}(t) = \sum_{\omega\in\Dd_\Phi(\HH_{\sigma_0})} 
            \Res\Bigg(\cfrac{t^{(\dimension-s)/2+k}}{((\dimension-s)/2+1)_k}\heatzeta(s/2;\delta);\omega\Bigg)
            + \Rr^k(t),
    \]
    where $\heatzeta$ is as in Theorem~\ref{thm:heatZetaFormula}. For any $\e>0$ sufficiently small, the remainder satisfies $\Rr^k(t)=O(t^{\rempow/2-\e+k})$ as $t\to0^+$. 
\end{theorem}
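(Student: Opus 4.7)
The plan is to recognize Theorem~\ref{thm:heatFormulaPtw} as a direct application of the general pointwise explicit formula, Theorem~\ref{thm:pointwiseFormula}, to the scaling functional equation satisfied by the heat content. First I would invoke Proposition~\ref{prop:heatSFE} to obtain that the normalized heat content $f(t) = t^{-\dimension/2}\heatContent(t)$ satisfies a scaling functional equation $f = L_\Phi^2[f] + R$, with $L_\Phi^2 = \sum_{\ph\in\Phi} M_{\lambda_\ph^2}$ and normalized remainder $R(t) = t^{-\dimension/2}\decompRem(t)$ satisfying $R(t) = O(t^{-\sigma_R})$ as $t\to 0^+$, where $\sigma_R = \sigma_0/2$. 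This matches the setup of Theorem~\ref{thm:pointwiseFormula} with parameters $\alpha = 2$ and $\beta = \dimension$, so that the normalization exponent $\beta/\alpha = \dimension/2$ is consistent, and since $\simdim(\Phi) \leq \dimension$ we automatically have $\beta/\alpha + 1 > \max(D/\alpha, \sigma_R)$ as required.

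Next I would verify admissibility of the remainder so that Theorem~\ref{thm:pointwiseFormula} may be invoked. In the case $\sigma_0 < \lowersimdim(\Phi)$, Corollary~\ref{cor:lowerDimAdmissibilityRescaled} (applied with $\alpha = 2$) produces admissible constant screens of the form $S_\e(\tau) \equiv \sigma_R + \e = \sigma_0/2 + \e$ for every $\e$ with $0 < \e < (\lowersimdim(\Phi) - \sigma_0)/2$. In the lattice case, admissibility of the same type of constant screen is furnished by Theorem~\ref{thm:latticeCaseAdmissibility} (applied to the rescaled operator $L_\Phi^2$), where $\e$ is chosen additionally smaller than the distance from $\sigma_R$ to each of the finitely many exceptional real parts of the columns of poles of $\zeta_\Phi(2s)$. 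In either case, $S_\e$ lies in the half-plane $\HH_{\sigma_R}$ where $\zeta_R(\,\cdot\,;\delta)$ is holomorphic, and $\e$ may be chosen small enough that the rescaled window $2 W_{S_\e} = \HH_{\sigma_0 + 2\e}$ contains no poles of $\zeta_\Phi$ beyond those already in $\HH_{\sigma_0}$.

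With admissibility in hand, applying Theorem~\ref{thm:pointwiseFormula} for $k \geq 2$ and $t \in (0,\delta)$ gives
\[
    \heatContent^{[k]}(t) = \sum_{\omega\in\Dd_\Phi(2 W_{S_\e})}
        \Res\Bigg(\cfrac{t^{(\dimension-s)/2+k}}{((\dimension-s)/2+1)_k}\heatzeta(s/2;\delta);\omega\Bigg) + \Rr^k(t),
\]
where the formula for $\heatzeta(s/2;\delta)$ is the one derived in Theorem~\ref{thm:heatZetaFormula}, and where the error satisfies $\Rr^k(t) = O(t^{\dimension/2 - \sup(S_\e) + k}) = O(t^{(\dimension - \sigma_0)/2 - \e + k}) = O(t^{\rempow/2 - \e + k})$, matching the claimed estimate. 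Reindexing the sum over $\Dd_\Phi(\HH_{\sigma_0})$ rather than $\Dd_\Phi(\HH_{\sigma_0 + 2\e})$ is legitimate precisely because $\e$ was chosen so that no additional poles appear in the thin strip in between.

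There is no substantial obstacle once the machinery of Chapter~\ref{chap:SFEs} is in place; the only point requiring care is the simultaneous choice of $\e$ satisfying (i) the admissibility bound relative to $\lowersimdim(\Phi)$ or relative to the finitely many exceptional pole columns in the lattice case, and (ii) the absence of poles of $\zeta_\Phi$ in $\set{s \in \CC : \sigma_0 < \Re(s) \leq \sigma_0 + 2\e}$. In the lower-dimension regime this is automatic since every pole of $\zeta_\Phi$ has real part at least $\lowersimdim(\Phi) > \sigma_0$; in the lattice case it follows from the explicit periodic-column structure of $\Dd_\Phi$ established in the proof of Theorem~\ref{thm:latticeCaseAdmissibility}. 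The remainder of the argument is bookkeeping: substitute $\alpha = 2$, $\beta = \dimension$, and the explicit form of $\heatzeta$ from Theorem~\ref{thm:heatZetaFormula}.
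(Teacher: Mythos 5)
Your proposal is correct and follows essentially the same route as the paper's proof: invoking Proposition~\ref{prop:heatSFE} to obtain the induced SFE with $L_\Phi^2$ and $\sigma_R=\sigma_0/2$, establishing admissibility of constant screens via Corollary~\ref{cor:lowerDimAdmissibilityRescaled} or Theorem~\ref{thm:latticeCaseAdmissibility}, applying Theorem~\ref{thm:pointwiseFormula} with $\alpha=2$, $\beta=\dimension$, and reindexing from $\Dd_\Phi(\HH_{\sigma_0+2\e})$ to $\Dd_\Phi(\HH_{\sigma_0})$ by choosing $\e$ small enough to avoid new poles. The only minor omission is that you justify $\beta/\alpha+1>\max(D/\alpha,\sigma_R)$ by appealing only to $\simdim(\Phi)\leq\dimension$ without explicitly noting (as the paper does via Proposition~\ref{prop:heatContentBounded}) that $\sigma_R\leq\dimension/2$ in the lattice case where the bound $\sigma_0<\lowersimdim(\Phi)$ is not assumed.
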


\begin{proof}
    By Proposition~\ref{prop:heatSFE}, we have that the normalized heat content satisfies a scaling functional equation with respect to the operator $L_\Phi^2$ and remainder $R(t):=t^{-\dimension/2}\decompRem(t)$. Letting $\sigma_R=\sigma_0/2$, we note that $R(t)=O(t^{-\sigma_R})$ as $t\to0^+$ and that $\sigma_0<D_\ell=\lowersimdim(\Phi)$ implies that $\sigma_R<D_\ell/2$. So in the first case, we apply Corollary~\ref{cor:lowerDimAdmissibilityRescaled} to obtain that $R$ is an admissible remainder with respect to screens of the form $S_\e(\tau)\equiv \sigma_R+\e$ when $0<\e<D_\ell/2-\sigma_R$. Meanwhile, in the lattice case, we apply Theorem~\ref{thm:latticeCaseAdmissibility} to obtain that any screen of the form $S_\e(\tau)=\sigma_R+\e$ are admissible for all but finitely many $\e>0$. In either case, $S_\e$ is admissible for any sufficiently small $\e$. 
    
    We may now apply Theorem~\ref{thm:pointwiseFormula}. Here, $\beta=\dimension$, $\alpha=2$, and $\delta>0$ is arbitrary. Note that $\dimension/2\leq\simdim(\Phi)/2$ by Theorem~\ref{thm:similarityDimension} and that $\dimension/2\geq \sigma_R$ since Proposition~\ref{prop:heatContentBounded} implies that $\zeta_R$ is holomorphic in the half-plane $\HH_{\dimension/2}$ as noted in the discussion following Definition~\ref{def:heatDecompRem}. Here, with the estimate $\decompRem(t)=O(t^{\rempow/2})$ as $t\to0^+$, we obtain by Lemma~\ref{lem:MellinHolo} that $\zeta_R$ is holomorphic in $\HH_{\sigma_R}$. Note also that $S_\e$ is chosen with $S_\e(\tau)\equiv\sigma_R+\e>\sigma_R$, so $W_{S_\e}=\HH_{\sigma_R+\e}$ and is contained in $\HH_{\sigma_R}$. Also, we note that for the estimate of the remainder, $\beta/\alpha-\sup(S)+k$ in the theorem is $\rempow/2-\e+k$ since $\sup(S)=\sigma_R+\e=\sigma_0/2+\e$ and $\beta/\alpha=\dimension/2$. 

    Lastly, we simplify the sum over $\Dd_\Phi(\alpha W_S)=\Dd_\Phi(2\HH_{\sigma_R+\e})$. Firstly, note that $2\HH_{\sigma_R+\e}=\HH_{2\sigma_R+2\e}$. Since $\zeta_R$ is holomorphic in $\HH_{\sigma_R}$, $\zeta_R(s/2)$ is holomorphic when $s\in\HH_{2\sigma_R}=\HH_{\sigma_0}$. In both cases, we may choose $\e$ sufficiently small so that $\zeta_\Phi(2s)$ has no poles of the form $s=\omega$ with $\Re(\omega)\in(\sigma_0,\sigma_0+2\e)$. In the case when $\sigma_0<D_\ell$, take $\e<(D_\ell-\sigma_0)/2$ and in the lattice case choose $2\e$ smaller than the distance between $\sigma_0$ and the closest exceptional point. Then, with the change of variables, we see that when $\Re(s)\in(\sigma_0,\sigma_0+2\e)$, the function $\heatzeta(s/2)$ has no poles because $\zeta_\Phi(s)$ has no poles with $\Re(s)\in(\sigma_0,\sigma_0+2\e)$ (using Theorem~\ref{thm:heatZetaFormula}). Thus the sum over $\Dd_\Phi(\HH_{\sigma_0+2\e})$ is the same as over $\Dd_\Phi(\HH_{\sigma_0})$. 
\end{proof}

The leading order term occurs at the pole $\omega=D:=\simdim(\Phi)$, as this is the abscissa of convergence of $\heatzeta$. If $D$ is a simple pole and the only pole with real part $D$, then the leading order term is explicitly 
\[ \frac1{((\dimension-D)/2+1)_k}\Res(\heatzeta(s/2;\delta);D)\,t^{(\dimension-D)/2+k}. \]
When the scaling ratios are non-arithmetically related, $D$ will be the unique pole with real part $D$; when the scaling ratios are arithmetically related, this will not be true. Instead, the leading order terms will form the Fourier expansion of a periodic function owing to the nature of the structure of the poles of $\zeta_\Phi$ in this case: they lie on finitely many vertical lines and are periodically spaced. See for instance the proof of Theorem~\ref{thm:latticeCaseAdmissibility} in this work for an explicit proof of this and more generally we refer the reader to the discussion in Chapters~2 and 3 in \cite{LapvFr13_FGCD} regarding the structure of the complex dimensions of self-similar fractal harps (and their natural generalizations), noting that these zeta functions have the same structure as $\zeta_\Phi$, and thus controlling the poles of $\heatzeta$.


For the distributional setting, the restriction of $k\geq 2$ may be relaxed. For these formulae, recall from Chapter~\ref{chap:SFEs} that (just as in Chapter~5 of \cite{LapvFr13_FGCD} and Chapter~5 of \cite{LRZ17_FZF}) we will use as the space of test functions the set of Schwartz functions. These are the functions of rapid decrease near the boundary, given explicitly by
\begin{equation}
    \Ss(0,\delta):=\Bigg\{\testfn\in C^\infty(0,\delta)\,\Bigg|\, 
    \begin{aligned}
        &\forall m\in\ZZ,\,\forall q\in\NN,\,t^m\testfn^{(q)}(t)\to0\\
        &\text{and }(t-\delta)^m\testfn^{(q)}(t)\to0\text{ as }t\to0^+
    \end{aligned}
    \Bigg\}.
\end{equation}
The tempered distributions are elements of the dual space, $\Ss'(0,\delta)$. 

\begin{theorem}[Heat Content, Distributional Expansion]
    \label{thm:heatFormulaDist}
    \index{Heat content!Distributional explicit formulae}
    Let $\heatContent$ be the heat content of the PWB solution to Dirichlet Problem~\ref{prob:specificHeatProblem}. Let $\Phi$ be a self-similar system and suppose that $\heatContent$ decomposes according to $\Phi$ with decomposition remainder $\decompRem$ satisfying $\decompRem(t)=O(t^{\rempow/2})$ as $t\to0^+$. Suppose either that $\sigma_0<\lowersimdim(\Phi)$ (see Definition~\ref{def:lowerSimDim}) or that the scaling ratios of $\Phi$ are arithmetically related (see Definition~\ref{def:latticeDichotomy}).
    \medskip

    Let $k\in\ZZ$ and let $\delta>0$. Then the heat content, viewed as a tempered distribution, satisfies the identity
    \begin{equation}
        \label{eqn:heatFormulaDist}
        \begin{split}
            \heatContent^{[k]}(t) &= \sum_{\omega\in\Dd_\Phi(\HH_{\sigma_0})} 
                \Res\Bigg(\cfrac{t^{(\dimension-s)/2+k}}{((\dimension-s)/2+1)_k}\heatzeta(s/2;\delta);\omega\Bigg)
                + \Rr^{[k]}(t),
        \end{split} 
    \end{equation}
    as $t\to0^+$, where two distributions are equal if they action on an arbitrary test function agrees. See Equation~\ref{eqn:bracketIdentityHeat} for the explicit identity of action on test functions. Here, $\heatzeta$ is as in Theorem~\ref{thm:heatZetaFormula}. The remainder term, as a distribution, satisfies the estimate that for any $\e>0$ sufficiently small, $\Rr(t)=O(t^{\rempow/2-\e+k})$ as $t\to0^+$, in the sense of Equation~\ref{eqn:distRemEstHeat}.
\end{theorem}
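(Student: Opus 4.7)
The plan is to mirror the proof of Theorem~\ref{thm:heatFormulaPtw}, applying the distributional explicit formula of Chapter~\ref{chap:SFEs} (Theorem~\ref{thm:distFormula}) in place of its pointwise counterpart. First, I would invoke Proposition~\ref{prop:heatSFE} to obtain that the normalized heat content $t^{-\dimension/2}\heatContent(t)$ satisfies the scaling functional equation $t^{-\dimension/2}\heatContent = L_\Phi^2[t^{-\dimension/2}\heatContent] + R$ on $[0,\delta]$, where the rescaled remainder $R(t)=t^{-\dimension/2}\decompRem(t)$ satisfies $R(t)=O(t^{-\sigma_R})$ as $t\to0^+$ with $\sigma_R=\sigma_0/2$. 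This is precisely an induced scaling functional equation in the sense of Proposition~\ref{prop:inducedSFE} with $\alpha=2$.

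Next, I would verify admissibility of the remainder so that the hypotheses of Theorem~\ref{thm:distFormula} are met. In the case where $\sigma_0<\lowersimdim(\Phi)$, one has $\sigma_R<\lowersimdim(\Phi)/2$, whence Corollary~\ref{cor:lowerDimAdmissibilityRescaled} yields admissibility of $R$ for $L_\Phi^2$ with respect to any screen of the form $S_\e(\tau)\equiv \sigma_R+\e$ provided $0<\e<\lowersimdim(\Phi)/2-\sigma_R$. In the lattice (arithmetically related) case, Theorem~\ref{thm:latticeCaseAdmissibility} similarly yields admissibility of such screens for all but finitely many $\e>0$. In either situation, we may pick $\e>0$ arbitrarily small, and moreover small enough that the open strip $\sigma_0<\Re(s)<\sigma_0+2\e$ contains no poles of $\zeta_\Phi$; this is possible since in the first case there are no poles of $\zeta_\Phi$ with real part less than $\lowersimdim(\Phi)$, and in the lattice case the real parts of poles form a finite set by the argument in the proof of Theorem~\ref{thm:latticeCaseAdmissibility}.

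With admissibility in hand, I would apply Theorem~\ref{thm:distFormula} with $f(t)=t^{-\dimension/2}\heatContent(t)$, $F(t)=\heatContent(t)$, parameters $\beta=\dimension$ and $\alpha=2$, and the admissible screen $S_\e$. The hypothesis $\beta/\alpha+1>\max(D/\alpha,\sigma_R)$, i.e.\ $\dimension/2+1>\max(\dimension/2,\sigma_R)$, holds because $\simdim(\Phi)\leq\dimension$ by Theorem~\ref{thm:similarityDimension} and because Proposition~\ref{prop:heatContentBounded} ensures $\zeta_R$ is holomorphic at least in $\HH_{\dimension/2}$, so $\sigma_R\leq\dimension/2$. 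The theorem then delivers the distributional identity
\[
F^{[k]}(t)=\sum_{\omega\in\Dd_\Phi(2 W_{S_\e})}\Res\!\left(\frac{t^{(\dimension-s)/2+k}}{((\dimension-s)/2+1)_k}\zeta_f(s/2;\delta);\omega\right)+\Rr^{[k]}(t),
\]
with the distributional remainder satisfying $\Rr(t)=O(t^{\dimension/2-\sup(S_\e)+k})=O(t^{\rempow/2-\e+k})$ as $t\to0^+$, since $\sup(S_\e)=\sigma_R+\e=\sigma_0/2+\e$. Identifying $\zeta_f(s/2;\delta)$ with $\heatzeta(s/2;\delta)$ via Theorem~\ref{thm:heatZetaFormula} and its Equation~\ref{eqn:heatZetaFormula} gives the displayed formula.

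The only remaining step is reindexing the summation. Since $2 W_{S_\e}=\HH_{2\sigma_R+2\e}=\HH_{\sigma_0+2\e}$, and since $\e$ was chosen so that $\zeta_\Phi$ has no poles with real part in $(\sigma_0,\sigma_0+2\e)$, we have $\Dd_\Phi(\HH_{\sigma_0+2\e})=\Dd_\Phi(\HH_{\sigma_0})$, recovering the form stated in Equation~\ref{eqn:heatFormulaDist} and its action on test functions as in Equation~\ref{eqn:bracketIdentityHeat}. The main obstacle, as in Theorem~\ref{thm:heatFormulaPtw}, is verifying that the rescaled factor $\alpha=2$ is correctly tracked through the admissibility criterion and the indexing change; no new analytical difficulties arise beyond those already handled for the pointwise version, since Theorem~\ref{thm:distFormula} already absorbs the distributional machinery.
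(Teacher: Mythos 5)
Your proposal is correct and follows exactly the same route as the paper's proof, which simply states that the argument is identical to that of Theorem~\ref{thm:heatFormulaPtw} (obtaining the SFE from Proposition~\ref{prop:heatSFE}, verifying admissibility via Corollary~\ref{cor:lowerDimAdmissibilityRescaled} or Theorem~\ref{thm:latticeCaseAdmissibility}, then reindexing the sum) but with Theorem~\ref{thm:distFormula} applied in place of Theorem~\ref{thm:pointwiseFormula}. Your version spells out the parameter bookkeeping ($\alpha=2$, $\beta=\dimension$, $\sigma_R=\sigma_0/2$, the choice of $\e$, and the reindexing from $\Dd_\Phi(2W_{S_\e})$ to $\Dd_\Phi(\HH_{\sigma_0})$) more explicitly than the paper, and you state the inequality $\simdim(\Phi)\leq\dimension$ in the direction actually needed for the hypothesis of Theorem~\ref{thm:distFormula}, but the substance is the same.
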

\begin{proof}
    The proof is the same as Theorem~\ref{thm:heatFormulaPtw} but through application of Theorem~\ref{thm:distFormula}. 
\end{proof}

To say that Equation~\ref{eqn:heatFormulaDist} is an identity of distributions means the following. For any test function $\testfn\in\Ss(0,\delta)$, 
\begin{equation}
    \label{eqn:bracketIdentityHeat}
    \begin{split}
        \bracket{\heatContent^{[k]},\testfn} &= \sum_{\omega\in\Dd_\Phi(\HH_{\sigma_0})} 
        \Res\Bigg(\cfrac{\Mellin[\testfn]({(\dimension-s)/2+k+1})}{((\dimension-s)/2+1)_k}\heatzeta(s/2);\omega\Bigg)
        + \bracket{\Rr^{[k]},\testfn}.
    \end{split}
\end{equation}
The distributional remainder estimate is equivalent to the statement that for all $a>0$ and for all $\testfn\in\Ss(0,\delta)$,
\begin{equation}
    \label{eqn:distRemEstHeat}
    \Big\langle\Rr^{[k]}(t),\frac1a\testfn(t/a)\Big\rangle = O(a^{\rempow/2-\e+k}),
\end{equation}
as $t\to0^+$. While this formulation is less direct compared to the pointwise expansion, it does have the advantage of requiring less regularity to leverage the expansion. Explicitly, one may set $k=0$ and obtain a distributional identity for the heat content itself. Note that in this case, the Pochhammer symbol simplifies to $((\dimension-s)/2+1)_0=1$.

\section{Examples: Generalized von Koch Snowflakes}
\label{sec:meltingSnow}
%
%

In this section, we provide a direct application of our results to generalized von Koch fractals. The analysis of the heat content of these fractal snowflakes originates with the work of Fleckinger, Levitin, and Vassiliev \cite{FLV95a,FLV95b} on the von Koch snowflake and then the later work of Michiel van den Berg and collaborators \cite{vdB00_generalGKF,vdB00_squareGKF,vdBGil98,vdBHol99} on generalized von Koch fractals. The main tool in their analysis is the renewal theorem, developed by Feller \cite{Fel50} in the field of probability. 


In so doing, we will recover the results of \cite{vdB00_generalGKF,vdB00_squareGKF,FLV95a} when either $n\geq5$ or in the lattice case. Notably, this extends the results for generalized von Koch fractals to the nonlattice case when $n\geq5$. Results in the nonlattice case for $n=3$ or $4$ may be deduced provided a priori knowledge of pole-free regions and estimates for the explicit function $\zeta_\Phinr$ with the methods established in this work as well. Furthermore, we explicitly show the role of the complex dimensions in the explicit formulae for the heat content.

\subsection{Geometric Preliminaries}

Given an $(n,r)$-von Koch snowflake, let $\Knr=\Regionbd$ denote its boundary as in Definition~\ref{def:vkSnow}. We will require that $\Knr$ is a simple, closed curve so that is has a connected interior $\Region$ (i.e. the bounded component of $\RR^2\setminus\Knr$, using the Jordan curve theorem). For this reason, we impose the self-avoidance criterion of Proposition~\ref{prop:selfAvoid} for $n$ and $r$. 

In what follows, we will consider Problem~\ref{prob:specificHeatProblem}. We let $u_\Region$ be the PWB solution and $\heatContent$ be the associated heat content. Our goal will be to explicitly describe this heat content in the limit as $t\to0^+$ by applying the explicit formulae results. We note that by symmetry, we need only consider the heat content in one portion of $\Omega$ contained in a sector of angle $2\pi/n$ and that the heat content of the total set is $n$ times the heat content in this restricted region. 

We let $\Phinr$ be the self-similar system defined by Equation~\ref{eqn:defGKCsystem} associated to an $(n,r)$-von Koch curve. We define $\ell=(1-r)/2$ to be the conjugate scaling ratio to $r$ and we impose that $r\in(0,\frac13]$. Note that when $r=\ell=\frac13$ and $n=3$, this is exactly the standard von Koch snowflake (see Figure~\ref{fig:threeGKFs}). The scaling zeta function associated to $\Phinr$ is given explicitly by 
\[
    \zeta_{\Phinr}(s) = \cfrac1{1-(n-1)r^{2s}-2\ell^{2s}}.
\]

Additionally, we recall some relevant information from Section~\ref{sec:appToGKFs}. Letting $D_\ell=\lowersimdim(\Phinr)$, we have that $D_\ell>0$ when $n\geq5$, $D_\ell=0$ when $n=4$, and $D_\ell<0$ when $n=3$ by Lemma~\ref{lem:lowerDimBoundsGKF}. It is for this reason that our results for nonlattice $(n,r)$-von Koch snowflakes will be restricted to $n\geq5$ since we need this condition to apply the criterion of Theorem~\ref{thm:lowerDimAdmissibility} with the available estimates. Meanwhile, in the lattice case (where $n\geq3$), we have explicit information about the poles of $\zeta_\Phinr$. Notably, by Proposition~\ref{prop:latticeSimplePolesGKF} we have that the poles of $\Phinr$ are all simple. 

The last major preliminary is an induced decomposition (in the sense of Definition~\ref{def:inducedDecomp}) and an estimate of the decomposition remainder term, $\decompRem$. This is needed to find a scaling functional equation for the heat content $\heatContent$. A scaling functional equation for $\heatContent$ in the case of $(n,r)$-von Koch fractals was established in \cite{vdB00_generalGKF} (without this terminology). The decomposition remainder, which we will denote by $\decompRem$ is given explicitly by Equation~1.12 in \cite{vdB00_generalGKF} and Proposition~1.2 in \cite{vdB00_generalGKF} gives an explicit estimate. Namely, $\decompRem(t)=O(t)$ as $t\to0^+$, in which case we have that $\sigma_0=0$ since $\dimension/2=1$ in $\RR^2$.

\subsection{Explicit Heat Content Formulae}

We will give two types of explicit formulae. The first result is regarding the pointwise-valid formulae for $k\geq2$, and the second will be the distributional formulae specialized to the case when $k=0$. Note that the general pointwise formulae may be deduced in the same way using the general results for heat contents of self-similar sets. 

We provide the same results as in Section~\ref{sec:appToGKFs} so that the reader may directly compare between the tube formulae and heat content formulae, which are exceedingly similar. Most notably, complex numbers indexing the sums are precisely the same, controlled only by the self-similar system $\Phinr$. In other words, they depend only on the scaling ratios $r$ and $\ell$ and their corresponding multiplicities (which depend on $n$).

For the preliminaries regarding the notation and conventions defining the antiderivatives $\heatContent^{[k]}$, the Pochhammer symbol $(z)_w$, $z,w\in\CC$, and the definition of the sums over complex dimensions as symmetric limits, we refer the reader to Section~\ref{sec:heatContent}.

\begin{theorem}[Pointwise Heat Content Formulae for GKFs]
    \label{thm:pointwiseHeatFormulaGKF}
    Let $\Knr$ be an $(n,r)$-von Koch Snowflake boundary satisfying the self-avoidance criterion in Proposition~\ref{prop:selfAvoid} and let $\Omega$ be the interior region defined by $\Knr$ (i.e. the bounded component of $\RR^2\setminus\Knr$). Suppose that either $n\geq5$ or that the scaling ratios $r$ and $\ell=(1-r)/2$ are arithmetically related (i.e. the ratio of their logarithms is rational). 
    \medskip

    Then for every $k\geq 2$ in $\ZZ$, any $\delta>0$, and every $t\in(0,\delta)$, we have that the antiderivatives $\heatContent^{[k]}$ of the heat content $\heatContent$ for Problem~\ref{prob:specificHeatProblem} on $\Region$ satisfy
    \[ 
        \heatContent^{[k]}(t) = \sum_{\omega\in\Dd_\Phinr(\HH_{0})} 
            \Res\Bigg(\cfrac{t^{(2-s)/2+k}}{((2-s)/2+1)_k}\heatzeta(s/2;\delta);\omega\Bigg)
            + \Rr^k(t),
    \]
    where $\heatzeta$ is as in Theorem~\ref{thm:heatZetaFormula}. For any $\e>0$ sufficiently small, the error term satisfies $\Rr^k(t)=O(t^{1-\e+k})$ as $t\to0^+$. 
\end{theorem}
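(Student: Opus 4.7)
The plan is to obtain this result as a direct application of the general pointwise heat content formula in Theorem~\ref{thm:heatFormulaPtw} to the specific setting of generalized von Koch snowflakes. The main tasks are to verify that the hypotheses of that theorem are met, namely: (i) an induced decomposition of the heat content by $\Phinr$ with a suitable estimate on the decomposition remainder, and (ii) the admissibility condition relating $\sigma_0$ to either the lower similarity dimension or the lattice structure of the scaling ratios.

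First, I would invoke the analysis of van den Berg \cite{vdB00_generalGKF}, which provides the requisite scaling functional equation for the heat content of generalized von Koch fractals. Specifically, one establishes (via the $n$-fold symmetry of $\Knr$ and the osculant structure of $(\Knr,\Omega)$ with respect to $\Phinr$) that the normalized heat content $t^{-1}\heatContent(t)$ satisfies the induced scaling functional equation driven by the operator $L_{\Phinr}^{2}$, with the decomposition remainder $\decompRem(t)$ given explicitly by Equation~1.12 of \cite{vdB00_generalGKF}. The estimate $\decompRem(t)=O(t)$ as $t\to 0^+$ follows from Proposition~1.2 of \cite{vdB00_generalGKF}, which in our framework (with $\dimension=2$, so $\dimension-\sigma_0=2$) yields $\sigma_0 = 0$. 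A small technical point to address here is that $\Knr$ itself is not literally the attractor of $\Phinr$ but is the union of $n$ isometric copies of such attractors; this is handled by linearity, exactly as in the proof of Theorem~\ref{thm:pointwiseTubeFormulaGKF}, since the total heat content is $n$ times the heat content on a single sector.

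Next, I would verify admissibility of the decomposition remainder for $\Phinr$. The normalized remainder $R(t)=t^{-1}\decompRem(t) = O(t^0)$ satisfies $\sigma_R = \sigma_0/2 = 0$. Under the hypothesis $n\geq 5$, Lemma~\ref{lem:lowerDimBoundsGKF} gives $D_\ell = \lowersimdim(\Phinr) > 0$, so $\sigma_0 = 0 < D_\ell$, and Corollary~\ref{cor:lowerDimAdmissibilityRescaled} (applied with $\alpha=2$) yields admissibility for every screen $S_\e(\tau)\equiv \e$ with $0 < \e < D_\ell/2$. In the lattice case, with $r$ and $\ell$ arithmetically related, Theorem~\ref{thm:latticeCaseAdmissibility} applies and produces admissible screens $S_\e$ for all but finitely many $\e > 0$, and in particular for all sufficiently small $\e > 0$ avoiding the finitely many exceptional real parts of poles of $\zeta_{\Phinr}$.

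With these ingredients in place, the proof concludes by invoking Theorem~\ref{thm:heatFormulaPtw} directly with $\dimension=2$, $\alpha=2$, $\beta=\dimension=2$, and the screen $S_\e$ described above. The explicit formula for $\heatContent^{[k]}(t)$ then follows term by term, with the residue sum indexed by $\Dd_{\Phinr}(\HH_0)$ (using the same change-of-variable argument from the proof of Theorem~\ref{thm:heatFormulaPtw} to rewrite $\Dd_\Phi(\HH_{\sigma_0+2\e})$ as $\Dd_\Phi(\HH_{\sigma_0})=\Dd_{\Phinr}(\HH_0)$ by taking $\e$ small enough that no poles lie in the gap), and with error $\Rr^k(t) = O(t^{1-\e+k})$ as $t\to 0^+$ coming from $\beta/\alpha - \sup(S_\e) + k = 1 - \e + k$. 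The main obstacle is essentially bookkeeping rather than any deep estimate, since the heavy lifting (the decomposition remainder estimate from \cite{vdB00_generalGKF} and the general explicit formula machinery from Chapter~\ref{chap:SFEs}) has already been established; care is only needed to correctly translate between the $\dimension$-dependent normalization exponent $\dimension/2$ and the $2$-scaling law induced by the parabolic nature of the heat equation.
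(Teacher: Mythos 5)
Your proposal is correct and follows essentially the same route as the paper: invoke Proposition~1.2 of \cite{vdB00_generalGKF} for the scaling functional equation and decomposition remainder estimate $\decompRem(t)=O(t)$ (giving $\sigma_0=0$), handle the $n$-fold symmetry by linearity, establish admissibility via Lemma~\ref{lem:lowerDimBoundsGKF} for $n\geq 5$ or Theorem~\ref{thm:latticeCaseAdmissibility} in the lattice case, and then apply Theorem~\ref{thm:heatFormulaPtw} with $\dimension=2$, $\alpha=2$, $\beta=2$. Your version is in fact slightly more careful than the paper's on one point: you correctly note that the SFE for $t^{-1}\heatContent(t)$ is driven by $L_{\Phinr}^{2}$ (reflecting the $2$-scaling law), whereas the paper's displayed SFE omits the superscript $2$ on $L_{\Phinr}$, which appears to be a typo since Proposition~\ref{prop:heatSFE} and Equation~\ref{eqn:heatSFE} make the $L_\Phi^2$ operator explicit.
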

\begin{proof}
    We have that $\heatContent$ satisfies a scaling functional equation induced by $\Phinr$ with decomposition remainder $\decompRem(t)=O(t)$ as $t\to0^+$ by Proposition~1.2 of \cite{vdB00_generalGKF}, where $\Phinr$ is as in Equation~\ref{eqn:defGKCsystem}. When normalized (noting that $\dimension/2=1$ in $\RR^2$), the heat content scaling functional equation takes the form 
    \[ t^{-1}\heatContent(t) = L_\Phinr[t^{-1}\heatContent(t)](t) + t^{-1}\decompRem(t). \]
    Further, the remainder $R(t):=t^{-1}\decompRem(t)=O(t^{-\sigma_0})$, with $\sigma_0=0$, as $t\to0^+$ and is continuous on $\RR^+$. Note that we have also multiplied the expression in \cite{vdB00_generalGKF} by $n$ (and distributed by linearity) to deduce a scaling functional equation for the total heat content, rather than the amount of heat content contained in a single symmetric sector. 

    The result then follows from application of Theorem~\ref{thm:heatFormulaPtw} with the following notes. If we assume that $n\geq5$, then we have that $\lowersimdim(\Phinr)>0=\sigma_0$ by Lemma~\ref{lem:lowerDimBoundsGKF}. If $n=3$ or $n=4$, we have assumed that $r$ and $\ell$ are arithmetically related. 
\end{proof}

In the lattice case, we have by Proposition~\ref{prop:latticeSimplePolesGKF} that the poles of $\zeta_\Phinr$ are simple. Thus, we may simplify the residues to obtain an expansion in powers of $t^{(2-\omega)/2+k}$. 
\begin{corollary}[Pointwise Tube Formulae for GKFs, Lattice Case]
    \label{cor:pointwiseheatFormulaGKFLattice}
    Let $\Knr$ be an $(n,r)$-von Koch Snowflake boundary satisfying the self-avoidance criterion in Proposition~\ref{prop:selfAvoid} and let $\Omega$ be the interior region defined by $\Knr$ (i.e. the bounded component of $\RR^2\setminus\Knr$). Suppose that its scaling ratios $r$ and $\ell=(1-r)/2$ are arithmetically related (i.e. the ratio of their logarithms is rational).
    \medskip 

    Then for every $k\geq 2$ in $\ZZ$, any $\delta>0$, and every $t\in(0,\delta)$, we have that the antiderivatives $\heatContent^{[k]}$ of the heat content $\heatContent$ for Problem~\ref{prob:specificHeatProblem} on $\Region$ satisfy
    \[ 
        \heatContent^{[k]}(t) = \sum_{\omega\in\Dd_\Phinr(\HH_{0})} 
            \cfrac{r_\omega}{((2-\omega)/2+1)_k}\,t^{(2-\omega)/2+k}
            + \Rr^k(t),
    \]
    where $r_\omega$ is a constant given by $r_\omega:=\Res\enclose{\heatzeta(s/2;\delta);\omega}$, with $\heatzeta$ is as in Theorem~\ref{thm:heatZetaFormula}. For any $\e>0$ sufficiently small, the error term satisfies $\Rr^k(t)=O(t^{1-\e+k})$ as $t\to0^+$.
\end{corollary}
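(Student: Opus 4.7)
The plan is to obtain this corollary as a direct specialization of Theorem~\ref{thm:pointwiseHeatFormulaGKF} to the lattice setting, where the extra structural information provided by Proposition~\ref{prop:latticeSimplePolesGKF} allows us to compute the residues of Theorem~\ref{thm:pointwiseHeatFormulaGKF} in closed form. First, I would invoke Theorem~\ref{thm:pointwiseHeatFormulaGKF}, which applies verbatim since the lattice hypothesis is one of the two alternatives allowed in its statement. This immediately gives us the pointwise identity
\[
    \heatContent^{[k]}(t) = \sum_{\omega\in\Dd_\Phinr(\HH_{0})}
        \Res\Bigg(\cfrac{t^{(2-s)/2+k}}{((2-s)/2+1)_k}\heatzeta(s/2;\delta);\omega\Bigg)
        + \Rr^k(t),
\]
together with the asymptotic estimate $\Rr^k(t)=O(t^{1-\e+k})$ as $t\to 0^+$ for $\e>0$ sufficiently small. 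So everything reduces to simplifying each residue.

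Next, I would use Proposition~\ref{prop:latticeSimplePolesGKF}, which asserts that in the lattice case every pole of $\zeta_\Phinr$ is simple. By the explicit formula of Theorem~\ref{thm:heatZetaFormula}, the heat zeta function factors as $\heatzeta(s;\delta)=\zeta_\Phinr(2s)(\partialheatzeta(s;\delta)+\zeta_R(s;\delta))$, where the second factor is holomorphic on $\HH_{\sigma_R}=\HH_0$. Hence $\heatzeta(s/2;\delta)$ has only simple poles in $\HH_0$, occurring precisely at the points $\omega\in\Dd_\Phinr(\HH_0)$. This is the key structural simplification we gain from the lattice hypothesis.

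The final step is the residue calculation at a simple pole. I must check that the prefactor $t^{(2-s)/2+k}/((2-s)/2+1)_k$ is holomorphic (and nonzero in its denominator) at each such $\omega$. The Pochhammer symbol $((2-s)/2+1)_k=\prod_{j=1}^k((2-s)/2+j)$ vanishes only when $s=2+2j$ for some $j\in\{1,\ldots,k\}$, in particular only when $\Re(s)\geq 4$. Since every $\omega\in\Dd_\Phinr(\HH_0)$ satisfies $0<\Re(\omega)\leq\simdim(\Phinr)\leq 2$, the Pochhammer symbol is nonzero at every such $\omega$, and the prefactor is analytic there. Consequently, at each simple pole $\omega$,
\[
    \Res\Bigg(\cfrac{t^{(2-s)/2+k}}{((2-s)/2+1)_k}\heatzeta(s/2;\delta);\omega\Bigg)
    = \cfrac{t^{(2-\omega)/2+k}}{((2-\omega)/2+1)_k}\,\Res\bigl(\heatzeta(s/2;\delta);\omega\bigr).
\]
Setting $r_\omega := \Res(\heatzeta(s/2;\delta);\omega)$ yields the desired expansion, and the remainder estimate is inherited unchanged from Theorem~\ref{thm:pointwiseHeatFormulaGKF}.

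There is no real obstacle here; the content of the corollary is essentially bookkeeping once Theorem~\ref{thm:pointwiseHeatFormulaGKF} and Proposition~\ref{prop:latticeSimplePolesGKF} are in hand. The only subtle point to verify with care is the non-vanishing of the Pochhammer symbol at the poles, which reduces to the elementary bound $\Re(\omega)\leq 2$ from the embedding of the fractal in $\RR^2$ and Theorem~\ref{thm:similarityDimension}.
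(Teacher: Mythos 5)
Your proposal is correct and follows essentially the same route as the paper, which simply invokes Proposition~\ref{prop:latticeSimplePolesGKF} after Theorem~\ref{thm:pointwiseHeatFormulaGKF} and notes that simplicity of the poles lets one pull the analytic prefactor out of the residue. The only extra content you supply — checking that the Pochhammer symbol $((2-s)/2+1)_k$ is nonvanishing at the poles because $\Re(\omega)\leq\simdim(\Phinr)\leq 2 < 4$ — is a detail the paper leaves implicit, and your verification of it is correct.
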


In order to obtain expansions for the heat content itself when $k=0$, we now move to the distributional formulation of the explicit formulae. We refer the reader to Subsection~\ref{sub:genHeatFormulae} for the relevant preliminaries and more information about the action of distributions or the estimates of their remainder terms. The space of test functions is the class of Schwartz functions on $(0,\delta)$, $\Ss(0,\delta)$, and the tempered distributions are the elements of its dual space, $\Ss'(0,\delta)$. 

\begin{theorem}[Distributional Heat Content Formula for GKFs, $k=0$]
    \label{thm:distHeatFormulaGKF}
    Let $\Knr$ be an $(n,r)$-von Koch Snowflake boundary satisfying the self-avoidance criterion in Proposition~\ref{prop:selfAvoid} and let $\Omega$ be the interior region defined by $\Knr$ (i.e. the bounded component of $\RR^2\setminus\Knr$). Suppose that either $n\geq5$ or that the scaling ratios $r$ and $\ell=(1-r)/2$ are arithmetically related (i.e. the ratio of their logarithms is rational). 
    \medskip

    Then for any $\delta>0$ and every $t\in(0,\delta)$, we have that as an equality of distributions in the Schwartz space $\Ss'(0,\delta)$ (the dual of the space defined in Equation~\ref{eqn:defSchwartzFns}),
    \[ 
        \heatContent(t) = \sum_{\omega\in\Dd_\Phinr(\HH_{0})} 
            \Res\Bigg(t^{(2-s)/2}\heatzeta(s/2;\delta);\omega\Bigg)
            + \Rr(t),
    \]
    where $\heatzeta$ is as in Theorem~\ref{thm:heatZetaFormula}. For any $\e>0$ sufficiently small, the error term satisfies $\Rr(t)=O(t^{1-\e})$ as $t\to0^+$ in the sense of Equation~\ref{eqn:distRemEstHeat}. The action of $\heatContent$ on a test function is given explicitly by Equation~\ref{eqn:bracketIdentityHeat} with $k=0$.
\end{theorem}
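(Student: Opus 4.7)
The plan is to deduce this as a direct corollary of the general distributional explicit formula, Theorem~\ref{thm:heatFormulaDist}, applied with $k=0$, following the same template as the proof of the pointwise analogue Theorem~\ref{thm:pointwiseHeatFormulaGKF}. First I would invoke Proposition~1.2 of \cite{vdB00_generalGKF}, which provides the induced decomposition of the heat content $\heatContent$ with respect to the self-similar system $\Phinr$ and furnishes the decomposition remainder estimate $\decompRem(t)=O(t)$ as $t\to 0^+$. Since $\dimension=2$, this corresponds to $\sigma_0=0$ in the normalization $t^{-\dimension/2}\decompRem(t)=O(t^{-\sigma_0/2})$ required by Theorem~\ref{thm:heatFormulaDist}, so the normalized heat content satisfies the scaling functional equation
\[
    t^{-1}\heatContent(t) = L_\Phinr[t^{-1}\heatContent(t)](t) + t^{-1}\decompRem(t)
\]
on $(0,\delta]$, with continuous normalized remainder of order $O(t^{0})$ as $t\to 0^+$.

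Next I would verify the admissibility hypothesis required by Theorem~\ref{thm:heatFormulaDist}. In the case $n\geq 5$, Lemma~\ref{lem:lowerDimBoundsGKF} guarantees $\lowersimdim(\Phinr)>0=\sigma_0$, so the lower-dimension criterion (Theorem~\ref{thm:lowerDimAdmissibility}, in its rescaled form Corollary~\ref{cor:lowerDimAdmissibilityRescaled}) applies and yields admissible constant screens $S_\e(\tau)\equiv \e$ for all $\e>0$ sufficiently small. In the remaining (arithmetically related, i.e.\ lattice) case with $n=3$ or $n=4$, the hypothesis $\sigma_0<\lowersimdim(\Phinr)$ may fail, so instead I would appeal to the lattice criterion for admissibility (Theorem~\ref{thm:latticeCaseAdmissibility}), which produces admissible screens of the form $S_\e$ for all but finitely many $\e>0$; in particular such screens exist for $\e>0$ arbitrarily small.

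With admissibility in hand, Theorem~\ref{thm:heatFormulaDist} applied with $\beta=\dimension=2$, $\alpha=2$, and $k=0$ yields the identity in the sense of distributions in $\Ss'(0,\delta)$, with the summation indexed a priori by $\Dd_\Phinr(\alpha W_{S_\e})=\Dd_\Phinr(\HH_{2\e})$ and with distributional remainder of order $O(t^{1-\e})$ as $t\to0^+$. Since the Pochhammer factor $((\dimension-s)/2+1)_0=1$ collapses, the residue summand reduces to $\Res(t^{(2-s)/2}\heatzeta(s/2;\delta);\omega)$ as stated, with $\heatzeta$ given explicitly by Theorem~\ref{thm:heatZetaFormula}.

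The only remaining routine point, which I would handle exactly as in the proof of Theorem~\ref{thm:pointwiseHeatFormulaGKF}, is to replace the index set $\Dd_\Phinr(\HH_{2\e})$ by $\Dd_\Phinr(\HH_0)$. This is legitimate provided $\e$ is small enough that $\zeta_\Phinr$ has no poles with real part in $(0,2\e]$: when $n\geq 5$ this is immediate by taking $\e<\lowersimdim(\Phinr)/2$, while in the lattice case it follows by choosing $2\e$ strictly less than the distance from $0$ to the nearest of the finitely many exceptional real parts of the poles (which exist on finitely many vertical lines by the lattice structure, as used in the proof of Theorem~\ref{thm:latticeCaseAdmissibility}). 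I do not anticipate any genuine obstacle here, as every ingredient has already been established; the main care is only bookkeeping between the factors $\alpha=2$ in the windows and residues and the shift from $\Dd_f$ to $\alpha^{-1}\Dd_\Phinr$ inherent in Corollary~\ref{cor:structureOfPoles}.
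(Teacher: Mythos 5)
Your proposal is correct and follows essentially the same route as the paper, which itself reduces the claim to Theorem~\ref{thm:heatFormulaDist} (citing the proof of Theorem~\ref{thm:pointwiseHeatFormulaGKF}) with $k=0$: invoke Proposition~1.2 of \cite{vdB00_generalGKF} for the induced heat-content SFE and the $O(t)$ remainder estimate (so $\sigma_0=0$), verify admissibility by Lemma~\ref{lem:lowerDimBoundsGKF}/Corollary~\ref{cor:lowerDimAdmissibilityRescaled} when $n\geq5$ and by Theorem~\ref{thm:latticeCaseAdmissibility} in the lattice case, and specialize the Pochhammer factor to $1$. Your argument is slightly more explicit than the paper's terse one-line pointer, but introduces no new ideas and no gaps.
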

\begin{proof}
    The proof is the same as that of Theorem~\ref{thm:pointwiseHeatFormulaGKF} but with application of Theorem~\ref{thm:heatFormulaDist}. Note that we specialize to the case of $k=0$. 
\end{proof}

When we assume that the scaling ratios are arithmetically related, we may simplify this expression regarding the residues. Namely, we obtain an expansion in powers of $t^{(2-\omega)/2}$ with coefficients determined by the residues of $\heatzeta$ at the poles $\omega$. 
\begin{corollary}[Distributional Tube Formula for GKFs, $k=0$]
    \label{cor:distHeatFormulaGKFLattice}
    Let $\Knr$ be an $(n,r)$-von Koch Snowflake boundary satisfying the self-avoidance criterion in Proposition~\ref{prop:selfAvoid} and let $\Omega$ be the interior region defined by $\Knr$ (i.e. the bounded component of $\RR^2\setminus\Knr$). Suppose that the scaling ratios $r$ and $\ell=(1-r)/2$ are arithmetically related (viz. the ratio of their logarithms is rational). 
    \medskip

    Then for any $\delta>0$ and every $t\in(0,\delta)$, we have that as an equality of distributions in the Schwartz space $\Ss'(0,\delta)$ (the dual of the space defined in Equation~\ref{eqn:defSchwartzFns}),
    \[ 
        \heatContent(t) = \sum_{\omega\in\Dd_\Phinr(\HH_{0})} 
            r_\omega\,t^{(2-\omega)/2}
            + \Rr(t),
    \]
    where $r_\omega$ is a constant given by $r_\omega:=\Res\enclose{\heatzeta(s/2;\delta);\omega}$, with $\heatzeta$ is as in Theorem~\ref{thm:heatZetaFormula}. For any $\e>0$ sufficiently small, the error term satisfies $\Rr(t)=O(t^{1-\e})$ as $t\to0^+$ in the sense of Equation~\ref{eqn:distRemEstHeat}. The action of $V_{\Knr,\Omega}$ on a test function is given explicitly by Equation~\ref{eqn:bracketIdentityHeat} with $k=0$.
\end{corollary}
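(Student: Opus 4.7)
The plan is to derive this corollary directly from Theorem~\ref{thm:distHeatFormulaGKF} by exploiting the extra regularity afforded in the lattice case, namely that all poles of $\zeta_{\Phi_{n,r}}$ are simple (Proposition~\ref{prop:latticeSimplePolesGKF}). First I would invoke Theorem~\ref{thm:distHeatFormulaGKF} with $k=0$; this is applicable because the assumption that $r$ and $\ell=(1-r)/2$ are arithmetically related is exactly one of the two admissibility alternatives in that theorem. This yields, as an identity in $\Ss'(0,\delta)$,
$$
    \heatContent(t) = \sum_{\omega\in\Dd_{\Phi_{n,r}}(\HH_0)} \Res\bigl(t^{(2-s)/2}\heatzeta(s/2;\delta);\omega\bigr) + \Rr(t),
$$
together with the distributional remainder estimate $\Rr(t)=O(t^{1-\e})$ as $t\to 0^+$ for any $\e>0$ sufficiently small, so the outer bookkeeping matches the statement.

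Next I would locate the poles of the integrand and certify their simplicity. By Theorem~\ref{thm:heatZetaFormula} one has the factorization
$$
    \heatzeta(s/2;\delta)=\zeta_{\Phi_{n,r}}(s)\bigl(\partialheatzeta(s/2;\delta)+\zeta_R(s/2;\delta)\bigr),
$$
where the parenthesized factor is holomorphic in $\HH_0$: the piece $\partialheatzeta$ is entire by the same theorem, while $\zeta_R(s/2;\delta)$ is holomorphic in $\HH_0$ because the normalized remainder $R(t)=t^{-1}\decompRem(t)$ satisfies $R(t)=O(1)$ as $t\to 0^+$, using the explicit estimate $\decompRem(t)=O(t)$ from \cite{vdB00_generalGKF}. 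Consequently the poles of $\heatzeta(s/2;\delta)$ inside $\HH_0$ coincide with those of $\zeta_{\Phi_{n,r}}$ in $\HH_0$, and by Proposition~\ref{prop:latticeSimplePolesGKF} each such pole $\omega$ is simple.

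Finally, since $s\mapsto t^{(2-s)/2}$ is entire and hence holomorphic at every $\omega\in\Dd_{\Phi_{n,r}}(\HH_0)$, the standard residue identity at a simple pole produces
$$
    \Res\bigl(t^{(2-s)/2}\heatzeta(s/2;\delta);\omega\bigr) = t^{(2-\omega)/2}\,\Res\bigl(\heatzeta(s/2;\delta);\omega\bigr) = r_\omega\, t^{(2-\omega)/2},
$$
with $r_\omega$ exactly as defined in the statement. Substituting term by term into the display above gives the claimed expansion. The argument is essentially bookkeeping on top of the theorems cited; the only place requiring attention is confirming that the entire prefactor $t^{(2-s)/2}$ cleanly separates from the residue, which is immediate once simplicity is in hand. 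I do not anticipate a substantial obstacle, as the heavy analytic work has been absorbed by Theorems~\ref{thm:distHeatFormulaGKF} and \ref{thm:heatZetaFormula} together with Proposition~\ref{prop:latticeSimplePolesGKF}.
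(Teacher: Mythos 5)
Your proposal is correct and is essentially the argument the paper intends (the corollary is stated without a written proof, preceded only by the remark that simplicity of the poles yields the power expansion). You correctly invoke Theorem~\ref{thm:distHeatFormulaGKF} with $k=0$, use the factorization from Theorem~\ref{thm:heatZetaFormula} to transfer the question of pole location and multiplicity to $\zeta_\Phinr$, cite Proposition~\ref{prop:latticeSimplePolesGKF} for simplicity, and then apply the standard fact that for a simple pole $\omega$ and an entire prefactor $g(s)=t^{(2-s)/2}$ one has $\Res(g(s)h(s);\omega)=g(\omega)\Res(h(s);\omega)$. One minor point worth noting, though it does not affect the result: the holomorphic factor $\partialheatzeta(s/2;\delta)+\zeta_R(s/2;\delta)$ might vanish at some $\omega\in\Dd_\Phinr(\HH_0)$, making that $\omega$ a removable singularity of $\heatzeta(s/2;\delta)$ rather than a genuine pole; in that case $r_\omega=0$ and the corresponding term drops out harmlessly, so the formula as written remains valid.
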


\chapter*{Conclusion}
\addcontentsline{toc}{chapter}{Conclusion}
\label{chap:conclusion}
%
%

The thesis of this work is that self-similar fractals have complex dimensions controlled by the scaling ratios in their construction and furthermore that these complex dimensions govern both geometric and spectral properties of these fractals. On the geometric side, we have analyzed tube functions and the possible complex dimensions (Chapter~\ref{chap:geometry}) and on the spectral side, we have analyzed the heat content of self-similar fractals (Chapter~\ref{chap:heat}). The connection is made evident through the shared method of proof, namely by the means of establishing and solving scaling functional equations (Chapter~\ref{chap:SFEs}). 

To conclude this work, we now compare our results in these separate chapters to see the the connections between these quantities for self-similar fractals. Then, we broaden our scope to the context in which our results fit and the questions or programs that they fall under. Finally, we leave the reader with new questions arising from our results to complement the questions which began our journey and may yet lead to new inspiration.

\section*{Discussion of the Thesis}

Let $\Omega\subset\RR^\dimension$ be a bounded open set with fractal boundary $\partial\Omega$. Suppose that $\partial\Omega$ is a self-similar set, arising as the attractor of the self-similar system $\Phi$, and suppose that $(\partial\Omega,\Omega)$ is an osculant fractal drum with respect to $\Phi$. (In the notation of Chapter~\ref{chap:geometry}, $X=\partial\Omega$.) Per Definition~\ref{def:oscRFD}, this means that $\Phi$ satisfies the open set condition with $\Omega$ as a feasible open set and that the osculating condition holds for $\Omega$. For explicit examples of such fractals, one may consider when $\Omega$ is the interior of an $(n,r)$-von Koch snowflake and $\partial\Omega$ is the snowflake boundary. 

Lastly, we assume some admissibility conditions. Given the respective remainder functions $V_{\partial\Omega,R}$ (i.e. the tube function of the residual set as defined in Theorem~\ref{thm:inducedDecompTubes}) and the decomposition remainder $R_\Omega$ (as in Definition~\ref{def:heatDecompRem}), let $\sigma_0$ be such that as $t\to0^+$,
\begin{align*}
    V_{\partial\Omega,R}(t) &= O(t^{\dimension-\sigma_0}), \\
    R_\Omega(t)             &= O(t^{(\dimension-\sigma_0)/2}).
\end{align*}
Note that we have chosen $\rho=\dimension-\sigma_0$ in the notation of Chapter~\ref{chap:heat} for this comparison. We suppose that either $\sigma_0<\lowersimdim(\Phi)$ or that the scaling ratios of the similitudes in $\Phi$ are arithmetically related. 

Under these assumptions, one may apply any of the corresponding pairs of theorems from Section~\ref{sec:fractalSFEs} and Section~\ref{sec:heatContent} for tube formulae or for heat content, respectively, either for pointwise or distributional asymptotic expansions. For the purposes of this discussion, let us view the distributional expansions with $k=0$ and suppose that the poles of the function $\zeta_\Phi$ are simple, such as for generalized von Koch fractals in the lattice case. Applying Corollary~\ref{cor:distFormulaZero} and Theorem~\ref{thm:heatFormulaDist} (with $k=0$), we obtain the following distributional identities:
\begin{align}
    \label{eqn:volumeEx}
    V_{\partial\Omega,\Omega}(t) &= \sum_{\omega\in\Dd_\Phi(\HH_{\sigma_0})} \Res(\tubezeta_{\partial\Omega,\Omega}(s;\delta);\omega)\, t^{\dimension-\omega} + \Rr_V(t); \\
    \label{eqn:heatEx}
    \heatContent(t) &= \sum_{\omega\in\Dd_\Phi(\HH_{\sigma_0})} \Res(\heatzeta(s;\delta);\omega)\, t^{(\dimension-\omega)/2} + \Rr_E(t), 
\end{align}
where $\Rr_V(t)=O(t^{\dimension-\sigma_0-\e})$ and $\Rr_E(t^{(\dimension-\sigma_0)/2-\e})$ as $t\to0^+$ for sufficiently small $\e>0$. 

Observe that Equation~\ref{eqn:volumeEx} and Equation~\ref{eqn:heatEx} are strikingly similar. In both cases, there are constants $a_\omega$ or $b_\omega$, given by the residues of the respective zeta function at $\omega$, and monomials of the form $t^{\dimension-\omega}$ or $t^{(\dimension-\omega)/2}$, with the complex numbers $\omega$ indexed by the set $\Dd_\Phi(\HH_{\sigma_0})$. By Corollary~\ref{cor:possibleCdims}, this set is exactly the possible complex dimensions of the RFD $(\partial\Omega,\Omega)$ in $\HH_{\sigma_0}$. So, the complex dimensions of the fractal directly determine the form of the expansion. 

Further, we have that the possible complex dimensions are computable directly from knowledge of the self-similar system $\Phi$ alone. Namely, by Corollary~\ref{cor:possibleCdims}, we have that 
\[ \Dd_\Phi(\HH_{\sigma_0}) = \set{\omega\in\HH_{\sigma_0}\suchthat 1 = \sum_{\ph\in\Phi} \lambda_\ph^\omega }. \]
It is through this knowledge of the from of $\zeta_\Phi$ that we can plot possible complex dimensions such as in Figure~\ref{fig:cDimPlots2D} for generalized von Koch fractals. For self-similar fractals in any dimension, provided the osculant separation conditions, we can thus see that the possible complex dimensions are exactly the expected complex dimensions prescribed by Moran's equation. 

\section*{Results in Context}

This computation of the possible complex dimensions is a partial resolution of Conjecture~6.2.36 of \cite{LRZ17_FZF}, as we have shown that self-similar fractals with appropriate separation conditions have complex dimensions controlled by the complexified Moran equation. It remains to establish exactly when the possible complex dimensions are the actual complex dimensions, which hinges on computation of the residues of the tube zeta function of $(\partial\Omega,\Omega)$. The computation of these residues, as well as the residues of the heat zeta function, is perhaps the most important follow-up question for future consideration. Not only will it answer the question of when the possible is actual, but also it allows one to use the explicit formulae developed herein numerically rather than only qualitatively. 

This work also serves to generalize the known results regarding heat asymptotics of generalized von Koch fractals, from the results of Fleckinger, Levitin, and Vassiliev on the von Koch snowflake \cite{FLV95a} to the work of van den Berg on generalized von Koch fractals \cite{vdB00_generalGKF}. Theorem~\ref{thm:pointwiseHeatFormulaGKF}, Corollary~\ref{cor:pointwiseheatFormulaGKFLattice}, Theorem~\ref{thm:distHeatFormulaGKF}, and Corollary~\ref{cor:distHeatFormulaGKFLattice} all give different types of explicit formulae for the heat content of GKFs in the nonlattice case when $n\geq5$, extending the results of the lattice case found in these works. Furthermore, the results of Section~\ref{sec:heatContent} give a general method for computing explicit formulae directly from estimating decomposition remainders.

In regards to tube formulae, this work extends the results of Lapidus and Pearse regarding the von Koch snowflake \cite{LP06} to its generalized analogues with different parameter values of $n$ and $r$. Explicitly, Theorem~\ref{thm:cDimsOfGKFs}, which originally appeared in \cite{Hof25}, gives a characterization of the possible complex dimensions of GKFs. Theorem~\ref{thm:pointwiseTubeFormulaGKF}, Corollary~\ref{cor:pointwiseTubeFormulaGKFLattice}, Theorem~\ref{thm:distTubeFormulaGKF}, and Corollary~\ref{cor:distTubeFormulaGKFLattice} all give explicit formulae for the tube functions based on the knowledge of these possible complex dimensions. The results of Section~\ref{sec:fractalSFEs} generalize the work of \cite{DKOU15} to establish tube formulae from approximate functional equations which contain remainder terms. Many more examples of fractals may be readily approached by the results of Chapter~\ref{chap:geometry}, for example the study of the Menger sponges and its generalizations which is underway. 

In the general setting, the results of Chapter~\ref{chap:SFEs} establish means by which to solve a general class of scaling functional equations. Section~\ref{sec:generalSFEs}, notably, makes precise the assumptions on the remainder term which are required for this process and establishes several sufficient criteria for admissibility. This fills in the gaps for using this method, originally explored in \cite{Hof25} for application to generalized von Koch fractals, for the general setting. Notably, we have corrected an error in the statement of Theorem~6.3 of \cite{Hof25} and have introduced terminology which makes precise the missing assumption, namely the notion of joint languidity (Definition~\ref{def:jointlyLanguid}). (We note that Theorem~6.3 is still valid under the assumption that the remainder zeta function is strongly languid, and in fact the proof of its strong languidity, bypassing the need for these extra assumptions, will be the subject of a future work.) Further, we have proven sufficient conditions for this new assumption to hold in the general case (see Theorem~\ref{thm:lowerDimAdmissibility}, Corollary~\ref{cor:lowerDimAdmissibilityRescaled}, and Theorem~\ref{thm:latticeCaseAdmissibility}). 

\section*{Questions for Future Study}

A major through line of this work is the question of how geometry and spectra are related. This study is related to Conjecture~6.2.36 in \cite{LRZ17_FZF}, which is in a sense an entire program. In this work, we have focused on two types of quantities: volumes of tubular neighborhoods and the heat content of self-similar fractals. Already, these results suggest a strong connection between the geometric complex dimensions of fractals in higher dimensions and quantities of spectra. The connection between geometry and spectra for bounded open sets in $\RR$ has been established in \cite{LapvFr13_FGCD}, and this work suggests that similar will be true at least for self-similar fractals. 

However, it is likely to be true for many more, if not all, fractals in higher dimensions that quantities of geometry and spectra are strongly related to complex dimensions. The results herein suggest that for fractals with underlying dynamics (here, fractals which are attractors of self-similar systems), there is strong reason to believe that the properties of the dynamics will control both geometric and spectral properties of the set. Establishing functional equations, even perhaps infinite or nonlinear functional equations, may be the key to extending this connection from the self-similar case. 

Beyond heat and volume, it stands to reason that for self-similar fractals the spectrum of the Laplacian on the fractals should be computable. Indeed, we conjecture that the results of Lapidus and Kigami regarding the spectra of post critically finite (p.c.f) fractals \cite{KL93}, which uses the method of Neumann-bracketing, should extend to self-similar fractals which form osculant fractal drums. These p.c.f. fractals may only have finitely many points of overlap between different images, but osculant fractal drums can have infinitely many points of overlap (but with Lebesgue measure zero). 

Further, these types of fractals allows one to study regions with fractal boundary, such as von Koch snowflakes. If the method of Neumann bracketing can be adapted to produce scaling functional equations for the spectral counting function, then the results of Section~\ref{sec:SFEsolutions} would produce explicit formulae for the spectrum and its counting function. Such a method could be an approach to extending the results of Lapidus regarding the modified Weyl-Berry conjecture \cite{Lap91} to obtain asymptotic expansions for regions with self-similar fractal boundary. 

Lastly, our work on the heat content in Chapter~\ref{chap:heat} suggests several approachable problems with generalize upon our results. The explicit formulae, combined with explicit computations or numerical estimates of the residues of heat zeta functions, can lead to finding explicit self-similar fractals which optimize heat flow into (or out of) a region with fractal boundary. 

Alternatively, the particular choice of boundary conditions considered may be generalized. The general PWB solution to the Dirichlet problem allows for a wide variety of functions as boundary conditions, including even a time-dependent boundary of the set itself. For example, were an actual snowflake melting, its boundary would shrink over time. So, the models of fractal snowflakes (such as GKFs) could be adapted to having time-dependent boundaries. However, the convergence of the solution to the boundary values, vis a vis classifying the regular and irregular boundary points and resolutive initial conditions, would need to be analyzed carefully. Also, in order to adapt the methods of this work to such situations one would need to develop a theory of time dependence of the self-similar system and perhaps restrict to boundaries which change in a uniform or self-similar way.

Beyond the heat equation, it is natural to wonder whether these same techniques apply to a general parabolic differential equation on self-similar fractals or regions with self-similar fractal boundary. Indeed, if one wishes to study a more complicated type of interaction such as convection in a rough material (viz. diffusion and advection in a region with fractal boundary) one would be led to such problems. Restricting to linear equations would be most compatible with the method of self-similar partitioning, however some form of equivariance may be appropriate for the behavior of solutions under similitude mappings.

\clearpage
\addcontentsline{toc}{chapter}{Bibliography}
\singlespacing
\setlength{\bibitemsep}{11pt}
\printbibliography

\doublespacing

\clearpage
\addcontentsline{toc}{chapter}{Index}
\printindex

\end{document}